\pgfplotsset{compat=1.18}
\Crefname{subappendix}{Section}{Sections}
\numberwithin{equation}{section}
\newtheorem{theorem}{Theorem}[section]
\newtheorem{claim}[theorem]{Claim}
\newtheorem{proposition}[theorem]{Proposition}
\newtheorem{corollary}[theorem]{Corollary}
\newtheorem{lemma}[theorem]{Lemma}
\newtheorem{question}[theorem]{Question}
\theoremstyle{definition}
\newtheorem{definition}[theorem]{Definition}
\newtheorem{remark}[theorem]{Remark}
\newtheorem{notation}[theorem]{Notation}
\newtheorem{observation}[theorem]{Observation}
\newcommand{\bad}{\mathrm{bad}}
\newcommand{\leaf}{\mathrm{leaf}}
\newcommand{\rem}{\mathrm{rem}}
\newcommand{\anc}{\mathrm{anc}}
\renewcommand{\leq}{\leqslant}
\renewcommand{\le}{\leqslant}
\renewcommand{\geq}{\geqslant}
\renewcommand{\ge}{\geqslant}
\newcommand{\calA}{\mathcal{A}}
\newcommand{\calC}{\mathcal{C}}
\newcommand{\calD}{\mathcal{D}}
\newcommand{\calE}{\mathcal{E}}
\newcommand{\calF}{\mathcal{F}}
\newcommand{\calM}{\mathcal{M}}
\newcommand{\calN}{\mathcal{N}}
\newcommand{\calO}{\mathcal{O}}
\newcommand{\calR}{\mathcal{R}}
\newcommand{\calT}{\mathcal{T}}
\newcommand{\calU}{\mathcal{U}}
\newcommand{\calV}{\mathcal{V}}
\newcommand{\calI}{\mathcal{I}}
\newcommand{\calX}{\mathcal{X}}
\newcommand{\calY}{\mathcal{Y}}
\newcommand{\bC}{\mathbb{C}}
\newcommand{\bE}{\mathbb{E}}
\newcommand{\bF}{\mathbb{F}}
\newcommand{\bN}{\mathbb{N}}
\newcommand{\bP}{\mathbb{P}}
\newcommand{\bR}{\mathbb{R}}
\newcommand{\bZ}{\mathbb{Z}}
\newcommand{\bfa}{\mathbf{a}}
\newcommand{\bfx}{\mathbf{x}}
\newcommand{\bfy}{\mathbf{y}}
\newcommand{\bfz}{\mathbf{z}}
\newcommand{\bfP}{\mathbf{P}}
\newcommand{\bfY}{\mathbf{Y}}
\newcommand{\bdP}{\boldsymbol{P}}
\newcommand{\bdzeta}{\boldsymbol{\zeta}}
\newcommand{\sfe}{\mathsf{e}}
\newcommand{\sfv}{\mathsf{v}}
\newcommand{\sfH}{\mathsf{H}}
\newcommand{\fraki}{\mathfrak{i}}
\newcommand{\frakX}{\mathfrak{X}}
\newcommand{\scrN}{\mathscr{N}}
\newcommand{\yes}{\mathrm{yes}}
\newcommand{\no}{\mathrm{no}}
\newcommand{\nil}{\textup{\texttt{nil}}}
\newcommand{\loc}{\mathrm{loc}}
\newcommand{\proj}{\mathrm{proj}}
\newcommand{\ZmodN}{\bZ_{N}}
\newcommand{\ZNk}{\bZ_{N}^{k}}
\newcommand{\supp}{\mathrm{supp}}   
\DeclareMathOperator{\poly}{poly}
\newcommand{\lspan}{\mathrm{span}}
\newcommand{\lp}{{\textsc{BasicLP}}}
\newcommand{\lpval}{\mathrm{val}^{\mathrm{LP}}}
\newcommand{\val}{\mathrm{val}}
\newcommand{\lpcurve}{\vartheta_{\calF}}
\newcommand{\mcsp}{\mathsf{MaxCSP}(\mathcal{F})}
\newcommand{\cspF}{\mathrm{CSP}(\mathcal{F})}
\newcommand{\McspF}[2]{\mathsf{MaxCSP}(\calF)[#1, #2]}
\newcommand{\Map}[2]{\mathrm{Map}\left(#1, #2\right)}
\newcommand{\ind}[1]{\mathbbm{1}\left\{#1\right\}}
\newcommand{\tcup}{\textstyle\bigcup}
\newcommand{\tprod}{\textstyle\prod}
\newcommand{\DIHP}{\mathsf{DIHP}}
\newcommand{\CC}{\mathsf{CC}}
\newcommand{\reff}{\mathrm{ref}}
\newcommand{\adv}{\mathrm{adv}}
\newcommand{\Dno}{\mathcal{D}_{\mathrm{no}}}
\newcommand{\defeq}{\xlongequal{\text{def}}}
\newcommand{\Ex}[1]{\bE \left[ #1 \right]}
\newcommand{\Exu}[2]{\underset{#1} \bE \left[ #2 \right] }
\newcommand{\Exs}[2]{\bE_{#1}\left[ #2 \right]}
\renewcommand{\Pr}[1]{\bP \left[ #1 \right]} 
\newcommand{\Prs}[2]{\bP_{#1} \left[ #2 \right]}
\newcommand{\Pru}[2]{\underset{ #1 }\bP \left[ #2 \right]}
\newcommand{\inp}[2]{\left\langle #1, #2 \right\rangle}
\title{A Dichotomy Theorem for Multi-Pass Streaming CSPs}
\author{Yumou Fei\thanks{Department of EECS, Massachusetts Institute of Technology.}\and Dor Minzer\thanks{Department of Mathematics, Massachusetts Institute of Technology. Supported by NSF CCF award 2227876 and NSF CAREER award 2239160.}\and Shuo Wang\thanks{Department of Mathematics, Massachusetts Institute of Technology. Supported by NSF award 2239160.}}
\date{\vspace{-5ex}}
\begin{document}
\maketitle
\begin{abstract}
    In a constraint satisfaction problem (CSP) in the single-pass streaming model, an algorithm is given the constraints $C_1,\ldots,C_m$ of an instance one after another (in some fixed order), and its goal is to approximate the value of the instance, i.e., the maximum fraction of constraints that can be satisfied simultaneously. In the $p$-pass streaming model the algorithm is given $p$ passes over the input stream (in the same order), after which it is required to output an approximation of the value of the instance. We show a dichotomy result for $p$-pass streaming algorithms for all CSPs and for up to polynomially many passes. More precisely, we prove that for any arity parameter $k$, finite alphabet $\Sigma$, collection $\mathcal{F}$ of $k$-ary predicates over $\Sigma$ and any $c\in (0,1)$, there exists $0<s\leq c$ such that:
    \begin{enumerate}
        \item For any $\varepsilon>0$ there is a constant pass, $O_{\varepsilon}(\log n)$-space randomized streaming algorithm solving the promise problem $\McspF{c}{s-\varepsilon}$. That is, the algorithm accepts inputs with value  at least $c$ with probability at least $2/3$, and rejects inputs with value at most $s-\varepsilon$ with probability at least $2/3$.
        \item For all $\varepsilon>0$, any $p$-pass (even randomized) streaming algorithm that solves the promise problem $\McspF{c}{s+\varepsilon}$ must use $\Omega_{\varepsilon}(n^{1/3}/p)$ space.
    \end{enumerate}
    Our approximation algorithm is based on a certain linear-programming relaxation of the CSP and on a distributed algorithm that approximates its value. This part builds on the works [Yoshida, STOC 2011] and  [Saxena, Singer, Sudan, Velusamy, SODA 2025]. For our hardness result we show how to translate an integrality gap of the linear program into a family of hard instances, which we then analyze via studying a related communication complexity problem. That analysis is based on discrete Fourier analysis and builds on a prior work of the authors and on the work [Chou, Golovnev, Sudan, Velusamy, J.ACM 2024].
\end{abstract}

\newpage
{
\setcounter{tocdepth}{3} 
\tableofcontents
}

\newpage

\section{Introduction}
Constraint satisfaction problems (CSPs in short) are some of the most well studied problems in theoretical computer science, appearing in the context of algorithms design, approximation algorithms, hardness of approximation and more. Many prominent combinatorial optimization problems, such as the Max-Cut problem, the Vertex-Cover problem and various graph/hypergraph coloring problems, can be naturally formulated as CSPs. This paper focuses on the study of CSPs in the algorithmic model of multi-pass streaming, and our main result is an approximation dichotomy theorem in this setting.
\subsection{Constraint Satisfaction Problems}
\begin{definition}
    For a positive integer $k\geq 1$ and a finite alphabet $\Sigma$, a $k$-ary CSP is given by a family of predicates ${\cal{F}}\subseteq \{ f: \Sigma^k\rightarrow \{0,1\}\}$.\footnote{Without loss of generality, we assume that there exists at least one function $f\in \mathcal{F}$ such that $f^{-1}(1)\neq \emptyset$, otherwise the CSP is degenerate.} An instance of $\cspF$ is specified by $\mathcal{I} =({\cal{V}}, \mathcal{C})$, where $\mathcal{V}$ is a set of variables and $\mathcal{C} = (C_1,\dots, C_m)$ is a sequence of constraints. Each constraint $C_{i}$ is a pair $(\sfe_{i}, f_{i})$, where $\sfe_{i}\in \mathcal{V}^k$ is a tuple of distinct variables and $f_{i}\in \calF$ is a predicate. The tuple $\sfe_{i}=(\sfv_{i,1},\dots,\sfv_{i,k})$ specifies the variables involved in the constraint, and $f_{i}$ defines the condition that must be satisfied on those variables. We also write $\calI\in\cspF$.
\end{definition}

In the notations above, we say that an assignment $\tau: \calV\rightarrow\Sigma$ satisfies the constraint $C_{i}$ if $f_i\big(\tau(\sfv_{i,1}),\dots,\tau(\sfv_{i,k})\big)=1$. The value of the assignment $\tau$ is defined to be the fraction of the constraints satisfied by $\tau$, namely
\begin{align*}
    \val_{\calI }(\tau):= \frac{1}{m}\sum_{i=1}^{m} f_i\big(\tau(\sfv_{i,1}),\dots,\tau(\sfv_{i,k})\big).
\end{align*}
The value of the instance $\calI$ is the maximum possible value any assignment may achieve: 
\begin{align*}
    \val_{\calI}:= \max_{\tau:\calV\rightarrow\Sigma}\{ \val_{\calI}(\tau)\}. 
\end{align*} 
The following computational problems are often associated with $\cspF$:
\begin{enumerate}
    \item Decision version: given an instance $\calI\in\cspF$, decide if $\val_{\calI}=1$ or $\val_{\cal I} < 1$. Namely, design an algorithm 
    that given an instance $\calI\in \cspF$, accepts if it is fully satisfiable, and rejects otherwise.
    \item Optimization version: given an instance 
    $\calI\in \cspF$, output an approximation $\widehat{\val}_{\calI}$ of $\val_{\cal I}$. We say that the algorithm is a $\theta$-approximation algorithm for $\theta\in (0,1]$, if for all instances $\calI$ the output $\widehat{\val}_{\calI}$ satisfies that
    \[
    \theta\val_{\cal I}
    \leq 
    \widehat{\val}_{\cal I}
    \leq 
    \val_{\cal I}.
    \]
\end{enumerate}
CSPs have been studied in several different computational models. While the focus of this paper is on the streaming model, we first discuss the more popular model of polynomial time algorithms vs.~the class NP, from which one may seek to draw analogies.
\subsubsection{CSPs in the NP World} 
Both the decision and optimization versions above have 
been studied in the context of polynomial time algorithms over the last few decades. The Cook-Levin theorem, which is the basis of all of the theory of NP-hardness, can be equivalently seen as asserting that there exists a collection $\mathcal{F}$ such that the decision version of $\cspF$ is NP-hard. Similarly, the PCP theorem, which is the basis of all of the theory of NP-hardness for approximation problems, can be equivalently seen as seen as asserting that there exists a collection $\mathcal{F}$ and a constant $\theta<1$ such that getting a $\theta$-approximation for the optimization version of $\cspF$ is NP-hard. Subsequent 
research focused on getting a more detailed understanding of the complexity of $\cspF$ for all $\calF$. Namely:
\begin{enumerate}
    \item Dichotomy for decision problems: given a family of predicates $\calF$, what is the complexity of the decision version of $\cspF$? The dichotomy theorem of Zhuk and Bulatov~\cite{Zhuk,Bulatov} (which was previously known as the dichotomy conjecture of Feder and Vardi~\cite{FederVardi}) asserts that for any $\calF$, the complexity of the decision version of $\cspF$ is either polynomial time, or else it is NP-hard.
    \item Dichotomy for optimization problems: given a family of predicates $\calF$, what is the best possible approximation ratio $\theta$ that can be achieved for the optimization version of $\cspF$? Is it the case that there is always a number $\theta$ such that for all $\varepsilon>0$, there is a polynomial time $(\theta-\varepsilon)$-approximation algorithm, but getting a $(\theta+\varepsilon)$-approximation is already NP-hard? The dichotomy theorem of Raghavendra~\cite{Rag08} proves an assertion along these lines (assuming the Unique-Games Conjecture~\cite{Khot02}), and below we discuss his result in more detail.
\end{enumerate}
In both cases, the corresponding dichotomy result 
also specifies a concrete polynomial time algorithm (though not fully explicit) satisfying the guarantee of the theorem. 
In the case of decision problems, the algorithm is based on linear-programming hierarchies and linear equations over groups, and in the case of optimization problems, the algorithm is based on semi-definite programming and appropriate rounding schemes.
To discuss approximation problems further it is convenient to use the notion of gap problems, defined as follows.
\begin{definition}\label{def:McspF}
For a fixed finite predicate family $\calF\subseteq \{f:\Sigma^{k}\rightarrow\{0,1\}\}$, a completeness parameter $c\in (0,1]$ and a soundness parameter $s\in [0,c)$, the problem $\McspF{c}{s}$ is the promise problem where given an instance $\calI\in\cspF$, 
the algorithm should distinguish between the following two cases:
\begin{enumerate}[label = (\arabic*)]
    \item Yes case: if $\val_{\calI}\geq c$, then 
    the algorithm should accept.
    \item No case: if $\val_{\calI}\leq s$, then the algorithm should reject.   
\end{enumerate}
We will often consider randomized 
algorithms for $\McspF{c}{s}$, 
in which case in the ``yes case'' 
we require the algorithm accepts with probability at least $2/3$, 
and in the ``no case'' we require
that the algorithm rejects with probability at least $2/3$.
\end{definition}
In this language, Raghavendra shows that assuming the Unique-Games Conjecture, for all families $\calF$ and $c\in (0,1)$, there exists $s\in [0,c]$ such that for all $\varepsilon>0$, the problem $\McspF{c}{s-\varepsilon}$ can be solved in polynomial time
but $\McspF{c}{s+\varepsilon}$ is NP-hard. We remark that as far as 
approximation ratios are concerned, this result 
gives a full dichotomy result for the optimization 
problem associated with $\cspF$. However, we note that
this result does not address satisfiable instances, namely it does not make any assertion on the problems 
$\McspF{1}{s}$. Indeed, the approximability of satisfiable CSPs is still largely open in the NP world, and we will see that interestingly, it also presents some (different) challenges in the context of streaming algorithms.

\subsubsection{CSPs in the Streaming World}
The study of CSPs in the streaming model has seen a lot of activity over the past decade~\cite{KKS15,guruswami2017streaming,KK19,chou2020optimal,AKSY20,AN21,chou2022linear,saxena2023improved,hwang2024oblivious,CGSV24,saxena2025streaming,FMW25} (see~\cite{SudanSurvey,Assadi} for surveys). For a function $S\colon\mathbb{N}\to\mathbb{N}$, a space $S$ streaming algorithm has $S(n)$ cells of memory (where $n$ is the size of the CSP instance), and it receives the constraints of the instance one by one. Upon receiving an element in the stream it is allowed to make arbitrary computations involving that element and the current memory state, and then update its memory state. 
Typically, we think of space complexity $S(n) = \poly(\log n)$ as efficient, and of space complexity as $S(n) = n^{\Omega(1)}$ as being inefficient. For
the purposes of this paper we will not be concerned with any other complexity measure of the algorithm (such as run-time). Additionally, we allow our algorithm to be randomized, and the number of random bits it uses is included in its space complexity. There are a few variants of the streaming model that are often considered, depending on the stream order and on the number of stream passes the algorithm makes:
\begin{enumerate}
    \item {\bf Input order:} because an efficient streaming algorithm cannot store the entire CSP instance in its memory, the order in which it receives the constraints may matter. The two models that are most often considered are the ``random order model'', in which the constraints are given in a randomly chosen order, and the ``worst-case order model'', in which the order of the constraints is predetermined by an adversary. 
    \item {\bf Number of passes:} a single-pass streaming algorithm is an algorithm which is given the input stream once, after which it must produce an answer. A $p$-pass streaming algorithm is an algorithm which is given $p$ passes over the stream (according to the same order), after which it must produce an answer. When the number of passes $p$ is constant the model is often simply referred to as the multi-pass streaming model. It also makes sense however to allow the number of passes $p$ to increase with the input length.
\end{enumerate}

The first problem studied in this context is the Max-Cut problem, and the main result of~\cite{KKS15} is that for any $\varepsilon>0$, a single-pass $(1/2+\varepsilon)$-approximation algorithm for Max-Cut requires $\Omega_{\varepsilon}(\sqrt{n})$ memory (even under random ordering of constraints). Since the trivial algorithm that simply counts the edges of the input graph and outputs half their number yields a $(1/2)$-approximation using only $O(\log n)$ memory, this establishes a sharp threshold at the approximation ratio $1/2$ in the streaming model. This lower bound was later improved to a nearly optimal $\Omega_{\varepsilon}(n)$ in~\cite{KK19}.
Subsequent works have studied the approximability of other predicates in this model~\cite{chou2020optimal,chou2022linear,CGSV24,saxena2023improved,saxena2025streaming} and extended these results to
the multi-pass setting~\cite{AN21,AKSY20,FMW25}.

\subsubsection{Dichotomy Theorems for Streaming Algorithms?}
Of particular interest to the current paper is the result of~\cite{CGSV24}, which studies a sub-class of single-pass streaming algorithms called sketching algorithms. A space-$S$ sketching algorithm is an algorithm whose memory is thought of as a ``summary'' of the input stream read so far. Formally, it consists of a sketching function ${\sf Sketch}\colon \{\text{constraints}\}\to \{0,1\}^S$, and a combining function ${\sf Comb}\colon \{0,1\}^S\times\{0,1\}^S\to\{0,1\}^S$ such that for any two streams of constraints $\sigma, \tau$, it holds $
    {\sf Comb}({\sf Sketch}(\sigma),{\sf Sketch (\tau)})  = {\sf Sketch} (\sigma \circ \tau)$,
where $\sigma \circ \tau$ denotes the data stream obtained by concatenating $\sigma$ and $\tau$.  
If the current memory state of the algorithm is $x$, and it receives a constraint $C$, the new memory state will be ${\sf Comb}(x,{\sf Sketch}(C))$, which implies that a space-$S$ sketching algorithm can always be implemented by a space-$O(S)$ single-pass algorithm, but the reverse may not be true. The main result of~\cite{CGSV24} can be seen as an analog of the result of Raghavendra~\cite{Rag08} for sketching algorithms, reading as follows:
\begin{theorem}[{\cite{CGSV24}}]\label{thm:CGSV}
    For every $k\in\mathbb{N}$, a family $\calF$ of $k$-ary predicates and $0\leq s<c\leq 1$, either the problem $\McspF{c}{s}$ admits an $O(\log^3 n)$-space sketching algorithm, or else for all $\varepsilon>0$, the problem $\McspF{c-\varepsilon}{s+\varepsilon}$ requires $\Omega_{\varepsilon}(\sqrt{n})$ memory.
\end{theorem}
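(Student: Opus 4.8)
The plan is to isolate a convex-geometric invariant of the triple $(\calF,c,s)$ that controls sketchability, and then handle its two sides separately.

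\textbf{The dividing condition.} To each constraint on a $k$-tuple one may associate, for every coordinate $i\in[k]$, a ``type'' consisting of the predicate used together with the value placed in coordinate $i$; a CSP instance thus induces a distribution over constraint types and, per variable, an induced distribution over $\Sigma$. The key condition is the existence of a \emph{padded one-wise pair}: distributions $(D_{\yes},D_{\no})$ over constraint types with $\Exu{(f,\bfa)\sim D_{\yes}}{f(\bfa)}\ge c$ and $\Exu{(f,\bfa)\sim D_{\no}}{f(\bfa)}\le s$, such that after mixing each with a common padding distribution all single-coordinate marginals of the two agree. I would first show that existence of such a pair (taken in the appropriate limiting/closed sense) is exactly the boundary between the two cases, so that compactness and LP duality are available on each side.

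\textbf{Algorithmic side: no padded one-wise pair.} If no such pair exists, LP duality yields a separating linear functional: a weighting of the $|\Sigma|$ ``directions'' at each variable so that, writing $\mathrm{bias}(\calI)$ for the resulting $O(n)$-dimensional vector of signed weighted constraint-incidences, yes-instances have $\ell_1$-norm of $\mathrm{bias}(\calI)$ above a threshold and no-instances below it. The algorithm then maintains an approximation of $\mathrm{bias}(\calI)$ by a standard linear $\ell_1$-sketch (Indyk / Kane--Nelson--Woodruff), which fits in $O(\log^3 n)$ bits once one pays $O(\log n)$ each for dimension, precision, and derandomization of the hash seeds; composability of linear sketches makes this genuinely a \emph{sketching} algorithm, and correctness is immediate from the separation, with the $\varepsilon$ slack absorbing the sketch error.

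\textbf{Hardness side: a padded one-wise pair exists.} Given $(D_{\yes},D_{\no})$, I would build two instance distributions: $\Dyes$, in which a uniformly random assignment is planted and constraints are drawn compatibly with $D_{\yes}$, so that the planted assignment satisfies a $(c-o(1))$-fraction with high probability; and $\Dno$, drawn from $D_{\no}$, where a first-moment bound over all $|\Sigma|^n$ assignments (using $\Exu{D_{\no}}{f(\bfa)}\le s$) gives $\val\le s+o(1)$ with high probability. To show no small-space streaming algorithm distinguishes $\Dyes$ from $\Dno$, reduce from a multi-player sequential communication game of generalized Boolean-Hidden-Matching / randomized-mask-detection type, in which the players' blocks of constraints are drawn so that the induced single-variable marginals match --- this is precisely what ``one-wise'' buys. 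A hybrid argument over stream positions turns a distinguishing space-$S$ streaming algorithm into a protocol for a constant-round version of this game using messages of $O(S)$ bits, so it suffices to prove the game needs $\Omega(\sqrt n)$ communication. That bound is Fourier-analytic: the information revealed to a party is a function whose nonzero Fourier coefficients are supported on ``realizable'' sets, and a short transcript can correlate with the hidden planting only through such sets that survive the matched one-wise marginals; since a random matching on $\Theta(n)$ variables with $o(\sqrt n)$ edges almost surely has no collision, all surviving coefficients have level $\Omega(\sqrt n)$, and a level-$d$ (hypercontractive) inequality bounds the advantage by $2^{O(S)}\cdot n^{-\Omega(1)}$.

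\textbf{Main obstacle.} The delicate part is the hardness side for arbitrary finite $\Sigma$: one must run the Fourier argument over characters of the relevant abelian group (or handle the extra structure when $\Sigma$ is not naturally a group), and verify that matched one-wise marginals genuinely annihilate every low-level character, not merely the Boolean ones. Together with the sequential/hybrid bookkeeping needed to handle worst-case stream order and with the two concentration estimates pinning down $\val_{\Dyes}\ge c-o(1)$ and $\val_{\Dno}\le s+o(1)$, this is where essentially all the work --- and the $\varepsilon$-losses in both parameters --- resides.
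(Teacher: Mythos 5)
The paper states this theorem as a citation of \cite{CGSV24} and does not reproduce its proof, so there is no internal argument to compare against; what follows assesses your reconstruction on its own terms. Your plan faithfully captures the CGSV24 architecture: the dichotomy is governed by the existence of a \emph{padded one-wise pair} (the paper alludes to this as ``two convex sets'' being disjoint or intersecting); the algorithmic branch uses LP duality to produce a separating bias vector and maintains its $\ell_1$-norm via a composable linear sketch in $O(\log^3 n)$ bits; and the hardness branch reduces a space-$S$ single-pass algorithm to a sequential $T$-player randomized-mask-detection game whose distinguishing advantage is bounded by Fourier analysis over $\mathbb{Z}_N^{\Lambda}$.

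One correction to your Fourier sketch is warranted. It is not true that ``all surviving coefficients have level $\Omega(\sqrt n)$,'' nor does the random matching have $o(\sqrt n)$ edges --- each player's matching has $\alpha n = \Theta(n)$ hyperedges, and surviving Fourier coefficients occur at every level $d\geq 2$ (one-wiseness only kills level-$1$). What the argument actually establishes is a \emph{per-level geometric decay}: the level-$d$ Fourier weight of a density-$2^{-O(S)}$ rectangle is at most roughly $(CS/d)^d$ by a level-$d$/hypercontractive inequality, while the probability that $d$ active coordinates align with the matching structure carries a birthday-paradox factor of order $(Cd/n)^{d/2}$, so the level-$d$ contribution to the discrepancy is of order $(C'Sd/n)^{d/2}$. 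Summing over $d$, the total advantage is $o(1)$ precisely when $S = o(\sqrt n)$, with the dominant contribution at $d = \Theta(S)$, not a sharp cutoff at level $\sqrt n$. Your conclusion and the overall mechanism are right, but it is a uniformly small per-level bound rather than a vanishing of all low levels --- which, incidentally, is also the shape of the argument the present paper uses for its multi-pass DIHP bound (cf.\ \Cref{lem:global-decay}).
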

The result of~\cite{CGSV24} is in fact more detailed, and it specifies a polynomial space algorithm that is able to tell which one of the cases holds, as well as the sketching algorithm in the former case.
Their characterization relies on the definition of two convex sets: if these sets are disjoint there is an algorithm as in~\Cref{thm:CGSV}, and else an intersection point can be used to construct hard instances.

In light of~\Cref{thm:CGSV}, one may ask whether similar dichotomy results hold for other streaming models. The most natural models to consider are the single-pass streaming and the multi-pass streaming models. In both of these models there are examples of non-trivial algorithms and non-trivial hardness results, but a priori it is not clear how to unify them into full-blown dichotomy results. In a sense, a key challenge in proving such dichotomy theorems is that one typically has to come up with a ``single algorithm'' that works for all families of predicates $\calF$, and prove that hard instances for it can be used to prove hardness for any algorithm.

\subsection{Main Result}
The main result of this paper is a dichotomy theorem for the approximability of $\cspF$ in the multi-pass streaming model:

\begin{restatable}{theorem}{mainthm}\label{thm:main}
For any finite predicate family $\calF\subseteq \{f:\Sigma^{k}\rightarrow\{0,1\}\}$, there exists a non-decreasing continuous function $\lpcurve:(0,1)\rightarrow(0,1)$ satisfying $\lpcurve(c)\leq c$ for all $c\in (0,1)$, such that
\begin{enumerate}[label = (\arabic*)]
\item for any fixed rational numbers $c\in (0,1)$ and $s\in (0,\vartheta_{\calF}(c))$, there exists a constant-pass, $O(\log n)$-space randomized streaming algorithm for $\McspF{c}{s}$;\footnote{Here $n$ stands for the number of variables in the input instance. We additionally assume that the number of constraints is at most polynomial in $n$.}

\item for any fixed rational numbers $c\in (0,1)$ and $s\in (\vartheta_{\calF}(c),c)$, any $p$-pass streaming algorithm for $\McspF{c}{s}$ requires $\Omega_{c,s}(n^{1/3}/p)$ space.
\end{enumerate}
\end{restatable}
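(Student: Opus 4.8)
The first step is to define the curve $\lpcurve$. For a fixed $\calF$, I would introduce a basic LP relaxation of $\cspF$ (a distribution over local assignments satisfying marginal consistency, as in the Sherali–Adams / Raghavendra framework), write $\lpval_{\calI}$ for its optimum on an instance $\calI$, and then define $\lpcurve(c)$ to be the infimum of $\lpval_{\calI}$ over all instances $\calI$ with $\val_{\calI}\geq c$ — i.e., the ``integrality-gap curve'' of the LP, or rather its inverse: the best LP value one can guarantee knowing only that the true value is at least $c$. A compactness/normalization argument (the LP polytope for a single constraint is a finite-dimensional object, and instances can be taken as fractional weightings of constraint types) shows this infimum is attained and gives a well-defined, non-decreasing function; continuity follows from the piecewise-linear structure of the optimal value as a function of the weighting, and $\lpcurve(c)\leq c$ because the integral optimum is a feasible LP solution. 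Crucially $\lpval$ is itself a value of a CSP-like optimization and is monotone/continuous in the instance.

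\textbf{For the algorithmic direction (part (1)),} given $c$ and $s<\lpcurve(c)$, the algorithm must accept instances of value $\geq c$ and reject instances of value $\leq s$. Since $s<\lpcurve(c)$, every yes-instance has $\lpval_{\calI}\geq \lpcurve(c)>s$, while every no-instance has $\val_{\calI}\leq s$ hence (as the LP is a relaxation only in one direction — wait, it relaxes \emph{upward}) — here is the subtlety: the LP over-estimates, so a no-instance can still have large LP value. The resolution, following Yoshida and Saxena–Singer–Sudan–Velusamy, is that one does not solve the LP of $\calI$ directly but rather a \emph{rounded} or \emph{local} version: one runs a constant-pass, $O(\log n)$-space distributed/sublinear procedure that estimates the value achievable by a canonical LP-rounding scheme, or equivalently estimates $\lpval$ up to additive $\pm(\lpcurve(c)-s)/2$ and simultaneously certifies an actual assignment of comparable value. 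I would import the Yoshida-style result that the optimum of the basic LP for a bounded-degree CSP can be estimated in constant passes and logarithmic space via a local-ratio / primal-dual propagation argument (degree-reduction is handled by subsampling constraints, which preserves both $\val$ and $\lpval$ up to $\varepsilon$, incurring only the dependence on $\varepsilon$ in the space). The algorithm then accepts iff its estimate exceeds the threshold; correctness on yes-instances uses $\lpval\geq\lpcurve(c)$, and on no-instances one needs that the \emph{rounded} LP value — not the raw LP value — is at most $s$, which is where the integrality-gap curve $\lpcurve$ enters on the soundness side as well.

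\textbf{For the hardness direction (part (2)),} given $s>\lpcurve(c)$, by definition of $\lpcurve$ there is an instance $\calI^\star$ (the integrality-gap instance) with $\val_{\calI^\star}\leq \lpcurve(c)+\delta < s$ yet $\lpval_{\calI^\star}\geq c-\delta$, i.e., a fractional solution of value close to $c$. The plan is to turn $\calI^\star$ into a hard family of instances for $p$-pass streaming by a reduction from a communication problem, in the spirit of the Max-Cut lower bounds and of \cite{CGSV24,FMW25}. Concretely, one constructs a distributional communication game — a $T$-player (or Alice/Bob with $p$ rounds) version of a ``hidden partition'' / Boolean hidden matching type problem, parametrized by the LP solution of $\calI^\star$ — where the yes-distribution embeds a planted assignment realizing value $\approx c$ and the no-distribution produces instances of value $\approx\val_{\calI^\star}<s$; a $p$-pass, $S$-space streaming algorithm yields a communication protocol with roughly $pS$ bits, so a communication lower bound of $\Omega(n^{1/3})$ (with the $1/p$ loss coming from the round-elimination / $p$-pass simulation) gives the claimed $\Omega_{c,s}(n^{1/3}/p)$ bound. \textbf{The main obstacle, and the technical heart of the paper, is this last step:} proving the communication lower bound for the game built from an \emph{arbitrary} $\calF$ and an arbitrary LP integrality-gap solution — one needs a Fourier-analytic argument showing the yes- and no-distributions are indistinguishable to low-communication protocols, generalizing the discrete-Fourier / hypercontractivity analysis of \cite{CGSV24} and the authors' prior work from the single-pass sketching setting to genuine $p$-pass streaming, which requires controlling how the planted structure degrades across passes (hence the $n^{1/3}$ rather than $\sqrt n$, reflecting the weaker but more robust bound available in the multi-pass regime).
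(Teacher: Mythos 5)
Your high-level plan matches the paper's: define an approximability curve via the basic LP relaxation, prove part~(1) by estimating the LP value of a degree-reduced instance with a Yoshida-style local algorithm implemented in constant passes and $O(\log n)$ space, and prove part~(2) by building a multi-player ``hidden partition'' communication game out of an integrality-gap instance and lower-bounding it with Fourier analysis and (global) hypercontractivity. The bookkeeping of where the $n^{1/3}/p$ comes from is also right.

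However, there is a genuine error in the most basic step: your definition of $\lpcurve$ is reversed, and the reversed definition collapses to the identity function. You propose $\lpcurve(c):=\inf\{\lpval_{\calI}:\val_{\calI}\ge c\}$ (``the best LP value one can guarantee knowing only that the true value is at least $c$''). Since $\lpval_{\calI}\ge\val_{\calI}$ always, every instance in this family has $\lpval_{\calI}\ge c$, and the infimum equals $c$ exactly (take, e.g., a weighted mix of always-true and never-true constraints). That makes $\lpcurve(c)=c$ and the dichotomy statement vacuous. The usable definition, which the paper takes, is the mirror image
\[
\vartheta^*_{\calF}(c):=\inf\{\val_{\calI}:\lpval_{\calI}\ge c\},\qquad \lpcurve(c):=\min\{c,\vartheta^*_{\calF}(c)\}.
\]
This is the quantity your later reasoning actually uses: in the hardness direction you invoke an instance with $\val_{\calI^\star}<s$ and $\lpval_{\calI^\star}$ close to $c$, which exists by \emph{this} definition, not by yours; and in the algorithmic direction the soundness hinges on the contrapositive ``$\val_{\calI}<\vartheta^*_{\calF}(c)\Rightarrow\lpval_{\calI}<c$,'' which again is exactly what this definition gives and what your definition does not.

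This also affects your resolution of the ``LP over-estimates'' worry in part~(1). You propose estimating a \emph{rounded} LP value to handle no-instances, but no rounding step is needed once the curve is defined the right way round: one simply estimates $\lpval$ of a degree-reduced copy of $\calI$; for yes-instances $\lpval\ge\val\ge c+\varepsilon$ forces the estimate up, and for no-instances $\val<\lpcurve(c)$ forces $\lpval<c$ by the contrapositive above, which forces the estimate down. The ``integrality gap curve'' already encodes the soundness; inserting a rounding scheme into the algorithm is both unnecessary and, in the streaming setting, a nontrivial extra step to justify. The remaining pieces you sketch --- the degree-reduction preserving value within $\varepsilon$, the constant-pass simulation of a local algorithm, building the communication game from the LP solution, the decomposition-plus-discrepancy Fourier-analytic lower bound --- are the right ideas and align with the paper's proof; the $n^{1/3}$ (rather than $\sqrt n$) exponent traces to the discrepancy lemma's tolerance for the ``goodness'' parameter of global rectangles, not directly to the number of passes.
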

In words, up to the value of exponents,~\Cref{thm:main} is an analog of~\Cref{thm:CGSV} in the multi-pass setting, asserting that a given gap problem can either be solved by a streaming algorithm with $O(\log n)$-space and constantly many passes, or else requires a polynomial space (or polynomially many passes). We defer a detailed discussion of our techniques to~\Cref{sec:techniques}, but remark that our algorithm 
is based on the basic linear-programming formulation of
the CSP and a connection with distributed computation observed in~\cite{Yos11,saxena2025streaming}.

\subsubsection{Examples: DICUT and 2SAT}\label{sec:imp}
Using~\Cref{thm:main} together with the fact that the function $\vartheta_{\calF}$ can, for certain families $\calF$ of interest, be determined explicitly, one can obtain an almost complete characterization of the complexity of the problem 
$\McspF{c}{s}$ as the parameters $0 \leq s < c < 1$ vary.  
(This characterization is not fully complete, since the case $c=1$ remains unresolved.)  
Below we illustrate this with a few examples.

An instance of the maximum directed-cut problem is defined by the collection $\calF = \{f\}$, where $f\colon \{0,1\}^2\to\{0,1\}$ is the predicate whose unique satisfying assignment is $(1,0)$. Previous works~\cite{saxena2025streaming,FMW25}
have shown that for all $\varepsilon>0$, the problem admits an $(1/2-\varepsilon)$-approximation constant pass algorithm with $O(\log n)$ space, whereas $(1/2+\varepsilon)$-approximation requires either polynomial space or polynomially many passes. Using \Cref{thm:main} and the following result, we are able to determine the full approximability curve of this problem:
\begin{restatable}{theorem}{DICUTthm}\label{thm:DICUT}
    For the case of $\mathsf{Max}$-$\mathsf{DICUT}$, we have (see \Cref{fig:max-dicut})
    \begin{align*}
    \vartheta_\calF(c) = 
    \begin{cases}
    c&\text{if }0\leq c\leq1/4,\\
    1/4&\text{if }1/4< c\leq  1/2,\\
    (3c-1)/2&\text{if }1/2< c\leq 1.\\
    \end{cases}
    \end{align*}
\end{restatable}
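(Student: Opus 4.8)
The proof of \Cref{thm:DICUT} has two halves that meet exactly: a family of integrality-gap instances for the basic LP, and an LP-rounding argument, which respectively upper- and lower-bound $\lpcurve$. Write $f$ for the Max-DICUT predicate, so that for an instance $\calI$ on constraint (multi-)digraph $G$ with vertex set $\calV$ and $m$ edges,
\[
\lpval_{\calI}=\max_{x\in[0,1]^{\calV}}\ \frac1m\sum_{(u,v)\in E(G)}\min(x_u,\,1-x_v).
\]
By the characterization of $\lpcurve$ underlying \Cref{thm:main}, it suffices to determine $g(c):=\inf\{\val_{\calI}:\lpval_{\calI}\ge c\}$: then $\lpcurve(c)=g(c)$ wherever $g(c)\le c$, while $\lpcurve(c)=c$ wherever $g(c)>c$ (there the LP-based algorithm already handles every soundness $s<c$). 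The key object is the instance $\calK_n$ whose constraints are all $n(n-1)$ ordered pairs of distinct variables. From $\min(a,b)\le(a+b)/2$ one gets $\sum_{(u,v)}\min(x_u,1-x_v)\le\tfrac12\sum_{(u,v)}(x_u+1-x_v)=\tfrac{n(n-1)}2$, so $\lpval_{\calK_n}=\tfrac12$ (attained at $x\equiv\tfrac12$), whereas $\val_{\calK_n}=\lfloor n/2\rfloor\lceil n/2\rceil/(n(n-1))\to\tfrac14$. Hence for every $c\le\tfrac12$ the $\calK_n$ witness $g(c)\le\tfrac14$, and since a uniformly random assignment satisfies each constraint with probability $\tfrac14$ we also have $\val_{\calI}\ge\tfrac14$ for every instance; thus $g(c)=\tfrac14$ on $(0,\tfrac12]$, which yields the first two branches of the formula. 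For $c\in(\tfrac12,1)$ put $\beta=2c-1\in(0,1)$ and take the disjoint union of a directed matching carrying a $\beta$-fraction of the constraints (fully satisfiable, LP value $1$) and a copy of $\calK_N$ carrying the remaining $(1-\beta)$-fraction (choosing $N$ so the counts work out); since on a disjoint union both $\lpval$ and $\val$ are the constraint-weighted averages of the parts, this instance has $\lpval=\beta+\tfrac{1-\beta}2=c$ and $\val\to\beta+\tfrac{1-\beta}4=(3c-1)/2$, so $g(c)\le(3c-1)/2$.

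For the matching lower bound I will show $\val_{\calI}\ge(3\lpval_{\calI}-1)/2$ for every instance; together with $\val_{\calI}\ge\tfrac14$ this gives $g(c)\ge\max\{\tfrac14,(3c-1)/2\}$, matching the constructions above. The argument uses half-integrality of the basic LP for Boolean $2$-CSPs: every vertex of the LP polytope has all marginals in $\{0,\tfrac12,1\}$, which one sees from the tight constraints $x_u+x_v=1$ of a vertex --- a non-bipartite component of these forces the value $\tfrac12$ around an odd cycle, and a bipartite component must be pinned to $\{0,1\}$. Fix a half-integral optimum $x^\star$ with partition $\calV=V_0\cup V_{1/2}\cup V_1$. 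Each edge $(u,v)$ contributes to $\lpval_{\calI}$ either $1$ (exactly when $x^\star_u=1,x^\star_v=0$), $\tfrac12$ (when $(x^\star_u,x^\star_v)\in\{(1,\tfrac12),(\tfrac12,0),(\tfrac12,\tfrac12)\}$), or $0$ (otherwise); writing $\alpha,\gamma$ for the fractions of edges of the middle and last kinds, $\lpval_{\calI}=1-\tfrac{\alpha}2-\gamma$. Now round $V_1\mapsto1$, $V_0\mapsto0$, and the vertices of $V_{1/2}$ independently and uniformly, obtaining a random assignment $\tau$. Then each contribution-$1$ edge is satisfied with probability $1$, and each contribution-$\tfrac12$ edge with probability at least $\tfrac14$ (types $(1,\tfrac12)$ and $(\tfrac12,0)$ are satisfied with probability $\tfrac12$, type $(\tfrac12,\tfrac12)$ with probability $\tfrac14$), so $\Ex{\val_{\calI}(\tau)}\ge(1-\alpha-\gamma)+\tfrac14\alpha=1-\tfrac34\alpha-\gamma\ge1-\tfrac34\alpha-\tfrac32\gamma=(3\lpval_{\calI}-1)/2$, whence some assignment does at least this well. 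This worst case is realized exactly by the gap instances above, so the constant $\tfrac32$ is tight.

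I expect the main work to lie not in the rounding --- which is short once half-integrality is in hand --- but in assembling the pieces so that the two bounds coincide: establishing (or carefully citing) half-integrality of the basic LP for Boolean $2$-CSPs, computing the LP and true values of the pseudorandom block $\calK_n$ precisely (including the $o(1)$ error and the fact that on a disjoint union the LP can do no better than block-wise), and checking that the rounding's worst case is exactly the $V_{1/2}$-to-$V_{1/2}$ configuration. A secondary point is the bookkeeping at the breakpoints $c=\tfrac14$ and $c=\tfrac12$, where the three regimes must agree so that $\lpcurve$ comes out continuous and non-decreasing as \Cref{thm:main} requires.
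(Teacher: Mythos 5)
Your proposal is correct and follows essentially the same route as the paper: the complete digraph $\calK_n$ witnesses $\vartheta^*_\calF(1/2)\le 1/4$, half-integrality of the LP polytope (\Cref{lem:solution_structure}) together with independent rounding yields $\val_\calI \ge \tfrac{3}{2}\lpval_\calI - \tfrac{1}{2}$ (\Cref{cor:dicutcurve}), and your disjoint-union construction for $c\in(1/2,1)$ is precisely how \Cref{lem:convexity} is proved, so the two presentations differ only cosmetically. One small caution on the half-integrality sketch: a bipartite component of the tight-constraint graph is not ``pinned to $\{0,1\}$''; rather, the paper restricts the graph $G'$ to the \emph{non-integral} variables and observes that if a component of $G'$ were bipartite (or disconnected), the system $\{x_u+x_v=1\}$ would have a free parameter, contradicting that we are at a vertex of the polytope --- so fractional variables only lie in non-bipartite components, forcing $x_\sfv=\tfrac12$. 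The conclusion you draw is right, but the justification as phrased reads circularly, since declaring a variable $\{0,1\}$-valued is exactly what removes it from the fractional graph in the first place.
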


An instance of the Max-$2$SAT problem is defined by the collection $\calF$ consisting the unary predicates $f^{(0)}, f^{(1)}\colon \{0,1\}\to\{0,1\}$ defined as $f^{(b)}(x) = 1_{x=b}$, as well as the binary predicates $f^{(b_1,b_2)}\colon \{0,1\}^2\to\{0,1\}$ defined as $f^{(b_1,b_2)}(x_1,x_2) = 1 - 1_{x_1=b_1, x_2=b_2}$.  Using \Cref{thm:main} and the following result, we are able to determine the full approximability curve of this problem:
\begin{restatable}{theorem}{thmtwoSAT}\label{thm:2SAT}
    For the case of $\mathsf{Max}$-$\mathsf{2SAT}$, we have (see \Cref{fig:2sat})
    \begin{align*}
    \vartheta_\calF(c) = 
    \begin{cases}
    c&\text{if }0\leq c\leq 1/2, \\
    (2c+1)/4&\text{if }1/2< c\leq 1.
    \end{cases}
    \end{align*}
\end{restatable}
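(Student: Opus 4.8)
Proving \Cref{thm:2SAT} comes down to evaluating the curve $\vartheta_{\calF}$ for the Max-2SAT family $\calF$. Recall that $\vartheta_{\calF}(c)$ is, informally, the integrality-gap curve of the \lp\ relaxation, capped at $c$: it equals $\min\big(c,\ \inf\{\val_{\calI}:\calI\in\cspF,\ \lpval_{\calI}\ge c\}\big)$. For Max-2SAT the \lp\ is equivalent to optimizing over ``product'' solutions, which assign each variable $v$ a number $p_v\in[0,1]$: a unit clause ``$x_v=b$'' then contributes its marginal ($p_v$ if $b=1$ and $1-p_v$ if $b=0$), and a $2$-clause $\ell_1\vee\ell_2$ contributes $\min(1,\,y_{\ell_1}+y_{\ell_2})$, where $y_\ell\in[0,1]$ is the LP mass of the literal $\ell$. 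So the plan is to (a) identify the best LP-rounding ratio for Max-2SAT, which gives a lower bound on $\vartheta_{\calF}$, and (b) exhibit matching integrality-gap instances; the two pieces of the claimed formula then drop out.

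The range $0\le c\le 1/2$ is degenerate. A uniformly random assignment satisfies each unit clause with probability $1/2$ and each $2$-clause with probability $3/4$, hence $\val_{\calI}\ge 1/2$ for \emph{every} Max-2SAT instance $\calI$; in particular $\val_{\calI}\ge c$ whenever $c\le 1/2$, so $\inf\{\val_{\calI}:\lpval_{\calI}\ge c\}\ge c$ and hence $\vartheta_{\calF}(c)=c$. (Equivalently, there are no instances that could witness hardness of $\McspF{c}{s}$ for any $s<c\le 1/2$.)

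For $1/2<c\le 1$ I claim $\vartheta_{\calF}(c)=(2c+1)/4$. \emph{Lower bound.} Fix an instance $\calI$ with an optimal LP solution $(p_v)_v$ of value $c$ and compare two randomized assignments: (i) independent rounding, $v\mapsto 1$ with probability $p_v$, and (ii) a uniformly random assignment; returning the better of the two gives value at least the average of the values of (i) and (ii). Per constraint: a $2$-clause with LP contribution $w$ is satisfied by (i) with probability at least $w-w^2/4\ge\tfrac34 w$ (by AM-GM) and by (ii) with probability $\tfrac34$, while a unit clause with LP contribution $w$ is satisfied by (i) with probability exactly $w$ and by (ii) with probability $\tfrac12$. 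Writing $\mu$ for the fraction of clauses that are unit and $\bar w^{(1)},\bar w^{(2)}$ for the average LP contributions of the unit and $2$-clauses (so $\mu\bar w^{(1)}+(1-\mu)\bar w^{(2)}=c$), the value of the better assignment is at least $\tfrac\mu2\bar w^{(1)}+\tfrac\mu4+\tfrac{3(1-\mu)}8\bar w^{(2)}+\tfrac{3(1-\mu)}8$, and a short manipulation shows this is $\ge\tfrac c2+\tfrac14=\tfrac{2c+1}4$ --- the remaining inequality reducing to $\bar w^{(2)}\le 1$. \emph{Upper bound.} Build an instance from disjoint blocks on fresh variables of two kinds: the ``$K_2$-gadget'' $\{x\vee y,\ \neg x\vee\neg y,\ x\vee\neg y,\ \neg x\vee y\}$, for which every assignment to $(x,y)$ falsifies exactly one clause, so its value is $\tfrac34$, while $p_x=p_y=\tfrac12$ makes every clause contribute $\min(1,1)=1$, so its LP value is $1$; and the ``conflicting unit pair'' $\{z,\ \neg z\}$, with both value and LP value equal to $\tfrac12$. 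Mix the blocks so that a fraction $1-\beta$ of the clauses come from $K_2$-gadgets and a fraction $\beta$ from unit pairs, with $\beta=2(1-c)\in[0,1)$. Since the LP separates over the disjoint variable blocks and an optimal assignment optimizes each block independently, this instance has $\lpval=(1-\beta)\cdot 1+\beta\cdot\tfrac12=c$ and $\val=\tfrac34(1-\beta)+\tfrac12\beta=\tfrac{2c+1}4$, giving $\vartheta_{\calF}(c)\le(2c+1)/4$.

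The crux is making the two bounds meet exactly. The upper-bound instance has to be extremal in \emph{both} coordinates simultaneously --- maximal LP value and minimal true value per unit of clause budget --- and the lower-bound rounding must be the optimal one: ``best of two'' is exactly what is needed here, since neither independent rounding alone nor any single fixed assignment closes the $3/4$ gap, and one has to carry the nonlinearity $\min(1,\cdot)$ in the $2$-clause contribution together with the different behavior of unit versus $2$-clauses under both rounding steps. A secondary point is to check that the extremal instance above has a shape the integrality-gap-to-hard-instances reduction behind \Cref{thm:main} can lift to a streaming lower bound, but as a bounded-degree disjoint union of two fixed gadgets this is routine. Finally, \Cref{thm:DICUT} is proved by the same template: replace the $K_2$-gadget by the complete directed graph (LP value $\tfrac12$, true value $\tfrac14$) and the unit pair by a single satisfiable directed edge; the extra breakpoint at $c=\tfrac14$ appears because a uniformly random assignment only guarantees true value $\tfrac14$ for directed cut.
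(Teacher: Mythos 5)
The proposal is correct and establishes the same two bounds ($\vartheta^*_{\calF}(c)\geq \tfrac{c}{2}+\tfrac14$ per-instance, and a matching family of gap instances), but the rounding argument for the lower bound is genuinely different from the paper's. The paper first invokes a half-integrality lemma for the 2SAT LP polytope (stated as "analogous to \Cref{lem:solution_structure}" without spelling out the vertex argument for the more complicated polytope with $\min$-constraints), then analyzes independent rounding from a half-integral solution, splitting the clauses into $T$ (both endpoints at $\tfrac12$, where $g_i=\tfrac34$ and $h_i=1$) and its complement (where $g_i=h_i$), and finally argues that the average $h_i$ outside $T$ is at least $\tfrac12$ by a local-improvement argument on the LP. You instead avoid half-integrality entirely and use a "best-of-two" rounding: independent LP rounding gives $\geq\tfrac34 w$ on a $2$-clause of LP mass $w$ (via $y_1y_2\leq w^2/4$ by AM–GM, and noting $w\leq 1$) and exactly $w$ on a unit clause, while a uniform random assignment gives $\tfrac34$ and $\tfrac12$ respectively; averaging the two guarantees and a short manipulation using $\bar w^{(2)}\leq 1$ yields the bound. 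This is a cleaner route in that it sidesteps the vertex analysis the paper gestures at. Your upper bound is essentially the paper's: your $K_2$-gadget is exactly the paper's second instance at $n=2$ (where the value is already exactly $\tfrac34$, so the paper's "approaching $\tfrac34$ as $n\to\infty$" is an over-cautious phrasing), and your conflicting unit pair is the paper's first instance; you interpolate via disjoint unions where the paper invokes convexity (\Cref{lem:convexity}), but these two mechanisms are the same construction. The closing remark about making the gap instance liftable to a streaming lower bound is tangential --- \Cref{thm:2SAT} is purely a computation of $\vartheta_{\calF}$ and makes no claim about hardness, which is handled separately by \Cref{thm:hardness} applied to an arbitrary gap instance.
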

In particular, \Cref{thm:main,thm:2SAT} imply that the optimal approximation ratio of Max-$2$SAT is $3/4$ for multi-pass streaming algorithms.

\begin{figure}[ht]
\centering
\begin{subfigure}{0.48\textwidth}
\centering
\begin{tikzpicture}
  \begin{axis}[
    width=\linewidth, height=\linewidth,
    axis lines=middle,
    xmin=0, xmax=1.05,
    ymin=0, ymax=1.05,
    xlabel={$c$}, ylabel={$s$},
    xtick={0,0.25,0.5,1},
    xticklabels={$0$,$\tfrac14$,$\tfrac12$,$1$},
    ytick={0,0.25,0.5,1},
    yticklabels={$0$,$\tfrac14$,$\tfrac12$,$1$},
    grid=none,
    clip=false,
    axis line style={thick},
    tick style={thick}
  ]

    \draw[gray, thick, dashed] (axis cs:0,0) -- (axis cs:1,1);

    \fill[green, fill opacity=0.2]
      (axis cs:0,0) -- (axis cs:0.25,0.25)
      -- (axis cs:0.5,0.25) -- (axis cs:1,1)
      -- (axis cs:1,0) -- cycle;

    \fill[red, fill opacity=0.2]
      (axis cs:0,0) -- (axis cs:0.25,0.25)
      -- (axis cs:0.5,0.25) -- (axis cs:1,1)
      -- (axis cs:0,0) -- cycle;

    \draw[thick] (axis cs:0,0) -- (axis cs:0.25,0.25);
    \draw[thick] (axis cs:0.25,0.25) -- (axis cs:0.5,0.25);
    \draw[thick] (axis cs:0.5,0.25) -- (axis cs:1,1);

  \end{axis}
\end{tikzpicture}
\caption{ $\mathsf{Max}$-$\mathsf{DICUT}$.}
\label{fig:max-dicut}
\end{subfigure}
\hfill
\begin{subfigure}{0.48\textwidth}
\centering
\begin{tikzpicture}
  \begin{axis}[
    width=\linewidth, height=\linewidth,
    axis lines=middle,
    xmin=0, xmax=1.05,
    ymin=0, ymax=1.05,
    xlabel={$c$}, ylabel={$s$},
    xtick={0,0.5,1},
    xticklabels={$0$,$\tfrac12$,$1$},
    ytick={0,0.25,0.5,0.75,1},
    yticklabels={$0$,$\tfrac14$,$\tfrac12$,$\tfrac34$,$1$},
    grid=none,
    clip=false,
    axis line style={thick},
    tick style={thick}
  ]

    \draw[gray, thick, dashed] (axis cs:0,0) -- (axis cs:1,1);

    \fill[green, fill opacity=0.2]
      (axis cs:0,0) -- (axis cs:0.5,0.5)
      -- (axis cs:1,0.75)
      -- (axis cs:1,0) -- cycle;

    \fill[red, fill opacity=0.2]
      (axis cs:0,0) -- (axis cs:0.5,0.5)
      -- (axis cs:1,0.75) -- (axis cs:1,1)
      -- (axis cs:0,0) -- cycle;

    \draw[thick] (axis cs:0,0) -- (axis cs:0.5,0.5);
    \draw[thick] (axis cs:0.5,0.5) -- (axis cs:1,0.75);

  \end{axis}
\end{tikzpicture}
\caption{$\mathsf{Max\text{-}2SAT}$.}
\label{fig:2sat}
\end{subfigure}

\caption{The threshold functions $\lpcurve$ are shown in solid black lines. Red region stands for hardness, while green region corresponds to parameters where efficient multi-pass streaming algorithms exist.}
\label{fig:overall}
\end{figure}
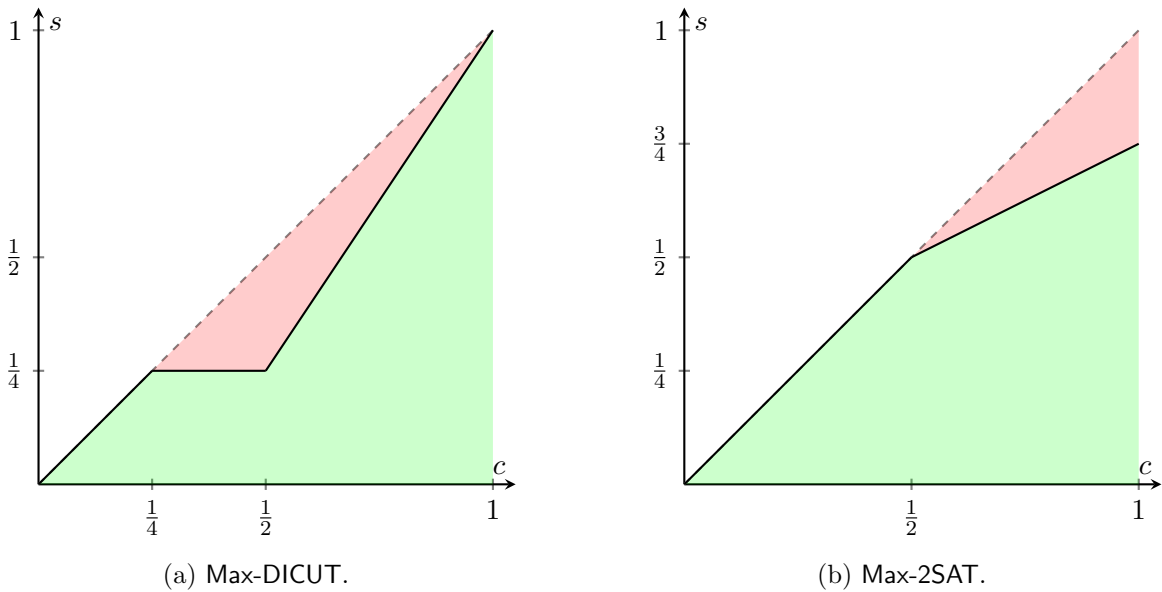

\subsubsection{Discussion: Sublinear Space vs. Sublinear Time}

Remarkably, the approximability threshold function $\vartheta_{\calF}$ arising in~\Cref{thm:main} exactly coincides with the one appearing in the analogous dichotomy theorem of~\cite{Yos11}.  
The dichotomy of~\cite{Yos11} concerns the same CSP approximation problem studied in~\cite{Rag08} (for polynomial time algorithms), \cite{CGSV24} (for sketching algorithms), and in this paper (for multi-pass streaming algorithms), but in yet another computational model --- namely, the sublinear-time model arising in bounded-degree graph property testing. We do not formally define this model here, but instead give an informal description of this connection.

In the bounded-degree query model (introduced by \cite{goldreich1997property}), an algorithm may adaptively query vertices of a graph, where each query reveals an edge incident to the queried vertex.  
The input graph is promised to have maximum degree at most a fixed constant $d$, and the algorithm’s goal is to decide whether the graph has a given property using as few queries as possible. The setting is similar in the context of CSPs: the algorithm may query variables of an input CSP instance, and each query reveals a constraint incident to the queried variable (there are promised to be no more than $d$ of them). The main result of \cite{Yos11} shows that for $c\in (0,1)$ and $s\in (0,\vartheta_{\calF}(c))$, there exists a randomized algorithm for deciding $\McspF{c}{s}$ using only a constant number of queries, whereas for $s\in (\vartheta_{\calF}(c),c)$ every such algorithm must make $\Omega_{c,s}(\sqrt{n})$ queries.

The coincidence between the threshold function in \cite{Yos11}'s dichotomy and the function $\vartheta_{\calF}$ arising in our result is intriguing: it suggests that, for the purpose of CSP approximation, the class of constant-query algorithms has essentially the same power as the class of efficient multi-pass streaming algorithms.

This is not accidental. One direction of the connection is more straightforward: any constant-query algorithm can be simulated by a constant-pass, logarithmic-space streaming algorithm. Consequently, on bounded-degree instances, multi-pass streaming algorithms are at least as powerful as constant-query algorithms. Since general CSP instances can be reduced to bounded-degree ones in the multi-pass streaming setting (formally argued in \Cref{sec:algorithm}), the performance of multi-pass streaming algorithms on general instances can match (or exceed) that of constant-query algorithms on bounded-degree instances. This is precisely the approach we take to prove our algorithmic result in \Cref{sec:algorithm}.

The reverse direction is considerably less trivial. Intuitively (and also formally, as we will argue in \Cref{sec:techniques,sec:communication_game}), multi-pass streaming algorithms correspond to communication protocols among several players, while query-based algorithms correspond to a more restricted class of ``structured'' protocols. 
Extending hardness from query-based algorithms to multi-pass streaming algorithms is thus analogous to establishing a query-to-communication lifting theorem, reminiscent of the lifting theorems in communication complexity (e.g., \cite{goos2017query}). Our result may therefore be viewed as such a lifting of \cite{Yos11}'s hardness result from a sublinear-time model to a sublinear-space model, albeit with a loss in the exponent: while \cite{Yos11} proves an $\Omega(\sqrt{n})$ time lower bound, we obtain only an $\Omega(n^{1/3})$ space lower bound. 

\begin{remark}\label{remark:general_graph_model}
The ``general graph model'' (introduced by \cite{parnas2002testing,kaufman2004tight}) is a sublinear-time algorithmic model that does not impose a bounded-degree assumption. This model may in fact be even closer to the multi-pass streaming model than the ``bounded-degree graph model'' is.
\end{remark}

\begin{remark}
The query vs.~communication connection has been exploited in the sublinear algorithm literature before this work. For instance, the result of \cite{blais2012property} is in some sense an ``un-lifting'' from communication lower bounds to query lower bounds. 
\end{remark}

\subsubsection{A Rich World of Approximability Hierarchy?}\label{subsec:hierarchy}

The relationships among the four dichotomy results discussed in \Cref{sec:imp} are illustrated in \Cref{fig:hiererachy}.  
The fact that multi-pass streaming algorithms are at least as powerful as sketching algorithms follows directly from the definitions of the models.  
By contrast, the second inequality in \Cref{fig:hiererachy} is not a priori obvious: the polynomial-time dichotomy is conditional on the Unique Games Conjecture and on $\mathrm{P}\neq\mathrm{NP}$, and $\mathrm{polylog}$-space algorithms need not run in polynomial time.  
Nevertheless, in light of \cite{Yos11} and our own results, we can safely conclude that CSP-approximation problems that are $\mathrm{UG}$-hard for polynomial-time algorithms are also unconditionally hard for multi-pass streaming and constant-query algorithms.  
This follows because the approximability threshold in both \cite{Yos11} and our work is determined by the basic linear programming relaxation (see \Cref{sec:techniques}), which is strictly weaker than the basic SDP relaxation featured in \cite{Rag08}.

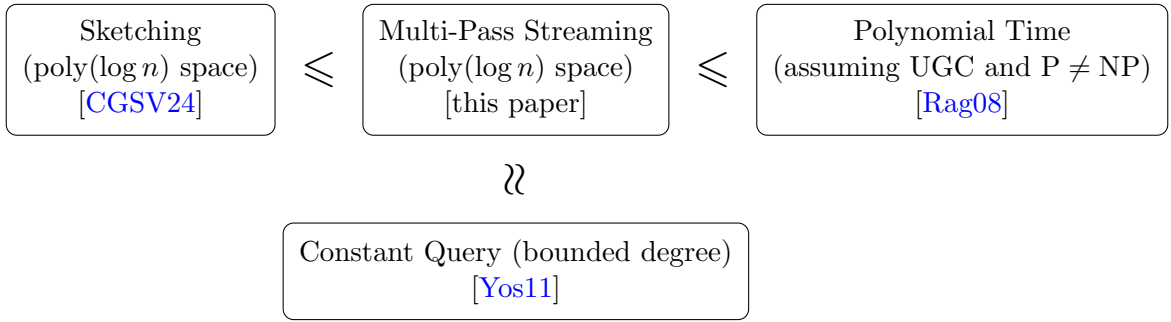
\begin{figure}[ht]
\begin{center}
\begin{tikzpicture}
\tikzset{
  bubble/.style={
    draw,
    rectangle,
    rounded corners,
    align=center,
    inner sep=6pt 
  }
}
\node[bubble] (A) {Sketching\\
($\poly(\log n)$ space)
\\ \cite{CGSV24}};
\node[bubble, right=3em of A] (B) {Multi-Pass Streaming\\($\poly(\log n)$ space)
\\ {[this paper]}};
\node[bubble, below=3em of B] (C) {Constant Query
(bounded degree)\\ \cite{Yos11}};
\node[bubble, right=3em of B] (D) {Polynomial Time\\ 
(assuming UGC and $\mathrm{P}\neq \mathrm{NP}$)
\\ \cite{Rag08}};

\node at ($(A.east)!0.5!(B.west)$) {\LARGE$\leq$};
\node[rotate = 90] at ($(B.south)!0.5!(C.north)$) {\LARGE$\approx$};
\node at ($(B.east)!0.5!(D.west)$) {\LARGE$\leq$};
\end{tikzpicture}
\end{center}
\caption{relative power of algorithmic models in approximating CSP value }\label{fig:hiererachy}
\end{figure}

An interesting open direction is whether the connection between the multi-pass streaming model and sublinear-time models extends to other approximation problems.  

A notable example is the Vertex-Cover minimization problem.  
On bounded-degree graphs without isolated vertices, \cite{parnas2007approximating} shows that for any $\varepsilon>0$ there exists a constant-query algorithm achieving a $(2+\varepsilon)$-approximation.  
On the other hand, the result of \cite{FMW25} implies that even in the (potentially more powerful) multi-pass streaming model, achieving a $(2-\varepsilon)$-approximation requires either polynomial memory or polynomially many passes (on bounded-degree graphs).  
Thus, the observed equivalence in power between multi-pass streaming and constant-query algorithms extends to the Vertex-Cover problem on bounded-degree graphs as well.

\begin{remark}
Unlike CSP-approximation problems, the Vertex-Cover problem becomes harder to approximate on graphs with unbounded degrees (see \cite{parnas2007approximating}). We do not claim that the equivalence in power extends also to graphs without a degree bound.
\end{remark}


As another example, the graph coloring problem appears to be very hard in both sublinear-time models and the multi-pass streaming model.  
In particular, the result of \cite{FMW25} implies that there exists a constant $\delta>0$ such that distinguishing $2$-colorable graphs from graphs that require at least $n^{\delta}$ colors requires either polynomial memory or polynomially many passes.\footnote{This relies on the observation that the communication lower bound in \cite{FMW25} depends only polynomially on the number of players $K$. The current paper contains a generalization (\Cref{thm:dihp_lowerbound}) of the communication lower bound of \cite{FMW25}, which likewise depends polynomially on the parameter $K$.} This lower bound also carries over to the general graph model (see \Cref{remark:general_graph_model}), implying that any such algorithm must make at least polynomially many queries.\footnote{This is because each query in the general graph model can be simulated using only one additional pass and $O(\log n)$ additional memory in the multi-pass streaming model.}


A more mysterious problem is the approximability of maximum matchings in graphs, which has recently attracted a lot of attention in both sublinear-time models (e.g. \cite{behnezhad2024approximating,mahabadi20250}) and streaming models (e.g. \cite{bury2015sublinear,AN21}). It is not clear yet whether similar equivalence in power holds for this problem, and we leave this question to future research.

\subsection{Techniques}\label{sec:techniques}
In this subsection, we present a high-level overview of the proof of \Cref{thm:main}. Although there are many high-level similarities between \Cref{thm:main} and \cite{Yos11}'s dichotomy theorem, we provide an independent exposition that does not assume prior familiarity with \cite{Yos11}.\footnote{The authors, in fact, learned about \cite{Yos11}'s result only after proving \Cref{thm:main}.}

As mentioned earlier, the key challenge in proving any dichotomy result (and in particular~\Cref{thm:main}) 
is in finding a candidate family of algorithms that are supposedly the best possible approximation algorithms in the computational model considered. Motivated by~\cite{saxena2025streaming}, we began our investigation by examining the work of Trevisan~\cite{Tre96}, who considered approximation algorithms for some CSPs based on positive linear programs, which is a certain sub-class of linear programs. Trevisan's motivation was that such algorithms are highly parallelized, and in particular they can be implemented using poly-logarithmic depth circuits. The value of a positive linear program (on bounded degree instances) was shown to be approximable using distributed algorithms~\cite{KMW06}, which given the result of~\cite{saxena2025streaming} raises the possibility that an algorithm based on positive linear programming captures some class of streaming model. We show that this is indeed the case.

\subsubsection{The Linear-Programming Relaxation}
The above discussion naturally leads one to consider the so-called basic linear-programming relaxation of a given CSP instance. Fix an instance $\calI=(\calV, \calC)$ of $\cspF$, write the constraints $\calC=(C_{1},\dots,C_{m})$, and for each $i\in[m]$ write the $i$th constraint $C_{i}=((\sfv_{i,1},\dots,\sfv_{i,k}),f_{i})$. The program $\lp_{\calI}$ has the variables $(x_{\sfv,\sigma})_{\sfv\in \calV,\,\sigma\in \Sigma}$ and $(z_{i,b})_{i\in [m],\, b\in \Sigma^{k}}$, and it proceeds as follows:
\begin{tcolorbox}[
  enhanced,
  breakable,
  title={$\lp_{\calI}$ for $\calI=(\calV, \calC)$},
  fonttitle=\bfseries,
  colframe=black,
  colback=white,
  boxrule=0.6pt,
  arc=6pt,
  left=6pt,
  right=6pt,
  top=4pt,
  bottom=4pt
]
\begin{alignat*}{3}
\textup{maximize} \quad     &&  \frac{1}{m}\sum_{i=1}^{m} 
            \sum_{b\in\Sigma^{k}} f_i(b)\,z_{i,b} \\
\textup{subject to} \quad && \sum_{\sigma\in\Sigma} x_{\sfv,\sigma}&=1
            \quad &\forall\,\sfv\in \calV \\
           && \sum_{b\in\Sigma^{k}}\mathbbm{1}\{b_{j}=\sigma\}\cdot
             z_{i,b}&=x_{\sfv_{i,j},\,\sigma}
            \quad &\forall\,i\in [m],\,j\in [k],\,\sigma\in \Sigma\\
           && x_{\sfv,\sigma}&\ge 0
            \quad &\forall\,\sfv\in \calV,\,\sigma\in\Sigma \\
           && z_{i,b}&\ge 0
            \quad &\forall\,i\in [m],\,b\in\Sigma^{k}
\end{alignat*}
\end{tcolorbox}

While the formulation of $\lp_{\calI}$ above itself is not strictly speaking a positive linear program, there are standard reductions that can turn it into one. Using this and the algorithm of~\cite{KMW06}, one can show that on bounded degree instances (by which we mean that each variable appears in $O(1)$ many constraints), the value of $\lp_{\calI}$ can be approximated within an additive error $\varepsilon>0$ by a local algorithm. This requires a nontrivial amount of work, and is already achieved by \cite{Yos11}, allowing Yoshida to build an algorithm with constant query complexity that approximates the value of $\lp_{\calI}$.

\subsubsection{Approximating the Value of $\lp_{\calI}$ via a Streaming Algorithm}
Building on top of~\cite{Yos11,saxena2025streaming} we show that the value of $\lp_{\calI}$ can be approximated within an additive error $\varepsilon>0$ by a streaming algorithm with $O(\log n)$ memory and constantly many passes. For this purpose, we first show how to handle instances with potentially unbounded degree, and for that we use an idea of~\cite{Tre01} that, in the context of the class NP, performs a degree reduction for CSPs while roughly maintaining its value. Second, we use the fact that Yoshida's algorithm is a local algorithm, which by results from~\cite{saxena2025streaming} can be therefore simulated with constantly many passes. The only missing piece from the above description is the integration of the degree-reduction step and the simulation step. While the simulation of~\cite{saxena2025streaming} requires explicit access to the bounded degree instance $\calI$, we cannot afford such an access (as our intended instance is a random sparsified version of $\calI$ as in~\cite{Tre01}). We overcome this difficulty by considering a slight variant of the sparsification of~\cite{Tre01} which is more amendable to the streaming model. At a high level, instead of generating the sparsified instance in a single shot, we only generate parts of it that are required to answer the queries made by Yoshida's algorithm. In particular, it is important that we can afford to store the randomness necessary to maintain that part of the graph. 

\subsubsection{From an Integrality Gap to Communication Complexity}
The function $\vartheta_\calF$ from~\Cref{thm:main} is defined using the program $\lp_{\calI}$: for all $c$, the value $\vartheta_\calF(c)$ is the infimum of $\val_{\calI}$ over all instances $\calI$ with $\lp_{\calI}$-value at least $c$ (see~\Cref{def:lpcurve}). 
With this in mind, the first item in~\Cref{thm:main} is a consequence of the algorithm discussed in the previous section. To complete the proof of~\Cref{thm:main} one must show the second item, which amounts to saying that an instance $\calI$ with integral value $s$ and $\lp_{\calI}$ value $c$ can be converted into a hardness result for the multi-pass streaming model.\footnote{An analogous assertion for semi-definite programs and polynomial time computation is made by Raghavendra~\cite{Rag08} in the proof of his dichotomy result.}

{\bf Distribution labeled graphs:} we first view the instance $\calI$ as a collection of local distributions with mild consistency between them. For each constraint $C_i$, the numbers $\{z_{i,b}\}_{b\in\Sigma^k}$ specify a local distribution of the assignment to the $k$-variables in $C_i$, and we denote this distribution by $\nu_{C_i}$. Analogously, for each variables $\sfv\in \calV$ the numbers $\{x_{\sfv,\sigma}\}_{\sigma\in \Sigma}$ specify a distribution over the assignments to $\sfv$, and we call this distribution $\nu_{\sfv}$. Thus, the second condition in $\lp_{\calI}$ can be seen as asserting that for each constraint $C_i$ and variable in it $\sfv_{i,j}$, it holds that the marginal distribution of $\nu_{C_i}$ on $\sfv_{i,j}$ is the same as $\nu_{\sfv_{i,j}}$. 

To apply Fourier analytic tools (which play a significant role in our analysis) it is more convenient to work with the Abelian groups $\mathbb{Z}_N$ and $\mathbb{Z}_N^k$ instead of 
$\Sigma$ and $\Sigma^k$, where $N$ is large enough. 
To do that we first observe that in a solution to $\lp_{\calI}$, the values of $x_{\sfv,\sigma}$ and $z_{i,b}$ can 
be taken to be rational. Thus, we can partition the set
$\mathbb{Z}_N$ into intervals $\{I_{\sfv,\sigma}\}$ where $|I_{\sfv,\sigma}| = x_{\sfv,\sigma}\cdot N$ and define a function $q_{\sfv}$ from $\mathbb{Z}_N$ to $\Sigma$ by $q_{\sfv}(i) = \sigma$ if $i\in I_{\sfv,\sigma}$. The function $q_{\sfv}$ maps the uniform distribution over $\mathbb{Z}_N$ to $\nu_{\sfv}$, so we have successfully converted $\nu_{\sfv}$ into the uniform distribution over $\mathbb{Z}_N$. To convert $\nu_{C_i}$ to a distribution $\mu_{C_i}$ over $\mathbb{Z}_N^k$ we first sample $b\in\Sigma^k$ with probability proportional to $z_{C_i,b}$, and then sample $(i_1,\ldots,i_k)\in q_{\sfv_1}^{-1}(b_1)\times\ldots\times q_{\sfv_k}^{-1}(b_k)$ uniformly, where $\sfv_1,\ldots,\sfv_k$ are the variables in $C_i$.

At the end of this step we get an object which we call a \emph{distribution labeled graph}, namely a $k$-uniform hypergraph $G$ (the constraint structure in $\calI$) whose hyperedges are labeled by distributions over $\mathbb{Z}_N^k$ that have uniform marginal on each variable. The next step in the proof is to transform this object into a communication complexity problem.

{\bf The communication problem:} following essentially all prior works, our lower bound is ultimately proved by establishing a communication complexity lower bound for a suitable problem. Towards this end we show how to transform a distribution labeled graph into a communication problem called DIHP, where:
\begin{enumerate}
    \item Any $p$-pass, $S$-space streaming algorithm for 
    $\McspF{c-o(1)}{s+o(1)}$ can be converted into a communication protocol $\Pi$ solving DIHP, whose communication complexity is $O(pS)$.
    \item The communication complexity of DIHP is lower bounded by $\Omega(n^{1/3})$.
\end{enumerate}
When combined, the two items clearly give the hardness part of~\Cref{thm:main}, and we next discuss the DIHP problem.

\subsubsection{The Distributional Implicit Hidden Partition (DIHP)}
The DIHP problem we consider is an appropriate analog of the one considered in~\cite{KK19} (or the signal detection problem in~\cite{CGSV24}), and it is 
defined on a blow-up of the graph $G$ above. To define it we need the notion of labeled matchings. For a vertex set $V$, a matching over $V$ is a collection of $k$-uniform hyperedges over $V$ that are vertex disjoint, and the size of the matching is the number of hyperedges in it. A labeled matching is one in which each hyperedge is labeled by a $\mathbb{Z}_N^k$ element.

With this in mind, let $\mathcal{E}$ be the edge set of $G$ and $\mathcal{V}$ be the vertex set of $G$, let $K$ be a large constant and let $n\in\mathbb{N}$ be thought of as large (we think of the other parameters, such as the size of $N$ and $G$, as being constant relative to $n$). For each $\sfv\in \calV$ we consider the cloud of $\sfv$, $U_{\sfv} = \{\sfv\}\times[n]$, and then the vertex set $V=\bigcup_{\sfv} U_{\sfv}$. 
In the DIHP problem each one of $K$ players receives as input a labeled (partial) matching over $V$. The labelings of these matchings are either correlated via some global $x\in \mathbb{Z}_N^{V}$, or else are fully independent of each other. The goal of the players is to distinguish between the two cases. 

More specifically, each hyperedge $\sfe\in\mathcal{E}$ has $K$ corresponding players, so that the total number of players is $|\mathcal{E}|K$, and we label them by $(\sfe,j)$ for $\sfe\in \mathcal{E}$ and $j\in [K]$. The player $(\sfe,j)$ will receive as input a partial labeled matching on the clouds corresponding to the hyperedge $\sfe$. Namely, letting $\sfv_1,\ldots,\sfv_k$ be the vertices in $\sfe$, the player $(\sfe,j)$ will receive a random matching $M^{(\sfe,j)}$ of size $\alpha n$ from the complete $k$-partite hypergraph on $\bigcup_{i=1}^{k}U_{\sfv_i}$ (where $\alpha$ is a small constant). The labeling of that matching will be drawn differently depending on whether we are drawing a YES instance or a NO instance:
\begin{enumerate}
    \item In the distribution $\mathcal{D}_{{\no}}$, the label of each hyperedge in $M^{(\sfe,j)}$ is chosen independently uniformly at random.
    \item In the distribution $\mathcal{D}_{{\yes}}$, we first sample $x\in \mathbb{Z}_N^{\calV\times [n]}$ uniformly. Then, for each hyperedge $e'$ in $M^{(\sfe,j)}$ we sample $w_{e'}\sim \mu_{\sfe}$ independently and label the hyperedge $e'$ by $x_{|e'} - w_{e'}$, where $\mu_{\sfe}$ denotes the label distribution associated with the edge $\sfe\in \mathcal{E}$ in the distribution labeled graph $G$. 
\end{enumerate}

We first observe the relation between the DIHP problem and $\McspF{c-o(1)}{s+o(1)}$. Consider the $\cspF$ instance on variables $ \calV\times [n]$ induced by the hyperedges in the above graph that are labeled by $\vec{0}$. It is not hard to see that with high probability, for an input sampled according to $\mathcal{D}_{{\yes}}$, when interpreted as a labeling of elements from $\Sigma$ to the variables (using the maps $q_{\sfv}$), the global assignment $x$ will have value at least $c-o(1)$. 

As a sanity check, fix any $e \in M^{(\sfe, j)}$ and compute the probability that the constraint on $e$ is satisfied, conditioned on its inclusion in the CSP instance. 
The random variable $x_{|e}$ is distributed as $\mu_{\sfe}$ conditioned on label $\vec{0}$. 
By our construction of the maps $\{q_{\sfv}\}$, the resulting distribution of assignments to the variables in $e$ over the alphabet 
$\Sigma$ coincides with $\nu_{\sfe}$. Consequently, the probability that $e$ is satisfied is exactly the probability that the corresponding constraint $C_i$ from the original 
CSP instance is satisfied under the distribution $\nu_{C_i}$. Combining a concentration inequality one can conclude the statement. 

Also, using standard arguments 
one can show that with high probability, an instance sampled according to $\mathcal{D}_{{\no}}$ will have value which is at most the integral value of $\calI$ plus $o(1)$. Combining this with standard techniques, one can transfer any streaming algorithm for $\McspF{c-o(1)}{s+o(1)}$ to a protocol for DIHP with the above guarantees.

\subsubsection{The DIHP Lower Bound}
Our lower bound for the above DIHP problem follows the method of~\cite{FMW25}, which we discuss next. As a first attempt at proving a lower bound one may try to use the discrepancy method and show that for any combinatorial rectangle $R$ whose mass under $\mathcal{D}_{\no}$ is at least $2^{-C}$ it holds that 
\begin{equation}\label{eq:disc_intro}
    |\mathcal{D}_{\yes}(R) - \mathcal{D}_{\no}(R)|\leq 0.01\mathcal{D}_{\no}(R).
\end{equation}
Indeed, if true, such assertion would give a lower bound of $\Omega(C)$ on the communication complexity of DIHP. Alas, this turns out to be false, and there are in fact rectangles $R$ for which this assertion fails. Indeed, it is not hard to engineer distributions $\mu_{\sfe}$ as above and rectangles $R$ defined by constantly many coordinates, such that $R$ has $0$ mass under $\mathcal{D}_{\yes}$ and $n^{-O(1)}$ mass under $\mathcal{D}_{\no}$.

This observation leads us to consider a refinement of the discrepancy method: instead of proving~\eqref{eq:disc_intro} for all rectangles, it is sufficient to prove it for a certain family of rectangles, so long as the rectangle structure of an arbitrary communication protocol can be further decomposed into rectangles from that family (with only a mild additional cost). The family of rectangles we work with here is the family of \emph{global rectangles}, analogously to~\cite{FMW25}. For a domain $\calU$ let $\Omega^{\calU,\alpha n}$ be the set of labeled matchings over $\calU$ of size $\alpha n$, and for a labeling $\bfz$ of at most $\alpha n$ hyperedges let $\Omega_{\bfz}^{\calU,\alpha n}$ be the set of labeled matchings consistent with $z$.
A pair $(A,\bfz)$, where $A\subseteq \Omega^{\calU,\alpha n}$  and $\bfz$ is restriction assigning labels to at most $\alpha n$ hyperedges, is called global if for any labeling $\bfz'$ extending $\bfz$ it holds that:
\[
\frac{\left|A \cap \Omega_{\bfz'}^{\calU,\alpha n}\right|}{\left|\Omega_{\bfz'}^{\calU,\alpha n}\right|}\leq 2^{|\supp(\bfz')|-|\supp(\bfz)|}\frac{\left|A \cap \Omega_{\bfz}^{\calU,\alpha n}\right|}{\left|\Omega_{\bfz}^{\calU,\alpha n}\right|}.
\]
In words, any further restriction of at most $r$ labels to hyperedges increases the relative density of $A$ by factor at most $2^r$. With this definition in mind, a rectangle-restriction pair $(R,\bdzeta)$ for $R = A^{(1)}\times\ldots\times A^{(|\mathcal{E}|K)}$ and 
$\bdzeta = (\bfz^{(1)},\ldots,\bfz^{(|\mathcal{E}|K)})$ is called global if $(A^{(i)},\bfz^{(i)})$ is global for each $i$.

Our communication complexity lower bound is proved 
by combining a decomposition lemma, and a discrepancy lemma. Our decomposition lemma asserts that a communication protocol $\Pi$ with complexity $C\leq o(\sqrt{n})$ for DIHP can be converted into a $C$-round communication protocol, such that in each round only a single player speaks, and at its end the set of inputs that reach there form a global rectangle-restriction pair (see~\Cref{lem:regularity_decomposition} for a more precise formulation). At a high level this lemma reduces the communication complexity lower bound  
down to two tasks:
\begin{enumerate}
    \item {\bf Handling the structured part:} namely, showing that strategies in which players expose $o(\sqrt{n})$ coordinates of their input fail to distinguish between the distributions $\mathcal{D}_{\yes}$ and $\mathcal{D}_{\no}$.
    \item {\bf Handling the global part:} namely, showing that an inequality such as~\eqref{eq:disc_intro} holds if $R$ is a global rectangle. 
\end{enumerate}
Our analysis of the structured part refines the analysis in~\cite{FMW25}, which handled structured strategies exposing up to $o(n^{1/3})$ of the coordinates. While this refinement is not necessary for our analysis it may be useful for future research. Our argument required a slight reformulation of the decomposition lemma of~\cite{FMW25}, so we chose to include the best analysis that we are aware of.

The analysis of the global part is done via our discrepancy lemma, which roughly speaking asserts that~\eqref{eq:disc_intro} holds for global rectangles. Technically, it proceeds via proving an appropriate global hypercontractive inequality for the space $\Omega^{\calU,\alpha n}$, using it to establish a certain variant of the level-$d$ inequality for functions as $1_A\colon \Omega^{\calU,\alpha n}\to \{0,1\}$ when $A$ is a global set, and then relating the left hand side of~\eqref{eq:disc_intro} to such quantities.
While there are high-level similarities between this result and the result proved in~\cite{FMW25}, we mention a few differences:
\begin{enumerate}
    \item The setting considered herein is more general than the one in~\cite{FMW25} (involving arbitrary predicates, $k$-uniform hypergraphs and large cyclic groups as opposed to $\mathbb{F}_2$), and this leads to a few technical complications. For example, the ``structured sets'' 
    in the context of~\cite{FMW25} are linear subspaces over $\mathbb{F}_2$, which is a substantially nicer structure than the type we need to study in the current paper.
    
    \item It is possible to deduce (in a black-box way) a level-$d$ inequality for $L_2(\Omega^{\calU,\alpha n})$ from existing global hypercontractive inequalities in the literature. Such a result can be used to prove weaker communication complexity lower bounds for DIHP, and the reduction we know achieves a bound of the form $n^{\Omega_{k}(1)}$ in~\Cref{thm:main}. This is the best type of result one may hope from black-box reductions (at least from the ones that we know). Indeed, such a reduction implies a level-$d$ inequality that is more general than the one we prove, which is tight in that generality. To get the better bound as in~\Cref{thm:main} we must therefore establish a level-$d$ inequality that is tailored to our setting.\footnote{Technically, our level-$d$ inequality only bounds the contribution of a certain subset of level-$d$ functions (as opposed to the entire level-$d$ mass of the function). This was also the case in~\cite{FMW25}, but there the gap between the performance of the two approaches is smaller.}
    \item The above description is somewhat of an oversimplification of our actual argument, and we do not know how to fully decouple the analysis of the ``structured part'' and the ``global part''. Instead, we need to sew the two arguments together in a suitable way. We handle this part somewhat differently (and perhaps more cleanly) than in~\cite{FMW25}. More specifically, we do not need to study the notion of ``unrefinements'' and the way they affect the level-$d$ weight of a function, which are important in~\cite{FMW25}.
\end{enumerate}

\subsection{Open Problems}
We finish this introductory section by stating a few open problems.
\begin{enumerate}
    \item {\bf Perfect completeness:} first, it would be interesting to investigate what happens for $c=1$ in~\Cref{thm:main}. Our main algorithmic result~\Cref{thm:streaming_solve_LP} and main hardness result~\Cref{thm:hardness} do give some guarantees in that case, but we do not know whether the two values they achieve match. If it is true that $\vartheta_{\mathcal{F}}$ (defined in \Cref{def:lpcurve}) is continuous at $1$, i.e., if it is true that $\lim_{c\rightarrow 1^{-}}\vartheta_{\mathcal{F}}(c) = \vartheta_{\mathcal{F}}(1)$, then the two results would match and~\Cref{thm:main} would automatically apply to $c=1$ as well. We do not know whether $\vartheta_{\mathcal{F}}$ is necessarily continuous at $c=1$, though. It is also possible that the case of perfect completeness case requires a different algorithm (such is the case of perfect completeness in the NP-world), but we do not have any better candidate algorithm in mind.
    \item {\bf Single-pass dichotomy theorem:} it is clear from definition that the power of single-pass streaming algorithms lies somewhere between sketching and multi-pass streaming (in \Cref{fig:hiererachy}). However, it remains largely mysterious whether CSP approximation exhibits a dichotomy behavior in the single-pass model as well. It would be interesting to prove a ${\poly}(\log n)$ vs $n^{\Omega(1)}$ space dichotomy result in this setting (or even more modestly, a $n^{o(1)}$ vs $n^{\Omega(1)}$ space dichotomy result). As a starting point it would be good to find a candidate class of optimal approximation algorithms for the single-pass model.
    \item {\bf Random order:} it is not clear if multi-pass streaming algorithms may have better performance when the constraints of CSP instances are given in random order than in the worst-case order setting, as our lower bound does not extend to the case of random-ordered inputs.
    \item {\bf Derandomization:} the approximation algorithm in~\Cref{thm:main} is randomized, and it will be interesting to see if similar guarantees can be achieved via a deterministic approximation algorithm with similar limit on space and number of passes.
    \item {\bf Other problems:} can the connection between sublinear-time models and sublinear-space models be extended to other problems (see \Cref{subsec:hierarchy}), such as approximating maximum matchings?
\end{enumerate}

\section{Preliminaries}

\subsection{General Notations}\label{subsec:general_notations}

In this subsection we summarize general notational conventions used throughout the paper. Additional notation will be introduced as needed, typically within dedicated ``Notation'' environments.

\paragraph{Probability.}  
For a finite set $\Lambda$, we write $\Exs{x \in \Lambda}{\cdot}$ and $\Prs{x \in \Lambda}{\cdot}$ to denote expectation and probability, respectively, when $x$ is drawn uniformly at random from $\Lambda$.  
If $x$ is sampled according to a specific distribution $\calD$ over $\Lambda$, we write $x \sim \calD$ in place of $x \in \Lambda$. A \emph{probability mass function} on $\Lambda$ is a function $p:\Lambda\rightarrow [0,\infty)$ such that $\sum_{x\in \Lambda}p(x)=1$, while a \emph{probability density function} is a function $f:\Lambda\rightarrow [0,\infty)$ such that $\Exs{x\in \Lambda}{f(x)}=1$. A \emph{right stochastic matrix}, or a \emph{Markov kernel}, is a matrix in which each row is a probability mass function on the set of columns. 

\paragraph{Hilbert space.}  
For a finite set $\Lambda$, we denote by $L^2(\Lambda)$ the (finite-dimensional) Hilbert space of complex-valued functions on $\Lambda$, equipped with the inner product
\[
\langle f, g \rangle := \Exu{x \in \Lambda}{f(x)\, \overline{g(x)}}.
\]

\paragraph{Fourier analysis.}  
We denote the finite cyclic group $\bZ / N\bZ$ by $\bZ_N$, where $N\geq 2$ is an integer. Throughout the paper, the capital letter $N$ is reserved exclusively for this notation. For any finite index set $\Lambda$, the collection of Fourier characters on the product group $\bZ_N^{\Lambda}$ is indexed by $\bZ_N^{\Lambda}$ itself. More precisely, for $b \in \bZ_N^{\Lambda}$, the associated character function $\chi_b : \bZ_N^{\Lambda} \to \bC$ is defined by
\[
\chi_b(x) := \exp\left( \frac{2\pi \mathrm{i}}{N} \sum_{v \in \Lambda} b_v x_v \right),
\]
where $\mathrm{i}$ denotes the imaginary unit.

\paragraph{Vectors and maps.}  
For a vector $x \in \bZ_N^{\Lambda}$ or $x \in [0,1]^{\Lambda}$, we denote its coordinates by subscripts: $x_v$ for each $v \in \Lambda$.  
A related notion is that of a \emph{map} $\bfy : \Lambda \to \Lambda'$. We use boldface symbols for maps, especially when their images are themselves vectors, to distinguish them from ordinary vectors.  
For $v \in \Lambda$, the value of the map at $v$ is denoted by $\bfy(v)$.  
The collection of all such maps is denoted by $\Map{\Lambda}{\Lambda'}$.

\paragraph{Support sets.}
Let $\Lambda'$ be a domain containing a distinguished \emph{nullity element}. For either a vector $x \in (\Lambda')^{\Lambda}$ or a map $\bfy \in \Map{\Lambda}{\Lambda'}$, the \emph{support} of $x$ or $\bfy$ --- denoted $\supp(x)$ or $\supp(\bfy)$ --- is the set of elements $v \in \Lambda$ such that $x_v$ or $\bfy(v)$ is not equal to the nullity element. For example, when $\Lambda' = \bZ_N$, the nullity element is the additive identity $0$. In some cases, the domain $\Lambda'$ is taken to be a disjoint union of an Abelian group and a special symbol --- such as $\bZ_N^k \cup \{\nil\}$ --- in which case the nullity element is the special symbol $\nil$, rather than the identity of the group.

\paragraph{Degree decomposition.} For a function $f:\ZmodN^{\Lambda}\rightarrow\bC$ and a nonnegative integer $d\leq |\Lambda|$, we will write the degree-$d$ part of $f$ as
\[
f^{=d}:=\sum_{b\in \ZmodN^{\Lambda},\, |\supp(b)|=d}\left\langle f,\chi_{b}\right\rangle\cdot \chi_{b}.
\]
We then have $f=\sum_{d=0}^{|\Lambda|}f^{=d}$.

\paragraph{CSPs and hypergraphs.}  
Throughout the paper, $\Sigma$ denotes the CSP alphabet, and the lowercase letter $k$ always refers to the arity of predicates. The calligraphic letter $\calF$ always denotes a nonempty finite set of predicates mapping from $\Sigma^{k}$ to $\{0,1\}$. 
Variables in a CSP instance are often identified with vertices of a hypergraph. In many constructions, these hypergraphs undergo a \emph{blow-up}, in which each original vertex is replaced by $n$ copies. We adopt the following notational convention: pre-blowup vertices and hyperedges are denoted using sans-serif font (e.g., $\mathsf{v}$ and $\mathsf{e}$), while post-blowup vertices and hyperedges are written in standard math font (e.g., $v$ and $e$). When the context is clear, hyperedges are sometimes simply refered to as edges. A set of hyperedges in a \(k\)-uniform hypergraph is said to contain a cycle if there exist \(\ell\) hyperedges in the set that cover at most \(\ell(k-1)\) vertices, for some \(\ell \geq 1\).

\paragraph{One-wise independence.} A probability distribution over $\ZNk$ is called \emph{one-wise independent} if its marginal on each of the $k$ coordinates is the uniform distribution on $\ZmodN$. For any one-wise independent distribution $\mu$, we denote by $\mu(\cdot)$ the probability mass function of $\mu$.

\subsection{Streaming Algorithms}

Suppose $\calF\subseteq \{f:\Sigma^{k}\rightarrow\{0,1\}\}$ is a fixed predicate family, and $\calV$ is a fixed variable set. Let 
$$\calF_{\calV}:=\left\{(\sfe,f):\sfe\in \calV^{k},\,f\in \calF\right\}$$
be the set of all possible constraints placed on $\calV$ using predicates from $\calF$. 
\begin{definition}
A deterministic space-$S$ streaming algorithm for $\mcsp$ over the variable set $\calV$ is specified by:
\begin{itemize}
    \item a transition function $\calT : \{0,1\}^{S} \times \calF_{\calV} \rightarrow \{0,1\}^{S}$, and
    \item an output function $\calO : \{0,1\}^{S} \rightarrow [0,1]$.
\end{itemize}
When the algorithm reads a constraint $C \in \calF_{\calV}$ while in memory state $x \in \{0,1\}^{S}$, it updates its memory to $\calT(x, C)$. After processing all constraints in the input stream and reaching a final memory state $x \in \{0,1\}^{S}$, it outputs the value $\calO(x)$.
\end{definition}

\begin{definition}\label{def:randomized_streaming}
A randomized space-$S$ streaming algorithm for $\mcsp$ over the variable set $\calV$ is specified by:
\begin{itemize}
    \item a transition function $\calT : \{0,1\}^{S} \times \calF_{\calV} \times \{0,1\}^{r} \rightarrow \{0,1\}^{S}$, and
    \item an output function $\calO : \{0,1\}^{S} \times \{0,1\}^{r} \rightarrow [0,1]$,
\end{itemize}
where $r$ is a nonnegative integer with $r \leq S$. When the algorithm reads a constraint $C \in \calF_{\calV}$ while in memory state $x \in \{0,1\}^{S}$, it samples a uniformly random string $z \in \{0,1\}^{r}$ and updates its memory state to $\calT(x, C, z)$. After processing the entire input stream and reaching a final memory state $x \in \{0,1\}^{S}$, it samples a fresh random string $z \in \{0,1\}^{r}$ and outputs the value $\calO(x, z)$.
\end{definition}

A more general notion of a randomized space-$S$ streaming algorithm is given by a probability distribution over deterministic space-$S$ streaming algorithms. Our lower bound results are proved against this broader model, whereas our algorithmic results use only the weaker model of \Cref{def:randomized_streaming}. Moreover, our algorithms are \emph{uniform} in the sense that there exists a $O(\log n)$-space Turing machine that, given the size-$n$ variable set $\calV$, outputs circuit representations of the transition and output functions.

\subsection{Concentration Inequalities}
We will make use of the following standard concentration bound known as Hoeffding's inequality:
\begin{proposition}[{\cite[Theorem D.2]{mohri2018foundations}}]\label{prop:Hoeffding}
Let $X_{1},\dots,X_{n}$ be independent random variables taking values in $[0,1]$. Then for any $\varepsilon\geq 0$ we have
\[
\Pr{\sum_{i=1}^{n}X_{i}\geq \varepsilon+\sum_{i=1}^{n}\Ex{X_{i}}}\leq \exp\left(-\frac{2\varepsilon^{2}}{n}\right).
\]
\end{proposition}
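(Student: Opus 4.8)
The plan is to prove this via the classical Cram\'er--Chernoff exponential moment method. Write $S=\sum_{i=1}^{n}X_{i}$ and $\mu=\sum_{i=1}^{n}\Ex{X_{i}}$. For any $\lambda>0$, Markov's inequality applied to the nonnegative random variable $e^{\lambda(S-\mu)}$ gives
\[
\Pr{S-\mu\geq \varepsilon}=\Pr{e^{\lambda(S-\mu)}\geq e^{\lambda\varepsilon}}\leq e^{-\lambda\varepsilon}\,\Ex{e^{\lambda(S-\mu)}},
\]
and by independence of the $X_{i}$ the expectation factorizes as $\prod_{i=1}^{n}\Ex{e^{\lambda(X_{i}-\Ex{X_{i}})}}$. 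So the whole problem reduces to bounding the moment generating function of a single centered variable $Y_{i}:=X_{i}-\Ex{X_{i}}$, which has mean zero and takes values in an interval of length exactly $1$, namely $[-\Ex{X_{i}},\,1-\Ex{X_{i}}]$.

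The key lemma --- and the main technical step --- is Hoeffding's lemma: if $Y$ is a mean-zero random variable supported in $[a,b]$, then $\Ex{e^{\lambda Y}}\leq \exp(\lambda^{2}(b-a)^{2}/8)$. I would prove this by convexity of $t\mapsto e^{\lambda t}$: for $y\in[a,b]$ one has $e^{\lambda y}\leq \frac{b-y}{b-a}e^{\lambda a}+\frac{y-a}{b-a}e^{\lambda b}$; taking expectations and using $\Ex{Y}=0$ gives $\Ex{e^{\lambda Y}}\leq \frac{b}{b-a}e^{\lambda a}-\frac{a}{b-a}e^{\lambda b}$, which I call $e^{\psi(\lambda)}$. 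One then checks $\psi(0)=\psi'(0)=0$, and, after the substitution $u=\lambda(b-a)$ and $p=-a/(b-a)\in[0,1]$, that $\psi''(\lambda)=(b-a)^{2}\,q(1-q)$ where $q=\frac{pe^{u}}{1-p+pe^{u}}\in[0,1]$; hence $\psi''(\lambda)\leq (b-a)^{2}/4$ for all $\lambda$. Taylor's theorem with Lagrange remainder then yields $\psi(\lambda)\leq \lambda^{2}(b-a)^{2}/8$. Applying this with $b-a=1$ to each $Y_{i}$ gives $\Ex{e^{\lambda(S-\mu)}}\leq e^{n\lambda^{2}/8}$.

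Combining the pieces, $\Pr{S-\mu\geq \varepsilon}\leq \exp(-\lambda\varepsilon+n\lambda^{2}/8)$ for every $\lambda>0$. Finally I would optimize the exponent over $\lambda$: it is a quadratic in $\lambda$, minimized at $\lambda=4\varepsilon/n$, where it equals $-4\varepsilon^{2}/n+2\varepsilon^{2}/n=-2\varepsilon^{2}/n$, which gives the claimed bound $\exp(-2\varepsilon^{2}/n)$. (The boundary case $\varepsilon=0$ is trivial, since the right-hand side is then $1$.) The only place that requires genuine care is the estimate $\psi''\leq (b-a)^{2}/4$ inside Hoeffding's lemma --- recognizing the second derivative as a variance of a two-point distribution, hence at most a quarter of its squared range --- while everything else is routine bookkeeping and a one-variable optimization.
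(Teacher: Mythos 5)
The paper does not prove this proposition at all; it is imported verbatim as a cited result (Theorem D.2 of Mohri--Rostamizadeh--Talwalkar), so there is no in-paper argument to compare against. Your proof is the standard and correct one --- Chernoff's exponential-moment bound, factorization by independence, Hoeffding's lemma via convexity and the bound $\psi''\leq (b-a)^{2}/4$ (noting that mean-zero support in $[a,b]$ forces $a\leq 0\leq b$, so $p=-a/(b-a)\in[0,1]$ as you use), and optimization at $\lambda=4\varepsilon/n$ --- and it establishes exactly the stated inequality.
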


We also need the following martingale version of the Chernoff bound, which has previously been used for lower bounds against streaming algorithms by \cite{KK19,CGSV24}. 
\begin{proposition}[{\cite[Lemma 2.8]{CGSV24}}]\label{prop:concentration}
    Let $X_1,\dots,X_{n}$ be Bernoulli random variables such that for every $i\in [n]$, $\Ex{X_i\mid X_1,\dots, X_{i-1}}\leq p_i$ for some $p_i\in (0,1)$. For any $\Delta>0$, we have 
    \begin{align*}
        \Pr{\sum_{i=1}^{n}X_{i}\geq \varepsilon +\sum_{i=1}^{n}p_{i}} \leq \exp\left(-\frac{\varepsilon^2}{2\varepsilon+2\sum_{i=1}^{n}p_{i}}\right). 
    \end{align*}
\end{proposition}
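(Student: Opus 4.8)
The plan is to prove this via the exponential moment (Chernoff) method, replacing the usual moment-generating-function bound by a supermartingale argument that accommodates the conditional hypothesis on the $X_i$. Throughout, write $S_i := \sum_{j=1}^{i}X_j$ and $P_i := \sum_{j=1}^{i}p_j$, and note that $P_n > 0$ since each $p_i > 0$ (the case $\varepsilon = 0$ is trivial; the $\Delta$ appearing in the statement is evidently a typo for $\varepsilon$). For a parameter $\lambda > 0$ to be fixed later, I would introduce the random variables
\[
M_i := \exp\!\left(\lambda S_i - (e^{\lambda}-1)P_i\right), \qquad M_0 := 1,
\]
and show that $(M_i)_{i=0}^{n}$ is a supermartingale with respect to the filtration generated by $X_1,\dots,X_i$.

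For the supermartingale property, since $X_i\in\{0,1\}$ we have $\Ex{e^{\lambda X_i}\mid X_1,\dots,X_{i-1}} = 1 + (e^{\lambda}-1)\,\Ex{X_i\mid X_1,\dots,X_{i-1}}$, and using $\Ex{X_i\mid X_1,\dots,X_{i-1}}\le p_i$ together with $e^{\lambda}-1>0$ and $1+x\le e^{x}$, this is at most $1 + (e^{\lambda}-1)p_i \le \exp\!\left((e^{\lambda}-1)p_i\right)$. Multiplying through by the $X_1,\dots,X_{i-1}$-measurable factor $M_{i-1}\,e^{-(e^{\lambda}-1)p_i}$ gives $\Ex{M_i\mid X_1,\dots,X_{i-1}}\le M_{i-1}$. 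Iterating yields $\Ex{M_n}\le\Ex{M_0}=1$, i.e.\ $\Ex{e^{\lambda S_n}}\le\exp\!\left((e^{\lambda}-1)P_n\right)$. Then Markov's inequality applied with threshold $t:=\varepsilon+P_n$ gives
\[
\Pr{S_n\ge \varepsilon+P_n}\le e^{-\lambda t}\,\Ex{e^{\lambda S_n}}\le \exp\!\left((e^{\lambda}-1)P_n-\lambda t\right),
\]
and optimizing the exponent over $\lambda>0$ (the optimum is at $\lambda=\ln(t/P_n)=\ln(1+\varepsilon/P_n)>0$) produces the classical bound $\Pr{S_n\ge\varepsilon+P_n}\le\exp\!\left(-P_n\,\psi(\varepsilon/P_n)\right)$, where $\psi(\delta):=(1+\delta)\ln(1+\delta)-\delta$.

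It remains to lower bound $P_n\,\psi(\varepsilon/P_n)$ by $\varepsilon^{2}/(2\varepsilon+2P_n)$, which reduces to the single elementary inequality
\[
\psi(\delta)=(1+\delta)\ln(1+\delta)-\delta \;\ge\; \frac{\delta^{2}}{2(1+\delta)}\qquad(\delta\ge 0);
\]
plugging in $\delta=\varepsilon/P_n$ and multiplying by $P_n$ then gives exactly the stated exponent. I would verify this inequality by setting $g(\delta):=\psi(\delta)-\delta^{2}/(2(1+\delta))$, checking $g(0)=0$, computing $g'(\delta)=\ln(1+\delta)-\tfrac12+\tfrac{1}{2(1+\delta)^{2}}$ with $g'(0)=0$, and observing $g''(\delta)=\tfrac{1}{1+\delta}-\tfrac{1}{(1+\delta)^{3}}=\tfrac{(1+\delta)^{2}-1}{(1+\delta)^{3}}\ge 0$ for $\delta\ge 0$; hence $g'\ge 0$ and $g\ge 0$ on $[0,\infty)$. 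This last step --- a routine convexity computation --- is the only place requiring any real work; the supermartingale construction and the Markov/optimization steps are entirely standard, so I expect no genuine obstacle beyond this bookkeeping.
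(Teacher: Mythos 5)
Your proof is correct. Note, however, that the paper does not give its own proof of this proposition --- it simply imports it as~\cite[Lemma 2.8]{CGSV24} --- so there is no argument in the paper to compare against. Your derivation is the standard one: the supermartingale $M_i=\exp\left(\lambda S_i-(e^\lambda-1)P_i\right)$ is built exactly to exploit the conditional bound $\Ex{X_i\mid X_1,\dots,X_{i-1}}\le p_i$, giving $\Ex{e^{\lambda S_n}}\le\exp\left((e^\lambda-1)P_n\right)$; Markov with $t=\varepsilon+P_n$ and the optimal $\lambda=\ln(1+\varepsilon/P_n)$ yields the exponent $-P_n\psi(\varepsilon/P_n)$ with $\psi(\delta)=(1+\delta)\ln(1+\delta)-\delta$; and the elementary inequality $\psi(\delta)\ge\delta^2/(2(1+\delta))$ converts this to the stated Bernstein-type form. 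I verified the calculus: $g'(\delta)=\ln(1+\delta)-\tfrac12+\tfrac{1}{2(1+\delta)^2}$ does vanish at $0$, and $g''(\delta)=\tfrac{(1+\delta)^2-1}{(1+\delta)^3}\ge 0$ for $\delta\ge 0$, so the final inequality holds. Your observation that the ``$\Delta>0$'' in the statement is a typo for ``$\varepsilon>0$'' is also correct, and the positivity of $P_n$ needed for $\ln(t/P_n)$ follows from the hypothesis $p_i\in(0,1)$.
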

\subsection{Hypercontractivity}
Hypercontractive inequalities on product spaces have been crucial tools in establishing streaming lower bounds for approximating CSPs. We need the following version in this paper:
\begin{proposition}
    [{\cite[Theorem 10.21]{odonnell2021analysisbooleanfunctions}}]\label{prop:classical_hypercontractivity}
    For any function $f:\bZ_N^\Lambda \rightarrow \bR$ with degree at most $d$ and any real number $q\geq 2$, we have
    \begin{align*}
        \lVert f \rVert_q \leq \left(\sqrt{N(q-1)} \right)^d \lVert f \rVert_2
    \end{align*}
\end{proposition}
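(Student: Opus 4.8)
The plan is to derive this from the single‑coordinate $(2,q)$‑hypercontractive inequality for the uniform measure on $\bZ_N$, combined with tensorization and the standard ``dividing by the noise operator'' trick. Write $\rho:=(N(q-1))^{-1/2}$ and let $T_\rho$ denote the noise operator on $\bZ_N^\Lambda$, acting by $\widehat{T_\rho g}(b)=\rho^{|\supp(b)|}\widehat g(b)$. The three steps are: (i) prove that the uniform distribution on $\bZ_N$ is $(2,q,\rho)$‑hypercontractive, i.e.\ $\lVert a+\rho b\rVert_q\le(a^2+\lVert b\rVert_2^2)^{1/2}$ for every constant $a\in\bR$ and every mean‑zero $b:\bZ_N\to\bR$ (equivalently, $\lVert T_\rho h\rVert_q\le\lVert h\rVert_2$ for all $h:\bZ_N\to\bR$); (ii) tensorize this to the operator inequality $\lVert T_\rho g\rVert_q\le\lVert g\rVert_2$ on $\bZ_N^\Lambda$; and (iii) conclude. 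Step (ii) is the routine tensorization of $(2,q)$‑hypercontractivity (induction on $|\Lambda|$ via Minkowski's integral inequality). Step (iii) is then immediate: assuming without loss of generality $d\le|\Lambda|$, set $g:=\sum_{j=0}^d\rho^{-j}f^{=j}$, which is real‑valued and satisfies $T_\rho g=\sum_{j=0}^d f^{=j}=f$; since $N\ge2$ and $q\ge2$ force $\rho\le1$, Parseval gives $\lVert g\rVert_2^2=\sum_{j\le d}\rho^{-2j}\lVert f^{=j}\rVert_2^2\le\rho^{-2d}\lVert f\rVert_2^2$, whence $\lVert f\rVert_q=\lVert T_\rho g\rVert_q\le\lVert g\rVert_2\le\rho^{-d}\lVert f\rVert_2=(\sqrt{N(q-1)})^d\lVert f\rVert_2$.

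The real content is step (i). I would first treat the case of an even integer $q$ by direct expansion. By the symmetry $(a,b)\mapsto(-a,-b)$ one may assume $a\ge0$, and by scaling $\lVert b\rVert_2=1$. Using $\bE[b]=0$ to kill the linear term,
\[
\lVert a+\rho b\rVert_q^q=a^q+\sum_{j=2}^q\binom{q}{j} a^{q-j}\rho^j\,\bE[b^j].
\]
Since $\lVert b\rVert_\infty\le\sqrt N\lVert b\rVert_2=\sqrt N$, interpolation between $L^2$ and $L^\infty$ gives $\bE[|b|^j]\le N^{(j-2)/2}$ for $j\ge2$, and with $\rho^2=1/(N(q-1))$ this yields $\rho^j\bE[b^j]\le N^{-1}(q-1)^{-j/2}\le\tfrac12(q-1)^{-j/2}$ (using $N\ge2$). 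Thus (i) for even $q$ reduces to the elementary polynomial inequality
\[
a^q+\tfrac12\sum_{j=2}^q\binom{q}{j}\frac{a^{q-j}}{(q-1)^{j/2}}\le(a^2+1)^{q/2},\qquad a\ge0,
\]
which holds for every even $q\ge2$: the $j=2$ term on the left equals $\tfrac q4 a^{q-2}$, already dominated by the leading lower‑order term $\tfrac q2 a^{q-2}$ of $(a^2+1)^{q/2}$, and the remaining faster‑decaying terms are absorbed by the remaining terms of $(a^2+1)^{q/2}$. To pass from even integers to arbitrary real $q\ge2$, I would either invoke the full real‑exponent two‑point hypercontractive inequality (the harder direction of Bonami's inequality) together with the standard reduction of a general finite probability space to its most unbalanced two‑point subspace, both available in~\cite{odonnell2021analysisbooleanfunctions}, or --- if one is content with a harmless $O_q(1)^d$ loss in the constant --- simply round $q$ up to the nearest even integer $q'\le2q$ and use monotonicity of $\lVert\cdot\rVert_p$ in $p$.

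I expect step (i), and within it the real‑exponent case, to be the only genuinely delicate point; the even‑integer computation is elementary, and both tensorization (step (ii)) and the level‑$d$ deduction (step (iii)) are entirely standard. The one conceptual point worth highlighting is \emph{why} the factor $\sqrt N$ appears: it is precisely the $N^{-1/2}$‑scaling of the admissible noise rate $\rho$ for the uniform measure on $\bZ_N$ --- equivalently the $\sqrt N$ in $\lVert b\rVert_\infty\le\sqrt N\lVert b\rVert_2$ --- and raising $\rho^{-1}$ to the power $d$ in step (iii) turns this into $(\sqrt{N(q-1)})^d$.
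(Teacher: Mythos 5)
The proposition is cited in the paper from O'Donnell's book and is not proved there, so there is no in-paper argument to compare against; I evaluate your proof on its own. Your overall strategy---single-coordinate $(2,q)$-hypercontractivity at noise rate $\rho=(N(q-1))^{-1/2}$, tensorization, and the ``divide by $T_\rho$'' trick---is the standard and correct route, and steps (ii) and (iii) are fine exactly as you wrote them (in (iii), $N,q\ge2$ gives $\rho\le1$, so $\|g\|_2\le\rho^{-d}\|f\|_2$).

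The genuine gap is in your even-$q$ treatment of step (i). After bounding $\rho^j\bE[|b|^j]\le\tfrac12(q-1)^{-j/2}$ you claim the polynomial inequality
\[
a^q+\tfrac12\sum_{j=2}^q\binom{q}{j}\frac{a^{q-j}}{(q-1)^{j/2}}\le(a^2+1)^{q/2}
\]
``holds for every even $q\ge2$'' because the $j=2$ term is dominated and ``the remaining faster-decaying terms are absorbed.'' That absorption is not as automatic as you suggest: the natural move---split each odd-$j$ term onto the two neighboring even powers by equal-weight AM--GM---actually \emph{fails} for moderately large $q$. For instance at $q=40$, the contribution to the coefficient of $a^{38}$ is $\tfrac14\binom{40}{2}/39+\tfrac14\binom{40}{3}\cdot 39^{-3/2}\approx 10+10.14>20=\binom{20}{1}$. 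The inequality is nonetheless true; here is a clean way to close the gap. Write $b=\sqrt{N}\,b'$ with $\|b'\|_\infty\le1$, $\bE[b']=0$, $\bE[(b')^2]=1/N$, and set $s:=(q-1)^{-1/2}$, so $\rho b=s b'$ and $|\bE[(b')^j]|\le 1/N$ for $j\ge 2$. Using the identity $\sum_{j\ge2\text{ even}}\binom{q}{j}a^{q-j}s^j-\sum_{j\ge3\text{ odd}}\binom{q}{j}a^{q-j}s^j=(a-s)^q-a^q+qa^{q-1}s\ge0$ (for $a\ge s$ by convexity of $t\mapsto t^q$; for $0\le a<s$ because the right side is $\ge 0$ while $a^{q-1}(a-qs)\le 0$), one gets
\[
\bE[(a+sb')^q]\;\le\; a^q+\tfrac{2}{N}\sum_{j\ge 2\text{ even}}\binom{q}{j}a^{q-j}s^j\;\le\;\bE[(a+sX)^q],
\]
where $X$ is a uniform Rademacher and we used $2/N\le1$. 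The last expectation is $\le(a^2+1)^{q/2}$ by the even-$q$ Rademacher hypercontractive inequality, itself verifiable termwise from $\binom{q}{2i}(q-1)^{-i}\le\binom{q/2}{i}$, which follows from $(q-1)(q-3)\cdots(q-2i+1)\le(q-1)^i$ and $(2i-1)!!\ge1$. On passing to real $q\ge2$: your first alternative (invoke O'Donnell's full two-point result plus the min-probability reduction) is the right rigorous route and gives exactly the stated constant; your fallback of rounding $q$ up to an even integer loses an $O(1)^d$ factor and so does not prove the proposition as literally stated, though this loss would be harmless for the paper's downstream use in Proposition~\ref{prop:level_d_classical}.
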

As is standard in many applications, the above hypercontractivity result is used to obtain the following level-$d$ inequality.
\begin{proposition}\label{prop:level_d_classical}
    For any function $f: \bZ_N^{\Lambda} \rightarrow \bR$ and positive integer $d$, we have 
    \begin{align*}
        \left\| f^{=d}\right\|_2^2\leq \|f\|_{1}^{2}\cdot\left( 12N\log\left(\frac{2\|f\|_{2}}{\|f\|_{1}}\right)\right)^d . 
    \end{align*}
\end{proposition}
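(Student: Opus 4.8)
The plan is to run the standard ``hypercontractivity $\Rightarrow$ level-$d$'' argument, with the auxiliary exponent tuned so as to land on the constant $12$. We may assume $f\not\equiv 0$, so that $\|f\|_1,\|f\|_2>0$; since $f$ is real-valued, $f^{=d}$ is real-valued as well (the Fourier coefficients satisfy $\widehat f(-b)=\overline{\widehat f(b)}$ and $\supp(-b)=\supp(b)$, so $f^{=d}$ is a sum of conjugate pairs). By Cauchy--Schwarz on the uniform probability space, $\|f\|_1\le\|f\|_2$, hence $R:=\|f\|_2/\|f\|_1\ge 1$ and $\log(2R)\ge\log 2>0$, so the right-hand side is well defined.

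The first step is to write $\|f^{=d}\|_2^2=\langle f,f^{=d}\rangle$ (Parseval, together with the fact that projection onto degree exactly $d$ is idempotent and self-adjoint), then apply H\"older's inequality on the uniform probability space $\bZ_N^\Lambda$ with conjugate exponents $q\ge 2$ and $q'=q/(q-1)\in(1,2]$:
\[
\|f^{=d}\|_2^2=\langle f,f^{=d}\rangle\le \|f\|_{q'}\,\|f^{=d}\|_q .
\]
Since $f^{=d}$ has degree at most $d$, \Cref{prop:classical_hypercontractivity} gives $\|f^{=d}\|_q\le \big(N(q-1)\big)^{d/2}\|f^{=d}\|_2$. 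Combining these and cancelling one factor of $\|f^{=d}\|_2$ (the claim is trivial if $f^{=d}\equiv 0$) yields $\|f^{=d}\|_2^2\le \|f\|_{q'}^2\big(N(q-1)\big)^d$. Finally, log-convexity of $L^p$-norms (interpolation between $p=1$ and $p=2$, using $\tfrac1{q'}=(1-\tfrac2q)\cdot 1+\tfrac2q\cdot\tfrac12$) gives $\|f\|_{q'}\le\|f\|_1^{1-2/q}\|f\|_2^{2/q}$, so $\|f\|_{q'}^2\le\|f\|_1^2 R^{4/q}$, and therefore
\[
\|f^{=d}\|_2^2\le \|f\|_1^2\, R^{4/q}\,\big(N(q-1)\big)^d\qquad\text{for every }q\ge 2 .
\]

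It remains to choose $q$; I would take $q=2+4\log R$, which is $\ge 2$ since $R\ge 1$. Then $R^{4/q}=\exp\!\big(\tfrac{4\log R}{2+4\log R}\big)\le e$ and $q-1=1+4\log R$, so the displayed bound becomes $\|f^{=d}\|_2^2\le \|f\|_1^2\, e\,\big(N(1+4\log R)\big)^d\le\|f\|_1^2\big(e(1+4\log R)\big)^d$ using $e\le e^d$ for $d\ge 1$. The proof then concludes with the elementary inequality $e(1+4\log R)\le 12\log(2R)$, valid for all $R\ge 1$ (at $R=1$ it reads $e\le 12\log 2$, and the coefficient of $\log R$ satisfies $4e<12$). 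I do not anticipate a real obstacle here; the only points needing care are the degenerate cases $f\equiv 0$ and $f^{=d}\equiv 0$, keeping track of the fact that all norms are taken with respect to the uniform measure, and the final numerical inequality --- which amounts to the (near-optimal) choice of $q$, since $c\,e^{4/c}$ is minimized over $c>0$ at $c=4$ with value $4e\approx 10.9<12$. We also note that if ``$\log$'' is interpreted as $\log_2$ rather than the natural logarithm, the right-hand side only increases, so the same argument applies.
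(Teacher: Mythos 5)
Your argument is correct and is essentially the same as the paper's: both bound $\|f^{=d}\|_2^2=\langle f, f^{=d}\rangle$ by H\"older, apply the hypercontractive bound of \Cref{prop:classical_hypercontractivity} to $\|f^{=d}\|_q$, interpolate $\|f\|_{q/(q-1)}$ between the $L^1$- and $L^2$-norms, and then optimize over $q$. The only cosmetic difference is the choice of auxiliary exponent ($q=2+4\log R$ versus the paper's $q=4\log(2R)$), which shifts a few low-order constants but leads to the same bound after the elementary numerical check you performed.
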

\begin{proof}
    For any $q\geq 2$, we have
    \begin{align*}
        \left\|f^{=d} \right\|_2^2 = \left\langle f,f^{=d}\right\rangle &\leq \left\| f^{=d}\right\|_q \cdot\lVert f\rVert_{q/(q-1)}\leq  \left\| f^{=d}\right\|_q \cdot \lVert f\rVert_{1}^{(q-2)/q}\lVert f\rVert_{2}^{2/q}\\
        &\leq \left(\sqrt{(q-1)N}\right)^d\left\|  f^{=d}\right\|_2 \cdot \lVert f\rVert_{1}^{(q-2)/q}\lVert f\rVert_{2}^{2/q},
    \end{align*}
    where the second and third transitions are by H\"{o}lder's inequality, and the fourth transition is by \Cref{prop:classical_hypercontractivity}. Thus, we have 
    \begin{align}\label{ineq:level_d_classical}
         \left\| f^{=d} \right\|_2^2 \leq \|f\|_{1}^{2}\cdot \left((q-1)N\right)^d \left(\frac{\|f\|_{2}}{\|f\|_{1}}\right)^{4/q}. 
    \end{align}
    Taking $q = 4\log \left(2\lVert f\rVert_{2}/\|f\|_{1}\right)$ yields the conclusion.
\end{proof}

\section{The Approximability Threshold}
\label{sec:LP}

As outlined in the introduction, the approximability threshold function $\lpcurve$ in \Cref{thm:main} is given by the basic linear programming relaxation of $\mcsp$. In \Cref{subsec:linearprogramming} we formally define $\lpcurve$, and in \Cref{subsec:main_results} we present the main algorithmic and hardness results (\Cref{thm:streaming_solve_LP,thm:hardness}) living on the two sides of the approximability threshold. 

Finally, in \Cref{subsec:examples}, we work out several concrete examples of predicate families $\calF$ for which the threshold function $\lpcurve$ can be computed explicitly, thereby determining the exact approximation ratio of $\mcsp$ in the multipass streaming model.

\subsection{The Basic Linear Program}\label{subsec:linearprogramming}

For any instance $\calI\in \cspF$ of the CSP maximization problem we recall the linear programming relaxation of $\mcsp$ from the introduction, which is termed ``the basic linear program'' and abbreviated as $\lp$.

\begin{definition}\label{def:BasicLP}
Let ${\cal{F}}\subseteq \{ f: \Sigma^k\rightarrow \{0,1\}\}$ be a predicate family and let $\calI=(\calV, \calC)$ be a $\cspF$ instance. Write $\calC=(C_{1},\dots,C_{m})$, and $C_{i}=((\sfv_{i,1},\dots,\sfv_{i,k}),f_{i})$ for each $i\in [m]$. We define $\lp_{\calI}$ to be the following linear program, with variables $(x_{\sfv,\sigma})_{\sfv\in \calV,\,\sigma\in \Sigma}$ and $(z_{i,b})_{i\in [m],\, b\in \Sigma^{k}}$.
\begin{tcolorbox}[
  enhanced,
  breakable,
  title={$\lp_{\calI}$ for $\calI=(\calV, \calC)$},
  fonttitle=\bfseries,
  colframe=black,
  colback=white,
  boxrule=0.6pt,
  arc=6pt,
  left=6pt,
  right=6pt,
  top=4pt,
  bottom=4pt
]
\begin{alignat*}{3}
\textup{maximize} \quad     &&  \frac{1}{m}\sum_{i=1}^{m} 
            \sum_{b\in\Sigma^{k}} f_i(b)\,z_{i,b} \\
\textup{subject to} \quad && \sum_{\sigma\in\Sigma} x_{\sfv,\sigma}&=1
            \quad &\forall\,\sfv\in \calV \\
           && \sum_{b\in\Sigma^{k}}\mathbbm{1}\{b_{j}=\sigma\}\cdot
             z_{i,b}&=x_{\sfv_{i,j},\,\sigma}
            \quad &\forall\,i\in [m],\,j\in [k],\,\sigma\in \Sigma\\
           && x_{\sfv,\sigma}&\ge 0
            \quad &\forall\,\sfv\in \calV,\,\sigma\in\Sigma \\
           && z_{i,b}&\ge 0
            \quad &\forall\,i\in [m],\,b\in\Sigma^{k}
\end{alignat*}
\end{tcolorbox}
\end{definition}
In words, the intention
here is that for each $\sfv\in\calV$, the variables 
$\{x_{\sfv,\sigma}\}_{\sigma\in \Sigma}$ represent a distribution over the labels of $\sfv$, and for each $i\in [m]$, the variables $\{z_{i,b}\}_{b\in\Sigma^k}$ 
represent a distribution over the assignments to $C_i$. The objective function counts
the total mass that is put on satisfying assignments, 
and the constraints asserts that the marginal distribution of $\{z_{i,b}\}_{b\in\Sigma^k}$ on each variable $\sfv$ appearing in $C_i$ is consistent with the distribution $\{x_{\sfv,\sigma}\}_{\sigma\in \Sigma}$.

\begin{observation}\label{obs:BasicLP}
Note that assigning value $1/|\Sigma|$ to every $x$ variable and $1/|\Sigma|^{k}$ to every $z$ variable always satisfies all the constraints in the linear program. Therefore, the feasible region of the linear program is nonempty. Furthermore, using $\sum_{b\in \Sigma^{k}}z_{i,b}=1$ for 
all $i\in [m]$ and the Booleanity of $f$ implies $\sum_{b\in \Sigma^{k}}f_{i}(b)z_{i,b}$ is in $[0,1]$ for all $i\in [m]$. Therefore, as the average of this quantity over $i\in [m]$, the objective function also always takes value in $[0,1]$ on feasible solutions. We 
also note that any integral solution $\tau$ to $\calI$ naturally corresponds to an assignment to $\lp_{\calI}$ with objective value $\val_{\calI}(\tau)$.
\end{observation}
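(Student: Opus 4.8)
The plan is to verify the three assertions of the observation in turn; each reduces to an elementary manipulation of the defining constraints of $\lp_{\calI}$. For the \textbf{nonemptiness of the feasible region}, I would simply substitute the proposed point into each constraint. With $x_{\sfv,\sigma}=1/|\Sigma|$ the equality $\sum_{\sigma\in\Sigma}x_{\sfv,\sigma}=1$ is immediate, and with $z_{i,b}=1/|\Sigma|^{k}$ the marginal constraint holds because exactly $|\Sigma|^{k-1}$ tuples $b\in\Sigma^{k}$ satisfy $b_{j}=\sigma$, so $\sum_{b}\mathbbm{1}\{b_{j}=\sigma\}z_{i,b}=|\Sigma|^{k-1}/|\Sigma|^{k}=1/|\Sigma|=x_{\sfv_{i,j},\sigma}$; nonnegativity is clear.

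For the \textbf{range of the objective}, the only ingredient beyond the constraints themselves is the identity $\sum_{b\in\Sigma^{k}}z_{i,b}=1$ for every feasible point and every $i\in[m]$. This is not literally one of the listed constraints, but I would derive it by fixing $j=1$ and summing the marginal constraint over all $\sigma\in\Sigma$: the left-hand side becomes $\sum_{b}z_{i,b}$ (each $b$ contributes to the single value $\sigma=b_{1}$), while the right-hand side is $\sum_{\sigma}x_{\sfv_{i,1},\sigma}=1$ by the first constraint. Given this, Booleanity $f_{i}(b)\in\{0,1\}$ together with $z_{i,b}\ge 0$ yields $0\le\sum_{b}f_{i}(b)z_{i,b}\le\sum_{b}z_{i,b}=1$, so each of the $m$ summands defining the objective lies in $[0,1]$, and hence so does their average.

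For the \textbf{correspondence with integral solutions}, given $\tau:\calV\to\Sigma$ I would define the indicator point $x_{\sfv,\sigma}=\mathbbm{1}\{\tau(\sfv)=\sigma\}$ and $z_{i,b}=\mathbbm{1}\{b=(\tau(\sfv_{i,1}),\dots,\tau(\sfv_{i,k}))\}$, then check feasibility (the first constraint holds since $\tau(\sfv)$ equals exactly one $\sigma$; the marginal constraint holds since $\sum_{b}\mathbbm{1}\{b_{j}=\sigma\}z_{i,b}=\mathbbm{1}\{\tau(\sfv_{i,j})=\sigma\}=x_{\sfv_{i,j},\sigma}$), and finally compute $\frac{1}{m}\sum_{i=1}^{m}\sum_{b}f_{i}(b)z_{i,b}=\frac{1}{m}\sum_{i=1}^{m}f_{i}(\tau(\sfv_{i,1}),\dots,\tau(\sfv_{i,k}))=\val_{\calI}(\tau)$. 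I do not anticipate any genuine obstacle here; the only point requiring a moment of care is the derivation of $\sum_{b}z_{i,b}=1$ above, which --- as noted --- is a one-line consequence of the marginal constraints rather than an assumed fact.
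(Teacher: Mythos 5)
Your proposal is correct and follows essentially the same route as the paper's own (embedded) argument, with the one useful amplification that you explicitly derive $\sum_{b\in\Sigma^k}z_{i,b}=1$ from the marginal constraints rather than taking it for granted, which is exactly the right thing to do since the paper states it without justification. All three verifications (nonemptiness, range of the objective, integral-to-LP correspondence) are sound.
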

\begin{notation}\label{notation:LP_value}
For a family of predicates ${\cal{F}}\subseteq \{ f: \Sigma^k\rightarrow \{0,1\}\}$ and an instance $\calI\in \cspF$, we let $\lpval_{\calI}$ denote the optimal value of the linear program $\lp_{\calI}$.
\end{notation}

We are now ready to define the approximability threshold function $\vartheta_{\calF}:[0,1]\rightarrow[0,1]$.

\begin{definition}\label{def:lpcurve}
Let ${\cal{F}}\subseteq \{ f: \Sigma^k\rightarrow \{0,1\}\}$ be a fixed predicate family. We define a function $\vartheta^{*}_{\calF}:[0,1]\rightarrow[0,1]$ as follows. If $\calF$ consists only of the all-0 predicate, then let $\vartheta^*_{\calF}(c)=1$ for any $c\in [0,1]$. Otherwise, we let
\begin{align*}
    \lpcurve^*(c):= \inf_{\substack{\calI\in \cspF\textup{ s.t. }\lpval_{\calI}\geq c}} \val_{\calI}.
\end{align*}
The function $\lpcurve:[0,1]\rightarrow[0,1]$ is then defined by $\lpcurve(c):=\min\left\{c, \lpcurve^*(c)\right\}$, for $c\in [0,1]$.
\end{definition}
It is not hard to see that the function $\vartheta^*_{\calF}$ has the following nice property.
\begin{lemma}\label{lem:convexity}
    The function $\vartheta_{\calF}^*$ is monotone nondecreasing and convex on $[0,1]$.
\end{lemma}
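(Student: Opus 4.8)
**Proof plan for Lemma (monotonicity and convexity of $\vartheta^*_{\calF}$).**

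The plan is to prove both properties directly from the definition $\vartheta^*_{\calF}(c) = \inf\{\val_\calI : \calI\in\cspF,\ \lpval_\calI\geq c\}$, handling the degenerate all-zero case separately (there $\vartheta^*_\calF\equiv 1$ is trivially monotone and convex). Monotonicity is immediate: if $c\leq c'$ then the infimum for $c'$ is taken over a \emph{subset} of the instances used for $c$ (those with $\lpval_\calI\geq c'\geq c$), so $\vartheta^*_\calF(c)\leq\vartheta^*_\calF(c')$.

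For convexity, the natural approach is a direct-sum / disjoint-union construction on CSP instances. Given instances $\calI_1=(\calV_1,\calC_1)$ and $\calI_2=(\calV_2,\calC_2)$ on disjoint variable sets, and a rational weight, I would form a combined instance $\calI$ that consists of (many copies of) the constraints of $\calI_1$ together with (many copies of) the constraints of $\calI_2$, in proportions $\lambda$ and $1-\lambda$ respectively. The two key observations are: (i) $\lpval_\calI = \lambda\,\lpval_{\calI_1} + (1-\lambda)\,\lpval_{\calI_2}$, because an LP solution for $\calI$ decomposes across the two disjoint blocks and the objective is the weighted average of the per-block objectives (feasibility is preserved since constraints in one block never touch variables in the other, and the $x$-variables are shared within a block but independent across blocks); and (ii) likewise $\val_\calI = \lambda\,\val_{\calI_1} + (1-\lambda)\,\val_{\calI_2}$, since the optimal assignment again decomposes across disjoint blocks. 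Then, given $c_1,c_2$ and $\lambda\in[0,1]$ rational, pick $\calI_1,\calI_2$ with $\lpval_{\calI_i}\geq c_i$ and $\val_{\calI_i}$ within $\varepsilon$ of $\vartheta^*_\calF(c_i)$; the combined instance $\calI$ has $\lpval_\calI\geq \lambda c_1+(1-\lambda)c_2$, hence $\vartheta^*_\calF(\lambda c_1+(1-\lambda)c_2)\leq \val_\calI \leq \lambda\vartheta^*_\calF(c_1)+(1-\lambda)\vartheta^*_\calF(c_2)+\varepsilon$. Letting $\varepsilon\to 0$ gives midpoint-type convexity along rational weights, and since $\vartheta^*_\calF$ is monotone (hence measurable) on $[0,1]$, rational convexity upgrades to full convexity; alternatively one notes the inf over instances already makes $\vartheta^*_\calF$ lower semicontinuous, which combined with rational-$\lambda$ convexity gives convexity outright.

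One technical wrinkle to address: the weights $\lambda, 1-\lambda$ may be irrational even if $c_1,c_2$ are rational, but this is harmless — CSP instances have integer constraint multiplicities, so I would first establish the convexity inequality for rational $\lambda$ (taking integer numbers of copies $N_1, N_2$ of $\calC_1,\calC_2$ with $N_1/(N_1+N_2)\approx\lambda$), and then pass to arbitrary $\lambda$ by continuity/density together with lower semicontinuity of $\vartheta^*_\calF$. A second point worth a sentence: one should check the combined instance is a legitimate $\cspF$ instance, i.e. each constraint still uses a predicate from $\calF$ on a tuple of \emph{distinct} variables — this holds automatically since we only duplicate existing constraints and relabel variables of $\calI_2$ to be disjoint from those of $\calI_1$.

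I expect the main obstacle to be purely bookkeeping rather than conceptual: carefully verifying claim (i), that the LP value is exactly the weighted average and not merely bounded below by it. The direction $\lpval_\calI \geq \lambda\lpval_{\calI_1}+(1-\lambda)\lpval_{\calI_2}$ is easy (paste together optimal solutions of the two blocks). The reverse inequality requires arguing that no LP solution on $\calI$ can beat this — which follows because the constraint graph of $\calI$ is a disjoint union, so the $x$- and $z$-variables split into two groups with no shared constraints, and the objective is linear and separates accordingly; an optimal solution restricted to each block is feasible for that block's LP. Spelling this out cleanly (perhaps as a short standalone sub-claim about LP values of disjoint unions) is the one place where care is needed, but no new ideas are required.
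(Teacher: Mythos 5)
Your proposal is correct in spirit and uses the same core construction as the paper (form a disjoint union of integer numbers of copies of two near-extremal instances), but you route around the irrational-weight issue differently, and one of your two proposed routes has a gap.

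The paper never establishes ``rational-$\lambda$ convexity and then upgrades.'' Instead, given $c_0<c_1$ and an arbitrary $t\in(0,1)$, it picks a rational $t'=p/q$ with $t'\in(t-\varepsilon,t]$ and forms the disjoint union of $p$ copies of $\calI_0$ with $q-p$ copies of $\calI_1$. Because $t'\le t$ shifts weight toward the \emph{larger} value $c_1$, one still gets $\lpval_{\calI'}\ge tc_0+(1-t)c_1$, and on the value side the discrepancy $|t-t'|(\vartheta^*_\calF(c_1)-\vartheta^*_\calF(c_0))<\varepsilon$ is absorbed into the $\varepsilon$-budget. This gives the convexity inequality for every real $t$ in one pass, with no appeal to regularity theorems. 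Your primary route --- establish the inequality for all rational $\lambda$, then invoke the classical fact that a bounded (here: $[0,1]$-valued and monotone, hence measurable) midpoint-convex function on an interval is convex --- is mathematically valid, but it is heavier machinery than the paper uses. Your secondary route has a genuine gap: $\vartheta^*_\calF$ is \emph{not} obviously lower semicontinuous from the definition. For a nondecreasing function lower semicontinuity is the same as left-continuity, and an infimum over the shrinking family $\{\calI:\lpval_\calI\ge c_n\}$ as $c_n\uparrow c$ can in principle jump at $c$; in the paper, continuity on $[0,1)$ is deduced \emph{from} convexity (\Cref{lem:weak_continuity}), so assuming it would be circular. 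I would drop that alternative. One further bookkeeping point worth stating explicitly in either proof: if $\calI_0$ and $\calI_1$ have different numbers of constraints $m_0\ne m_1$, then $p$ copies of $\calI_0$ plus $q-p$ copies of $\calI_1$ gives constraint-weight $\tfrac{pm_0}{pm_0+(q-p)m_1}$ on the first block, not $p/q$; one should take $pm_1$ copies of $\calI_0$ and $(q-p)m_0$ copies of $\calI_1$ (or first pad both instances to a common constraint count) so that the weights come out exactly as $p/q$ and $(q-p)/q$.
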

\begin{proof}
    Monotonicity of $\vartheta^*_{\calF}$ is immediate from its definition. 
    To prove convexity, it suffices to show that for any $c_0,c_1 \in [0,1]$ and $t\in (0,1)$,
    \[
        \vartheta_{\calF}^*\bigl(t c_0+(1-t) c_1\bigr)
        \leq t\, \vartheta_\calF^* (c_0) + (1-t)\, \vartheta_{\calF}^*(c_1).
    \]
    Without loss of generality, assume $c_0<c_1$. If $\calF$ contains no nonzero predicate, then $\vartheta^*_{\calF}$ is constant and hence convex, so we may assume otherwise.

    By the definition of $\vartheta^{*}_{\calF}$, for any $\varepsilon>0$ there exist finite instances $\calI_0,\calI_1 \in \cspF$ such that  
    \begin{enumerate}[label=(\arabic*)]
        \item $\lpval_{\calI_0} \ge c_0$ and $\lpval_{\calI_1} \ge c_1$;
        \item $\val_{\calI_0} \le \vartheta_{\calF}^*(c_0)+\varepsilon$ and 
        $\val_{\calI_1} \le \vartheta_{\calF}^*(c_1)+\varepsilon$. 
    \end{enumerate}

    Choose a rational $t'=\frac{p}{q}\in(t-\varepsilon,t]$ with $p,q \in \mathbb{N}$, and form a new instance $\calI'$ as the disjoint union of $p$ copies of $\calI_0$ and $q-p$ copies of $\calI_1$. Then
    \[
        \lpval_{\calI'} = \frac{p}{q}\,\lpval_{\calI_0}+\frac{q-p}{q}\,\lpval_{\calI_1} 
        \ge t c_0+(1-t)c_1,
    \]
    and
    \begin{align*}
        \val_{\calI'} 
        &= \frac{p}{q}\,\val_{\calI_0} + \frac{q-p}{q}\,\val_{\calI_1} \\
        &\le t'\,\bigl(\vartheta_{\calF}^*(c_0)+\varepsilon\bigr)
           + (1-t')\,\bigl(\vartheta_{\calF}^*(c_1)+\varepsilon\bigr) \\
        &\le t\, \vartheta_\calF^*(c_0) + (1-t)\,\vartheta_{\calF}^*(c_1) + 2\varepsilon,
    \end{align*}
    where the last inequality uses $t'\in(t-\varepsilon,t]$.

    Since $\varepsilon>0$ was arbitrary, the desired convexity inequality follows.
\end{proof}
\Cref{lem:convexity} has the following immediate corollary.
\begin{corollary}\label{lem:weak_continuity}
    The function $\vartheta^*_\mathcal{F}:[0,1]\rightarrow[0,1]$ is continuous on $[0,1)$. As a consequence, the threshold function $\vartheta_{\calF}:[0,1]\rightarrow[0,1]$ is also continuous on $[0,1)$.  
\end{corollary}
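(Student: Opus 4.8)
The plan is to obtain both continuity claims directly from the structural properties of $\vartheta^{*}_{\calF}$ established in \Cref{lem:convexity}, namely that it is monotone nondecreasing, convex, and real-valued (in fact valued in $[0,1]$) on $[0,1]$. The standard fact that a convex real-valued function on an interval is continuous at every point of its interior immediately gives continuity of $\vartheta^{*}_{\calF}$ on the open interval $(0,1)$, so the only work left is to establish right-continuity at the left endpoint $c=0$.

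For the endpoint, I would fix an arbitrary reference point $c'\in(0,1)$ and, for $0<c<c'$, write $c=\big(1-\tfrac{c}{c'}\big)\cdot 0+\tfrac{c}{c'}\cdot c'$; convexity then yields $\vartheta^{*}_{\calF}(c)\le\big(1-\tfrac{c}{c'}\big)\vartheta^{*}_{\calF}(0)+\tfrac{c}{c'}\,\vartheta^{*}_{\calF}(c')$, so sending $c\to0^{+}$ gives $\limsup_{c\to0^{+}}\vartheta^{*}_{\calF}(c)\le\vartheta^{*}_{\calF}(0)$. In the other direction, monotonicity gives $\vartheta^{*}_{\calF}(c)\ge\vartheta^{*}_{\calF}(0)$ for all $c>0$, hence $\liminf_{c\to0^{+}}\vartheta^{*}_{\calF}(c)\ge\vartheta^{*}_{\calF}(0)$. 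Combining the two shows $\lim_{c\to0^{+}}\vartheta^{*}_{\calF}(c)=\vartheta^{*}_{\calF}(0)$, which together with continuity on $(0,1)$ proves continuity of $\vartheta^{*}_{\calF}$ on $[0,1)$. For the ``as a consequence'' part, recall from \Cref{def:lpcurve} that $\lpcurve(c)=\min\{c,\vartheta^{*}_{\calF}(c)\}$; since the identity map and $\vartheta^{*}_{\calF}$ are both continuous on $[0,1)$ and the pointwise minimum of continuous functions is continuous, $\lpcurve$ is continuous on $[0,1)$ as well.

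I do not anticipate a genuine obstacle here, as every step is elementary real analysis. The one point worth stating carefully is that ``convex $\Rightarrow$ continuous on the interior'' does not by itself handle $c=0$ --- a convex function may well jump at an endpoint --- so the endpoint really requires the combination of convexity \emph{and} monotonicity used above. One should also separate out the degenerate case in which $\calF$ consists solely of the all-$0$ predicate, where $\vartheta^{*}_{\calF}\equiv1$ is constant and the claim is immediate.
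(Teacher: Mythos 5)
Your proof is correct and matches the paper's intent: the paper presents this as an immediate corollary of \Cref{lem:convexity}, and your argument (convex real-valued $\Rightarrow$ continuous on the interior, plus the convexity-and-monotonicity squeeze to get right-continuity at $c=0$, plus min of continuous functions) is exactly the standard route. Your explicit treatment of the endpoint $c=0$ is a detail the paper glosses over but that is genuinely needed, since convexity alone does not rule out a jump at an endpoint.
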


\subsection{Main Results}\label{subsec:main_results}

In this subsection, we present the main algorithmic and hardness results of the paper. 

On the algorithmic side, we show that for approximating the value of CSP instances, a multipass-streaming algorithm can achieve performance matching the basic linear programming relaxation of \Cref{def:BasicLP}. Importantly, we do not claim that such an algorithm can directly compute the LP value $\lpval_{\calI}$ for a given instance $\calI \in \cspF$. 
Instead, as shown in \Cref{sec:algorithm}, our algorithm estimates the LP value of a suitably modified instance whose \emph{value} is close to that of $\calI$, thereby achieving the same approximation ratio as the integrality gap of $\lp$.

\begin{restatable}[Main algorithm]{theorem}{thmalgo}\label{thm:streaming_solve_LP}
    For any fixed predicate family ${\cal{F}}\subseteq \{ f: \Sigma^k\rightarrow \{0,1\}\}$, fixed completeness parameter $c\in [0,1]$ and fixed error parameter $\varepsilon \in (0,1)$, there exists a randomized streaming algorithm $\mathcal{A}$ using $O_{\varepsilon}(\log n)$ space and $O_{\varepsilon} (1)$ passes such that for any instance $\calI\in \cspF$,
    \begin{enumerate}[label = (\arabic*)]
    \item if $\val_{\calI}\geq c+\varepsilon$ then
    $\Prs{\calA}{\calA(\calI)=1}\geq 2/3$;
    \item if $\val_{\calI}\leq \lpcurve(c)-\varepsilon$ then $\Prs{\calA}{\calA(\calI)=0}\geq 2/3$.
    \end{enumerate}
\end{restatable}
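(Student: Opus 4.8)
The plan is to reduce the problem of estimating $\val_{\calI}$ to estimating $\lpval$ on a bounded-degree instance, and then to simulate a local algorithm for the latter via a constant-pass streaming algorithm. Concretely, the target is: produce (with high probability) a number $\widehat{v}$ such that $\lpcurve(c) - \varepsilon/2 \le \widehat v \le \val_{\calI} + \varepsilon/2$ up to some slack, so that thresholding at roughly $(c + \lpcurve(c))/2$ separates the two cases. First I would handle the degree-reduction step. Given an arbitrary instance $\calI$, I would pass over the stream once to count, for each variable $\sfv$, (an estimate of) its degree $d_\sfv$; then, following the idea of~\cite{Tre01}, replace each high-degree variable by many low-degree copies, distributing the constraints touching $\sfv$ among the copies and adding equality gadgets (or simply splitting, depending on what $\calF$ allows) so that the \emph{value} of the new instance $\calI'$ is within $\varepsilon$ of $\val_{\calI}$, and $\calI'$ has degree $O_\varepsilon(1)$. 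The subtlety flagged in the techniques section is that we cannot afford to materialize $\calI'$; instead I would fix a hash function / a small seed that \emph{defines} which copy of $\sfv$ each occurrence of $\sfv$ gets routed to, store only that seed in $O_\varepsilon(\log n)$ space, and regenerate the relevant local neighborhood of $\calI'$ on demand during later passes.

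Next, I would invoke the key fact inherited from~\cite{Yos11} (combined with the positive-LP machinery of~\cite{KMW06,Tre96}): on bounded-degree instances there is a \emph{local} algorithm — i.e., one that, given query access to the constraint-incidence structure, explores an $O_\varepsilon(1)$-radius neighborhood of a uniformly random vertex/constraint and outputs a $\{0,1\}$ decision, such that the expectation of these outputs approximates $\lpval_{\calI'}$ within additive $\varepsilon$. Then I would use the simulation result of~\cite{saxena2025streaming}: a local algorithm making $O(1)$ queries of radius $O(1)$ into a bounded-degree instance can be executed with $O(1)$ passes over the stream and $O(\log n)$ memory (each query is resolved by scanning the stream for the relevant incident constraints, storing only the $O_\varepsilon(1)$ constraints in the explored ball). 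The only new ingredient is that each such query now asks about $\calI'$, not $\calI$; I would answer it by reading the stream for $\calI$ and applying the stored routing seed to translate occurrences of $\sfv$ into occurrences of the appropriate low-degree copy — this is exactly the "generate parts of the sparsified instance on demand" point, and it costs only the $O_\varepsilon(\log n)$ space already used for the seed. Repeating the local algorithm $O_\varepsilon(1)$ times on fresh random roots and averaging, a Hoeffding bound (\Cref{prop:Hoeffding}) gives an estimate of $\lpval_{\calI'}$ within $\varepsilon/4$ with probability $\ge 2/3$.

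To finish, I would chain the inequalities. On the yes-side, $\val_{\calI} \ge c + \varepsilon$ implies $\val_{\calI'} \ge c + \varepsilon/2$, hence $\lpval_{\calI'} \ge \val_{\calI'} \ge c + \varepsilon/2$ by \Cref{obs:BasicLP}, so the estimate is $\ge c + \varepsilon/4$ whp. On the no-side, $\val_{\calI} \le \lpcurve(c) - \varepsilon$; I would argue that $\lpval_{\calI'}$ is then bounded above by roughly $c$: indeed if $\lpval_{\calI'} \ge c' $ for some $c' $ then by \Cref{def:lpcurve} the value of $\calI'$ is at least $\lpcurve^*(c') \ge \lpcurve(c')$; combining with $\val_{\calI'} \le \val_{\calI} + \varepsilon/2 \le \lpcurve(c) - \varepsilon/2$ and the monotonicity/continuity of $\lpcurve$ (\Cref{lem:convexity}, \Cref{lem:weak_continuity}) forces $c' < c$ minus a constant depending on $\varepsilon$, so the estimate stays below $c - \varepsilon/4$ whp. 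Thresholding at the midpoint separates the cases. The main obstacle, as anticipated in the overview, is the first step: carefully designing the streaming-friendly variant of the~\cite{Tre01} sparsification so that (i) it is specified by an $O_\varepsilon(\log n)$-bit seed, (ii) local neighborhoods of $\calI'$ can be recovered from $\calI$ plus the seed in one pass each, and (iii) $|\val_{\calI'} - \val_{\calI}|$ is genuinely $O(\varepsilon)$ — the last point requires showing both that splitting a variable cannot increase the value by much (easy) and that it cannot decrease it by much (needs the equality gadgets to be satisfiable "for free" in the relevant regime, or a direct argument that an optimal assignment to $\calI$ lifts to a near-optimal assignment of $\calI'$).
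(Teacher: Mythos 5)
Your high-level architecture matches the paper's: reduce to a bounded-degree instance, invoke Yoshida's local algorithm for the LP value, simulate it with $O_\varepsilon(1)$ passes via the \cite{saxena2025streaming} technique, Hoeffding the samples, and threshold. The yes/no analysis via \Cref{def:lpcurve} (that $\val_{\calI'}<\lpcurve(c)$ forces $\lpval_{\calI'}<c$) is also essentially what the paper does, if a bit more circuitously stated.

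However, the step you correctly flag as ``the main obstacle'' --- the streaming-friendly degree reduction --- is where the real gap is, and the proposal does not close it. First, equality gadgets cannot be used here: for arbitrary $\calF$ there is no reason equality constraints (or anything that enforces consistency between copies) are expressible as $\cspF$ constraints, so that fallback fails at the start. The paper's actual construction (\Cref{alg:I_BD_from_I}, \Cref{lem:reduction_bounded_degree}) sidesteps gadgets entirely: it creates $D\cdot\deg_\calI(\sfv)$ copies of each variable, and builds $\calI_{B,D}$ as $B$ independent rounds, each round copying every constraint of $\calI$ once and routing each variable occurrence to a uniformly random \emph{still-available} copy (removing it from the pool). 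The crucial property you were groping for --- that splitting cannot decrease the value --- then holds \emph{deterministically}, not probabilistically: any assignment $\tau$ of $\calI$ lifts by setting all copies of $\sfv$ to $\tau(\sfv)$, and this satisfies each of the $B$ copies of $C_i$ iff $\tau$ satisfies $C_i$, so $\val_{\calI_{B,D}}\geq\val_\calI$ always, with no slack. Only the upper bound $\val_{\calI_{B,D}}\leq\val_\calI+\varepsilon$ needs a concentration argument (\Cref{prop:concentration} plus a union bound over lifted assignments).

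Second, the ``routing seed / hash function'' idea does not work as stated: the random reduction consumes $\Omega(n\log n)$ random bits (a fresh near-uniform choice per constraint occurrence per round), and a small pseudorandom seed would invalidate the union-bound-over-assignments concentration argument unless one proves additional pseudorandomness properties --- none of which you develop. The paper instead uses \emph{lazy sampling} (\Cref{alg:oracle_I_BD}): answer each of the $O_\varepsilon(1)$ neighborhood queries using fresh randomness, conditioned on and stored in a small list $L$ of past answers, which keeps the simulated oracle consistent with \emph{some} instantiation of $\calI_{B,D}$ and never needs a global seed. Relatedly, your plan to ``pass over the stream once to count, for each variable $\sfv$, (an estimate of) its degree'' requires $\Omega(n)$ memory; the paper computes the degree of exactly one variable per pass, on demand, only for the $O_\varepsilon(1)$ variables that the local algorithm actually touches.
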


On the hardness side, we show that any integrality gap instance of $\lp$ serves as a witness for the hardness of approximation in the multipass-streaming setting. Consequently, no efficient multipass-streaming algorithm can achieve an approximation ratio better than the integrality gap of $\lp$, establishing the optimality of our algorithm in \Cref{thm:streaming_solve_LP}.

\begin{restatable}[Main hardness]{theorem}{thmhardness}\label{thm:hardness}
Fix a nonempty instance $\calI\in \cspF$. Let $s:=\val_{\calI}$ and $c:=\lpval_{\calI}$. Then the following statements hold:
\begin{enumerate}[label = (\arabic*)]
\item If $c<1$, then for any fixed error parameter $\varepsilon\in (0,1)$, any $p$-pass streaming algorithm for $\McspF{c-\varepsilon}{s+\varepsilon}$ requires $\Omega_{\varepsilon}(n^{1/3}/p)$ bits of memory.
\item If $c=1$, then for any fixed error parameter $\varepsilon\in (0,1)$, any $p$-pass streaming algorithm for $\McspF{1}{s+\varepsilon}$ requires $\Omega_{\varepsilon}(n^{1/3}/p)$ bits of memory.
\end{enumerate}
\end{restatable}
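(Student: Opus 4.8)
The plan is to derive the space lower bound from a communication complexity lower bound for the \emph{Distributional Implicit Hidden Partition} (DIHP) problem built out of $\calI$: I would (a) turn any $p$-pass, $S$-space streaming algorithm for the gap problem into a protocol for DIHP of cost $O(pS)$, and (b) show DIHP has communication complexity $\Omega(n^{1/3})$; together these force $pS=\Omega(n^{1/3})$, i.e.\ $S=\Omega_\varepsilon(n^{1/3}/p)$, proving both parts of \Cref{thm:hardness}. The first step is to package $\calI$ as a distribution-labeled graph. Take a rational optimal solution of $\lp_{\calI}$, so each variable $\sfv$ carries a distribution $\nu_{\sfv}$ on $\Sigma$ and each constraint $C_i$ a distribution $\nu_{C_i}$ on $\Sigma^k$, consistent on one-variable marginals, with objective value $\lpval_{\calI}=c$. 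Choosing $N$ divisible by all denominators, partition $\ZmodN$ into blocks $\{I_{\sfv,\sigma}\}$ of sizes $x_{\sfv,\sigma}N$ to get maps $q_{\sfv}\colon\ZmodN\to\Sigma$ pushing the uniform law on $\ZmodN$ to $\nu_{\sfv}$, and lift each $\nu_{C_i}$ to a one-wise independent distribution $\mu_{\sfe}$ on $\ZNk$ whose image under $(q_{\sfv_{i,1}},\dots,q_{\sfv_{i,k}})$ is $\nu_{C_i}$. This yields a constant-size $k$-uniform distribution-labeled hypergraph $G=(\calV,\mathcal{E})$.

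\textbf{The DIHP instance and the reduction.} On the blow-up $V=\bigcup_{\sfv}U_{\sfv}$ with $U_{\sfv}=\{\sfv\}\times[n]$, define DIHP with $|\mathcal{E}|K$ players ($K=K(\varepsilon)$ a large constant): player $(\sfe,j)$ gets a uniformly random size-$\alpha n$ $k$-partite matching on the clouds of $\sfe$, with labels i.i.d.\ uniform under $\Dno$ and of the form $x_{|e'}-w_{e'}$, $w_{e'}\sim\mu_{\sfe}$, $x\in\ZmodN^V$ uniform, under $\Dyes$. A streaming algorithm for $\McspF{c-\varepsilon}{s+\varepsilon}$ is converted into a protocol by feeding it, in player order over $p$ rounds, the constraints from the $\vec 0$-labeled edges (read through the maps $q_{\sfv}$), each player forwarding the $S$-bit memory state to the next; this costs $O(pS)$. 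Correctness needs: under $\Dyes$, the global $x$ satisfies a $(c-o(1))$-fraction of the instance w.h.p.\ --- since conditioning a $\mu_{\sfe}$-label to $\vec 0$ and applying $q$ reproduces $\nu_{C_i}$, the expected satisfied fraction is exactly $c$, and \Cref{prop:Hoeffding}/\Cref{prop:concentration} give concentration; and under $\Dno$, the instance has value $\le s+o(1)$ w.h.p.\ --- which follows since the product-distribution relaxation of any CSP equals its integral optimum (the objective is coordinate-wise linear), so any assignment's expected value is $\le\val_{\calI}=s$, plus concentration and a union bound over assignments (taking $K$ large makes the constraint-to-variable ratio large enough for the union bound to close with slack $\varepsilon$). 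For part (2), $c=\lpval_{\calI}=1$ forces each $\nu_{C_i}$ to be supported on satisfying assignments, so under $\Dyes$ \emph{every} $\vec 0$-labeled constraint is satisfied by $x$; no loss on completeness is incurred and the reduction applies verbatim to $\McspF{1}{s+\varepsilon}$.

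\textbf{The DIHP lower bound.} Following the method of \cite{FMW25}, I would prove $\Omega(n^{1/3})$ by combining (i) a decomposition lemma: any protocol of cost $o(\sqrt n)$ can be converted into a round-by-round protocol in which a single player speaks each round and the inputs surviving to each leaf form a \emph{global} rectangle--restriction pair $(R,\bdzeta)$ in the sense defined in the introduction; and (ii) a discrepancy lemma: for global $R$, $\lvert\Dyes(R)-\Dno(R)\rvert\le 0.01\,\Dno(R)$, together with the fact that exposing $o(\sqrt n)$ coordinates alone cannot distinguish $\Dyes$ from $\Dno$. The discrepancy lemma is established via a global hypercontractive inequality for the space $\Omega^{\calU,\alpha n}$ of size-$\alpha n$ labeled matchings over $\calU$, a level-$d$ inequality it implies for indicators of global sets, and a Fourier expansion of $\Dyes(R)-\Dno(R)$ in terms of the level-$d$ masses so obtained; the structured and global parts of this estimate do not fully decouple and must be interleaved.

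\textbf{Main obstacle.} The crux is the global analysis in the last step: proving a level-$d$ inequality for global subsets of $\Omega^{\calU,\alpha n}$ over $\ZNk$ that is sharp enough to yield exponent $1/3$. Black-box global hypercontractivity gives only a weaker bound, hence only $n^{\Omega_k(1)}$; the tight exponent requires a level-$d$ inequality tailored to this setting --- bounding the contribution of a structured subclass of level-$d$ characters --- generalizing the $\mathbb{F}_2$-subspace analysis of \cite{FMW25} to arbitrary predicates, $k$-uniform hypergraphs and large cyclic groups, where the role of ``linear subspace'' structured sets is played by considerably more delicate objects. A secondary difficulty is that the structured and global portions of the discrepancy bound cannot be separated cleanly and must be proved together.
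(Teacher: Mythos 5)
Your proposal is correct and follows essentially the same route as the paper: it builds the distribution-labeled graph $G$ from a rational optimum of $\lp_{\calI}$ via the maps $q_{\sfv}$, defines $\DIHP(G,n,\alpha,K)$ on the $n$-fold blowup, reduces a $p$-pass $S$-space algorithm to a protocol of cost $O(pmKS)$, proves completeness/soundness by concentration and a union bound over assignments, and obtains the $\Omega(n^{1/3})$ communication bound from a decomposition lemma into global structured rectangles together with a discrepancy lemma proved via a tailored level-$d$ inequality on $\Omega^{\calU,\alpha n}$; the only deviations are cosmetic (slack constants, the precise goodness conditions, and a slightly informal phrasing of why the NO-case expected value is $\le s$, which the paper formalizes via the auxiliary random assignment $\tau$ and the martingale Chernoff bound).
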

The proof of Theorem \ref{thm:streaming_solve_LP} is given in Section \ref{sec:algorithm} and the proof of Theorem \ref{thm:hardness} will take up \Cref{sec:communication_game,sec:communication_lower_bound,sec:global_rectangle,sec:Fourier_decay}.  

Our main result, \Cref{thm:main} restated below, follows as a consequence of \Cref{thm:streaming_solve_LP,thm:hardness}.

\mainthm*
\begin{proof}[Proof]
    Let the threshold function $\vartheta_{\calF}$ be as defined in \Cref{def:lpcurve}. Its continuity on $[0,1)$ is already proved in \Cref{lem:weak_continuity}.
    
    For fixed rational numbers $c\in (0,1)$ and $s\in (0,\vartheta_\calF(c))$, there exists $c^*$ such that $0<c^*<c$, and $\vartheta_\calF(c^*)>s$, due to the continuity of $\vartheta_\calF$ on $[0,1)$. Applying \Cref{thm:streaming_solve_LP} with the completeness parameter $c^{*}$ and the error parameter $\varepsilon:= \min \{c-c^*, \vartheta_\calF(c^*) - s\}$, we obtain an $O_\varepsilon(1)$-pass, $O_\varepsilon(\log n)$-space algorithm that solves the gap problem $\McspF{c^*+\varepsilon}{\vartheta_\calF(c^*)-\varepsilon}$ with probability at least $2/3$. Therefore, it also solves $\McspF{c}{s}$ with probability at least $2/3$, as desired. 

    For fixed rational numbers $c\in (0,1)$ and $s\in (\vartheta_\calF(c),c)$, similarly, there exists $c^*$ such that $c<c^*<1$ and $\vartheta_\calF(c^*)<s$, due to the continuity of $\vartheta_\calF$. Since $$\vartheta_{\calF}(c^*)<s<c<c^*,$$
    by the definition of $\vartheta_{\calF}$, we also have $\vartheta^*_{\calF}(c^*)=\vartheta_{\calF}(c^*)$. By the definition of $\vartheta^*_{\calF}$, there exists an instance $\calI\in \cspF$ such that $\lpval_{\calI}\geq c^*$ and $\val_{\calI}< s$. Applying the first statement of Theorem \ref{thm:hardness} with the gap instance $\calI$ and the error parameter $\varepsilon:=\min\{\lpval_{\calI}-c, s-\val_{\calI}\}$, we know that any $p$-pass streaming algorithm that solves $\McspF{\lpval_{\calI}-\varepsilon}{\val_{\calI}+\varepsilon}$ with probability at least $2/3$ requires $\Omega_{\varepsilon}(n^{1/3}/p)$ bits of memory. By comparison of parameters, the same lower bound holds for $\McspF{c}{s}$. 
\end{proof}

\begin{remark}
    The second statement of \Cref{thm:hardness} implies that for any \(s \in \big(\vartheta_\calF(1),1\big)\), every \(p\)-pass streaming algorithm for \(\McspF{1}{s}\) must use \(\Omega\big(n^{1/3}/p\big)\) bits of space. On the other hand, the algorithmic result in \Cref{thm:streaming_solve_LP} guarantees efficient streaming algorithms for $\McspF{1}{s}$ only when $s < \lim_{c \to 1^{-}} \vartheta_{\calF}(c)$. Since it remains unknown whether \(\lim_{c\to 1^{-}}\vartheta_{\calF}(c) = \lpcurve(1)\), i.e., whether
    \(\vartheta_\calF\) is continuous at \(1\), we cannot yet establish a full dichotomy for approximating satisfiable CSPs. This motivates the following open question.
\end{remark}

\begin{question}
Is it true that for any alphabet $\Sigma$, arity $k$ and predicate family $\calF\subseteq \{f:\Sigma^{k}\rightarrow \{0,1\}\}$, the function $\lpcurve$ (defined in \Cref{def:lpcurve}) is continuous at 1?
\end{question}

\subsection{Examples}\label{subsec:examples}
In this section, we discuss two example CSP problems, $\mathsf{Max}$-$\mathsf{DICUT}$ and $\mathsf{MAX}$-$\mathsf{2SAT}$, and determine their threshold function $\vartheta_\calF$. In particular, we show that the former has approximation ratio $1/2$ (as already proved by \cite{saxena2025streaming,FMW25}) and the latter has approximation ratio $3/4$. 

\subsubsection{MAX-DICUT}\label{subsubsec:DICUT}

In \Cref{subsubsec:DICUT}, we consider the alphabet $\Sigma=\{0,1\}$ and the singleton predicate family $\calF=\{f\}$, where $f:\{0,1\}^{2}\rightarrow\{0,1\}$ is given by $f(\sigma_{1},\sigma_{2})=1$ if and only if $\sigma_{1}=1$ and $\sigma_{2}=0$. In this case, the problem $\mcsp$ is also known as $\mathsf{Max}$-$\mathsf{DICUT}$. 

Let $\calI=(\calV,\calC)$ be a $\mathsf{Max}$-$\mathsf{DICUT}$ instance. We write $C=(C_{1},\dots,C_{m})$ and $C_{i}=((\sfv_{i,1},\sfv_{i,2}),f)$. The basic linear program $\lp_{\calI}$ (defined in \Cref{def:BasicLP}) can be simplified into the following:

\begin{tcolorbox}[
  enhanced,
  breakable,
  title={BasicLP for $\mathsf{Max}\text{-}\mathsf{DICUT}$ instance $\calI=(\calV,\calC)$},
  fonttitle=\bfseries,
  colframe=black,
  colback=white,
  boxrule=0.6pt,
  arc=6pt,
  left=6pt,
  right=6pt,
  top=4pt,
  bottom=4pt
]
\[
\begin{alignedat}{3}
\text{maximize}\quad     & \frac{1}{m}\sum_{i=1}^{m} z_i \\
\text{subject to}\quad
& 0\leq z_i \le x_{\sfv_{i,1}}, \quad && z_i \le 1 - x_{\sfv_{i,2}} \quad && \forall\, i\in[m]\\
& 0 \le x_{\sfv} \le 1     &&                               && \forall\, \sfv\in\calV
\end{alignedat}
\]
\end{tcolorbox}

For Max-DICUT, it is straightforward to verify that the above formulation of the basic linear program is equivalent to that in \Cref{def:BasicLP}. Specifically, the variable $x_{\sfv}$ here corresponds to $x_{\sfv,1}$ in \Cref{def:BasicLP}, while $z_{i}$ corresponds to $z_{i,(1,0)}$.

The key observation is that every vertex of the polytope defined by the constraints of $\lp$ has coordinates taking values only in $\{0,\tfrac{1}{2},1\}$.

\begin{lemma}\label{lem:solution_structure}
    For any $\mathsf{Max}\text{-}\mathsf{DICUT}$ instance $\calI=(\calV,(C_{1},\dots,C_{m}))$, the linear program $\lp_{\calI}$ has an optimal solution $\left((x^*_{\sfv})_{\sfv\in \calV}, (z^*_i)_{i\in [m]}\right)$ (achieving value $\lpval_{\calI}$) such that $x_{\sfv},z_{i}\in \{0,\frac{1}{2},1\}$ for any $\sfv\in\calV$ and $i\in [m]$. 
\end{lemma}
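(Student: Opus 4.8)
The plan is to show that the feasible polytope of $\lp_{\calI}$ is (up to the values of the $z$ variables, which are determined by the $x$ variables in an optimal solution) a polytope all of whose vertices are half-integral, and then invoke the fact that a linear program always attains its optimum at a vertex. Concretely, I would first argue that for a fixed vector $(x_{\sfv})_{\sfv\in\calV}$, the optimal choice of $z_i$ is $z_i = \min\{x_{\sfv_{i,1}},\, 1 - x_{\sfv_{i,2}}\}$; this is forced coordinatewise since the objective is increasing in each $z_i$ and the only upper constraints on $z_i$ are $z_i\le x_{\sfv_{i,1}}$ and $z_i \le 1 - x_{\sfv_{i,2}}$ (together with $z_i\ge 0$, which is not binding at the optimum when the min is nonnegative, and it always is since $x_{\sfv_{i,1}}\ge 0$ and $1-x_{\sfv_{i,2}}\ge 0$). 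Thus it suffices to prove that $\lpval_{\calI} = \max_{x\in[0,1]^{\calV}} \frac1m\sum_i \min\{x_{\sfv_{i,1}},\, 1-x_{\sfv_{i,2}}\}$ is attained at a half-integral $x$, and then set $z^*_i$ accordingly (which will also be half-integral, being the min of two half-integral numbers).

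For the half-integrality of the optimal $x$, I would use the standard rounding/uncrossing argument. Let $x^*$ be an optimal vertex of the (compact, nonempty) feasible region, and suppose for contradiction that the set $T := \{\sfv : x^*_{\sfv}\notin\{0,\tfrac12,1\}\}$ is nonempty. Pick a small $\delta>0$ and consider the two perturbed points $x^{\pm}$ obtained from $x^*$ by adding $\pm\delta$ to every coordinate in $T$ whose value lies in $(0,\tfrac12)$ and $\mp\delta$ to every coordinate in $T$ whose value lies in $(\tfrac12,1)$ — i.e.\ push each fractional coordinate symmetrically away from or toward $\tfrac12$. For $\delta$ small enough both $x^{\pm}$ remain in $[0,1]^{\calV}$. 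The key claim is that the objective value $g(x) := \frac1m\sum_i \min\{x_{\sfv_{i,1}}, 1-x_{\sfv_{i,2}}\}$ is \emph{convex} along the segment $t\mapsto x^* + t(x^+ - x^*)$ (it is a minimum of affine functions, hence concave — wait, so I should instead argue $g$ is concave, and conclude $g(x^*)\ge \tfrac12 g(x^+) + \tfrac12 g(x^-)$, so at least one of $x^\pm$ is also optimal, contradicting that $x^*$ was the unique optimum — but $x^*$ need not be unique). Since $g$ is a minimum of affine functions it is \emph{concave}, so $g\big(\tfrac12 x^+ + \tfrac12 x^-\big) = g(x^*)$ forces $g$ to be affine on the whole segment $[x^-, x^+]$; hence both endpoints $x^{\pm}$ are also optimal. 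Repeating the push in the direction that strictly decreases $|T|$ (each fractional coordinate either hits $0$, $\tfrac12$, or $1$ after finitely many doublings of $\delta$), we reach an optimal solution with $T=\emptyset$, which is the desired half-integral optimum. Finally, recover $z^*_i := \min\{x^*_{\sfv_{i,1}}, 1 - x^*_{\sfv_{i,2}}\}\in\{0,\tfrac12,1\}$.

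The main obstacle I anticipate is not the concavity observation (which is immediate) but making the uncrossing step clean: one must verify that after replacing each fractional coordinate $x^*_\sfv$ by $x^*_\sfv \pm \delta$, the value $g$ stays affine along the segment — this uses concavity plus the equality $g(x^*) = \tfrac12 g(x^+)+\tfrac12 g(x^-)$, which in turn must be justified. The subtlety is that $g$ could \emph{strictly decrease} along one direction and strictly increase along the other if $g$ weren't affine on the segment; so one needs the argument "concave and equal to the midpoint value $\Rightarrow$ affine on the segment," which is standard but should be stated. An alternative, perhaps cleaner route avoiding case analysis entirely: observe that the polytope $P = \{(x,z) : 0\le x_\sfv\le 1,\ 0\le z_i,\ z_i\le x_{\sfv_{i,1}},\ z_i\le 1-x_{\sfv_{i,2}}\}$ is defined by a constraint matrix each of whose rows has at most two nonzero entries, each $\pm 1$, which after the substitution $y_{\sfv_{i,2}} := 1 - x_{\sfv_{i,2}}$ (flipping the "target" variables) becomes a network-matrix-type / balanced system; such systems are known to yield half-integral vertices. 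If this structural route goes through it sidesteps the perturbation argument, but I would still expect to fall back on the direct uncrossing argument above as the robust option.
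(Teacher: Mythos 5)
Your perturbation argument has a genuine gap at the key step. You want to conclude that both $x^+$ and $x^-$ are optimal by invoking ``concave and equal to the midpoint value $\Rightarrow$ affine on the segment.'' That lemma requires $g(x^*) = \tfrac12 g(x^+) + \tfrac12 g(x^-)$, i.e.\ \emph{equality} in the concavity inequality, which you have not established and which is in fact part of what you are trying to prove. What you do have is $g(x^*) \geq g(x^+)$, $g(x^*) \geq g(x^-)$ (optimality of $x^*$), and $g(x^*) \geq \tfrac12 g(x^+) + \tfrac12 g(x^-)$ (concavity) --- three inequalities all pointing the same way, yielding no contradiction. A concave piecewise-linear function can have a strict local maximum at a non-half-integral point (e.g.\ $t \mapsto -|t - 0.3|$), and nothing in your argument rules this behavior out for $g$. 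Separately, ``push every fractional coordinate symmetrically toward/away from $1/2$'' is an arbitrary choice of perturbation direction; to make the argument go through one would need to show this direction lies in the face of optimizers, which is precisely what is at issue. There is also a setup problem: you call $x^*$ ``an optimal vertex of the feasible region,'' but in the $x$-only formulation the feasible region is the cube $[0,1]^{\calV}$, whose vertices are $\{0,1\}$-valued, and since $g$ is not linear there is no reason the maximum of $g$ lies at a vertex of the cube --- so this phrase does not make sense as stated.

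The paper avoids all of this by staying in the $(x,z)$-polytope $P$, where the objective \emph{is} linear and the optimum \emph{is} achieved at a vertex of $P$. It then inspects the tight constraints at an optimal vertex: among the non-integral $x$-coordinates, the tight constraints $z_i = x_{\sfv_{i,1}} = 1 - x_{\sfv_{i,2}}$ define a graph whose edges impose $x_{\sfv_1} + x_{\sfv_2} = 1$; for the tight-constraint system to have a unique solution (which vertexness requires) this graph must contain an odd cycle, and an odd cycle forces the common value $1/2$ by propagation. Your ``alternative structural route'' via the two-nonzero-$\pm 1$-entries-per-row structure of the constraint matrix is valid and is essentially the paper's argument in disguise, but you neither carry it out nor cite a precise half-integrality theorem, leaving it as an aside rather than as the main proof.
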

\begin{proof}
Consider the polytope in $\mathbb{R}^{\calV \cup [m]}$ defined by the constraints of $\lp_{\calI}$:
\[
P := \Big\{ \big((x_{\sfv})_{\sfv \in \calV}, (z_i)_{i \in [m]}\big) \in [0,1]^{\calV \cup [m]} \;\Big|\; z_i \le \min\left(x_{\sfv_{i,1}}, 1 - x_{\sfv_{i,2}}\right) \;\; \forall\, i \in [m] \Big\}.
\]
We claim that every vertex of $P$ has all coordinates in $\{0, \tfrac{1}{2}, 1\}$, which yields the lemma.

Let $\big((x^*_{\sfv})_{\sfv \in \calV}, (z^*_i)_{i \in [m]}\big)$ be a vertex of $P$.  
Define the set of non-integral variables
\[
\calV' := \{\sfv \in \calV \mid x^*_{\sfv} \notin \{0,1\} \},
\]
and construct a graph $G'$ with vertex set $\calV'$ and edge set
\[
\calE' := \big\{ \{\sfv_{i,1}, \sfv_{i,2}\} \;\big|\; i \in [m], \;
x^*_{\sfv_{i,1}} = z^*_i = 1-x^*_{\sfv_{i,2}} \notin \{0,1\} \big\}.
\]

Since $\big((x^*_{\sfv}), (z^*_i)\big)$ is a vertex of the polytope, the system of linear equations
\begin{equation}\label{eq:DICUT_equations}
x_{\sfv_1} + x_{\sfv_2} = 1 \qquad \text{for all } \{\sfv_1,\sfv_2\} \in \calE'
\end{equation}
must have a unique solution for $(x_{\sfv})_{\sfv \in \calV'}$.\footnote{%
Strictly speaking, one should consider the full system of linear equations in all $|\mathcal{V}|+m$ variables (both the $x$- and $z$-variables) obtained by taking those linear inequality constraints defining $P$ that are tight at $(x^*,z^*)$.  
Since $(x^*,z^*)$ is a vertex of $P$, this system has a unique solution.  
It is, however, straightforward to eliminate the $z$-variables (and the variables $x_{\sfv}$ for $\sfv\in \calV\setminus\calV'$) from this system, yielding precisely the equations in~\eqref{eq:DICUT_equations}.%
}
This is possible only if $G'$ is connected and non-bipartite; otherwise, the solution space would have at least one degree of freedom.  
The existence of an odd cycle in $G'$ forces $x^*_{\sfv} = \tfrac{1}{2}$ for every vertex $\sfv$ on the cycle, and by connectedness, $x^*_{\sfv} = \tfrac{1}{2}$ for all $\sfv \in \calV'$.  
Thus every coordinate of the vertex lies in $\{0, \tfrac{1}{2}, 1\}$.
\end{proof}
\begin{corollary}\label{cor:dicutcurve}
    For any $\mathsf{Max}\text{-}\mathsf{DICUT}$ instance $\calI$, we have $\val_{\calI}\geq \frac{3}{2}\cdot \lpval_{\calI}-\frac{1}{2}$. 
\end{corollary}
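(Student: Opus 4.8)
The plan is to round the half-integral optimal solution of $\lp_{\calI}$ supplied by \Cref{lem:solution_structure}. Fix an optimal solution $\big((x^*_\sfv)_{\sfv\in\calV},(z^*_i)_{i\in[m]}\big)$ with all coordinates in $\{0,\tfrac12,1\}$, and define a random assignment $\tau:\calV\to\{0,1\}$ by setting $\tau(\sfv)=1$ whenever $x^*_\sfv=1$, setting $\tau(\sfv)=0$ whenever $x^*_\sfv=0$, and, for each ``fractional'' variable with $x^*_\sfv=\tfrac12$, letting $\tau(\sfv)$ be an independent uniformly random bit. I would then show that for every constraint $C_i=\big((\sfv_{i,1},\sfv_{i,2}),f\big)$,
\[
\Pr{f(\tau(\sfv_{i,1}),\tau(\sfv_{i,2}))=1}\;\geq\;\tfrac32\, z^*_i-\tfrac12 .
\]

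To establish this per-constraint bound I would do the short case analysis on $z^*_i\in\{0,\tfrac12,1\}$, using the LP constraint $z^*_i\leq\min\big(x^*_{\sfv_{i,1}},\,1-x^*_{\sfv_{i,2}}\big)$. If $z^*_i=1$, this forces $x^*_{\sfv_{i,1}}=1$ and $x^*_{\sfv_{i,2}}=0$, so the constraint is satisfied with probability $1=\tfrac32\cdot 1-\tfrac12$. If $z^*_i=\tfrac12$, it forces $x^*_{\sfv_{i,1}}\in\{\tfrac12,1\}$ and $x^*_{\sfv_{i,2}}\in\{0,\tfrac12\}$; in each of the four resulting sub-cases the constraint is satisfied with probability at least $\tfrac12\cdot\tfrac12=\tfrac14=\tfrac32\cdot\tfrac12-\tfrac12$, the worst case being $x^*_{\sfv_{i,1}}=x^*_{\sfv_{i,2}}=\tfrac12$, where one uses that $\sfv_{i,1}\neq\sfv_{i,2}$ (true for any CSP constraint by definition) so that their two rounding bits are independent. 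If $z^*_i=0$ the claimed bound is $-\tfrac12<0$ and holds trivially.

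Finally, by linearity of expectation,
\[
\Exs{\tau}{\val_{\calI}(\tau)}=\frac1m\sum_{i=1}^m\Pr{f(\tau(\sfv_{i,1}),\tau(\sfv_{i,2}))=1}\;\geq\;\frac1m\sum_{i=1}^m\Big(\tfrac32\, z^*_i-\tfrac12\Big)=\tfrac32\,\lpval_{\calI}-\tfrac12,
\]
so some assignment attains value at least $\tfrac32\lpval_{\calI}-\tfrac12$, which yields $\val_{\calI}\geq\tfrac32\lpval_{\calI}-\tfrac12$. There is no genuine obstacle here: the only point requiring care is that the two variables of a constraint receive \emph{independent} rounding bits, which is precisely why each $\tfrac12$-variable is rounded with its own fair coin and why the distinctness of variables within a constraint is invoked. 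The half-integrality guaranteed by \Cref{lem:solution_structure} is exactly what makes the per-constraint bound clean; without it a more delicate (threshold- or correlation-based) rounding would be needed.
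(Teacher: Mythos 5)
Your proof is correct and follows essentially the same route as the paper: appeal to \Cref{lem:solution_structure} to obtain a half-integral optimal LP solution, round each $\tfrac12$-variable with an independent fair coin (equivalently, set $\tau(\sfv)=1$ with probability $x^*_\sfv$), and compare the per-constraint acceptance probability to $\tfrac32 z^*_i - \tfrac12$ via a short case check. The only cosmetic difference is that the paper phrases the case analysis as verifying $x_1(1-x_2) \geq \tfrac32\min(x_1,1-x_2)-\tfrac12$ over $x_1,x_2\in\{0,\tfrac12,1\}$ rather than casing on $z^*_i$, but since $z^*_i=\min(x^*_{\sfv_{i,1}},1-x^*_{\sfv_{i,2}})$ at optimality these are the same computation.
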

\begin{proof}
    Using Lemma \ref{lem:solution_structure}, we obtain a solution $\big((x^*_{\sfv})_{\sfv \in \calV}, (z^*_i)_{i \in [m]}\big)$ with $x^*_{\sfv}, z^*_i\in \{0,\frac{1}{2},1\}$ for all $\sfv\in \calV,i\in [m]$ such that
    \begin{equation}\label{eq:lpval_dicut}
    \frac{1}{m}\sum_{i=1}^{m}\min\left(x^*_{\sfv_{i,1}}, 1 - x^*_{\sfv_{i,2}}\right)=\lpval_{\calI}.
    \end{equation}
    We define a random integral assignment $\tau:\calV\rightarrow\{0,1\}$ by independently assigning $\tau(\sfv)=1$ with probability $x^*_{\sfv}$. For any constraint $C_i =(\sfv_{i,1},\sfv_{i,2})$, we know that $\tau$ satisfies $C_{i}$ with probability $x^*_{\sfv_{i,1}}\left(1-x^*_{\sfv_{i,2}}\right)$.
    We claim that
    \[\Exu{\tau}{\val_{\calI}(\tau)}\geq \frac{3}{2}\cdot \lpval_{\calI}-\frac{1}{2},
    \]
    which yields the conclusion. Indeed, due to \eqref{eq:lpval_dicut}, it suffices to verify that
    \[
    x_{1}(1-x_{2})\geq \frac{3}{2}\cdot \min(x_{1},1-x_{2})-\frac{1}{2}
    \]
    for all $x_{1},x_{2}\in \{0,\frac{1}{2},1\}$, which is straightforward.
\end{proof}

We now prove~\Cref{thm:DICUT}, restated below.
\DICUTthm*
\begin{proof}
    Observe that every nonempty instance $\calI$ of $\mathsf{Max}$-$\mathsf{DICUT}$ satisfies $\val_\calI \geq 1/4$, because the random assignment where each variable value is independently and uniformly sampled from $\{0,1\}$ satisfies each constraints with probability $1/4$. By the definition of $\vartheta_{\calF}$, this implies that $\vartheta_{\calF}(c)=c$ for $c\in [0,\frac{1}{4}]$.

    For each integer $n\geq 1$, consider the $\mathsf{Max}$-$\mathsf{DICUT}$ instance $\calI_n$ on the variable set $[n]$ where for each pair $\{i,j\}\subseteq [n]$, we have two DICUT constraints on $(i,j)$ and $(j,i)$. It is easy to see that 
    \begin{align*}
        \val_{\calI_n} = \frac{\lfloor\frac{n}{2}\rfloor\cdot \lceil\frac{n}{2}\rceil}{2\binom{n}{2}}  =1/4+ o(1). 
    \end{align*}
    Since $\lpval_{\calI_{n}}\geq \frac{1}{2}$ (one can assign value $\frac{1}{2}$ to every variable of $\lp_{\calI}$), this implies that $\vartheta_\calF(c) = 1/4$ for all $c\in [\frac{1}{4},\frac{1}{2}]$.

    Applying \Cref{lem:convexity} to $\lpcurve(\frac{1}{2})=\frac{1}{4}$ and $\lpcurve(1)\leq 1$, we obtain $\vartheta_{\calF}(c)\leq \frac{3}{2}c -\frac{1}{2}$ for $c\in [\frac{1}{2},1]$. Together with \Cref{cor:dicutcurve}, this implies $\vartheta_{\calF}(c)= \frac{3}{2}c -\frac{1}{2}$ for $c\in [\frac{1}{2},1]$. 
\end{proof}

\subsubsection{MAX-2SAT}\label{subsubsec:2SAT}
In this subsection we consider the alphabet $\Sigma=\{0,1\}$ and the predicate family 
\[\calF=\left\{f^{(0)},f^{(1)},f^{(0,0)},f^{(0,1)},f^{(1,0)},f^{(1,1)}\right\},\]
where the functions $f^{(0)},f^{(1)},f^{(0,0)},f^{(0,1)},f^{(1,0)},f^{(1,1)}:\{0,1\}^{2}\rightarrow\{0,1\}$ are given by

\[\begin{alignedat}{3}
f^{(b)}(\sigma_{1},\sigma_{2})&=1\quad &&\text{if and only if  }\sigma_{1}=b,\quad&\text{for all }b\in \{0,1\},\\
f^{(b_{1},b_{2})}(\sigma_{1},\sigma_{2})&=0\quad &&\text{if and only if }\sigma_{1}=b_{1}\text{ and }\sigma_{2}=b_{2},\quad &\text{for all }b_{1},b_{2}\in \{0,1\}.
\end{alignedat}\]
In this setting, the problem $\mcsp$ is also known as $\mathsf{Max}$-$\mathsf{2SAT}$. If the essentially unary predicates $f^{(0)}$ and $f^{(1)}$ are removed from the family $\calF$, the resulting problem is called $\mathsf{Max}$-$\mathsf{E2SAT}$, where the letter ``$\mathsf{E}$'' stands for ``exact.''

From \Cref{thm:hardness} (or \cite{FMW25}), it follows that $\mathsf{Max}$-$\mathsf{E2SAT}$ is approximation resistant\footnote{That is, $\mathsf{Max}$-$\mathsf{E2SAT}[1,\tfrac{3}{4}+\varepsilon]$ is hard for every $\varepsilon>0$, while every nonempty instance has value at least $\tfrac{3}{4}$.} under multipass streaming, achieving an approximation ratio of $3/4$. However, it is not immediately clear whether the non-exact version $\mathsf{Max}$-$\mathsf{2SAT}$ admits the same ratio. We show in \Cref{thm:2SAT} that $\mathsf{Max}$-$\mathsf{2SAT}$ indeed has the same approximation ratio of $3/4$. 

This contrasts with the single-pass streaming setting, where the approximation ratios for $\mathsf{Max}$-$\mathsf{2SAT}$ and $\mathsf{Max}$-$\mathsf{E2SAT}$ are $\sqrt{2}/2$ and $3/4$, respectively, as shown in \cite{chou2020optimal}. Thus, in the single-pass regime, the non-exact version $\mathsf{Max}$-$\mathsf{2SAT}$ is strictly harder to approximate than $\mathsf{Max}$-$\mathsf{E2SAT}$, whereas in the multipass regime this gap disappears.

We are now ready to prove~\Cref{thm:2SAT}, restated below.
\thmtwoSAT*
\begin{proof}[Proof Sketch]
By considering a random assignment where each variable is independently assigned a value from $\{0,1\}$ uniformly at random, we observe that every nonempty instance $\calI$ of $\mathsf{Max}\text{-}\mathsf{2SAT}$ satisfies $\val_{\calI} \ge 1/2$. Hence, $\lpcurve(c) = c$ for $c \in [0,\tfrac{1}{2}]$.

To bound $\lpcurve(c)$ for larger $c$, consider two extremal constructions:

\begin{itemize}
    \item First instance: A variable set $\{1,2\}$ with two constraints, $((1,2),f^{(0)})$ and $((1,2),f^{(1)})$.  
    Both the integral value and LP value of this instance are $1/2$, implying $\vartheta_{\calF}^*(1/2)\le 1/2$.

    \item Second instance: A variable set $[n]$ with $4\binom{n}{2}$ constraints, where on each pair $\{i,j\} \subseteq [n]$ all four E2SAT constraints $f^{(0,0)}, f^{(0,1)}, f^{(1,0)}, f^{(1,1)}$ are placed.  
    This instance has LP value $1$ and integral value approaching $3/4$ as $n \to \infty$.
\end{itemize}

These two constructions imply $\vartheta_{\calF}^*(1/2)\le 1/2$ and $\vartheta^*_{\calF}(1)\le 3/4$. By \Cref{lem:convexity}, we then obtain
\[
\lpcurve(c) \le \frac{2c+1}{4} \quad \text{for } c \in [\tfrac{1}{2},1].
\]
Thus, it remains to prove the reverse inequality \(
\lpcurve(c) \ge (2c+1)/4
\)
for $c \in [\tfrac{1}{2},1]$. By \Cref{def:lpcurve}, this is equivalent to showing that for every nonempty $\mathsf{Max}\text{-}\mathsf{2SAT}$ instance $\calI$, 
\begin{equation}\label{eq:2SAT_desired}
\val_{\calI}\ge\frac{1}{2}\cdot  \lpval_{\calI} + \frac{1}{4}.
\end{equation}

Let $\calI = (\calV, (C_1, \dots, C_m))$ be a $\mathsf{Max}\text{-}\mathsf{2SAT}$ instance, where $C_i = ((\sfv_{i,1}, \sfv_{i,2}), f_i)$.  
For convenience, define $R^{(1)}:[0,1]\to [0,1]$ as the identity map and $R^{(0)}:[0,1]\to [0,1]$ by $R^{(0)}(x) = 1-x$. For any $(b_1,b_2)\in \{0,1\}^2$ and $i\in [m]$, if $f_i = f^{(b_1,b_2)}$, define
\[
g_i(x_1,x_2) := 1 - R^{(b_1)}(x_1)\,R^{(b_2)}(x_2), 
\qquad
h_i(x_1,x_2) := \min\!\left(1, 2 - R^{(b_1)}(x_1) - R^{(b_2)}(x_2)\right).
\]
Similarly, for $i\in [m]$ and $b\in \{0,1\}$, if $f_i = f^{(b)}$, set 
\[
g_i(x_1,x_2) = h_i(x_1,x_2) = R^{(b)}(x_1).
\]
The basic linear program $\lp_\calI$ can then be written as:
\[
\begin{alignedat}{3}
\text{maximize}\quad     & \frac{1}{m}\sum_{i=1}^{m} z_{i}  \\
\text{subject to}\quad & 0\le z_{i}\le h_{i}\big(x_{\sfv_{i,1}},x_{\sfv_{i,2}}\big) \qquad && \forall i\in [m], \\
& 0 \le x_{\sfv} \le 1 && \forall \sfv\in\calV.
\end{alignedat}
\]
Let $P \subseteq [0,1]^{\calV \cup [m]}$ be the polytope defined by these constraints.  
By an argument analogous to \Cref{lem:solution_structure}, every vertex of $P$ has coordinates in $\{0,\tfrac12, 1\}$.  
Hence, there exists an optimal solution
\[
(x^*, z^*) \in \{0,\tfrac12, 1\}^{\calV \cup [m]}
\]
such that
\begin{equation}\label{eq:2SAT_optimality}
\frac{1}{m}\sum_{i=1}^{m} h_{i}\big(x^*_{\sfv_{i,1}}, x^*_{\sfv_{i,2}}\big) = \lpval_\calI.
\end{equation}
Define a random assignment $\tau: \calV \to \{0,1\}$ by setting $\tau(\sfv) = 1$ with probability $x^*_{\sfv}$, independently for all $\sfv$.  
Then each constraint $C_i$ is satisfied with probability $g_i(x^*_{\sfv_{i,1}}, x^*_{\sfv_{i,2}})$.  
Hence,
\[
\Exu{\tau}{\val_\calI(\tau)}
    = \frac{1}{m}\sum_{i=1}^{m} g_i\big(x^*_{\sfv_{i,1}}, x^*_{\sfv_{i,2}}\big).
\]
Thus, to prove \eqref{eq:2SAT_desired}, it suffices to establish
\begin{equation}\label{eq:2SAT_claim}
\frac{1}{m}\sum_{i=1}^{m} g_i\big(x^*_{\sfv_{i,1}}, x^*_{\sfv_{i,2}}\big)
    \ge \frac{1}{m}\sum_{i=1}^{m}\left(\frac{1}{2}\cdot h_{i}\big(x_{\sfv_{i,1}}^*,x_{\sfv_{i,2}}^*\big)+\frac{1}{4}\right).
\end{equation}

To prove \eqref{eq:2SAT_claim}, we let $T\subseteq [m]$ be the set of constraint indices $i\in [m]$ where $x_{\sfv_{i,1}}^*=x_{\sfv_{i,2}}^*=\frac{1}{2}$. For $i\in T$, a direct calculation shows
\begin{equation}\label{eq:2SAT_T}
g_i\big(x^*_{\sfv_{i,1}}, x^*_{\sfv_{i,2}}\big) = \frac{3}{4}
    = \frac{1}{2} \cdot h_i\big(x^*_{\sfv_{i,1}}, x^*_{\sfv_{i,2}}\big) + \frac{1}{4}.
\end{equation} We claim that
\begin{equation}\label{eq:2SAT_not_T}
\frac{1}{m-|T|}\sum_{i\in [m]\setminus T}g_{i}\big(x_{\sfv_{i,1}}^*,x_{\sfv_{i,2}}^*\big)=\frac{1}{m-|T|}\sum_{i\in [m]\setminus T}h_{i}\big(x_{\sfv_{i,1}}^*,x_{\sfv_{i,2}}^*\big)\geq \frac{1}{2},
\end{equation}
which combined with \eqref{eq:2SAT_T} would yield \eqref{eq:2SAT_claim}. The first transition in \eqref{eq:2SAT_not_T} can be justified by verifying $1-xy=\min \{1,2-x-y\}$ for all $(x,y)\in \{0,\frac{1}{2},1\}^{2}\setminus\{ (\frac{1}{2},\frac{1}{2})\}$. The second transition holds because otherwise, reassigning the value $1/2$ to all $x^*_{\sfv}$, for $\sfv\in [n]$, would increase the sum on the left hand side of \eqref{eq:2SAT_optimality}, contradicting the optimality of the LP solution $(x^*,z^*)$.
\end{proof}

\section{The Multi-Pass Algorithm}\label{sec:algorithm}
In this section we prove the main algorithmic result of the paper, \Cref{thm:streaming_solve_LP}. Our approach combines the high-level strategy of~\cite{Yos11} and~\cite{saxena2025streaming}. Given a bounded-degree instance $\calI \in \cspF$ of the CSP maximization problem, \cite{Yos11} showed that the linear programming relaxation $\lp_{\calI}$ (defined in \Cref{def:BasicLP}) can be approximately solved by a constant-round local algorithm, and that such local algorithms can in turn be simulated by constant-query property testers. A key observation, exploited by~\cite{saxena2025streaming}, is that multi-pass streaming algorithms can likewise simulate constant-round local algorithms. This essentially yields \Cref{thm:streaming_solve_LP}, with one caveat: unlike the bounded-degree instances considered in~\cite{Yos11} for property testing, our streaming setting must handle instances of unbounded degree.

To bridge this gap, we show that the classic reduction from general CSPs to bounded-degree CSPs, originally developed for polynomial-time algorithms in~\cite{Tre01}, can in fact be adapted to the streaming model. Combined with Yoshida’s local algorithm for solving the LP relaxation on bounded-degree instances, this yields the desired streaming algorithm.

The section is organized as follows. 
\Cref{subsec:bounded_degree} presents Yoshida’s local algorithm for approximately solving the LP on bounded-degree instances.
\Cref{subsec:reduction_bounded_degree} shows how general CSP maximization problems can be reduced to the bounded-degree setting, and finally, \Cref{subsec:algo} implements this reduction in the streaming model.

\subsection{Yoshida's Local Algorithm}\label{subsec:bounded_degree}

In order to define local algorithms, we first define the notion of \emph{degrees} for CSP instances.

\begin{definition}\label{def:variable_degree}
Consider a \(\cspF\) instance \(\mathcal{I} = (\mathcal{V}, (C_1, \dots, C_m))\), where \({\mathcal{F}} \subseteq \{ f : \Sigma^k \rightarrow \{0,1\} \}\). The \emph{degree} of a variable \(\sfv \in \mathcal{V}\), denoted by $\deg_{\calI}(\sfv)$, is defined as the number of pairs \((i, \ell) \in [m] \times [k]\) such that \(\sfv\) appears as the \(\ell\)-th variable in the scope of constraint \(C_i\). The \emph{maximum degree} of \(\mathcal{I}\) is the maximum degree over all variables in \(\mathcal{V}\) and is denoted by $\deg_{\calI}$. 
\end{definition}

A large CSP instance with bounded maximum degree naturally corresponds to a sparse hypergraph, where variables are represented as vertices and constraints as hyperedges. This perspective motivates the study of distributed algorithms over the hypergraph structure, in which each variable acts as an agent that communicates with its ``neighbors'' to determine its assignment.

One drawback of the hypergraph view, however, is that the notion of a vertex's neighborhood is less canonical than in standard graphs. To address this, we alternatively model the CSP instance as a bipartite graph capturing the incidence relation between variables and constraints. In this setting, distances between vertices are defined using standard graph distance, and distributed algorithms can be analyzed within the standard ``LOCAL model'' of distributed computing~\cite{peleg2000distributed}.

\begin{definition}
Given a $\cspF$ instance \(\mathcal{I} = (\mathcal{V}, \mathcal{C})\) where \(\mathcal{C} = (C_1, \dots, C_m)\), we define the associated \emph{auxiliary (labeled) bipartite graph} as follows:
\begin{enumerate}[label=(\arabic*)]
\item The vertex set of the graph is partitioned into two parts: the set of variables \(\mathcal{V}\) on the left side, and the index set \([m]\) on the right, representing the constraints.
\item For each variable \(\sfv \in \mathcal{V}\) and each constraint index \(i \in [m]\), we add an edge between \(\sfv\) and \(i\) labeled with \(j\) if \(\sfv\) appears as the \(j\)-th variable in the scope of the constraint \(C_i\).
\item Each vertex \(i \in [m]\) on the right side is additionally labeled with the predicate \(f_i\) associated with the constraint \(C_i\).
\end{enumerate}
\end{definition}

Note that the degree of each variable $\sfv \in \calV$, as defined in \Cref{def:variable_degree}, coincides with the usual notion of the degree when $\sfv$ is viewed as a vertex on the left side of the associated auxiliary bipartite graph.

As is standard in the LOCAL model, a local algorithm is formalized as a function from neighborhood profiles to output values at a designated root vertex. The round complexity of such an algorithm corresponds to the radius of the neighborhood it inspects.

We now define the notion of a neighborhood in the auxiliary bipartite graph representation of a CSP instance. For our purposes --- particularly in describing the local algorithm from~\cite{Yos11} --- we are primarily interested in neighborhoods rooted at vertices representing constraints.

\begin{definition}
Fix an instance \(\mathcal{I} = (\mathcal{V}, \mathcal{C})\in \cspF\), where \(\mathcal{C} = (C_1, \dots, C_m)\), and let \(G\) denote its associated auxiliary bipartite graph. For any vertex $u$ in $G$ and any positive integer \(r\), we define \(\mathcal{N}_{\calI}(u, r)\) to be the radius-\(r\) labeled neighborhood of the vertex $u$. This neighborhood includes all vertices within graph distance at most \(r\) from \(u\), all edges among them, and the associated labels on both vertices and edges.

We let \(\mathscr{N}_{\mathcal{F}}(r)\) denote the collection of all possible radius-\(r\) labeled neighborhoods with a distinguished root vertex \emph{on the right side} (i.e. corresponding to constraints, not variables) that can arise from \(\cspF\) instances. Formally speaking, \(\mathscr{N}_{\mathcal{F}}(r)\)  is defined as 
\begin{align*}
    \mathscr{N}_{\mathcal{F}}(r):=\{ \calN_\calI(i,r): \calI=(\calV,\calC=(C_1,\dots,C_m))\in \mathrm{CSP}(\calF), i\in [m]\}.
\end{align*}
\end{definition}

For instance, $\scrN_{\calF}(1)$ consists of a single element: a star graph with $k$ edges labeled $1$ through $k$. For any fixed $r \ge 2$, however, the collection $\scrN_{\calF}(r)$ is infinite, since a variable in a $\cspF$ instance --- equivalently, a vertex on the left side of the auxiliary bipartite graph --- can have arbitrarily large degree. If we instead restrict our attention to neighborhoods arising from $\cspF$ instances with maximum degree at most $B$ (which we will do in \Cref{lem:local_approx_lp}), then we are concerned with only a finite subset of $\scrN_{\calF}(r)$.

\begin{remark}
    One simple but useful observation about neighborhoods is that, when $\calI=(\calV,\calC=(C_1,\dots, C_m))$ has maximum degree at most $B$, then for any $\sfv\in \calV$ (or $i\in [m]$), the number of vertices and edges contained in $\calN_\calI(\sfv,r)$ (or $\calN_\calI(i,r)$) is at most $2\left(\max\{B,k\}\right)^r+1$. 
\end{remark}

We now state the result of Yoshida \cite{Yos11}, which says that for constant arity $k$, alphabet-size $|\Sigma|$, and degree bound $B$, for any instance $\calI\in \mathrm{CSP}(\calF)$ with degree at most $B$, the objective value  of $\lp_\calI$ can be approximated within an additive error $\varepsilon$ by an local algorithm with constant locality.

\begin{lemma}[{\cite[Theorem 3.1]{Yos11}}]\label{lem:local_approx_lp}
Let \(\mathcal{F} \subseteq \{ f : \Sigma^k \rightarrow \{0,1\} \}\) be a fixed family of predicates, and fix a positive integer \(B\) and an error parameter \(\varepsilon \in (0,1)\). Then there exists a positive integer \(r \leq \exp\big(\mathrm{poly}(kB|\Sigma|/\varepsilon)\big)\) and a deterministic map \(\mathcal{A}_{\loc} : \mathscr{N}_{\mathcal{F}}(r) \rightarrow [0,1]^{\Sigma^k}\) such that the following holds:

Given any \(\cspF\) instance \(\mathcal{I} = (\mathcal{V}, (C_1, \dots, C_m))\) with maximum degree at most $B$, the output vectors \(\hat{z}^{(i)} = \mathcal{A}_{\loc}(\mathcal{N}_{\mathcal{I}}(i, r))\) (for each \(i \in [m]\)) satisfies
\[
\lpval_{\mathcal{I}} - \varepsilon\leq  \frac{1}{m}\sum_{i=1}^{m} \sum_{b \in \Sigma^k} f_i(b) \, \hat{z}^{(i)}_b \leq \lpval_{\mathcal{I}} + \varepsilon,
\]
where \(f_i \in \mathcal{F}\) is the predicate associated with constraint \(C_i\).
\end{lemma}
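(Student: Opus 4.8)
The plan is to follow the route indicated in the introduction: reduce $\lp_\calI$ to a \emph{positive} linear program of bounded ``width'', solve that program approximately using the distributed algorithm of Kuhn--Moscibroda--Wattenhofer~\cite{KMW06}, and then observe that a deterministic, identifier-free $r$-round algorithm in the $\mathrm{LOCAL}$ model is literally a function of each node's radius-$r$ labeled neighborhood --- which is exactly the shape of the map $\calA_{\loc}$ we want. (Since the statement is Yoshida's~\cite[Theorem~3.1]{Yos11}, one may also simply invoke it; the sketch below describes the argument behind it.)

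First I would strip $\lp_\calI$ down to the $z$-variables. Summing the marginal constraint $\sum_b \mathbbm{1}\{b_j=\sigma\}z_{i,b}=x_{\sfv_{i,j},\sigma}$ over $\sigma$ forces $\sum_b z_{i,b}=1$ for every $i$, and then for every variable $\sfv$ the quantity $x_{\sfv,\sigma}$ is pinned to the common value of $\sum_b\mathbbm{1}\{b_j=\sigma\}z_{i,b}$ over all occurrences $(i,j)$ of $\sfv$. Hence $\lp_\calI$ is equivalent to the program in the blocks $z_{i,\cdot}$ alone, subject to a simplex constraint $z_{i,\cdot}\in\Delta(\Sigma^k)$ for each $i$, a marginal-agreement equality between each occurrence of each variable and one designated occurrence of it, and the same objective $\frac1m\sum_i\langle f_i,z_i\rangle$. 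The point of this rewriting is that the constraint--variable incidence structure of the reduced program has degree bounded purely in terms of $k$, $B$, $|\Sigma|$: block $z_{i,\cdot}$ has $|\Sigma|^k$ coordinates and interacts only with the at most $kB$ blocks $z_{i',\cdot}$ whose constraint shares a variable with $C_i$. Splitting each remaining equality into a pair of inequalities and applying the usual homogenization/normalization manipulations (in the spirit of Trevisan~\cite{Tre96}) then casts this as a bounded-width positive linear program to which the distributed machinery applies.

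Next I would run the distributed solver of~\cite{KMW06} on the bipartite factor graph of that positive program. On a bounded-degree instance it is deterministic and uses no vertex identifiers (each node only ever forms weighted sums over its neighbors --- for a fractional program there is no need to break symmetry, so symmetric nodes may safely receive symmetric values), and it terminates after a number of synchronous rounds $r$ depending only on $k,B,|\Sigma|,\varepsilon$ (comfortably within the stated $\exp(\poly(kB|\Sigma|/\varepsilon))$, since the maximum degree is bounded in terms of $k,B,|\Sigma|$ and the round complexity of~\cite{KMW06} grows only polynomially in the logarithm of the degree and in $1/\varepsilon$). Because the computation is $r$-round, deterministic and anonymous, the final state of the node attached to $C_i$ is a well-defined function of the isomorphism class of $\calN_\calI(i,r)$; I would let $\calA_{\loc}$ be this function followed by the read-off of the block $\hat z^{(i)}\in[0,1]^{\Sigma^k}$ (on neighborhoods that cannot arise from degree-$\le B$ instances, $\calA_{\loc}$ may be set arbitrarily). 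The $(1\pm\varepsilon)$ feasibility/optimality guarantee of the solver produces a nearly feasible, nearly optimal $\hat z$; since $f_i\in\{0,1\}^{\Sigma^k}$, the blocks are nonnegative, and their masses are near $1$, this yields both $\frac1m\sum_i\langle f_i,\hat z^{(i)}\rangle\ge\lpval_\calI-\varepsilon$ and $\frac1m\sum_i\langle f_i,\hat z^{(i)}\rangle\le\lpval_\calI+\varepsilon$ after adjusting constants.

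The step I expect to be the real obstacle is the reduction in the second paragraph: turning $\lp_\calI$ --- which natively carries equality (consistency and normalization) constraints and a mixture of packing-like and covering-like structure --- into exactly the positive-LP form for which~\cite{KMW06} is stated, while (a) keeping the factor-graph degree bounded and (b) controlling how the solver's multiplicative feasibility slack propagates to an \emph{additive} error on the CSP objective (this needs, e.g., a stability-of-feasibility estimate for the simplex/marginal-matching polytope with a constant independent of $\calI$). This is the part that ``requires a nontrivial amount of work'', and it is carried out in full in~\cite[Theorem~3.1]{Yos11}, which we invoke directly. A secondary, minor point is to double-check that the distributed solver --- in particular in the mixed packing--covering regime --- can indeed be taken deterministic and identifier-free, so that $\calA_{\loc}$ is genuinely a map on labeled neighborhoods rather than on identified ones.
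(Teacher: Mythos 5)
Your proposal follows exactly the route the paper itself takes for this lemma: both reduce $\lp_{\calI}$ to a bounded-width positive linear program in the spirit of Trevisan, both invoke the distributed algorithm of~\cite{KMW06} on the resulting bounded-degree factor graph, both observe that an $r$-round deterministic, anonymous LOCAL computation is by definition a function of the radius-$r$ labeled neighborhood, and both ultimately defer the technical heavy lifting (the positive-LP reduction and the feasibility/error bookkeeping) to~\cite[Theorem~3.1]{Yos11}. Your sketch is somewhat more detailed than the paper's (which states only the two-step outline without writing out the variable-elimination and bounded-degree accounting), but there is no substantive difference in approach.
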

In words, for constant $k,|\Sigma|$, and $B$, the above lemma offers a local algorithm with constant locality that can approximate $\lpval_{\calI}$ with an additive error $\varepsilon$ given any instance $\calI\in \mathrm{CSP}(\calF)$ with maximum degree at most $B$. 

We briefly sketch the main idea behind Lemma~\ref{lem:local_approx_lp}, referring the reader to \cite{Yos11} for full details. The algorithm consists of two main steps. First, the original LP relaxation is reduced to a \emph{fractional packing problem} (that is, maximizing $c \cdot x$ subject to $Ax \preceq b$ and $x \succeq 0$, where $A$, $b$, and $c$ have only nonnegative entries), also known as a \emph{positive linear program}. Similar reductions appear in earlier works such as \cite{Tre96,fotakis1997linear}. Second, Yoshida applies the distributed algorithm of \cite{KMW06}, which approximates the value of a bounded-degree positive linear program. This algorithm simultaneously maintains primal and dual solutions, and guarantees that their values converge to each other as the number of rounds (and hence the locality) increases.

\subsection{Reduction to Bounded-Degree Instances}\label{subsec:reduction_bounded_degree}

In the polynomial-time setting, the approximability of CSPs is known to reduce to instances with bounded degree, due to the reduction of~\cite{Tre01}. In this subsection, we present a slightly different reduction that is more amenable to implementation in the streaming model. Our (randomized) reduction map is described in \Cref{alg:I_BD_from_I}. 

\begin{algorithm}
    \DontPrintSemicolon
    \SetKwInOut{Input}{Input}\SetKwInOut{Output}{Output}
    \caption{Definition of the Random Bounded-Degree Instance $\calI_{B,D}$}\label{alg:I_BD_from_I}
    \Input{a $\cspF$ instance $\calI=(\calV,(C_{1},\dots,C_{m}))$ and integer parameters $B,D\geq 1$}
    \Output{a $\cspF$ instance $\calI_{B,D}$}
    Let $\calV_{D}:=\left\{(\sfv,j)\mid \sfv\in \calV, \,j\in [D\cdot \deg_{\calI}(\sfv)]\right\}$\;
    \For{$\ell \in [B]$\label{line:ell_loop}}{
        Initialize the pool of available variables $U\gets \calV_{D}$\;
        \For{$i\in [m]$}{\tcp*{next: define the constraint $C_{i,\ell}$}
        \For{$t\in [k]$}{
        Suppose $\sfv$ is the $t$-th variable in the scope of $C_{i}$\;
        Pick a uniformly random variable $v$ from $U\cap \{(\sfv,j)\mid j\in [D\cdot \deg_{\calI}(\sfv)]\}$\label{line:randomness_in_I_BD}\;
        Let the $t$-th variable of $C_{i,\ell}$ be $v$\;
        $U\gets U\setminus \{v\}$ \label{line:removal_U}\tcp*{ensures bounded maximum degree in $\calI_{B,D}$}
        }
        }
    }
    \Return{$\calI_{B,D}=\big(\calV_{D},(C_{i,\ell})_{i\in [m],\, \ell\in [B]}\big)$}
\end{algorithm}
In words, for each $\ell$ 
in the outer loop, we take 
a copy of $\calI$ where each variable $\sfv$ is replaced with one of its copies $(\sfv,j)$.
In the following lemma, we show that this reduction preserves the value of the original instance with high probability.
\begin{lemma}\label{lem:reduction_bounded_degree}
    Let \(\mathcal{F} \subseteq \{ f : \Sigma^k \rightarrow \{0,1\} \}\) be a fixed family of predicates, and fix an error parameter \(\varepsilon \in (0,1)\). There exist positive integers \(B,D\leq \poly(k|\Sigma|/\varepsilon)\) such that the random instance $\calI_{B,D}$ sampled by \Cref{alg:I_BD_from_I} satisfies
    \begin{enumerate}[label=(\arabic*)]
        \item $\Prs{\calI_{B,D}}{\val_{\calI_{B,D}}\geq \val_{\calI}}=1$;
        \item $\Prs{\calI_{B,D}}{\val_{\calI_{B,D}}\geq \val_{\calI}+\varepsilon}\leq 0.01$;
        \item $\Prs{\calI_{B,D}}{\deg_{\calI_{B,D}}\leq B}=1$. 
    \end{enumerate}
\end{lemma}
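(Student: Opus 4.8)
The plan is to treat the three claims separately; claims (1) and (3) hold with probability one and are easy, while claim (2) is the technical heart. For (3): within a single pass of the outer loop over $\ell\in[B]$, every picked copy is deleted from the pool $U$, so each copy $(\sfv,j)\in\calV_{D}$ is used in at most one constraint $C_{i,\ell}$ for that $\ell$, hence in at most $B$ constraints overall, giving $\deg_{\calI_{B,D}}\le B$. (The pool is never empty since for fixed $\ell$ the cloud of $\sfv$ has $D\deg_{\calI}(\sfv)\ge\deg_{\calI}(\sfv)$ copies and $\sfv$ is needed in exactly $\deg_{\calI}(\sfv)$ slots, and each $C_{i,\ell}$ has distinct scope because distinct base variables have disjoint clouds.) For (1): lift an optimal $\tau^{*}\colon\calV\to\Sigma$ to $\calV_{D}$ by $\tau'(\sfv,j):=\tau^{*}(\sfv)$; since $C_{i,\ell}$ is a verbatim copy of $C_{i}$ with variables replaced by copies, $\tau'$ satisfies $C_{i,\ell}$ iff $\tau^{*}$ satisfies $C_{i}$, so $\val_{\calI_{B,D}}(\tau')=\val_{\calI}(\tau^{*})=\val_{\calI}$, whence $\val_{\calI_{B,D}}\ge\val_{\calI}$.

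For claim (2) the plan is a union bound over all $\tau'\colon\calV_{D}\to\Sigma$ together with a sharp tail bound for each fixed $\tau'$. Fix $\tau'$ and let $\mu_{\sfv}$ be the empirical distribution it induces on the cloud of $\sfv$. The first ingredient is that the ``product-distribution value'' $F:=\frac1m\sum_{i}\mathbb{E}[f_{i}(b_{1},\dots,b_{k})]$, with $b_{t}\sim\mu_{\sfv_{i,t}}$ independently, is a multi-affine function of $(\mu_{\sfv})_{\sfv}$ on the product of probability simplices over $\Sigma$ (multi-affine because each constraint's scope has distinct variables), hence is maximized at a product of simplex vertices, i.e.\ at an integral assignment, so $F\le\val_{\calI}$. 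The second ingredient is a coupling: order the constraints of $\calI_{B,D}$ as \Cref{alg:I_BD_from_I} produces them, let $X_{i,\ell}\in\{0,1\}$ indicate that $\tau'$ satisfies $C_{i,\ell}$, and condition on all choices made before the $k$ slots of $C_{i,\ell}$ are filled; at that point at most $\deg_{\calI}(\sfv_{i,t})-1$ of the $D\deg_{\calI}(\sfv_{i,t})$ copies of each $\sfv_{i,t}$ have been removed, so the empirical $\tau'$-distribution on the remaining copies is within total variation $\tfrac1{D-1}$ of $\mu_{\sfv_{i,t}}$, and the $k$ slots draw from disjoint clouds hence conditionally independently, yielding $\mathbb{E}[X_{i,\ell}\mid\text{past}]\le\mathbb{E}_{b_{t}\sim\mu_{\sfv_{i,t}}}[f_{i}]+\tfrac{k}{D-1}=:p_{i}$.

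I would then take $D=\Theta(k/\varepsilon)$ so that $\sum_{i,\ell}p_{i}\le mB(F+\varepsilon/2)\le mB(\val_{\calI}+\varepsilon/2)$, apply the adaptive Chernoff bound \Cref{prop:concentration} to the Bernoulli sequence $(X_{i,\ell})$ with deviation $\varepsilon mB/2$ to get $\Pr{\val_{\calI_{B,D}}(\tau')\ge\val_{\calI}+\varepsilon}\le\exp(-\Omega(\varepsilon^{2}mB))$, and union-bound over the $|\Sigma|^{Dkm}$ assignments (using $\sum_{\sfv}\deg_{\calI}(\sfv)=km$), giving failure probability $\le\exp\!\big(m(Dk\ln|\Sigma|-\Omega(\varepsilon^{2}B))\big)$. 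The crucial point is that both exponents scale linearly in $m$, so the instance size cancels and it suffices to take $B=\Theta(Dk\log|\Sigma|/\varepsilon^{2})=\poly(k|\Sigma|/\varepsilon)$, which also drives the failure probability below $0.01$.

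The step I expect to be the main obstacle is exactly this cancellation. Treating the $B$ outer-loop rounds as i.i.d.\ (which they are, since the pool resets each round) and applying Hoeffding would only give a tail $\exp(-\Omega(\varepsilon^{2}B))$, whose exponent does not grow with $m$, and then the union bound over $|\Sigma|^{\Theta(Dkm)}$ assignments would force $B$ to depend on the instance size---contradicting the claimed bound. Running the concentration at the level of individual constraint-copies via \Cref{prop:concentration}, using only the one-sided bound $\mathbb{E}[X_{i,\ell}\mid\text{past}]\le p_{i}$, is what restores the factor $m$ in the exponent; a secondary point to get right is the total-variation estimate guaranteeing that, despite sampling without replacement from a shrinking pool, each slot's $\tau'$-value stays within $O(1/D)$ of $\mu_{\sfv_{i,t}}$ uniformly over the history (and the harmless edge case of predicates that the product measure satisfies with probability one, which contribute deterministically).
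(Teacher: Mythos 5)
Your proof is correct and takes essentially the same approach as the paper: items (1) and (3) are handled identically, and for item (2) you fix a lifted assignment $\widetilde{\tau}$, bound the conditional expectation of each constraint's satisfaction indicator against the value of the induced product assignment on $\calI$ (the paper packages this as a random integral rounding and compares against the unconditional marginal, while you use multi-affinity of $F$ and a per-slot TV bound $\le 1/(D-1)$ in place of the paper's multiplicative $(1-D^{-1})^{-k}$ --- these are interchangeable), then apply the adaptive Chernoff bound (\Cref{prop:concentration}) and a union bound over the $|\Sigma|^{Dkm}$ lifted assignments, with $D=\Theta(k/\varepsilon)$ and $B=\poly(k|\Sigma|/\varepsilon)$. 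Your remark about why a per-constraint martingale bound is needed (rather than treating the $B$ rounds as i.i.d.) correctly identifies the point that makes the union bound close, and matches the paper's reliance on \Cref{prop:concentration} over constraint-level increments.
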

\begin{proof}
    
Due to the removal step in \Cref{line:removal_U} of \Cref{alg:I_BD_from_I}, every variable $v\in \calV_{D}$ is picked at most once in each iteration of the outer-most for-loop (on \Cref{line:ell_loop}). Since there are $B$ iterations of the outer-most for-loop, any variable $v\in \calV_{D}$ is used at most $B$ times in total, and hence the maximum degree of $\calI_{B,D}$ is always at most $B$. It remains to prove the first and the second items in the statement.

We first show that $\val_{\calI_{B,D}}\geq \val_{\calI}$ always holds. For an assignment $\tau:\calV\rightarrow \Sigma$, we can lift it to an assignment $\widetilde{\tau}: \calV_{D}\rightarrow \Sigma$ by setting $\widetilde{\tau}((\sfv,j)):=\tau(\sfv)$ for each $\sfv\in \calV$ and $j\in [D\cdot \deg_{\calI}(\sfv)]$. By the construction of $\calI_{B,D}$, it is easy to see that $\val_{\calI_{B,D}}(\widetilde{\tau})=\val_{\calI}(\tau)$ always holds.

Next, we prove that $\Prs{\calI_{B,D}}{\val_{\calI_{B,D}}\geq \val_{\calI}+ \varepsilon} \leq 0.01$. Consider any fixed assignment $\widetilde{\tau}:\calV_{D}\rightarrow \Sigma$. For each index pair $(i,\ell)\in [m]\times [B]$, define a Bernoulli random variable
\[X_{(i,\ell)}=\begin{cases}
1, &\text{if the constraint }C_{i,\ell}\text{ is satisfied by $\widetilde{\tau}$ in }\calI_{B,D},\\
0, &\text{otherwise}.
\end{cases}\]
Note that the randomness in $X_{(i,\ell)}$ comes from $\calI_{B,D}$ but not from $\widetilde{\tau}$, which is \emph{fixed}. On the other hand, if we define a \emph{random} assignment $\tau:\calV\rightarrow \Sigma$ by assigning value $\sigma\in \Sigma$ to $\sfv$ with probability
\[
\frac{\Big|\big\{j\in [D\cdot \deg_{\calI}(\sfv)]\,\big|\, \widetilde{\tau}((\sfv,j))=\sigma\big\}\Big|}{D\cdot \deg_{\calI}(\sfv)},
\]
independently for each $\sfv\in \calV$, it is easy to see that
\[
\Exu{\calI_{B,D}}{X_{(i,1)}}=\Exu{\calI_{B,D}}{X_{(i,2)}}=\dots=\Exu{\calI_{B,D}}{X_{(i,\ell)}}=\Pru{\tau}{C_{i}\text{ is satisfied by }\tau}
\]
holds for each $i\in [m]$. Therefore, we have
\begin{equation}\label{eq:sampling_assignments}
\frac{1}{m|B|}\sum_{i=1}^{m}\sum_{\ell=1}^{B}\Exu{\calI_{B,D}}{X_{(i,\ell)}}=\frac{1}{m}\sum_{i=1}^{m}\Pru{\tau}{C_{i}\text{ is satisfied by }\tau}=\Exu{\tau}{\val_{\calI}(\tau)}\leq \val_{\calI}.
\end{equation}

In order to apply~\Cref{prop:concentration}, we must specify a total order on the index set $[m]\times [B]$. We consider the order in which the constraints $C_{i,\ell}$ are specified during the execution of \Cref{alg:I_BD_from_I}: let $(i,\ell)<(i',\ell')$ if either (1) $\ell<\ell'$ or (2) $\ell=\ell'$ and $i<i'$. Each time \Cref{line:randomness_in_I_BD} of \Cref{alg:I_BD_from_I} is executed, the number of available copies $(\sfv,j)$ of $\sfv$ is
\begin{align*}
\big|U\cap \{(\sfv,j)\mid j\in [D\cdot \deg_{\calI}(\sfv)]\}\big|&\geq D\cdot \deg_{\calI}(\sfv) - \deg_{\calI}(\sfv) \\
&\geq (1-D^{-1})\cdot \big|\{(\sfv,j)\mid j\in [D\cdot \deg_{\calI}(\sfv)]\}\big|.
\end{align*}
This means that even if the algorithm sampled the variables of $C_{i,\ell}$ without avoiding those that have already been occupied (i.e.~those that are not in $U$), the probability that none of the $k$ sampled variables would actually be occupied is still at least $(1-D^{-1})^{k}$. Therefore, we have
\begin{align*}
     \Exu{\calI_{B,D}}{X_{(i,\ell)}}\geq (1-D^{-1})^{k}\cdot\Exu{\calI_{B,D}}{X_{(i,\ell)} \mid (X_{(i',\ell')})_{(i',\ell')<(i,\ell)}  }
\end{align*}
for each $(i,\ell)\in [m]\times [B]$. Picking $D=\lceil 10k/\varepsilon\rceil$, we have
\[
\Exu{\calI_{B,D}}{X_{(i,\ell)} \mid (X_{(i',\ell')})_{(i',\ell')<(i,\ell)}  }\leq \Exu{\calI_{B,D}}{X_{(i,\ell)}}+\frac{\varepsilon}{2}, 
\]
since $(1-D^{-1})^{-k}\leq e^{2k/D} \leq 1+\varepsilon/2$.

Let $X=\sum_{i=1}^{m}\sum_{\ell=1}^{B}X_{(i,\ell)}$. Now we can apply \Cref{prop:concentration} to $(X_{(i,\ell)})_{(i,\ell)\in [m]\times [B]}$ and get 
\[
    \Pru{\calI_{B,D}}{X\geq \frac{mB\varepsilon}{2}+\sum_{i=1}^{m}\sum_{\ell=1}^{B}\left(\Exu{\calI_{B,D}}{X_{(i,\ell)}}+\frac{\varepsilon}{2}\right)}
    \leq \exp\left( - \frac{(mB\varepsilon/2)^2}{4mB}\right)= \exp\left( - \frac{mB\varepsilon^{2}}{16}\right). 
\]
By the definition of the variables $X_{(i,\ell)}$ and applying \eqref{eq:sampling_assignments}, we arrive at
\[
\Pru{\calI_{B,D}}{\val_{\calI_{B,D}}(\widetilde{\tau})\geq \val_{\calI}+\varepsilon}\leq \exp\left( - \frac{mB\varepsilon^{2}}{16}\right).
\]
Taking union bound over all $|\Sigma|^{|\calV_D|} = |\Sigma|^{mkD}$ possibilities of $\widetilde{\tau}$, we have
\begin{align*}
    \Pru{\calI_{B,D}}{\val_{\calI_{B,D}}\geq \val_{\calI}+\varepsilon} \leq |\Sigma|^{mkD}\cdot  \exp\left( - \frac{mB\varepsilon^2}{16}\right) \leq 0.01, 
\end{align*}
for some constant $B\leq \poly(kD|\Sigma|/\varepsilon)\leq \poly(k|\Sigma|/\varepsilon)$. 
\end{proof}
    
\subsection{Efficient Implementation in Multi-Pass Streaming}\label{subsec:algo}
In this subsection, we combine Yoshida's local algorithm (\Cref{lem:local_approx_lp}) with the reduction in \Cref{lem:reduction_bounded_degree} to obtain an efficient streaming algorithm for~\Cref{thm:streaming_solve_LP} (restated below).

\thmalgo*


The idea behind~\Cref{thm:streaming_solve_LP} is, given an instance $\calI$, to approximate the LP-value of a randomly sampled instance bounded degree instance $\calI_{B,D}$ and accept if its value exceeds $c+\varepsilon/2$. Using~\Cref{lem:reduction_bounded_degree} we know that (if we are able to produce a good such approximation) with high probability we will accept if $\val_{\calI}\geq c+\varepsilon$, and reject if $\val_{I}\leq \lpcurve(c)-\varepsilon$.


To approximate the LP value of $\calI_{B,D}$ we use
Yoshida's algorithm (\Cref{lem:local_approx_lp}). We integrate the local algorithm into the streaming setting using a similar approach to that of~\cite{saxena2025streaming}. More precisely, we begin by uniformly sampling a set of constraints from $\calI_{B,D}$ and, using a constant number of passes, recover the constant-radius neighborhoods of these constraints. The local algorithm $\calA_{\loc}$ from Lemma~\ref{lem:local_approx_lp} is then applied to approximate the contribution of each sampled constraint to the LP objective, and their average is used to estimate the LP value $\lpval_{\calI_{B,D}}$. 

The main difference between our setting and the one in~\cite{saxena2025streaming} is that our method requires an \emph{implicit} construction of the bounded degree instance $\calI_{B,D}$ using only logarithmic space. This is necessary for us, as we cannot afford to sample a full instance $\calI_{B,D}$ and store it on the memory. In contrast,~\cite{saxena2025streaming} does not apply generic degree-reduction transformations, and instead works directly with the original, potentially unbounded-degree instance $\calI$. Their approach, however, is tailored to the specific class of local algorithms they consider, whereas our reduction applies in a black-box manner to any local algorithm, including Yoshida’s.

We now give the formal proof of~\Cref{thm:streaming_solve_LP}.

\begin{proof}[Proof of \Cref{thm:streaming_solve_LP}]
    We will present the proof in several parts.

    \paragraph{Part 1: the reduction oracle.} Ideally, we would like to first transform the input stream $\calI=(\calV,(C_{1},\dots,C_{m}))$ into the data stream corresponding to a bounded-degree instance $\calI_{B,D}$, as defined by the random reduction procedure in \Cref{alg:I_BD_from_I}. However, due to constraints on space and the number of passes, we cannot afford to run \Cref{alg:I_BD_from_I} in its entirety. Instead, we simulate the reduction \emph{on the fly} --- computing only local portions of the random instance $\calI_{B,D}$ when needed and storing them in memory. This approach leads to \Cref{alg:oracle_I_BD}, a local, streaming-compatible variant of \Cref{alg:I_BD_from_I}.
    \begin{algorithm}
    \DontPrintSemicolon
    \SetKwBlock{Conditioned}{Conditioned on \(L\) do}{}
    \SetKwInOut{Input}{Input}
    \SetKwInOut{Output}{Output}
    \caption{Oracle for Answering Queries on \(\mathcal{I}_{B,D}\) --- Local Version of \Cref{alg:I_BD_from_I}}\label{alg:oracle_I_BD}
    \Input{a \(\cspF\) instance \(\mathcal{I} = (\mathcal{V}, (C_{1},\dots,C_{m}))\) presented as a data stream, and integers \(B, D \geq 1\)}
    \Output{answers queries on \(\mathcal{I}_{B,D}\), each given as an index tuple \((i,\ell,t) \in [m]\times [B]\times [k]\)}
    Initialize an empty list $L$ \tcp*{the memory kept by the algorithm}
    \While{receiving a query  \((i,\ell,t) \in [m]\times [B]\times [k]\)}{
        Suppose $\sfv$ is the $t$-th variable in the scope of $C_{i}$\label{line:oracle_get_variable}\;
        \Conditioned{\tcp*{using fresh randomness}
        Determine a $j\in [D\cdot \deg_{\calI}(\sfv)]$ such that $(\sfv,j)$ is the $t$-th variable of $C_{i,\ell}$\label{line:oracle_count_degree}\;
        Determine the radius-1 neighborhood \(\mathcal{N} = \mathcal{N}_{\mathcal{I}_{B,D}}\big((\sfv,j), 1\big)\)\label{line:oracle_sample_neighborhood}\;
        }
        Output \(\mathcal{N}\), including the root vertex $(\sfv,j)$, as the answer to the query\;
        Append \(\mathcal{N}\) to \(L\) \tcp*{memory update}
    }
    \end{algorithm}
    
    In \Cref{alg:oracle_I_BD}, answering each query requires exactly three passes over the input stream: the first occurs in \Cref{line:oracle_get_variable}, where the variable $\sfv$ is retrieved; the second in \Cref{line:oracle_count_degree}, where the degree of $\sfv$ in $\calI$ is counted; and the third in \Cref{line:oracle_sample_neighborhood}, where the relevant constraints to include in the neighborhood of the variable $(\sfv,j)$ in $\calI_{B,D}$ are determined.
    
    Importantly, the responses of \Cref{alg:oracle_I_BD} are always consistent with some fixed instantiation of the random instance $\calI_{B,D}$, although this instantiation is never computed explicitly. Moreover, its answers to any (possibly adaptive) sequence of queries are indistinguishable from those of a hypothetical oracle that first samples the entire instance $\calI_{B,D}$ and then responds deterministically to each query.
    
    \begin{algorithm}
    \DontPrintSemicolon
    \SetKwInOut{Input}{Input}\SetKwInOut{Output}{Output}
    \SetKwBlock{Qtimes}{repeat \(Q\) times}{}
        \caption{$\textsc{ApproxLP}(\calI,B,D,Q,r)$}\label{alg:approx_lp}
        \Input{a \(\cspF\) instance \(\mathcal{I} = (\mathcal{V}, (C_{1},\dots,C_{m}))\) presented as a data stream, and integers \(B, D, Q, r \geq 1\)}
        \Output{an approximation of $\lpval_{\calI}$ with high probability}
        Initialize a counter $\widetilde{\val} \leftarrow 0$\;
        Start running \Cref{alg:oracle_I_BD} in parallel\;
        \label{line:Qtimes}\Qtimes{
            Sample $(i,\ell)$ uniformly at random from $[m]\times [B]$ \label{line:sample_constraint}
            \tcp*{using fresh randomness}
            Obtain $\calN_{\calI_{B,D}}\big((i,\ell),r\big)$ using multiple queries to \Cref{alg:oracle_I_BD}\label{line:invoking_oracle}\;
            \tcp*{randomness of \Cref{alg:oracle_I_BD} involved}
            Using the map $\calA_{\loc}$ from \Cref{lem:local_approx_lp} to compute $\hat{z}\gets \calA_{\loc}\Big(\calN_{\calI_{B,D}}\big((i,\ell),r\big)\Big)$\;
            \tcp*{deterministic}
            Suppose $f_{i}\in \calF$ is the predicate used by the constraint $C_{i}$\; 
            \label{line:add_counter} Add the result to the counter: \(\widetilde{\val} \leftarrow \widetilde{\val} + \sum_{b\in \Sigma^{k}}f_{i}(b)\hat{z}_{b}\)\;
        }
        \Return $\widetilde{\val}/Q$\;
    \end{algorithm}

    We can then use the local reduction oracle provided by \Cref{alg:oracle_I_BD} to build \Cref{alg:approx_lp}, which samples a constant number of constraints in the bounded-degree instance $\calI_{B,D}$ (which is in turn implicitly sampled by \Cref{alg:oracle_I_BD}) to approximate the LP value $\lpval_{\calI_{B,D}}$. 

    \paragraph{Part 2: correctness of \Cref{alg:approx_lp}.} We argue that \Cref{alg:approx_lp} correctly approximates the LP value of $\calI_{B,D}$. First, we let $Q = \lceil 10/\varepsilon_0^2\rceil$ and let $r \leq \exp(\poly(kB|\Sigma|/\varepsilon_0))$ be as in Lemma~\ref{lem:local_approx_lp}, where $\varepsilon_0$ is a parameter that only depends on $\varepsilon$ and to be determined later. We then pick $B,D\leq \poly(k|\Sigma|/\varepsilon_{0})$ as in \Cref{lem:reduction_bounded_degree}.
    
    Conditioned on a fixed instantiation of $\calI_{B,D}$, the $Q$ iterations of the loop on \Cref{line:Qtimes} of \Cref{alg:approx_lp} are independent of each other. Furthermore, for a fixed instantiation of $\calI_{B,D}$, if we let $\hat{z}^{(i,\ell)}$ be the vector $\calA_{\loc}\Big(\calN_{\calI_{B,D}}\big((i,\ell),r\big)\Big)$, the expected value of each increment to the counter on \Cref{line:add_counter} of \Cref{alg:approx_lp} is
    \[
    \Exu{(i,\ell)\text{ chosen on \Cref{line:sample_constraint}}}{\sum_{b\in \Sigma^{k}}f_{i}(b)\hat{z}_{b}}=\frac{1}{mB}\sum_{(i,\ell)\in[m]\times [B]}\left(\sum_{b\in \Sigma^{k}}f_{i}(b)\hat{z}^{(i,\ell)}_{b}\right),
    \]
    We denote this expected value by $\val^{\calA_{\loc}}_{\calI_{B,D}}$. By Hoeffding's inequality (\Cref{prop:Hoeffding}), it follows that for a fixed instantiation of $\calI_{B,D}$,
    \begin{equation}\label{eq:alg_correctness_1}
        \Pru{\textsc{ApproxLP}}{\left| \textsc{ApproxLP}(\calI,B,D,Q,r)- \val^{\calA_{\loc}}_{\calI_{B,D}}\right|\geq \varepsilon_0\,\Bigg|\,\calI_{B,D}} \leq 2\exp(-2\varepsilon_0^2Q) \leq 0.01,
    \end{equation}
    Recall that Lemma \ref{lem:local_approx_lp} guarantees 
    \begin{equation}\label{eq:alg_correctness_2}
        \lpval_{\calI_{B,D}}-\varepsilon_{0} \leq \val^{\calA_{\loc}}_{\calI_{B,D}}\leq \lpval_{\calI_{B,D}}+\varepsilon_{0}.
    \end{equation}
    Combining \eqref{eq:alg_correctness_1} and \eqref{eq:alg_correctness_2}, we conclude that
    \begin{equation}\label{eq:alg_correctness_3}
        \Pru{\textsc{ApproxLP}}{\left| \textsc{ApproxLP}(\calI,B,D,Q,r)- \lpval_{\calI_{B,D}}\right|\geq 2\varepsilon_0\,\Bigg|\, \calI_{B,D}} \leq 0.01.
    \end{equation}

    \paragraph{Part 3: efficiency of \Cref{alg:approx_lp}.} We begin by providing an upper bound on the number of passes used by the algorithm $\textsc{ApproxLP}(\cdot, B, D, Q, r)$. Aside from the initial pass to determine the number of constraints $m$, only \Cref{line:invoking_oracle} in \Cref{alg:approx_lp} requires access to the input stream. In other words, after the first pass, all subsequent passes over the input are delegated to \Cref{alg:oracle_I_BD}.

    Each execution of \Cref{line:invoking_oracle} makes at most $(\max\{B, k\})^{r+1}$ queries to \Cref{alg:oracle_I_BD}, and each query incurs 3 passes over the input stream. Since \Cref{line:invoking_oracle} is executed $Q$ times, the total number of passes required by \Cref{alg:approx_lp} is $O(Q \cdot (Bk)^{r+1}) \leq  \exp(\exp(\poly(k |\Sigma| / \varepsilon_{0})))$.

    Regarding space complexity, it is straightforward to observe that the memory usage of \Cref{alg:oracle_I_BD} grows linearly with the number of queries made, with each query contributing an additional $O(\log n)$ bits. Moreover, the memory maintained internally by \Cref{alg:approx_lp} (i.e., not delegated to \Cref{alg:oracle_I_BD}) is at most comparable to that used by the oracle. Therefore, the overall space complexity of \Cref{alg:approx_lp} is $O_{k,|\Sigma|,\varepsilon_0}(\log n)$.

    Finally, we note that \Cref{alg:approx_lp} never runs out of fresh random bits during the execution of \Cref{alg:approx_lp}. The only steps requiring fresh randomness are \Cref{line:oracle_count_degree,line:oracle_sample_neighborhood} of \Cref{alg:oracle_I_BD} and \Cref{line:sample_constraint} of \Cref{alg:approx_lp}. These steps are executed only constant times in total. Furthermore, each such operation consumes only $O(\log n)$ random bits, which can be readily generated within the $O(\log n)$-space budget of our streaming algorithm (see \Cref{def:randomized_streaming}). 
    
    \paragraph{Part 4: wrapping up.} We now summarize the preceding components into a complete algorithm for \Cref{thm:streaming_solve_LP}, presented below as \Cref{alg:main}.

\begin{algorithm}
    \DontPrintSemicolon
    \SetKwInOut{Input}{Input}
    \SetKwInOut{Output}{Output}
    \caption{Final Algorithm \(\mathcal{A}\) for \Cref{thm:streaming_solve_LP}}\label{alg:main}
    \Input{A \(\cspF\) instance \(\mathcal{I}\) presented as a data stream}
    \Output{0 if \(\val_{\mathcal{I}} \leq \lpcurve(c) - \varepsilon\); 1 if \(\val_{\mathcal{I}} \geq c + \varepsilon\); both with probability at least \(2/3\)}
    Let \(c'\) be any rational number in \([c + 2\varepsilon/5,\, c + 3\varepsilon/5]\)\;
    Choose a rational number \(\varepsilon_0 \in (0, \varepsilon/5)\) and set \(Q = \lceil 10/\varepsilon_0^2 \rceil\)\;
    Select parameters \(r, B, D\) according to \Cref{lem:local_approx_lp,lem:reduction_bounded_degree}\;
    Run \(\textsc{ApproxLP}(\mathcal{I}, B, D, Q, r)\) (\Cref{alg:approx_lp})\;
    \eIf{the output is at least \(c'\)}{
        \Return{1}
    }{
        \Return{0}
    }
\end{algorithm}

If \(\val_{\mathcal{I}} \geq c + \varepsilon\), then since \(\lpval_{\mathcal{I}_{B,D}} \geq \val_{\mathcal{I}_{B,D}} \geq \val_{\mathcal{I}}\) always holds (by \Cref{lem:reduction_bounded_degree}), we get from \eqref{eq:alg_correctness_3} that 
\[\Pr{\textsc{ApproxLP}(\mathcal{I}, B, D, Q, r) \geq c + \varepsilon - 2\varepsilon_0}\geq 0.99.\]
Since \(c + \varepsilon - 2\varepsilon_0 \geq c'\), it follows that \(\Prs{\mathcal{A}}{\mathcal{A}(\mathcal{I}) = 1} \geq 0.99\geq 2/3\).

Conversely, if \(\val_{\mathcal{I}} \leq \lpcurve(c) - \varepsilon\), then by \Cref{lem:reduction_bounded_degree}, we have \(\val_{\mathcal{I}_{B,D}} \leq \lpcurve(c) - \varepsilon + \varepsilon_0\) with probability at least \(0.99\). From \Cref{def:lpcurve}, whenever \(\val_{\mathcal{I}_{B,D}} < \lpcurve(c)\), it holds that \(\lpval_{\mathcal{I}_{B,D}} < c\), and in that case, we get from \eqref{eq:alg_correctness_3} that \[\Pr{\textsc{ApproxLP}(\mathcal{I}, B, D, Q, r) < c + 2\varepsilon_0\,\Big|\,\calI_{B,D}}\geq 0.99.\]
Since \(c + 2\varepsilon_0 \leq c'\), we conclude that \(\Prs{\mathcal{A}}{\mathcal{A}(\mathcal{I}) = 0} \geq 0.99^2 \geq 2/3\).

Finally, by the analysis in part 3 of this proof, the algorithm $\calA$ requires $O_{\varepsilon}(1)$ passes and $O_{\varepsilon}(\log n)$ bits of memory. 
\end{proof}

\section{Streaming Lower Bound from Communication Complexity}\label{sec:communication_game}
As shown in~\Cref{thm:streaming_solve_LP}, efficient constant-pass streaming algorithms can approximately match the performance of the basic linear programming relaxation for approximating CSPs. In this section, we establish the complementary hardness result, \Cref{thm:hardness}, which says that multi-pass streaming algorithms cannot outperform the linear programming relaxation by a constant margin.


In \Cref{thm:hardness}, we are given a fixed integrality gap instance $\calI\in\cspF$, with $\val_{\calI}=s$ and $\lpval_{\calI}=c$. To establish the hardness of the gap problem $\McspF{c - \varepsilon}{s + \varepsilon}$, we construct two probability distributions over $\cspF$ instances, referred to as the YES and NO distributions. Instances drawn from the YES distribution typically have value at least $c - \varepsilon$, while those from the NO distribution typically have value at most $s + \varepsilon$. The goal is to show that any $p$-pass streaming algorithm with limited memory cannot reliably distinguish between instances sampled from these two distributions.

The construction of the YES and NO distributions largely follows the ideas of~\cite{Yos11}. Starting from the fixed integrality gap instance $\calI$, we perform a blow-up: each variable is replaced with $n$ copies, and each constraint is replaced with $O(n)$ copies. We then define two different but streaming-indistinguishable methods for selecting which variable copies appear in each constraint copy. These two selection procedures yield the YES and NO distributions, respectively.

The key distinction between our hardness result and that of~\cite{Yos11} lies in the model of computation: we aim to establish lower bounds against multi-pass streaming algorithms, which are potentially more powerful than the query-based property testing algorithms considered in~\cite{Yos11}. To more cleanly capture the broader range of behaviors that a streaming algorithm might exhibit when processing inputs from the YES and NO distributions, we introduce an abstract communication game, called $\DIHP(G,n,\alpha,K)$, that models these possibilities. This communication game is in turn based on an abstract object $G$ that we call a \emph{distribution-labeled $k$-graph}.

This section is structured as follows. In \Cref{subsec:labeled_matchings,subsec:Markov_kernel,subsec:distribution_labeled_graph}, we introduce the abstract mathematical objects underlying the communication game $\DIHP(G,n,\alpha,K)$. The formal definition of the communication game is given in~\Cref{subsec:communication_game}. In \Cref{subsec:streaming_lower_bound}, we explain how the communication complexity of $\DIHP(G, n, \alpha, K)$ yields the desired lower bound against streaming algorithms. Proving a communication lower bound of $\DIHP(G,n,\alpha,K)$ will be the subject of \Cref{sec:communication_lower_bound,sec:global_rectangle,sec:Fourier_decay}.

\subsection{Labeled Matchings}\label{subsec:labeled_matchings}

Labeled matchings are combinatorial objects that have been widely used in establishing streaming lower bounds for approximating CSPs (e.g. \cite{KKS14,KK19,CGSV24,FMW25}). In \cite{FMW25} especially, the \emph{space} of labeled matchings plays a prominent role in proof of the lower bound, and a significant emphasis was placed on the exploration of Fourier analytic properties of this space. While \cite{FMW25} studies the space of labeled matchings on a \emph{complete graph}, which is tailored to the specific CSP of Max-Cut, our goal of analyzing general CSPs requires considering labeled matchings on a \emph{complete $k$-partite hypergraph}. The following preliminary definition captures the set-theoretic structure of complete $k$-partite hypergraphs.

\begin{definition}
For finite sets $U_{1},\dots,U_{k}$ of equal cardinality, we call the tuple $\calU=(U_{1},\dots,U_{k})$ a \emph{$k$-universe}. The \emph{cardinality} of $\calU$, denoted by $|\calU|$, is defined to be the common cardinality of the sets $U_{i}$. For convenience, we use the shorthand $\tcup \calU$ for the union $\bigcup_{i\in [k]}U_{i}$ and $\tprod \calU$ for the Cartesian product $\prod_{i=1}^{k}U_{i}$.
\end{definition}

Before introducing labeled matchings, we first introduce a convenient notation for the collection of unlabeled matchings.

\begin{definition}
For a $k$-universe $\calU$ and a nonnegative integer $m\leq |\calU|$, we let $\calM_{\calU,m}$ denote the collection of all matchings (without labels) in the complete $k$-partite hypergraph $(\tcup \calU,\tprod\calU)$ (the hypergraph with vertex set $\tcup \calU$ and edge set $\tprod\calU$) with $m$ edges. We also write $\calM_{\calU,\leq m}:=\bigcup_{d=0}^{m}\calM_{\calU,d}$.
\end{definition}

We now define the space of labeled matchings as follows.

\begin{definition}
For a $k$-universe $\calU$ and a nonnegative integer $m\leq |\calU|$, we define the following space of labeled matchings:
$$\Omega^{\calU,m}:=\left\{\bfy\in\Map{\tprod \calU}{\ZNk\cup\{\nil\}}:\supp(\bfy)\text{ is a matching with }m\text{ edges}\right\}.$$
Here, $\supp(\bfy)$ denotes the support of $\bfy$, i.e., the edges in $\prod \calU$ mapped to $\bZ_N^k$ (see \Cref{subsec:general_notations}). 
\end{definition}

Note that the labels on edges of the matchings are elements of $\bZ_N^k$. This differs from the usual $\bF_{2}$-labels considered for Max-Cut, and aligns with what has been used for general CSPs \cite{CGSV24}.  

\subsection{The Markov Kernel}\label{subsec:Markov_kernel}

In previous works on the Max-Cut problem \cite{KKS14,KK19,FMW25}, an important concept used in the construction of the YES distribution is random generation of labeled matchings that are \emph{compatible} with a given bipartition of a vertex set. However, as already evidenced by \cite{CGSV24}, it turns out that for studying general CSPs, the black-and-white notion of compatibility needs to be relaxed into a range of probabilities in $[0,1]$. The more general formalism is that of a \emph{Markov transition} from the space of bipartitions (or, for us, $N$-partitions) to the space of labeled matchings. 

The following notation will be helpful in defining the Markov transition, as well as in later parts of the paper.

\begin{notation}\label{notation:vector_subscript}
Suppose $\Lambda$ is a ground set and \(x \in \mathbb{Z}_N^{\Lambda}\) is a $\ZmodN$-vector indexed by $\Lambda$. If \(e = (v_1, \dots, v_k)\) is a tuple of elements with each \(v_i \in \Lambda\) for \(i \in [k]\), we denote by \(x_{|e}\) the vector \((x_{v_1}, \dots, x_{v_k}) \in \mathbb{Z}_N^k\). 
\end{notation}

We now define the matrix specifying the probability that we want to draw each labeled matching given an ``$N$-partition'' of the vertices. Such matrices are known as \emph{Markov kernels}.

\begin{definition}\label{def:Markov_kernel}
Fix a $k$-universe $\calU$, a positive integer $m\leq |\calU|$, and a one-wise independent distribution $\mu$ over $\ZNk$. We define a right stochastic matrix $\bfP^{\calU,m}_{\mu}:\ZmodN^{\bigcup \calU}\times \Omega^{\calU,m}\rightarrow[0,\infty)$ as follows.  For each $x\in \bZ_{N}^{\bigcup \calU}$ and $\bfy\in \Omega^{\calU,m}$, the entry $\bfP^{\calU,m}_{\mu}(x,\bfy)$ is the probability that the output of the following process equals $\bfy$:
\begin{enumerate}
\item sample a matching $M$ uniformly at random from $\calM_{\calU,m}$; 
\item let $\bfz\in \Omega^{\calU,m}$ have support $\supp(\bfz)=M$, and
\item for each edge $e\in M$, draw $w_{e}\in \ZNk$ independently from $\mu$ and set $\bfz(e)=x_{|e}-w_{e}$, where the subtraction is performed in the Abelian group $\ZNk$;
\item output $\bfz$.
\end{enumerate}
\end{definition}

As noted earlier, the seed vector $x\in \ZmodN^{\bigcup\calU}$ can be viewed as an $N$-partition of the vertices in $\tcup\calU$. A uniformly random matching $M$ is drawn, and the label $\bfy(e)$ on each edge $e\in M$ reveals certain information about $x_{|e}$, i.e., how the vertices of $e$ are partitioned. Due to the ``masking vector'' $w_{e}$ drawn from the one-wise independent distribution $\mu$, no information is revealed about $x_{v}$ for any single vertex $v$. However, information about correlations may be revealed --- for instance, if $u$ and $v$ are two vertices of a same edge $e\in M$ and $\mu$ is the uniform distribution on diagonal elements of $\ZNk$, the difference $x_{u}-x_{v}\pmod N$ might be completely recoverable from $\bfy(e)$. We will formalize this intuition using Fourier analysis later in the paper (see \Cref{subsec:SVD}).

\subsection{Distribution-Labeled $k$-Graphs}\label{subsec:distribution_labeled_graph}

As promised in the introductory text of \Cref{sec:communication_game}, the communication game $\DIHP(G,n,\alpha,K)$ (to be defined in \Cref{subsec:communication_game}) is based on an abstract structure $G$ called a distribution-labeled $k$-graph, which we now define as follows.

\begin{definition}\label{def:distribution_k_graph}
A \emph{distribution-labeled $k$-graph} $G$ consists of the following data: a vertex set $\calV$; a multi-set $\calE$ of hyperedges, each an ordered $k$-tuple of distinct vertices in $\calV$; a positive integer $N$; and a collection of probability distributions $(\mu_{\sfe})_{\sfe\in \calE}$, where each $\mu_{\sfe}$ is a one-wise independent probability distribution on the Abelian group $\ZNk$. 
\end{definition}

Having made clear the first parameter $G$ in $\DIHP(G,n,\alpha,K)$, we now turn to the second parameter $n$: this is the blow-up factor of the distribution-labeled $k$-graph $G$. Indeed, the communication game is not played over $G$ itself, but rather over the $n$-fold blow-up of $G$. The set-theoretic structure of the blow-up is captured by the following definition.

\begin{definition}\label{def:associated_combinatorial}
Given a distribution-labeled $k$-graph $G=(\calV,\calE,N,(\mu_{\sfe})_{\sfe\in \calE})$ and a positive integer $n$, we define the following associated combinatorial objects.
\begin{enumerate}
\item The set $\calV\times [n]$, i.e. the $n$-blow-up of the vertex set $\calV$, will be referred to as the ground set.
\item For each $\sfv\in \calV$, let $U_{\sfv}:=\{\sfv\}\times [n]$ be the subset of $\calV\times [n]$ consisting of the $n$ copies of $\sfv$.
\item We associate with each hyperedge $\sfe=(\sfv_{1},\dots,\sfv_{k})\in \calE$ the $k$-universe $\calU_{\sfe}:=(U_{\sfv_{1}},\dots,U_{\sfv_{k}})$.
\end{enumerate}
\end{definition}

\subsection{The Communication Game}\label{subsec:communication_game}

The following notation will be helpful in defining the communication game, as well as in later parts of the paper.

\begin{definition}
Fix a distribution-labeled \(k\)-graph \(G = (\calV, \calE, N, (\mu_{\sfe})_{\sfe \in \calE})\). The Abelian group \(\mathbb{Z}_N^{\calV \times [n]}\) will play a central role throughout \Cref{sec:communication_game,sec:global_rectangle}. For each edge \(\sfe \in \calE\), recall from \Cref{def:associated_combinatorial} that \(\bigcup \calU_{\sfe} \subseteq \calV \times [n]\). We denote by \(\proj_{\sfe}\) the canonical projection from \(\mathbb{Z}_N^{\calV \times [n]}\) onto \(\mathbb{Z}_N^{\bigcup \calU_{\sfe}}\).
\end{definition}

We are now ready to define the communication game $\DIHP(G,n,\alpha,K)$.

\begin{definition}\label{def:communication_game}
Given a distribution-labeled $k$-graph $G=(\calV,\calE,N,(\mu_{\sfe})_{\sfe\in \calE})$, parameters $n,K\in \bN$ and $\alpha\in (0,1)$, we define the communication game $\DIHP(G,n,\alpha,K)$ as follows:

\begin{enumerate}
\item There are $|\calE|\cdot K$ players, each indexed by a pair $(\sfe,j)$, where $\sfe\in \calE$ and $j\in [K]$.

\item Each player $(\sfe,j)$ receives as input a labeled matching in $\Omega^{\calU_{\sfe},\alpha n}$. 

\item \textbf{The no distribution:} define $\calD_{\no}$ to be the uniform distribution on the Cartesian product $\prod_{(\sfe,j)\in \calE\times [K]}\Omega^{\calU_{\sfe},\alpha n}$, i.e. each player gets a independent uniformly random input.

\item \textbf{The yes distribution:} define $\calD_{\yes}$ to be the joint distribution of $(\bfy^{(\sfe,j)})_{(\sfe,j)\in \calE\times [K]}$ obtained by the following procedure:
\begin{itemize}
\item Sample a uniformly random vector $\widetilde{x}\in \ZmodN^{\calV\times [n]}$. 
\item For each player $(\sfe,j)\in \calE\times [K]$, independently draw a labeled matching $\bfy^{(\sfe,j)}\in \Omega^{\calU_{\sfe},\alpha n}$ according to the distribution given by the probability mass function $\bfP^{\calU_{\sfe},\alpha n}_{\mu_{\sfe}}\big(\proj_{\sfe}(\widetilde{x}),\cdot\big)$.
\end{itemize}
\end{enumerate}
The goal of the players is to decide whether their inputs $(\bfy^{(\sfe,j)})_{(\sfe,j)\in \calE\times [K]}$ comes from $\calD_{\yes}$ or $\calD_{\no}$.
\end{definition}

\begin{remark}
Throughout this paper, whenever we refer to the communication game $\DIHP(G,n,\alpha,K)$, we treat $G,\alpha$ and $K$ as fixed parameters, and consider the asymptotic regime $n\rightarrow\infty$. 
\end{remark}

As is standard in distributional communication complexity, we measure the performance of a communication protocol by its ``advantage'', defined as follows.

\begin{definition}\label{def:advantage}
A deterministic communication protocol $\Pi$ for $\DIHP(G,n,\alpha ,K)$ computes a function $\Pi:\prod_{(\sfe,j)\in \calE\times [K]}\Omega^{\calU_{\sfe},\alpha n}\rightarrow\{0,1\}$. We define its \emph{advantage} in the communication game as 
\begin{align*}
    \mathrm{adv}(\Pi) : = \left| \Pru{\bfY\sim \mathcal{D}_\yes}{\Pi(\bfY)  =1 } - \Pru{\bfY\sim \mathcal{D}_\no}{\Pi(\bfY) =1 }\right|,
\end{align*}
where $\bfY$ denotes a joint input $\bfY=(\bfy^{(\sfe,j)})_{(\sfe,j)\in \calE\times [K]}$.
\end{definition}

Since a randomized communication protocol is a distribution over deterministic protocols, it may well be replaced by the deterministic protocol in its support with the highest advantage. It is therefore without loss of generality to only consider deterministic protocols for $\DIHP(G,n,\alpha,K)$.

The communication complexity of $\DIHP$ is then defined as follows. 

\begin{definition}
The \emph{communication cost} of a protocol~$\Pi$, denoted by~$|\Pi|$, is the total number of bits broadcasted by all players across all rounds during its execution.  
The \emph{communication complexity} of the game $\DIHP(G, n, \alpha, K)$, denoted by~$\CC(G, n, \alpha, K)$, is the minimum communication cost over all protocols~$\Pi$ that satisfy $\adv(\Pi) \geq 0.1$.
\end{definition}
We have the following communication lower bound for the $\DIHP(G,n,\alpha,K)$ problem, the proof of which will occupy \Cref{sec:communication_lower_bound,sec:global_rectangle,sec:Fourier_decay}.

\begin{theorem}\label{thm:dihp_lowerbound}
    Fix a distribution-labeled $k$-graph $G$, an integer $K>0$ and a parameter $\alpha\in \big(0,10^{-8}k^{-3}\big]$. There exists a constant $\gamma=\gamma(G,\alpha,K)>0$ such that $\CC(G,n,\alpha,K)\geq \gamma n^{1/3}$.
\end{theorem}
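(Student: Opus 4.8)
The plan is to argue by contradiction. Suppose $\Pi$ is a deterministic protocol for $\DIHP(G,n,\alpha,K)$ with $\adv(\Pi)\geq 0.1$ and communication cost $|\Pi| = C$; I will show that $C\geq \gamma n^{1/3}$ for a suitably small constant $\gamma = \gamma(G,\alpha,K)$. The starting point is a regularity decomposition of $\Pi$, to be proved in \Cref{sec:communication_lower_bound} (see \Cref{lem:regularity_decomposition}): since $C = o(\sqrt{n})$, one converts $\Pi$ into a protocol whose leaves correspond to \emph{global} rectangle-restriction pairs $(R,\bdzeta)$, where $R = \prod_{(\sfe,j)\in\calE\times[K]}A^{(\sfe,j)}$ is a combinatorial rectangle and $\bdzeta = (\bfz^{(\sfe,j)})_{(\sfe,j)}$ fixes labels on at most $O(C)$ edges per player, with each pair $(A^{(\sfe,j)},\bfz^{(\sfe,j)})$ global. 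Writing $\adv(\Pi)\leq \sum_{\text{leaves}}\bigl|\calD_{\yes}(R) - \calD_{\no}(R)\bigr|$ up to an $o(1)$ error absorbed by the decomposition, and using $\sum_{\text{leaves}}\calD_{\no}(R)\leq 1$, the task reduces to two estimates: (i) a \emph{structured} bound, showing that exposing the $O(C) = o(\sqrt n)$ coordinates recorded in $\bdzeta$ does not help distinguish $\calD_{\yes}$ from $\calD_{\no}$; and (ii) a \emph{discrepancy} bound $\bigl|\calD_{\yes}(R) - \calD_{\no}(R)\bigr|\leq 0.01\,\calD_{\no}(R)$ valid for every global rectangle-restriction pair.

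For the structured part, the key point is that after conditioning both $\calD_{\yes}$ and $\calD_{\no}$ on the labels recorded in $\bdzeta$, the two conditional distributions remain statistically close. Intuitively, in $\calD_{\yes}$ the label of an edge $e$ of player $(\sfe,j)$ equals $\widetilde{x}_{|e} - w_e$ with $w_e\sim \mu_{\sfe}$ drawn independently, and since each $\mu_{\sfe}$ is one-wise independent, the only information about the hidden seed $\widetilde{x}$ that a small set of exposed edges can carry is higher-order correlation among coordinates lying in the same cloud $U_{\sfv}$; a counting argument in the regime $\alpha \leq 10^{-8}k^{-3}$ and $C = o(\sqrt n)$ shows this information is far too sparse to create a detectable bias. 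I would prove this following \cite{FMW25}, in the slightly stronger form tolerating up to $o(\sqrt n)$ exposed coordinates (more than the $o(n^{1/3})$ actually needed for \Cref{thm:main}).

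The discrepancy bound (ii) is the heart of the argument, to be carried out in \Cref{sec:global_rectangle,sec:Fourier_decay}. I would work in the Hilbert space $L^2(\Omega^{\calU,\alpha n})$ and expand $\calD_{\yes}(R) - \calD_{\no}(R)$ using the singular value decomposition of the Markov kernels $\bfP^{\calU_{\sfe},\alpha n}_{\mu_{\sfe}}$: this writes the discrepancy as a sum over Fourier levels $d\geq 1$, the level-$d$ term pairing the squared singular values at level $d$ against the level-$d$ Fourier weights of the indicators $\mathbbm{1}_{A^{(\sfe,j)}}$. The singular values decay geometrically in $d$ by a Gaussian-type estimate rooted in the one-wise independence of $\mu_{\sfe}$ (in the spirit of \Cref{subsec:SVD}), while a level-$d$ inequality for indicators of \emph{global} subsets of $\Omega^{\calU,\alpha n}$ — derived from a global hypercontractive inequality on the matching space, analogous to but more general than the one in \cite{FMW25} and tailored to this setting — bounds the level-$d$ Fourier weight of $\mathbbm{1}_{A^{(\sfe,j)}}$ in terms of $O(C)$, $n$, and $\calD_{\no}(A^{(\sfe,j)})$. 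Multiplying the geometric decay against this polynomially-growing bound and summing over $d\geq 1$ yields a total level-$\geq 1$ contribution of order $O(C^3/n)\cdot\calD_{\no}(R)$, which is at most $0.01\,\calD_{\no}(R)$ provided $C^3\leq \gamma^3 n$ with $\gamma$ small.

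Combining the two parts, $\adv(\Pi)\leq o(1) + 0.01\sum_{\text{leaves}}\calD_{\no}(R)\leq o(1) + 0.01 < 0.1$ whenever $C \leq \gamma n^{1/3}$, contradicting $\adv(\Pi)\geq 0.1$; hence $\CC(G,n,\alpha,K)\geq \gamma n^{1/3}$. The main obstacle is the discrepancy bound: proving the \emph{sharp} level-$d$ inequality for global sets in $\Omega^{\calU,\alpha n}$ over the cyclic group $\bZ_N$ with $k$-partite matchings — rather than $\bF_2$ and matchings of the complete graph as in \cite{FMW25}, where the structured sets are linear subspaces — and, as flagged in \Cref{sec:techniques}, doing so in a way that meshes with the structured analysis, since the two cannot be fully decoupled and the decomposition must be interleaved with the Fourier-analytic estimates rather than applied as a black box.
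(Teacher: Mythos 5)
Your proposal matches the paper's approach: the proof is indeed a combination of a decomposition lemma (\Cref{lem:regularity_decomposition}, which tolerates $o(\sqrt n)$ communication) converting a short protocol into structured global rectangles, and a discrepancy lemma (\Cref{lem:discrepancy_bound}) proved via the singular value decomposition of the Markov kernels $\bdP^{\calU_{\sfe},\alpha n}_{\mu_\sfe}$, singular-value decay driven by one-wise independence (\Cref{lem:CGSV_noise_rate}), and a projected level-$d$ inequality from global hypercontractivity on $\Omega^{\calU,\alpha n}$ (\Cref{thm:level-d-inequality}); and the $\Theta(C^3/n)$ scaling of the resulting bound is exactly what pins the lower bound at $n^{1/3}$ rather than $\sqrt{n}$.

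One imprecision worth flagging in your framing: the discrepancy bound is not stated for \emph{every} global rectangle-restriction pair, but only for $W$-good pairs (\Cref{def:good_rec}), where goodness also demands that the restrictions $\bfz^{(\sfe,j)}$ have pairwise-disjoint, cycle-free supports of total size at most $W$, and that each $A^{(\sfe,j)}$ has density at least $2^{-W}$. Without the cycle-free/disjoint conditions the structured part genuinely can create bias (one can engineer $\calD_{\yes}(R)=0$ with $\calD_{\no}(R)$ only polynomially small), so your item (i) --- that exposing $\bdzeta$'s labels does not help --- is precisely the content of \Cref{lem:structure_part}, and holds only under these goodness conditions; the counting argument showing that the decomposition rarely produces non-good rectangles (\Cref{claim:cycle}, \Cref{lem:low_cycle_probability}) lives inside the proof of the decomposition lemma. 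Moreover, within the discrepancy lemma the structured-only product $\prod_{(\sfe,j)} g_{\bfz^{(\sfe,j)}}\circ\proj_\sfe$ is not handled as a separate statistical-closeness estimate but enters as a density $h_0$ in the hybrid argument of \Cref{lem:product_mixing}, which is how the product over the $|\calE|K$ players is controlled using the single-function Fourier decay. Since you acknowledge the interleaving at the end, this is an organizational remark rather than a gap.
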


\subsection{Streaming Lower Bound}\label{subsec:streaming_lower_bound}

It is now time to demonstrate how the communication complexity of $\DIHP(G,n,\alpha,K)$ is related to lower bounds for streaming approximation of CSPs and give a proof of \Cref{thm:hardness}. We present the general reduction lemma from $\DIHP(G,n,\alpha,K)$ to $\McspF{c-\varepsilon}{c+\varepsilon}$:

\begin{lemma}\label{lem:communication_reduction}
Fix a nonempty $\cspF$ instance $\calI = (\calV, (C_1, \dots, C_m))$, and let $s := \val_{\calI}$ and $c := \lpval_{\calI}$.  
Then there exists a distribution-labeled $k$-graph~$G=(\calV,\calE,N,(\mu_{\sfe})_{\sfe\in \calE})$ such that for any fixed error parameter $\varepsilon \in (0,1)$ and constants
\begin{equation}\label{eq:reduction_alpha_K}
\alpha \leq (100k)^{-1}\varepsilon
\quad \text{and} \quad 
K \geq 100 \alpha^{-1} \varepsilon^{-2} N^{2k} \cdot |\calV| \log |\Sigma|,
\end{equation}
the following holds for sufficiently large $n$:
\begin{enumerate}[label=(\arabic*)]
    \item If $c < 1$, then any $p$-pass algorithm for $\McspF{c - \varepsilon}{s + \varepsilon}$ requires at least $(pmK)^{-1} \cdot \CC(G, n, \alpha, K)$ bits of memory on input instances with $|\calV| \cdot n$ variables and at most $mK\cdot n$ constraints.
    
    \item If $c = 1$, then any $p$-pass algorithm for $\McspF{1}{s + \varepsilon}$ requires at least $(pmK)^{-1} \cdot \CC(G, n, \alpha, K)$ bits of memory on input instances with $|\calV| \cdot n$ variables and at most $mK\cdot n$ constraints.
\end{enumerate}
\end{lemma}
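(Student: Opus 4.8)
The plan is to read the distribution-labeled $k$-graph $G$ off an optimal solution of $\lp_\calI$, then reduce a streaming algorithm for the gap problem to a protocol for $\DIHP(G,n,\alpha,K)$, and finally control the values of the yes/no instances. First I would fix a rational optimal solution $\big((x^*_{\sfv,\sigma})_{\sfv,\sigma},(z^*_{i,b})_{i,b}\big)$ of $\lp_\calI$ (rational since the LP has integer data) and let $N\ge 2$ be a common denominator of all of its entries. For each $\sfv\in\calV$ I partition $\bZ_N$ into consecutive blocks $\{I_{\sfv,\sigma}\}_{\sigma\in\Sigma}$ with $|I_{\sfv,\sigma}|=x^*_{\sfv,\sigma}N$ and let $q_\sfv\colon\bZ_N\to\Sigma$ send $I_{\sfv,\sigma}$ to $\sigma$, so that the push-forward of the uniform measure on $\bZ_N$ under $q_\sfv$ is $\nu_\sfv:=(x^*_{\sfv,\sigma})_\sigma$. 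For each $i\in[m]$ I define $\mu_{\sfe_i}$ on $\bZ_N^k$ by spreading mass $z^*_{i,b}$ uniformly over $q_{\sfv_{i,1}}^{-1}(b_1)\times\cdots\times q_{\sfv_{i,k}}^{-1}(b_k)$ for each $b\in\Sigma^k$. The LP consistency constraints $\sum_{b\colon b_j=\sigma}z^*_{i,b}=x^*_{\sfv_{i,j},\sigma}$ are exactly what is needed for $\mu_{\sfe_i}$ to be one-wise independent, and by construction its push-forward under $(q_{\sfv_{i,1}},\dots,q_{\sfv_{i,k}})$ is $\nu_{C_i}:=(z^*_{i,b})_b$. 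Taking $\calE:=(\sfe_1,\dots,\sfe_m)$ as a multiset gives a legal distribution-labeled $k$-graph $G=(\calV,\calE,N,(\mu_{\sfe_i})_{i\in[m]})$, all of whose parameters ($|\calV|$, $|\calE|=m$, $N$) depend only on $\calI$.

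Next I would set up the reduction. Given a $p$-pass, $S$-space streaming algorithm $\calA$ for $\McspF{c-\varepsilon}{s+\varepsilon}$ on the variable set $\calV\times[n]$, the $mK$ players of $\DIHP(G,n,\alpha,K)$ produce a protocol $\Pi$ as follows: player $(\sfe_i,j)$ scans its input matching $\bfy^{(\sfe_i,j)}$, turns each $\vec 0$-labeled hyperedge $e=(u_1,\dots,u_k)$ into the constraint $(e,f_i)$ on $\calV\times[n]$, and these lists, concatenated in a fixed player order, form a constraint stream; the players simulate $\calA$ on that stream by relaying the current $S$-bit memory state from one player to the next, returning it to the first player at the end of each of the $p$ passes, and the last player outputs $\calA$'s verdict. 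The total communication is at most $p\cdot mK\cdot S$ bits, and fixing $\calA$'s coins to their best value makes $\Pi$ deterministic without lowering $\mathrm{adv}(\Pi)$, so $\CC(G,n,\alpha,K)\le p\,mK\,S$ provided $\mathrm{adv}(\Pi)\ge 0.1$.

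The substance of the proof is the value analysis of the two distributions (valid once $n$ is large and $\alpha,K$ are as in~\eqref{eq:reduction_alpha_K}). For $\calD_{\yes}$ I analyze the canonical assignment $\tau:=q\circ\widetilde x$ obtained by applying the $q_\sfv$'s coordinatewise to the hidden vector $\widetilde x\in\bZ_N^{\calV\times[n]}$. Since $\widetilde x$ is uniform, each hyperedge receives the label $\vec 0$ with probability exactly $N^{-k}$, and \emph{conditioned} on that event its restriction $\widetilde x_{|e}$ has law $\mu_{\sfe_i}$; hence $\tau$ satisfies the induced constraint with probability exactly $\nu_{C_i}(f_i^{-1}(1))=\sum_b f_i(b)z^*_{i,b}$. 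Averaging over all included hyperedges and applying~\Cref{prop:concentration} to the (weakly dependent) indicators shows that the induced instance has value $\ge\lpval_\calI-\varepsilon=c-\varepsilon$ with probability $1-o(1)$; and when $c=1$ one has $\supp(\mu_{\sfe_i})\subseteq q^{-1}(f_i^{-1}(1))$, so \emph{every} included hyperedge is automatically satisfied by $\tau$ and the value is exactly $1$. For $\calD_{\no}$ the labels are independent and uniform, so for a \emph{fixed} assignment $\tau'$ the expected value of the induced instance equals $\tfrac1m\sum_i\sum_b f_i(b)\prod_\ell\widehat p_{\sfv_{i,\ell},b_\ell}$, where $\widehat p_{\sfv,\sigma}$ is the fraction of the $n$ copies of $\sfv$ that $\tau'$ sets to $\sigma$ (using that a uniformly random edge of a uniformly random partial matching is uniform over the complete $k$-partite edge set); this is multilinear in the frequency vectors, so its maximum over the product of simplices is attained at an integral assignment and is therefore $\le\val_\calI=s$. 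Combining~\Cref{prop:concentration} with a union bound over the $|\Sigma|^{|\calV|n}$ assignments $\tau'$ — which converges precisely because~\eqref{eq:reduction_alpha_K} makes the per-assignment failure probability $o(|\Sigma|^{-|\calV|n})$ — then yields value $\le s+\varepsilon$ with probability $1-o(1)$, exactly as in the proof of~\Cref{lem:reduction_bounded_degree}.

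Putting the pieces together: if $\calA$ is correct with probability $\ge2/3$, then it accepts a $\calD_{\yes}$-sample with probability $\ge2/3-o(1)$ and a $\calD_{\no}$-sample with probability $\le1/3+o(1)$, so $\mathrm{adv}(\Pi)\ge1/3-o(1)>0.1$ once $n$ is large, whence $p\,mK\,S\ge|\Pi|\ge\CC(G,n,\alpha,K)$ and $S\ge(p\,mK)^{-1}\CC(G,n,\alpha,K)$; the $c=1$ case is identical except that the yes instances are perfectly satisfiable, so the same bound applies to $\McspF{1}{s+\varepsilon}$. I expect the no-case value bound to be the main obstacle: one must carefully handle the matching/disjointness structure of the players' inputs when computing and concentrating the value under a fixed assignment, and one must couple $\alpha$, $K$, and $N$ to $\varepsilon$ tightly enough that the union bound over all $|\Sigma|^{|\calV|n}$ assignments still leaves a $1-o(1)$ success probability as $n\to\infty$ — which is where the quantitative choices in~\eqref{eq:reduction_alpha_K} come from.
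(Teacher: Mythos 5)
Your proposal is correct and follows essentially the same route as the paper's proof: the same construction of $G$ from a rational optimal LP solution via the interval maps $q_{\sfv}$ and the distributions $\mu_{\sfe_i}$, the same reduction turning $\vec 0$-labeled hyperedges into constraints and simulating the streaming algorithm with $p\,mK\,S$ bits of communication, and the same completeness/soundness analysis (your multilinearity argument for the no case is just a rephrasing of the paper's random-rounding assignment $\tau$, and the matching-induced dependence you flag is exactly what the paper absorbs via the martingale concentration bound with a $(1-\alpha)^{-k}$ slack). No substantive differences.
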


We observe that \Cref{thm:hardness} (restated below) follows immediately from this reduction lemma:

\thmhardness*

\begin{proof}[Proof of \Cref{thm:hardness} assuming \Cref{lem:communication_reduction}]
Suppose $\calI=(\calV,(C_{1},\dots,C_{m}))$. We take constants $\alpha \leq  \min\{10^{-8}k^{-3}, (100k)^{-1}\varepsilon\}$, and $K\geq 100\alpha^{-1}$. The conclusions then follow by first applying \Cref{lem:communication_reduction} and then applying \Cref{thm:dihp_lowerbound} (note that $m,|\calV|,K$ are all constants).
\end{proof}

Finally, we arrive at the main task of this section, which is to prove~\Cref{lem:communication_reduction}. The following observation will be useful for the proof.

\begin{proposition}\label{prop:rational_solution}
    For every $\mathrm{CSP}(\calF)$ instance $\calI$, there exists an optimal solution to $\lp_{\calI}$ (achieving the optimal value $\lpval_{\calI}$) where all variables take rational values.  
\end{proposition}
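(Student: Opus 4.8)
The plan is to invoke two standard facts about linear programming: (i) a linear objective that is bounded on a nonempty polytope attains its optimum at a vertex of that polytope, and (ii) every vertex of a polytope defined by a system of linear (in)equalities with rational coefficients has rational coordinates. Since $\lp_{\calI}$ has entirely rational data --- the objective coefficients are the numbers $f_i(b)/m \in \{0,1/m\}$, and every coefficient and right-hand side in the constraint system is $0$ or $1$ --- applying these two facts to $\lp_{\calI}$ immediately yields an optimal solution with all variables rational.

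To make fact (i) applicable I first need to check that the feasible region $P$ of $\lp_{\calI}$ is a nonempty polytope, i.e.\ a bounded nonempty polyhedron. Nonemptiness is Observation~\ref{obs:BasicLP}. Boundedness follows from the equality constraints: the constraints $\sum_{\sigma \in \Sigma} x_{\sfv,\sigma} = 1$ together with $x_{\sfv,\sigma} \geq 0$ force $x_{\sfv,\sigma} \in [0,1]$ for all $\sfv \in \calV$ and $\sigma \in \Sigma$; and for each $i \in [m]$, summing the marginal-consistency constraints over $\sigma \in \Sigma$ for a fixed coordinate $j$ gives $\sum_{b \in \Sigma^k} z_{i,b} = \sum_{\sigma \in \Sigma} x_{\sfv_{i,j},\sigma} = 1$, so together with $z_{i,b} \geq 0$ we get $z_{i,b} \in [0,1]$. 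Hence $P$ is contained in the unit cube and is therefore a polytope.

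Now the linear objective is bounded on $P$ (indeed it takes values in $[0,1]$ by Observation~\ref{obs:BasicLP}), so it attains its maximum value $\lpval_{\calI}$ at a vertex $v^*$ of $P$. By definition of a vertex (basic feasible solution), the set of constraints of $\lp_{\calI}$ that are tight at $v^*$ contains a subsystem $A' v = b'$ whose coefficient matrix $A'$ is square and nonsingular, with $A'$ and $b'$ having integer entries. Thus $v^*$ is the unique solution of this rational nonsingular linear system, and by Cramer's rule each coordinate of $v^*$ is rational. This $v^*$ is the required optimal rational solution.

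This is a textbook argument and presents essentially no obstacle; the only step worth spelling out --- which we did above --- is the verification that the feasible region is bounded, which guarantees that a maximizing vertex exists.
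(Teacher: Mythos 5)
Your proof is correct and follows the same route as the paper, which simply cites the folklore fact that a rational, feasible, bounded linear program has a rational optimal solution (referring to a standard text); you have merely spelled out that folklore argument in full, including the boundedness check.
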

\begin{proof}
    Since all coefficients of the $\lp_\calI$ are rational and the feasible region is nonempty, by a folklore\footnote{See Section 3.7 of \cite{laurent2012semidefinite} for a reference. } result of linear programming, there exists at least one rational-valued optimal solution to $\lp_\calI$. 
\end{proof}

The proof of \Cref{lem:communication_reduction} is rather lengthy and consists of 5 steps.
\begin{proof}[Proof of \Cref{lem:communication_reduction}] 
    The organization of the proof is as follows. We first construct a distribution-labeled $k$-graph $G$ from the gap instance $\calI$. Then we present a general scheme of mapping a joint input $\bfY$ of $\DIHP(G,n,\alpha,K)$ to a $\mathrm{CSP}(\calF)$ instance --- this also translates $\calD_{\yes}$ and $\calD_{\no}$ to distributions over $\mathrm{CSP}(\calF)$. Finally, we prove soundness and completeness of the reduction.
    \paragraph{Step 1: construction of $G$.}
    We construct the distribution-labeled $k$-graph $G$ by specifying its four components as follows:
    
    \begin{enumerate}[label=(\arabic*)]
        \item The vertex set of $G$ is simply the variable set $\calV$.
        
        \item For each constraint $C_i = ((\sfv_{i,1}, \dots, \sfv_{i,k}), f_i)$, define the hyperedge $\sfe_i = (\sfv_{i,1}, \dots, \sfv_{i,k})$. Let the multi-set $\calE = \{\sfe_1, \dots, \sfe_m\}$ be the edge set of $G$.

\item By~\Cref{prop:rational_solution}, there exist a rational solution $\left((x^*_{\sfv,\sigma})_{\sfv \in \calV,\, \sigma \in \Sigma}, (z^*_{i,b})_{i \in [m],\, b \in \Sigma^k}\right)$ that achieves the optimal LP value $\lpval_{\calI} = c$.  
Let $N$ be the least common denominator of all $x^{*}_{\sfv,\sigma}$ and $z^*_{i,b}$. We then scale the values by a factor of $N$:
\[
    x'_{\sfv,\sigma} := N x^*_{\sfv,\sigma}, \quad z'_{i,b} := N z^*_{i,b}.
\]
These are integers by construction. Also define:
\[
    p_i^* := \sum_{b \in \Sigma^k} f_i(b) z^*_{i,b},
\]
which represents the contribution of constraint $C_i$ to the objective under the solution $(x^*, z^*)$.

\item Fix a total order $\prec$ on $\Sigma$. For each vertex $\sfv \in \calV$, define a map $q_{\sfv} : \bZ_N \to \Sigma$ such that for each $i \in \{0,1,\dots,N-1\}$,
\[
    q_{\sfv}(i) = \sigma \quad \text{if} \quad 
    \sum_{\sigma' \prec \sigma} x'_{\sfv,\sigma'} \le i < \sum_{\sigma' \preceq \sigma} x'_{\sfv,\sigma'}.
\]
This is well-defined since the total sum of $x'_{\sfv,\sigma}$ over $\sigma \in \Sigma$ is $N$. For each $i \in [m]$ now we obtain distribution $\mu_{\sfe_i}$ over $\bZ_N^k$ from the following process:
\begin{itemize}
    \item Sample $b = (b_1, \dots, b_k) \in \Sigma^k$ with probability $z^*_{i,b}$;
    \item Then uniformly sample $w \in \bZ_N^k$ from the Cartesian product
    \(q_{\sfv_1}^{-1}(b_1) \times \cdots \times q_{\sfv_k}^{-1}(b_k).
    \)
\end{itemize}
\end{enumerate}
    Recall from \Cref{def:BasicLP} that the second set of constraints in \textsc{BasicLP} ensures \[\sum_{b\in \Sigma^{k}}\ind{b_{j}=\sigma}\cdot z^*_{i,b}=x^*_{\sfv,\sigma}\] if $\sfv$ is the $j$-th variable of $C_{i}$. Therefore, a sample $w$ from $\mu_{\sfe_{i}}$ has probability exactly $x^*_{\sfv,\sigma}$ of falling in the set $\{w\in\ZNk:w_{j}\in q_{\sfv}^{-1}(\sigma)\}$. Since each pre-image $q^{-1}_{\sfv}(\sigma)$ has cardinality exactly $x'_{\sfv,\sigma}=Nx^*_{\sfv,\sigma}$, it follows that $\mu_{\sfe_{i}}$ is one-wise independent. 
    
    The distribution-labeled $k$-graph $G$ defined above, together with fixed constants satisfying \eqref{eq:reduction_alpha_K}, specifies a communication game $\DIHP(G,n,\alpha ,K)$. The following three steps together give a reduction from $\DIHP(G,n,\alpha ,K)$ to $\mathsf{MaxCSP}(\calF)[c-\varepsilon,s+\varepsilon]$ (or $\mathsf{MaxCSP}(\calF)[c,s+\varepsilon]$ when $c=1$). 
    
    \paragraph{Step 2: the reduction map.} Recall that in the communication game $\DIHP(G,n,\alpha ,K)$, each player $(\sfe,j) \in \calE \times [K]$ has a labeled matching $\bfy^{(\sfe,j)}\in \Omega^{\calU_{\sfe},\alpha n}$ in hand. As promised in the beginning of the proof, in this step we show how to map a joint input $\bfY=(\bfy^{(\sfe,j)})_{(\sfe,j)\in\calE\times[K]}$ in the communication game to a $\cspF$ instance $\calI_{\bfY}$. The construction of $\calI_{\bfY}$ is as follows:

    \begin{enumerate}[label=(\arabic*)]
    \item The variable set of $\calI_{\bfY}$ is $\calV\times [n]$, which is also the ground set in the game $\DIHP(G,n,\alpha, K)$.

    \item Recall that the edge set of $G$ is $\calE=\{\sfe_{1},\dots,\sfe_{m}\}$, where each $\sfe_{i}$ corresponds to a constraint $(\sfe_{i},f_{i})$ in the starting instance $\calI$. For each $i\in [m]$ and $j\in [K]$, the player $(\sfe_{i},j)$ gets a labeled matching $\bfy^{(\sfe_{i},j)}\in \Omega^{\calU_{\sfe_{i}},\alpha n}$. Let $M_{i,j}$ be the sub-matching of $\supp(\bfy^{(\sfe_{i},j)})$ consisting of all edges $e\in \supp(\bfy^{(\sfe_{i},j)})$ such that $\bfy^{(\sfe_{i},j)}(e)=\mathbf{0}$, where $\mathbf{0}$ is the identity element of the Abelian group $\ZNk$. We let \(\calC^{(i,j)}\) be the collection of constraints $(e,f_{i})$ where $e$ ranges in the matching $M_{i,j}$. 

    \item Finally, the constraint sequence of $\calI_{\bfY}$, denoted by $\calC_{\bfY}$, is defined to be the concatenation of all constraint sequences $\calC^{(i,j)}$ for $i\in [m]$ and $j\in [K]$. 

    \item Note that as $\calI_{\bfY}$ is meant to be fed to a hypothetical streaming algorithm, we also need to specify the order in which the constraints in $\calC_{\bfY}$ appear in the stream. This is straightforward: we fix an arbitrary total order on the index set $[m]\times [K]$, and concatenate the constraint sequences $\calC^{(i,j)}$ with respect to that order. Within each segment $\calC^{(i,j)}$, the individual constraints can be ordered arbitrarily.
    \end{enumerate}
    
    This completes the definition of the reduction map. 
    
    \paragraph{Step 3: reduction justification.} It is easy to see that a multi-pass streaming algorithm taking input $\calI_{\bfY}$ can be translated back to a communication protocol for $\DIHP(G,n,\alpha,K)$: in any pass whenever the streaming algorithm finishes processing a segment $\calC^{(i,j)}$, the player $(\sfe_{i},j)$ in the communication game correspondingly broadcasts the memory state. In this way, any $p$-pass streaming algorithm $\calA$ that achieves
    \begin{equation}\label{eq:streaming_achieves}
    \left|\Pru{\bfY \sim \calD_{\yes}}{\calA(\calI_{\bfY})=1}-\Pru{\bfY \sim \calD_{\no}}{\calA(\calI_{\bfY})=1}\right| \geq 0.1
    \end{equation}
    using $S$ bits of memory implies a communication protocol $\Pi$ for $\DIHP(G,n,\alpha,K)$ with $\mathrm{adv}(\Pi)\geq 0.1$ using $p\cdot mK\cdot S$ total bits of communication. We thus conclude that any $p$-pass streaming algorithm that achieves \eqref{eq:streaming_achieves} must use at least $(pmK)^{-1}\cdot \CC(G,n,\alpha,K)$ bits of memory.

    The next step is to show the completeness and soundness of the reduction: it remains to prove
    \begin{align}
    \Pru{\bfY\sim\calD_{\yes}}{\val_{\calI_{\bfY}}\geq c-\varepsilon}&\geq 1-o_{n}(1), \label{eq:reduction_completeness}\\
    \Pru{\bfY\sim\calD_{\no}}{\val_{\calI_{\bfY}}\leq s+\varepsilon}&\geq 1-o_{n}(1), \label{eq:reduction_soundness}
    \end{align}
    and if $c=1$ we will show 
    \begin{equation}\label{eq:reduction_perfect_completeness}
    \Pru{\bfY\sim\calD_{\yes}}{\val_{\calI_{\bfY}}=1}=1.
    \end{equation}
    
    For the $c<1$ case, the combination of \eqref{eq:reduction_completeness} and \eqref{eq:reduction_soundness} imply that any $p$-pass streaming algorithm for $\McspF{c-\varepsilon}{s+\varepsilon}$ (with error probability at most $1/3$, as per \Cref{def:McspF}) must satisfy \eqref{eq:streaming_achieves}, and thus have memory size at least $(pmK)^{-1}\cdot \CC(G,n,\alpha,K)$ on input instances with $|\calV| \cdot n$ variables (note that $\calI_{\bfY}$ always have $|\calV|\cdot n$ variables).
    
    For the $c=1$ case, the combination of \eqref{eq:reduction_perfect_completeness} and \eqref{eq:reduction_soundness} imply that any $p$-pass streaming algorithm for $\McspF{1}{s+\varepsilon}$ must satisfy \eqref{eq:streaming_achieves}, and thus have memory size at least $(pmK)^{-1}\cdot \CC(G,n,\alpha,K)$  on input instances with $|\calV| \cdot n$ variables. 
    
    In Step 4 below we prove \eqref{eq:reduction_completeness} and \eqref{eq:reduction_perfect_completeness}, while \eqref{eq:reduction_soundness} is proven in Step 5.

    \paragraph{Step 4: completeness.} Recall from \Cref{def:communication_game} that in the process of drawing a sample $\mathbf{Y} \sim \mathcal{D}_{\yes}$, the first step is to sample a random vector $\widetilde{x} \in \ZmodN^{\mathcal{V} \times [n]}$. For each such vector $\widetilde{x}$, we define an assignment
\[
\widetilde{\tau_{x}} : \mathcal{V} \times [n] \rightarrow \Sigma
\quad \text{by letting} \quad
\widetilde{\tau_{x}}((\sfv, \ell)) = q_{\sfv}(\widetilde{x}_{(\sfv,\ell)})
\]
for each variable $(\sfv, \ell) \in \mathcal{V} \times [n]$.

Fix a player $(\sfe_i=(\sfv_{i,1},\dots,\sfv_{i,k}), j) \in \mathcal{E} \times [K]$ in the communication game $\DIHP(G, n, \alpha, K)$ as defined in Step 1. Recall from \Cref{def:communication_game} that the YES-case input $\mathbf{y}^{(\sfe_i, j)}$ given to this player is determined by: 
\begin{enumerate}[label=(\arabic*)]
\item the sampled random vector $\widetilde{x}$,
\item a random matching $\supp(\mathbf{y}^{(\sfe_i, j)})$, and
\item labels on the edges in $\supp(\mathbf{y}^{(\sfe_i, j)})$ determined by drawing $w_e \sim \mu_{\sfe_i}$ independently for each $e \in \supp(\mathbf{y}^{(\sfe_i, j)})$.
\end{enumerate}
According to the reduction map in Step 2, a constraint $(e, f_i)$ is placed if and only if
\[
\mathbf{y}^{(\sfe_i, j)}(e) \defeq \widetilde{x}_{|e} - w_e = \mathbf{0}.
\]
Furthermore, such a constraint is satisfied by the assignment $\widetilde{\tau_{x}}$ if and only if
\[
\widetilde{x}_{|e} \in \bigcup_{b \in \Sigma^k,\, f_i(b) = 1}
q_{\sfv_{i,1}}^{-1}(b_1) \times \cdots \times q_{\sfv_{i,k}}^{-1}(b_k).
\]
Therefore, conditioned on $\supp(\bfy^{(\sfe_{i},j)})$, for each $e\in \supp(\bfy^{(\sfe_{i},j)})$ we have
\begin{equation}\label{eq:prob_constraint_placed}
\Pru{\widetilde{x}\in \ZmodN^{
\calV\times [n]
},\,w_{e}\sim\mu_{\sfe_{i}}}{(e,f_{i})\text{ is placed in }\calI_{\bfY}\,\middle|\, \supp(\bfy^{(\sfe_{i},j)})}=\Pru{\widetilde{x}_{|e}\in\ZNk,\,w_{e}\sim\mu_{\sfe_{i}}}{\widetilde{x}_{e}=w_{e}}=\frac{1}{N^{k}}
\end{equation}
and
\begin{align}
&\quad\Pru{\widetilde{x}\in \ZmodN^{
\calV\times [m]
},\,w_{e}\sim\mu_{\sfe_{i}}}{(e,f_{i})\text{ is placed in }\calI_{\bfY}\text{ and satisfied by }\widetilde{\tau_{x}}\,\middle|\, \supp(\bfy^{(\sfe_{i},j)})}\nonumber\\
&=\Pru{\widetilde{x}_{|e}\in\ZNk,\,w_{e}\sim\mu_{\sfe_{i}}}{\widetilde{x}_{e}=w_{e}\quad\text{and}\quad\widetilde{x}_{|e}\in \bigcup_{b \in \Sigma^k,\, f_i(b) = 1}
q_{\sfv_{i,1}}^{-1}(b_1) \times \cdots \times q_{\sfv_{i,k}}^{-1}(b_k)}\nonumber\\
&=\frac{1}{N^{k}}\sum_{b\in \Sigma^{k},\, f_{i}(b)=1}\mu_{\sfe_{i}}\left(q_{\sfv_{i,1}}^{-1}(b_1) \times \cdots \times q_{\sfv_{i,k}}^{-1}(b_k)\right)=\frac{1}{N^{k}}\sum_{b\in\Sigma^{k}}f_{i}(b)z^{*}_{i,b}.\label{eq:prob_constraint_satisfied}
\end{align}

\textbf{The $c=1$ case:} in this case $\lpval_{\calI}\defeq c=1$, which means $\frac{1}{m}\sum_{i=1}^{m}\sum_{b\in \Sigma^{k}}f_{i}(b)z^*_{i,b}=1$. By \Cref{obs:BasicLP}, we must have $\sum_{b\in \Sigma^{k}}f_{i}(b)z^*_{i,b}=1$ for all $i\in [m]$. From \eqref{eq:prob_constraint_placed} and \eqref{eq:prob_constraint_satisfied} we know that this implies all constraints that are placed in $\calI_{\bfY}$ by any player are satisfied by $\widetilde{\tau_{x}}$ with probability 1. This proves \eqref{eq:reduction_perfect_completeness}.

\textbf{The $c<1$ case:} we let $X^{(i,j)}$ be the number of constraints placed by the player $(\sfe_{i},j)$ (in other words, the length of the sequence $\calC^{(i,j)}$) and let $Z^{(i,j)}$ be the number of those constraints satisfied by $\widetilde{\tau_{x}}$. Due to independence among edges $e\in \supp(\bfy^{(\sfe_{i},j)})$ and Hoeffding's inequality (\Cref{prop:Hoeffding}), we have
\[
\Pru{\bfY\sim\calD_{\yes}}{X^{(i,j)}\geq (1+\varepsilon/2)N^{-k}\alpha n\,\middle|\,\supp(\bfy^{(\sfe_{i},j)})}\leq \exp\left(-\alpha n\cdot \varepsilon^{2}N^{-2k}/16\right)
\]
by \eqref{eq:prob_constraint_placed} and
\[
\Pru{\bfY\sim\calD_{\yes}}{Z^{(i,j)}\leq \left(\big(\textstyle\sum_{b\in \Sigma^{k}}f_{i}(b)z^*_{i,b}\big)-\varepsilon/2\right)N^{-k}\alpha n\,\middle|\,\supp(\bfy^{(\sfe_{i},j)})}\leq \exp\left(-\alpha n\cdot \varepsilon^{2}N^{-2k}/16\right)
\]
by \eqref{eq:prob_constraint_satisfied}.
Now taking expectation over $\supp(\bfy^{(\sfe_{i},j)})$ and taking union bound over all players $(\sfe_{i},j)\in \calE\times [K]$, it follows that with probability at least 
\[1-2mK\exp\left(-\alpha n\cdot \varepsilon^{2}N^{-2k}/16\right)=1-o_{n}(1)\] 
over $\bfY\sim\calD_{\yes}$, we have both
\[\sum_{(i,j)\in [m]\times[K]}X^{(i,j)}\leq (1+\varepsilon
/2)mK\cdot N^{-k}\alpha n\] and
\[
\sum_{(i,j)\in [m]\times [K]}Z^{(i,j)}\geq \left(\frac{1}{m}\big(\textstyle\sum_{i=1}^{m}\textstyle\sum_{b\in \Sigma^{k}}f_{i}(b)z^*_{i,b}\big)-\varepsilon/2\right)mK\cdot N^{-k}\alpha n = (c-\varepsilon/2)mK\cdot N^{-k}\alpha n
\]
and hence 
\[
\val_{\calI_{\bfY}}(\widetilde{\tau_{x}})=\frac{\sum_{(i,j)\in [m]\times [K]}Z^{(i,j)}}{\sum_{(i,j)\in [m]\times [K]}X^{(i,j)}}\geq \frac{c-\varepsilon/2}{1+\varepsilon/2}\geq c-\varepsilon.
\]
This proves \eqref{eq:reduction_completeness}.
    
    \paragraph{Step 5: soundness.} In order to upper bound $\val_{\calI_{\bfY}}$ with high probability, we upper bound the value of any fixed assignment $\widetilde{\tau}: \calV\times [n] \rightarrow \Sigma$ under $\calI_{\bfY}$ with high probability. Similarly to the proof of \Cref{lem:reduction_bounded_degree}, we define an associated \emph{random} assignment $\tau: \calV \rightarrow \Sigma$ by assigning value $\sigma\in \Sigma$ to $\sfv \in \calV$ with probability 
    \begin{align*}
        \frac{\Big|\big\{j\in [n]\,\big|\, \widetilde{\tau}((\sfv,j))=\sigma\big\}\Big|}{n },
    \end{align*}
    independently for each $v\in \calV$. 
    
    We define similar random variables as in Step 4: let $X^{(i,j)}$ be the number of constraints placed by the player $(\sfe_{i},j)$ into $\calI_{\bfY}$, and let $Z^{(i,j)}$ be the number of those constraints satisfied by $\widetilde{\tau}$. Note that unlike in Step 4, the assignment $\widetilde{\tau}$ is fixed, and all randomness lies in $\calI_{\bfY}$. According to \Cref{def:communication_game}, in the NO case, and conditioned on the support of $\bfy^{(\sfe_i,j)}$, each edge $e \in \supp(\bfy^{(\sfe_i,j)})$ contributes a constraint to $\calI_{\bfY}$ independently with probability $1/N^k$. Therefore, due to the further independence among players and Hoeffding's inequality (\Cref{prop:Hoeffding}), we have 
    \[\Pru{\bfY\sim\calD_{\no}}{\sum_{j=1}^{K}X^{(i,j)}\leq \left(1-\frac{\varepsilon}{4}\right)KN^{-k}\alpha n }\leq \exp\left(-\frac{K\alpha n}{64}\varepsilon^{2}N^{-2k}\right).\]
    
    To obtain a high-probability lower bound for $\sum_{j=1}^{K}Z^{(i,j)}$, we use \Cref{prop:concentration} in the same way as in the proof of \Cref{lem:reduction_bounded_degree}. We can think of each random matching $\supp(\bfy^{(\sfe_{i},j)})$ as the result of a sequential random selection of edges in the hypergraph $(\tcup\calU_{\sfe_{i}},\tprod\calU_{\sfe_{i}})$, without replacement of vertices. For a perfectly random edge $e\in \tprod\calU_{\sfe_{i}}$, we have
    \[
    \Pru{e\in \prod\calU_{\sfe_{i}}}{(e,f_{i})\text{ is satisfied by }\widetilde{\tau}}=\Pru{\tau}{C_{i}\text{ is satisfied by }\tau}.
    \]
    In the sequential random selection process, the number of available edges at any selection step is at least $(1-\alpha)^{k}n^{k}$. As in the proof of \Cref{lem:reduction_bounded_degree}, it follows that for any $t\in [\alpha n]$, the probability that the $t$-th selected edge in $\supp(\bfy^{(\sfe_{i},j)})$ contributes a constraint \emph{and} $\widetilde{\tau}$ satisfies it is at most 
    \[
    N^{-k}\cdot (1-\alpha)^{-k}\cdot\Pru{\tau}{C_{i}\text{ is satisfied by }\tau}.
    \]
    Note that the edge selection processes of all players in $\calE\times[K]$ are independent, and we may apply \Cref{prop:concentration} to processes of multiple players combined as a whole. Combining the processes of players $(\sfe_{i},j)$, where $j$ ranges in $[K]$, we conclude that
    \[
    \Pru{\bfY\sim\calD_{\no}}{\sum_{j=1}^{K}Z^{(i,j)}\geq \left((1-\alpha)^{-k}\cdot \Pru{\tau}{C_{i}\text{ is satisfied by }\tau}+\frac{\varepsilon}{4}\right)KN^{-k}\alpha n}\leq \exp\left(-\frac{K\alpha n}{64}\varepsilon^{2}N^{-2k}\right).
    \]
    Taking union bound over $i\in [m]$, it follows that with probability at least
    \[
    1-2m\exp\left(-\alpha n\cdot \varepsilon^{2}N^{-2k}/64\right)
    \]
    over $\bfY\sim\calD_{\no}$, we have both
    \begin{equation}\label{eq:soundness_X_lower}
    \sum_{(i,j)\in [m]\times [K]}X^{(i,j)}\geq (1-\varepsilon/4)m\cdot KN^{-k}\alpha n
    \end{equation}
    and
    \begin{align}
    \sum_{(i,j)\in [m]\times [K]}Z^{(i,j)}&\leq \left(\frac{1}{m}\left(\sum_{i=1}^{m}(1-\alpha)^{-k}\cdot\Prs{\tau}{C_{i}\text{ is satisfied by }\tau}\right)+\frac{\varepsilon}{4}\right)m\cdot KN^{-k}\alpha n\nonumber\\
    &=\left((1-\alpha)^{-k}\cdot \Exu{\tau}{\val_{\calI}(\tau)}+\frac{\varepsilon}{4}\right)m\cdot KN^{-k}\alpha n\nonumber\\
    &\leq \left((1+\varepsilon/4)\cdot s+\varepsilon/4\right)m\cdot KN^{-k}\alpha n.\label{eq:soundness_Z_upper}
    \end{align}
    In the last transition of \eqref{eq:soundness_Z_upper}, we used the definition $\Exs{\tau}{\val_{\calI}(\tau)}\leq \val_{\calI}\defeq s$ and the bound $(1-\alpha)^{-k}\leq e^{2k/\alpha}\leq 1+\varepsilon/4$ due to the choice $\alpha\leq  10^{-8}\varepsilon k^{-3}$. Whenever both \eqref{eq:soundness_X_lower} and \eqref{eq:soundness_Z_upper} hold, we have
    \[
    \val_{\calI_{\bfY}}(\widetilde{\tau})= \frac{\sum_{(i,j)\in [m]\times [K]}Z^{(i,j)}}{\sum_{(i,j)\in [m]\times [K]}X^{(i,j)}}\leq \frac{(1+\varepsilon/4)s+\varepsilon/4}{1-\varepsilon/4}\leq s+\varepsilon.
    \]
    Finally, taking a union bound over all $\widetilde{\tau}:\calV\times [n]\rightarrow\Sigma$, we conclude that $\val_{\calI_{\bfY}}\leq s+\varepsilon$ with probability at least
    \[
    1-|\Sigma|^{n|\calV|}\cdot 2m\exp\left(-\frac{K\alpha n}{64}\varepsilon^{2}N^{-2k}\right)\geq 1-o_{n}(1),
    \]
    due to the choice $K\geq 100\alpha^{-1}\varepsilon^{-2}N^{2k}\cdot|\calV|\log|\Sigma|$. This completes the proof of \eqref{eq:reduction_soundness} and the proof of the lemma.
\end{proof}

\section{Communication Lower Bound for DIHP}\label{sec:communication_lower_bound}
This section is devoted to the proof of~\Cref{thm:dihp_lowerbound}, which establishes the communication lower bound for the game $\DIHP(G, n, \alpha, K)$. As in~\cite{FMW25}, the argument follows the standard structure-vs.-randomness framework in communication complexity (see e.g. \cite{raz1997separation,goos2017query}), and consists of two main steps:

\begin{enumerate}
\item Given a communication protocol $\Pi$ with $|\Pi| \lesssim n^{1/3}$, we decompose the rectangles induced by $\Pi$ into smaller subrectangles. We show that, after decomposition, most subrectangles are ``good'' --- that is, each carries a well-structured piece of information combined with a controlled form of pseudorandom noise. This is done in the ``decomposition lemma'', \Cref{lem:regularity_decomposition}. 

\item For each ``good'' rectangle $R$, we establish a discrepancy bound of the form
\[
|\calD_\no(R) - \calD_\yes(R)| \leq 0.001 \cdot \calD_\no(R).
\]
This is done in the ``discrepancy lemma'',~\Cref{lem:discrepancy_bound}.

\end{enumerate}
These two steps are then combined to complete the proof of the communication lower bound; see~\Cref{lem:discrepancy_to_lower_bound}.

The decomposition lemma (\Cref{lem:regularity_decomposition}) closely follows its counterpart in~\cite{FMW25}, and its proof is deferred to~\Cref{app:regularity_decomposition}. In contrast, the discrepancy lemma (\Cref{lem:discrepancy_bound}) requires a different treatment than~\cite[Lemma 2.11]{FMW25}, and is developed in~\Cref{sec:global_rectangle,sec:Fourier_decay}. This (relatively short) section is devoted to laying out the overarching framework that connects these components. In particular, we formalize the notion of ``good'' rectangles in \Cref{subsec:pseudorandomness,subsec:good_rectangle}. Then, in \Cref{subsec:main_lemmas}, we lay out the main lemmas, from which we derive the desired communication lower bound in \Cref{subsec:communication_lower_bound}.

\subsection{Pseudorandomness Notions}\label{subsec:pseudorandomness}

A ``good'' rectangle is one in which the structural information and the pseudorandom noise are cleanly separated. In this subsection, we formalize the notions of pseudorandomness for sets of labeled matchings. This will allow us, in~\Cref{subsec:good_rectangle}, to control the pseudorandomness in rectangles. 

Throughout this subsection, we fix a $k$-universe $\calU=(U_{1},\dots,U_{k})$ and a positive integer $m\leq |\calU|$. We will consider pseudorandomness notions for the space of labeled matchings $\Omega^{\calU,m}$.
Our notion is based on the following type of restriction on the space $\Omega^{\calU,m}$. 

\begin{definition}\label{def:restrictions}
We define the set of \emph{restrictions} to be $\Omega^{\calU,\leq m}:=\bigcup_{0\leq d\leq m}\Omega^{\calU,d}$, i.e., the subset of $\Map{\tprod \calU}{\ZNk\cup\{\nil\}}$ that consists of all labeled matchings with at most $m$ edges. For each such labeled matching $\bfz\in\Omega^{\calU,\leq m}$, we let $\Omega^{\calU,m}_{\bfz}\subseteq \Omega^{\calU,m}$ be the restricted domain defined by
\[
\Omega^{\calU,m}_{\bfz}:=\left\{\bfy\in \Omega^{\calU,m}:\bfy(e)=\bfz(e)\text{ for all }e\in\supp(\bfz)\right\}.
\]
\end{definition}

In \Cref{def:restrictions}, restrictions are placed on the space of labeled matchings $\Omega^{\calU,m}$. Alternatively, we may view restrictions as directly acting on the universe $\calU$, as made precise by the following notation.

\begin{notation}
For a matching $M\in \calM_{\calU,\leq m}$, we denote by $\calU_{\setminus M}$ the $k$-universe $(U_{1}',U_{2}',\dots,U_{k}')$ defined by setting for each $i\in [k]$
$$U_{i}':=U_{i}\setminus\left\{u:\text{some edge of $M$ has $u$ as its $i$-th vertex}\right\}.$$
\end{notation}
\begin{remark}\label{rem:restriction_isomorphic}
A key observation is that the restricted domain \(\Omega^{\calU,m}_{\bfz}\) is naturally ``isomorphic'' to the unrestricted domain \(\Omega^{\calU_{\setminus M},\,m - |M|}\) associated to the smaller \(k\)-universe  \(\calU_{\setminus M}\), where \(M := \supp(\bfz)\).
\end{remark}
\begin{notation}
Given a matching $M\in\calM_{\calU,\leq m}$, we will use shorthand $\Omega^{\calU,m}_{\setminus M}$ to denote the space $\Omega^{\calU_{\setminus M},\,m-|M|}$.
\end{notation}

Before formalizing the main notion of pseudorandomness, we define the following convenient concept of subsumption of restrictions.

\begin{definition}
    For two restrictions $\bfz,\bfz'\in \Omega^{\calU,\leq m}$, we say $\bfz'$ \emph{subsumes} $\bfz$ if $\supp(\bfz)\subseteq \supp(\bfz')$ and for all $e\in \supp(\bfz)$ we have $\bfz(e)=\bfz'(e)$. 
\end{definition}

We are now ready to define pseudorandomness for sets of labeled matchings:

\begin{definition}\label{def:global_set}
A subset $A\subseteq \Omega^{\calU,m}$ is said to be \emph{$\bfz$-global} if $A\subseteq \Omega^{\calU,m}_{\bfz}$, and for all restrictions $\bfz'$ that subsume $\bfz$ we have 
\[
\frac{\left|A\cap \Omega^{\calU,m}_{\bfz'}\right|}{\left|\Omega^{\calU,m}_{\bfz'}\right|}\leq 2^{|\supp(\bfz')|-|\supp(\bfz)|}\cdot \frac{\left|A\cap \Omega^{\calU,m}_{\bfz}\right|}{\left|\Omega^{\calU,m}_{\bfz}\right|}.
\]
When $\bfz = \boldsymbol{0}$ is the trivial restriction, we simply say 
that $A$ is global (omitting the $\bfz$).
\end{definition}

In words, for a set $A$ and a restriction $\bfz$, we say that $A$ is 
$\bfz$-global if any further restrictions $\bfz'$ that subsumes $\bfz$ increases
the relative density of $A$ by factor at most $2^{|\supp(\bfz')|-|\supp(\bfz)|}$.

\begin{remark}\label{rem:restriction_isomorphic_2}
Continuing from \Cref{rem:restriction_isomorphic}, suppose \( A \subseteq \Omega^{\calU,m} \) is a \( \bfz \)-global subset. Then, under the natural identification between the restricted domain \( \Omega^{\calU,m}_{\bfz} \) and the unrestricted domain 
\[ \Omega^{\calU_{\setminus \supp(\bfz)},\,m - |\supp(\bfz)|}=\Omega^{\calU,m}_{\setminus\supp(\bfz)},\]
the set \( A \) corresponds to a subset \( A_\rem \subseteq \Omega^{\calU,m}_{\setminus\supp(\bfz)} \) that is \( \boldsymbol{0} \)-global. This correspondence follows directly from the definition of globalness and will play an important role in \Cref{sec:global_rectangle}. In particular, when a set \( A \subseteq \Omega^{\calU,m}_{\bfz} \) arises and the restriction \( \bfz \) is clear from context, we will use the same notation \( A_\rem \) to denote the corresponding subset of the domain \( \Omega^{\calU,m}_{\setminus \supp(\bfz)} \).
\end{remark}

\subsection{``Good'' Rectangles}\label{subsec:good_rectangle}

Now, we turn to pseudorandomness notions for \textit{rectangles}. In this subsection, we fix a distribution-labeled $k$-graph $G=(\calV,\calE,N,(\mu_{\sfe})_{\sfe\in \calE]})$ and a communication game $\DIHP(G,n,\alpha,K)$.

Recall from \Cref{def:communication_game} that in the communication game \(\DIHP(G, n, \alpha, K)\), the joint input to the $|\calE| \cdot K$ players is an element $\bfY$ in the product space \(\prod_{(\sfe, j) \in \calE \times [K]} \Omega^{\calU_{\sfe}, \alpha n}\). As is standard in communication complexity, a subset of this product space that is a Cartesian product is referred to as a \emph{rectangle}, formally defined below.

\begin{definition}
A subset \(R \subseteq \prod_{(\sfe, j) \in \calE \times [K]} \Omega^{\calU_{\sfe}, \alpha n}\) is called a \emph{rectangle} if it is a Cartesian product of sets \(A^{(\sfe, j)} \subseteq \Omega^{\calU_{\sfe}, \alpha n}\), one for each \((\sfe, j) \in \calE \times [K]\); that is,
\[
R = \prod_{(\sfe, j) \in \calE \times [K]} A^{(\sfe, j)}.
\]
\end{definition}
Then, it is natural to extend our definitions of global sets to rectangles, which requires each component $A^{(\sfe,j)}$ to be a global set. 
\begin{definition}\label{def:structured_rectangle}
Let \(\bdzeta = \left(\bfz^{(\sfe, j)}\right)_{(\sfe, j) \in \calE \times [K]}\) be a sequence where each \(\bfz^{(\sfe, j)}\) is a restriction on the space \(\Omega^{\calU_{\sfe}, \alpha n}\). A rectangle \(R = \prod_{(\sfe, j) \in \calE \times [K]} A^{(\sfe, j)}\) is called \emph{\(\bdzeta\)-global} if each set \(A^{(\sfe, j)}\) is \(\bfz^{(\sfe, j)}\)-global. When a rectangle $R$ is $\bdzeta$-global, we also say that the pair $(\bdzeta,R)$ is a \emph{structured rectangle}.
\end{definition}
We are now ready to give the formal definition of ``good'' rectangles. The rationale behind the three technical requirements in the following definition will become clear in \Cref{sec:global_rectangle}.

\begin{definition}\label{def:good_rec}
    Let $W$ be a positive real number. We say a structured rectangle $(\boldsymbol{\zeta},R)$, where \(R=\prod_{(\sfe, j) \in \calE \times [K]} A^{(\sfe, j)}\) and \(\bdzeta = \left(\bfz^{(\sfe, j)}\right)_{(\sfe, j) \in \calE \times [K]}\), is $W$-good if the following conditions hold:
    \begin{enumerate}[label=(\arabic*)]
        \item The hyperedge sets \(\left(\supp(\bfz^{(\mathsf{e},j)})\right)_{(\sfe,j)\in \calE\times [K]}\) are pairwise disjoint, and their union does not contain any cycle (for the definition of cycle-freeness in hypergraphs, see \Cref{subsec:general_notations}).
        \item $\sum_{(\mathsf{e},j)}\left|\supp(\bfz^{(\mathsf{e},j)} )\right| \leq W$.
        \item  $\left|A^{(\sfe,j)}\right| / \left|\Omega^{\calU_{\sfe},\alpha n}_{\bfz^{(\sfe,j)}}\right|\geq 2^{-W}$ for all $(\sfe,j)\in \calE\times [K]$. 
    \end{enumerate}
\end{definition}

\subsection{Two Main Lemmas}\label{subsec:main_lemmas}

We now present the two main lemmas as promised in the introductory text of \Cref{sec:communication_lower_bound}.

The \emph{decomposition lemma} captures the following fact: a protocol $\Pi$ induces at most $2^{|\Pi|}$ of rectangles, most rectangles $R$ of measure $\calD_\no (R) \gtrsim 2^{-|\Pi|}$; then, if $|\Pi|\lesssim \sqrt{n}$ holds, one can decompose those rectangles into structured rectangles while most of the structured rectangles are $\Theta(|\Pi|)$-good. 

\begin{restatable}[Decomposition lemma]{lemma}{DecompositionLemma}\label{lem:regularity_decomposition}
    Fix a distribution-labeled $k$-graph $G=(\calV,\calE,N,(\mu_{\sfe})_{\sfe\in \calE})$, an integer $K>0$ and a parameter $\alpha>0$. There exists a constant $\eta>0$ such that given any communication protocol $\Pi$ for $\DIHP(G,n,\alpha,K)$ with $|\Pi|\leq \eta \sqrt{n}$, there exists a collection $\calR$ of pairwise-disjoint structured rectangles $(\boldsymbol{\zeta},R)$ in the space \(\prod_{(\sfe, j) \in \calE \times [K]} \Omega^{\calU_{\sfe}, \alpha n}\) such that the following conditions hold: 
    \begin{enumerate}[label=(\arabic*)]
        \item \(\calD_{\no}\left(\bigcup_{(\boldsymbol{\zeta},R)\in \calR}R\right)\geq 0.99\).
        \item Each $(\boldsymbol{\zeta},R)\in\calR$ is $\left(10^5 \cdot|\Pi|\right)$-good. 
        \item For each $(\boldsymbol{\zeta},R)\in \calR $, there exists $a_R \in \{0,1\}$ such that $\Pi(\bfY) = a_R$ for every $\bfY \in R$. 
    \end{enumerate}
\end{restatable}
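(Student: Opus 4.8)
## Proof Proposal for the Decomposition Lemma

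\textbf{Overall strategy.} The plan is to follow the structure-vs-randomness paradigm: start from the rectangle partition that any deterministic protocol $\Pi$ induces on $\prod_{(\sfe,j)}\Omega^{\calU_{\sfe},\alpha n}$, and then refine each rectangle, one coordinate at a time, peeling off a ``structured part'' (a restriction $\bfz^{(\sfe,j)}$ recording a bounded number of labeled edges) until what remains in each coordinate is a global set. The key quantitative device is a potential/entropy argument: each time a coordinate set $A^{(\sfe,j)}$ fails to be $\bfz^{(\sfe,j)}$-global, there is a subsuming restriction $\bfz'$ witnessing a density jump by a factor $2^{r}$ where $r=|\supp(\bfz')|-|\supp(\bfz^{(\sfe,j)})|$; we replace $A^{(\sfe,j)}$ by $A^{(\sfe,j)}\cap\Omega^{\calU_{\sfe},\alpha n}_{\bfz'}$ and record $\bfz'$. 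Because the relative density is bounded by $1$, the total number of edges ever added across all the peeling steps, summed over all coordinates, is at most $O(|\Pi|)$ on average — this is where the bound $|\Pi|\le \eta\sqrt n$ and item (2) of ``$W$-good'' with $W = 10^5|\Pi|$ come from. This part closely mirrors the decomposition lemma of \cite{FMW25}, and I would import its combinatorial skeleton, adapting the parameters to the $k$-uniform, $\mathbb{Z}_N$-labeled setting.

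\textbf{Main steps, in order.} First, invoke the standard fact that $\Pi$ with $|\Pi|$ bits partitions the input space into at most $2^{|\Pi|}$ monochromatic rectangles; discard rectangles of $\calD_{\no}$-mass below, say, $2^{-2|\Pi|}$, losing at most $2^{-|\Pi|}\le 0.005$ of the mass. This already gives item (3) (monochromaticity of $\Pi$ on each surviving rectangle) and, via the mass bound, a lower bound $|A^{(\sfe,j)}|/|\Omega^{\calU_{\sfe},\alpha n}|\ge 2^{-2|\Pi|}$ on each coordinate of a surviving rectangle (using that $\calD_{\no}$ is the product of uniform distributions), which will feed into item (3) of $W$-goodness. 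Second, run the coordinate-by-coordinate globalizing procedure described above on each surviving rectangle; the density-monotonicity argument shows the total number of recorded edges $\sum_{(\sfe,j)}|\supp(\bfz^{(\sfe,j)})|$ is $O(|\Pi|)$ for all but a $\calD_{\no}$-measure-$0.005$ fraction of inputs (Markov's inequality on the expected number of peeled edges), giving item (2). Third, handle the ``no cycle / pairwise disjoint support'' requirement in item (1) of $W$-goodness: since the blow-up factor $n$ is huge and the total number of recorded edges is only $O(|\Pi|)=O(\sqrt n)$, a random/greedy pruning argument shows that the chance of the recorded hyperedges forming a cycle or sharing a vertex across coordinates is negligible; one discards the bad portion, again of tiny $\calD_{\no}$-mass. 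Collecting all discarded mass ($\le 0.005+0.005+o(1) < 0.01$) yields item (1), $\calD_{\no}(\bigcup R)\ge 0.99$. Finally, verify that the resulting structured rectangles are pairwise disjoint — this is inherited from the disjointness of the original rectangle partition, since refinement only shrinks rectangles.

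\textbf{Main obstacle.} The delicate point is not the entropy bookkeeping but coordinating the globalizing step with the combinatorial constraints of item (1) of Definition~\ref{def:good_rec}, namely that the union of the recorded hyperedge supports across all $|\calE|K$ coordinates must be cycle-free and have pairwise-disjoint coordinate-wise supports. Unlike in \cite{FMW25}, where matchings live on a single complete graph, here the supports live on different $k$-partite hypergraphs $\calU_{\sfe}$ that overlap (since distinct edges $\sfe$ of the base graph $G$ can share base-vertices, so the clouds $U_{\sfv}$ are shared). One must argue that the globalizing procedure, which is a purely local operation on each coordinate, can be run so that the accumulated edge set stays cycle-free; the way to do this is to interleave the peeling with a ``forced-conditioning'' step that also records any edge creating a collision, absorbing it into the structured part, and to show this only inflates the edge count by a constant factor (still $O(|\Pi|)$), since each new collision consumes at least one previously-unused vertex out of $\Omega(n)$ available ones. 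I would isolate this as the technical heart of the argument and defer the full bookkeeping to Appendix~\ref{app:regularity_decomposition}, as the statement itself indicates.
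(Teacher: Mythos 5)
Your high-level plan---protocol rectangles, a coordinate-by-coordinate globalizing procedure driven by a density-jump/potential argument, a Markov step to bound the accumulated support size, and a separate treatment of the cycle-free/disjointness constraint---matches the paper's proof in Appendix~\ref{app:regularity_decomposition}. The potential argument you sketch (``total density increase is bounded, so the total number of peeled edges is $O(|\Pi|)$ on average'') is essentially the paper's Lemma~\ref{lem:reg} combined with the inductive bound $\sum\calD_{\no}(R)\,\phi(\bdzeta,R)\le 3r$, and your Markov step and mass-accounting ($\le 0.005 + 0.005 + o(1)$) are in the right spirit. One minor structural difference: the paper interleaves the decomposition with the protocol rounds, constructing a ``global protocol'' $\Pi^{\reff}$ in which each round emits both the original bit and the index of the decomposition piece, which makes the ``$+3$ per round'' accounting transparent; your post-hoc decomposition can be made to work but requires a little more care in stating the invariant.

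The real divergence---and where I believe your proposal has a gap---is in handling the cycle-free/disjointness requirement (condition (1) of \Cref{def:good_rec}). You propose to ``interleave the peeling with a forced-conditioning step that also records any edge creating a collision, absorbing it into the structured part.'' The paper does \emph{not} do this, and for good reason: conditioning on the event that a particular hyperedge $e$ is present in $\bfy^{(\sfe,j)}$ (with a given label) is an event of $\calD_{\no}$-probability $\Theta(\alpha n/(N^{k}n^{k}))$; it is \emph{not} a density-jump restriction, so adding such an edge to $\bfz^{(\sfe,j)}$ destroys the invariant that the conditional density $|A^{(\sfe,j)}|/|\Omega^{\calU_{\sfe},\alpha n}_{\bfz^{(\sfe,j)}}|$ is tracked by the potential; after absorbing a colliding edge, $A^{(\sfe,j)}$ need not be $\bfz^{(\sfe,j)}$-global any more, so you would have to re-run the decomposition, and it is not obvious the process terminates with the right bounds. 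The paper instead simply \emph{discards} the rectangles where the recorded supports collide or form a cycle, and bounds their total $\calD_{\no}$-mass directly (Claims~\ref{claim:cycle} and \ref{lem:low_cycle_probability}): the key observation is that \emph{because} each surviving $A^{(\sfe,j)}$ is $\bfz^{(\sfe,j)}$-global, the probability that one of its matching edges hits the $O(|V(\bdzeta)|^2\cdot n^{k-2})$ ``dangerous'' edges near $V(\bdzeta)$ is at most $O(|V(\bdzeta)|^2/n)$, and summing over the $\le r$ rounds gives $O(r^{2}/n)=O(\eta^{2})$ which is small for $\eta$ small enough. So globalness is used \emph{to kill} the collision event, not to be sacrificed in order to absorb it. If you want to keep your ``absorb'' approach, you would need a substitute argument showing the re-globalization terminates and respects the potential budget---this is a genuine obstacle, not just bookkeeping.
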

The proof of Lemma \ref{lem:regularity_decomposition} is included in Appendix \ref{app:regularity_decomposition}. Furthermore, for the good rectangles, we have the following discrepancy bound:
\begin{restatable}[Discrepancy lemma]{lemma}{discrepancybound}\label{lem:discrepancy_bound}
    Fix a distribution-labeled $k$-graph $G=(\calV,\calE,N,(\mu_{\sfe})_{\sfe\in \calE})$, an integer $K>0$ and a parameter $\alpha\in \big(0,10^{-8}k^{-3}\big]$. There exists a constant $\gamma=\gamma(G,\alpha,K)>0$ such that for any $(\gamma n^{1/3})$-good structured rectangle $(\boldsymbol{\zeta},R)$, we have
    \begin{align*}
        |\calD_{\no}(R) - \calD_{\yes}(R)|\leq 0.001\cdot \calD_{\no}(R).
    \end{align*}
\end{restatable}

The proof of \Cref{lem:discrepancy_bound} will take up \Cref{sec:global_rectangle,sec:Fourier_decay}.

\subsection{The Communication Lower Bound}\label{subsec:communication_lower_bound}
We note that~\Cref{lem:regularity_decomposition} and~\Cref{lem:discrepancy_bound} differ in their tolerance with respect to the parameter $n$. In particular, the goodness parameter in~\Cref{lem:discrepancy_bound} scales as $n^{1/3}$, which is the primary reason why our space lower bound in~\Cref{thm:main} is limited to $\Omega(n^{1/3})$. If, instead, we were able to establish the discrepancy bound for rectangles with goodness parameter as large as $\Theta(n^{1/2})$, the lower bound on space would improve to $\Omega(n^{1/2})$, as formalized in the following lemma.

\begin{lemma}\label{lem:discrepancy_to_lower_bound}
     Fix a distribution-labeled $k$-graph $G=(\calV,\calE,N,(\mu_{\sfe})_{\sfe\in \calE})$, an integer $K>0$ and a parameter $\alpha>0$. There exists a constant $\eta >0$ such that for every $W\leq \eta \sqrt{n}$, if $|\calD_{\no}(R) - \calD_{\yes}(R)|\leq 0.001\cdot \calD_{\no}(R)$ holds for every $W$-good structured rectangle $(\bdzeta,R)$, then we have $\mathsf{CC}(G,n,\alpha ,K ) \geq 10^{-5}\cdot W$.  
\end{lemma}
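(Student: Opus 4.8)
This lemma is a bookkeeping step that feeds the decomposition lemma (\Cref{lem:regularity_decomposition}) into the hypothesized discrepancy bound via standard advantage accounting; essentially all the substance lies in \Cref{lem:regularity_decomposition,lem:discrepancy_bound}. Let $\eta>0$ be the constant supplied by \Cref{lem:regularity_decomposition} for the parameters $G$, $K$, $\alpha$, and fix any $W\leq \eta\sqrt{n}$ for which the discrepancy hypothesis holds. Let $\Pi$ be an arbitrary deterministic protocol for $\DIHP(G,n,\alpha,K)$ with $\adv(\Pi)\geq 0.1$; it suffices to prove $|\Pi|\geq 10^{-5}W$, so assume toward a contradiction that $|\Pi|<10^{-5}W$.

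Since $|\Pi|<10^{-5}W\leq W\leq \eta\sqrt{n}$, \Cref{lem:regularity_decomposition} yields a family $\calR$ of pairwise-disjoint structured rectangles, each $\bigl(10^5\cdot|\Pi|\bigr)$-good, on each of which $\Pi$ is constant (write $\Pi\equiv a_R\in\{0,1\}$ on $R$), and with $\calD_{\no}\bigl(\calR^*\bigr)\geq 0.99$ where $\calR^*:=\bigcup_{(\bdzeta,R)\in\calR}R$. The numerical point is that $10^5\cdot|\Pi|<W$, and $W$-goodness is monotone in $W$: in \Cref{def:good_rec}, condition (1) does not involve $W$, while conditions (2) and (3) only weaken as $W$ increases. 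Hence every $\bigl(10^5\cdot|\Pi|\bigr)$-good structured rectangle is in particular $W$-good, so the assumed bound $|\calD_{\no}(R)-\calD_{\yes}(R)|\leq 0.001\cdot\calD_{\no}(R)$ applies to every $(\bdzeta,R)\in\calR$.

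Summing this over the disjoint rectangles of $\calR$ gives $|\calD_{\yes}(\calR^*)-\calD_{\no}(\calR^*)|\leq 0.001\cdot\calD_{\no}(\calR^*)\leq 0.001$, so $\calD_{\yes}(\calR^*)\geq 0.989$ and both $\calD_{\no}$ and $\calD_{\yes}$ assign the complement $\overline{\calR^*}$ of $\calR^*$ mass at most $0.011$. Since $\Pi$ is constant on each rectangle of $\calR$, writing $\calR_1:=\{(\bdzeta,R)\in\calR:a_R=1\}$, the set $\{\bfY:\Pi(\bfY)=1\}$ is the disjoint union of $\bigcup_{R\in\calR_1}R$ with a subset of $\overline{\calR^*}$, so
\begin{align*}
\adv(\Pi)
&=\Bigl|\Pru{\bfY\sim\calD_{\yes}}{\Pi(\bfY)=1}-\Pru{\bfY\sim\calD_{\no}}{\Pi(\bfY)=1}\Bigr|\\
&\leq \sum_{(\bdzeta,R)\in\calR_1}\bigl|\calD_{\yes}(R)-\calD_{\no}(R)\bigr|+\calD_{\yes}\bigl(\overline{\calR^*}\bigr)+\calD_{\no}\bigl(\overline{\calR^*}\bigr)\\
&\leq 0.001+0.011+0.011 < 0.1,
\end{align*}
contradicting $\adv(\Pi)\geq 0.1$. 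Therefore $|\Pi|\geq 10^{-5}W$ for every protocol of advantage at least $0.1$, i.e.\ $\CC(G,n,\alpha,K)\geq 10^{-5}W$.

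\textbf{Main obstacle.} There is no real obstacle in this argument --- it is purely a packaging of the two main lemmas. The two points that require a little care are: (i) observing that $W$-goodness is monotone in $W$, so that the $\bigl(10^5|\Pi|\bigr)$-good rectangles produced by \Cref{lem:regularity_decomposition} are $W$-good and hence eligible for the discrepancy hypothesis; and (ii) controlling the part of $\{\Pi=1\}$ outside $\bigcup\calR$ under \emph{both} $\calD_{\no}$ and $\calD_{\yes}$, which uses that summing the discrepancy bound over $\calR$ transfers the $0.99$ coverage from $\calD_{\no}(\calR^*)$ to $\calD_{\yes}(\calR^*)$.
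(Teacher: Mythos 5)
Your proof is correct and follows essentially the same route as the paper's: apply the decomposition lemma to a short protocol, observe that the resulting $\bigl(10^5|\Pi|\bigr)$-good rectangles are $W$-good (both proofs rely on the monotonicity of goodness in $W$, which you spell out explicitly), and then bound the advantage by summing the discrepancy over $\calR$ and absorbing the complement of $\bigcup\calR$. The only cosmetic difference is in the final accounting: the paper bounds $\calD_{\yes}(\overline{\calR^*})$ by $\calD_{\no}(\overline{\calR^*})+\sum|\calD_{\yes}(R)-\calD_{\no}(R)|$ to get the coefficient-$2$ estimate $2(0.01)+2(0.001)$, while you deduce $\calD_{\yes}(\calR^*)\geq 0.989$ first and then split over $\calR_1$ and $\overline{\calR^*}$; both give a bound comfortably below $0.1$.
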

\begin{proof} 
    Take $\eta$ to be the constant obtained from Lemma \ref{lem:regularity_decomposition}. We fix a communication protocol $\Pi$ with $|\Pi|\leq 10^{-5}\cdot W$, and we show that $\adv(\Pi)<0.1$. 
    
    Since we have $|\Pi |\leq 10^{-5}\cdot W \leq \eta \sqrt{n}$, we may apply Lemma~\ref{lem:regularity_decomposition} and obtain a collection $\calR$ of structured rectangles. We know that each pair $(\boldsymbol{\zeta},R)\in \calR$ is $\left(10^5\cdot |\Pi|\right)$-good, which is also $W$-good since $10^{5}\cdot |\Pi|\leq W$. By the assumption in the statement, we then have 
    \begin{align*}
        |\calD_{\no}(R) - \calD_{\yes}(R)|\leq 0.001\cdot \calD_{\no}(R),
    \end{align*}
    holds for each $(\boldsymbol{\zeta},R)\in \calR$. 
    
    Note that for every $(\boldsymbol{\zeta},R)\in \calR$, the output of $\Pi$ is constant on $R$. By \Cref{def:advantage}, we have
    \begin{align*}
        \mathrm{adv}(\Pi) &= \left|\Pru{\bfY \sim \cal{D}_\yes}{\Pi(\bfY) =1} - \Pru{\bfY\sim \calD_\no}{\Pi(\bfY) = 1}\right| \\
        &\leq \Pru{\bfY\sim\calD_{\yes}}{\bfY\not\in\bigcup_{(\bdzeta,R)\in \calR}R}+\Pru{\bfY\sim\calD_{\no}}{\bfY\not\in\bigcup_{(\bdzeta,R)\in \calR}R}+ \sum_{(\bdzeta,R)\in \mathcal{R}} \left|\calD_\yes (R) - \calD_\no (R)\right| \\
        &\leq 2\cdot\Pru{\bfY\sim\calD_{\no}}{\bfY\not\in\bigcup_{(\bdzeta,R)\in \calR}R} + 2\cdot \sum_{(\bdzeta,R)\in \calR} \left| \calD_\yes (R) - \calD_\no (R)\right| \\
        &\leq 2(1-0.99) + 2\sum_{(\bdzeta,R)\in \calR} 0.001\cdot \calD_\no (R) < 0.1,
    \end{align*}
    as desired. 
\end{proof}

\Cref{thm:dihp_lowerbound} then follows easily from \Cref{lem:discrepancy_bound}.

\begin{proof}[Proof of \Cref{thm:dihp_lowerbound}]
Let $\eta$ be the constant obtained from \Cref{lem:discrepancy_to_lower_bound}. Lemma \ref{lem:discrepancy_bound} tells us that there exists a constant $\gamma>0$ such that $|\calD_{\no}(R) - \calD_{\yes}(R)|\leq 0.001\cdot \calD_{\no}(R)$ holds for all $\left(\gamma n^{1/3}\right)$-good rectangles, and it is clear that $\gamma n^{1/3}\leq \eta \sqrt{n}$ for large enough $n$. So from \Cref{lem:discrepancy_to_lower_bound} we conclude that $\mathsf{CC}(G,n,\alpha ,K ) = \Omega(\gamma n^{1/3}) = \Omega(n^{1/3})$. 
\end{proof}
\section{Bounding the Discrepancy of Good Rectangles}\label{sec:global_rectangle}
The goal of this section is to prove \Cref{lem:discrepancy_bound}, modulo a Fourier analytic lemma that we prove in \Cref{sec:Fourier_decay}. We begin with a high-level overview of the proof strategy.

Recall that in \Cref{lem:discrepancy_bound}, we are given a restriction sequence \(\bdzeta = \left(\bfz^{(\sfe, j)}\right)_{(\sfe, j) \in \calE \times [K]}\) and a $\bdzeta$-global rectangle $R = \prod_{(\sfe,j)\in \calE\times [K]} A^{(\sfe,j)}$. We will define a probability density function $f^{(\sfe,j)}$ on $\ZmodN^{\calV\times [n]}$, induced by the set $A^{(\sfe,j)}$. It turns out that $\calD_{\yes}(R)$ and $\calD_{\no} (R)$ can then be related by the identity (see \Cref{lem:relating_yes_no})
\begin{align}\label{eq:relating_yes_no}
\calD_{\yes}(R)=\calD_{\no}(R)\cdot \Exu{\widetilde{x}\in \ZmodN^{\calV\times [n]}}{\prod_{(\sfe,j)\in \calE\times [K]}f^{(\sfe,j)}(\widetilde{x})}.
\end{align}
Thus, it suffices to show that the expectation of the product \(\prod_{(\sfe,j)} f^{(\sfe,j)}(\widetilde{x})\) is close to 1.

Now the main difference from the setting of \cite{FMW25} arises. In \cite{FMW25}, each of the probability density functions in the product is supported on and approximately uniform over an affine subspace of the underlying vector space. This structure allows the analysis to proceed by restricting to the intersection of these affine subspaces and examining the product of the functions on this intersection.

In our setting, the functions \(f^{(\sfe,j)}\) are not close to uniform on their support. Instead, we decompose each \(f^{(\sfe,j)}\) as a product \(g^{(\sfe,j)} \cdot h^{(\sfe,j)}\), where \(g^{(\sfe,j)}\) captures the structured component of \(A^{(\sfe,j)}\), and \(h^{(\sfe,j)}\) models the pseudorandom noise. The structured-only product \(\prod_{(\sfe,j)} g^{(\sfe,j)}\) will play a similar role to the intersection of affine subspaces in~\cite{FMW25}, while each \(h^{(\sfe,j)}\) is expected to be close to uniform over the entire space \(\ZmodN^{\calV \times [n]}\).

When we combine the structured-only product and the pseudorandom parts $h^{(\sfe,j)}$ together, we will analyze the overall product using what we call a ``hybrid method'' (see \Cref{subsec:hybrid}). This method is inspired by~\cite[Lemma 3.12]{FMW25}, but also draws from classical hybrid arguments in streaming lower bounds (e.g.,~\cite{KKS14,CGSV24}). 

\subsection{Relating YES and NO Distributions}\label{subsec:calculating_discrepancy}
The goal of this section is to give an explicit formula of the ratio $\calD_\yes(R)/\calD_\no(R)$. Since the YES distribution $\calD_{\yes}$ is generated by the Markov kernel in \Cref{def:Markov_kernel}, the main task is to analyze this Markov kernel. 

A Markov kernel from $\ZmodN^{\calV \times [n]}$ to $\Omega^{\calU,m}$ pushes forward a probability distribution from the former space to a distribution on the latter. At the same time, it also induces a pull-back operation, mapping functions defined on $\Omega^{\calU,m}$ to functions on $\ZmodN^{\calV \times [n]}$. For the Markov kernel defined in~\Cref{def:Markov_kernel}, it turns out that the pull-back perspective is the more convenient one for analysis. We denote this pull-back operator by the italic bold symbol $\bdP^{\calU,m}_{\mu}[\cdot]$, distinguishing it from the matrix expression $\bfP^{\calU,m}_{\mu}(\cdot, \cdot)$ to reflect that, while formally distinct, the two represent the same underlying Markov transition.

\begin{notation}\label{notation:linear_operator}
Fix a $k$-universe $\calU$, a nonnegative integer $m\leq |\calU|$, and a one-wise independent distribution $\mu$ over $\ZNk$. The (right) stochastic matrix \(\bfP^{\calU,m}_{\mu}: \mathbb{Z}_N^{\bigcup\calU} \times \Omega^{\calU,m} \to \mathbb{R}\), defined in \Cref{def:Markov_kernel}, can be viewed as a linear operator
\[
\bdP^{\calU,m}_{\mu} : L^2(\Omega^{\calU, m}) \to L^2\big(\mathbb{Z}_N^{\bigcup\calU}\big)
\]
given by
\[
\bdP^{\calU,m}_{\mu}[f](x) = \sum_{\bfy \in \Omega^{\calU_{\sfe}, \alpha n}} \bfP^{\calU,m}_{\mu}(x, \bfy) f(\bfy),
\]
for all \(x \in \mathbb{Z}_N^{\bigcup\calU}\) and \(f \in L^2(\Omega^{\calU,m})\).
\end{notation}
The operator $\bdP_\mu^{\calU,m}$ satisfies the following two basic properties. 
\begin{proposition}\label{prop:infinity_norm_contraction}
For any $f\in L^{2}(\Omega^{\calU,m})$, we always have $\left\|\bdP^{\calU,m}_{\mu}[f]\right\|_{\infty}\leq \|f\|_{\infty}$.
\end{proposition}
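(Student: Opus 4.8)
The plan is to show that $\bdP^{\calU,m}_{\mu}$ is, row by row, an averaging operator, so it cannot inflate the sup norm. Concretely, $\bdP^{\calU,m}_{\mu}[f](x) = \sum_{\bfy} \bfP^{\calU,m}_{\mu}(x,\bfy) f(\bfy)$, and since $\bfP^{\calU,m}_{\mu}$ is a right stochastic matrix (as stated in \Cref{def:Markov_kernel}), the row indexed by $x$ consists of nonnegative entries summing to $1$: indeed $\sum_{\bfy \in \Omega^{\calU,m}} \bfP^{\calU,m}_{\mu}(x,\bfy) = 1$ because the four-step sampling process in \Cref{def:Markov_kernel} outputs \emph{some} element of $\Omega^{\calU,m}$ with probability $1$. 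Thus for each fixed $x$, the value $\bdP^{\calU,m}_{\mu}[f](x)$ is a convex combination of the values $\{f(\bfy)\}_{\bfy \in \Omega^{\calU,m}}$.

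The key step is then the elementary estimate
\[
\left| \bdP^{\calU,m}_{\mu}[f](x) \right|
= \left| \sum_{\bfy \in \Omega^{\calU,m}} \bfP^{\calU,m}_{\mu}(x,\bfy)\, f(\bfy) \right|
\leq \sum_{\bfy \in \Omega^{\calU,m}} \bfP^{\calU,m}_{\mu}(x,\bfy)\, |f(\bfy)|
\leq \sum_{\bfy \in \Omega^{\calU,m}} \bfP^{\calU,m}_{\mu}(x,\bfy)\, \|f\|_{\infty}
= \|f\|_{\infty},
\]
using the triangle inequality, nonnegativity of the matrix entries, the pointwise bound $|f(\bfy)| \le \|f\|_\infty$, and finally $\sum_{\bfy} \bfP^{\calU,m}_{\mu}(x,\bfy) = 1$. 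Since this holds for every $x \in \mathbb{Z}_N^{\bigcup\calU}$, taking the supremum over $x$ yields $\|\bdP^{\calU,m}_{\mu}[f]\|_{\infty} \le \|f\|_{\infty}$, which is exactly the claim.

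There is no real obstacle here: the statement is essentially the observation that a Markov kernel acts as an $L^\infty$-contraction, a standard fact. The only thing one must be slightly careful about is confirming that $\bfP^{\calU,m}_{\mu}$ really is stochastic in the relevant sense — i.e. that the sampling procedure always terminates with an output in $\Omega^{\calU,m}$ — but this is immediate from the definition, since $m \le |\calU|$ guarantees $\calM_{\calU,m}$ is nonempty and each step of the process is well-defined.
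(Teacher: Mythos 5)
Your proof is correct and is essentially the same argument as the paper's, which simply notes that $\bdP^{\calU,m}_{\mu}[f](x)$ is a convex combination of values of $f$. You have merely spelled out the triangle-inequality step and the stochasticity of the kernel, which the paper treats as obvious.
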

\begin{proof}
This is obvious since the value of $\bdP^{\calU,m}_{\mu}[f]$ at any input $x$ is a convex combination of function values of $f$.
\end{proof}

\begin{proposition}
The operator $\bdP^{\calU,m}_{\mu}$ maps a density function on $\Omega^{\calU,m}$ to a density function on $\ZmodN^{\bigcup\calU}$.
\end{proposition}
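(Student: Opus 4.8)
The statement has two parts: that $\bdP^{\calU,m}_{\mu}[f]$ is nonnegative, and that $\Exs{x\in \ZmodN^{\tcup\calU}}{\bdP^{\calU,m}_{\mu}[f](x)}=1$. Nonnegativity is immediate: for any $x$, the value $\bdP^{\calU,m}_{\mu}[f](x)=\sum_{\bfy\in\Omega^{\calU,m}}\bfP^{\calU,m}_{\mu}(x,\bfy)f(\bfy)$ is a nonnegative combination (the matrix entries are probabilities, hence $\ge 0$) of the nonnegative values $f(\bfy)$. So the real content is the normalization, and I would prove it via the probabilistic reading of the operator.

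\textbf{The normalization, via the push-forward of the uniform distribution.} The key observation is that $\bdP^{\calU,m}_{\mu}[f](x)$ equals $\Ex{f(\bfY)\mid x}$, where $\bfY\in\Omega^{\calU,m}$ is the random labeled matching produced by the four-step process of \Cref{def:Markov_kernel} with seed $x$. Averaging over a uniformly random seed $x\in\ZmodN^{\tcup\calU}$ therefore gives $\Exs{x}{\bdP^{\calU,m}_{\mu}[f](x)}=\Ex{f(\bfY)}$, where the expectation on the right is over the full process run with $x$ uniform. I would then show that this full process outputs a \emph{uniformly random} element of $\Omega^{\calU,m}$. Indeed, step (1) samples $\supp(\bfY)=M$ uniformly from $\calM_{\calU,m}$ independently of $x$; and conditioned on $M$, for each edge $e\in M$ the label $\bfY(e)=x_{|e}-w_{e}$ is uniform on $\ZNk$, because $x_{|e}$ is uniform and the vectors $(x_{|e})_{e\in M}$ are mutually independent (the edges of $M$ are pairwise disjoint) and independent of the $w_{e}$'s. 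Hence conditioned on $M$ the labels are i.i.d.\ uniform on $\ZNk$, which is exactly the conditional law of a uniform element of $\Omega^{\calU,m}$. Consequently $\Ex{f(\bfY)}=\Exs{\bfy\in\Omega^{\calU,m}}{f(\bfy)}=1$ since $f$ is a density.

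\textbf{Alternative and main obstacle.} If one prefers a direct matrix computation, one can instead compute the column sum $\sum_{x\in\ZmodN^{\tcup\calU}}\bfP^{\calU,m}_{\mu}(x,\bfy)$ for a fixed $\bfy$ with $M:=\supp(\bfy)$: writing $\bfP^{\calU,m}_{\mu}(x,\bfy)=\frac{1}{|\calM_{\calU,m}|}\prod_{e\in M}\mu(x_{|e}-\bfy(e))$, the coordinates of $x$ outside $\tcup M$ contribute a factor $N^{|\tcup\calU|-mk}$, and the sum over the coordinates on $\tcup M$ factorizes over edges into $\prod_{e\in M}\sum_{u\in\ZNk}\mu(u)=1$; this yields $\sum_x\bfP^{\calU,m}_{\mu}(x,\bfy)=N^{|\tcup\calU|-mk}/|\calM_{\calU,m}|$, and dividing by $N^{|\tcup\calU|}$ gives $1/(N^{mk}|\calM_{\calU,m}|)=1/|\Omega^{\calU,m}|$ independently of $\bfy$, so $\Exs{x}{\bdP^{\calU,m}_{\mu}[f](x)}=\sum_{\bfy}f(\bfy)/|\Omega^{\calU,m}|=\Exs{\bfy}{f(\bfy)}=1$. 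There is no real obstacle here; the only point requiring a little care is the bookkeeping of which coordinates of $x$ each entry depends on (and the identity $|\Omega^{\calU,m}|=|\calM_{\calU,m}|\cdot N^{mk}$), so I would spell that out and otherwise keep the proof to a couple of lines.
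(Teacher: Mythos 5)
Your proof is correct and takes essentially the same route as the paper: observe that when the seed $x$ is uniform, the Markov process of \Cref{def:Markov_kernel} outputs a uniformly random element of $\Omega^{\calU,m}$, and deduce that the pull-back preserves expectations. You spell out the uniformity claim (disjointness of the matching edges makes the $x_{|e}$ independent, hence the labels i.i.d.\ uniform) that the paper simply asserts as "clear," and you add a correct direct column-sum computation as a backup; both are fine.
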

\begin{proof}
Since the matrix $\bfP^{\calU,m}_{\mu}(\cdot,\cdot)$ has only nonnegative entries, the operator $\bdP^{\calU,m}_{\mu}$ preserves nonnegativity. It suffices to check that $\bdP^{\calU,m}_{\mu}$ also preserves expected values. 

Now we need to revisit the definition of the matrix $\bfP^{\calU,m}_{\mu}(\cdot,\cdot)$ in \Cref{def:Markov_kernel}. In the Markov process given in \Cref{def:Markov_kernel}, it is clear that if $x\in \ZmodN^{\bigcup\calU}$ is chosen uniformly at random, then the output $\bfz$ of the process is also uniformly distributed in $\Omega^{\calU,m}$. This means that for any $\bfy\in\Omega^{\calU,m}$, the expected value $\Exu{x\in\ZmodN^{\bigcup\calU}}{\bdP^{\calU,m}_{\mu}(x,\bfy)}$ is equal to $1/\left|\Omega^{\calU,m}\right|$. Therefore, for any \(f\in L^{2}(\Omega^{\calU,m})\), we have
\[
\Exu{x\in \mathbb{Z}_N^{\bigcup\calU}}{\bdP^{\calU,m}_{\mu}[f](x)}=\sum_{\bfy\in\Omega^{\calU,m}}\left(f(\bfy)\Exu{x\in\ZmodN^{\bigcup\calU}}{\bdP^{\calU,m}_{\mu}(x,\bfy)}\right)=\Exu{\bfy\in\Omega^{\calU,m}}{f(\bfy)},
\]
and we conclude that the operator $\bdP^{\calU,m}_{\mu}$ preserves expected values.
\end{proof}

Armed with the operator formalism, we are now ready to prove the main lemma of this subsection that relates the YES and NO distributions (\Cref{lem:relating_yes_no}). The following notation is useful for stating the lemma as well as later throughout this section.
\begin{notation}\label{notation:density_function}
Given a $k$-universe $\calU$, a nonnegative integer $m\leq |\calU|$, and a nonempty set $A\subseteq \Omega^{\calU,m}$, the density function of the uniform distribution on $A$ is denoted by $\phi_{A}:\Omega^{\calU,m}\rightarrow[0,\infty)$, specifically defined as 
\[\phi_{A}(\bfy):=\begin{cases}
\left|\Omega^{\calU,m}\right|/|A|, &\text{if }\bfy\in A,\\
0, &\text{if }\bfy\not\in A.
\end{cases}\]
\end{notation}

\begin{lemma}\label{lem:relating_yes_no}
Fix a $\DIHP(G,n,\alpha,K)$ communication game, where $G=(\calV,\calE,N,(\mu_{\sfe})_{\sfe\in\calE})$. Given a rectangle $R=\prod_{(\sfe,j)\in\calE\times [K]}A^{(\sfe,j)}$, where $A^{(\sfe,j)}\subseteq \Omega^{\calU_{\sfe},\alpha n}$, we have
\[
\calD_{\yes}(R)=\calD_{\no}(R)\cdot \Exu{\widetilde{x}\in \ZmodN^{\calV\times [n]}}{\prod_{(\sfe,j)\in \calE\times [K]}\bdP^{\calU_{\sfe},\alpha n}_{\mu_{\sfe}}\Big[\phi_{A^{(\sfe,j)}}\Big]\circ\proj_{\sfe}(\widetilde{x})}.
\]
\end{lemma}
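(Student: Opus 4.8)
The plan is to compute both $\calD_{\yes}(R)$ and $\calD_{\no}(R)$ directly from their definitions and compare. First I would recall from \Cref{def:communication_game} that $\calD_{\no}$ is the uniform distribution on $\prod_{(\sfe,j)}\Omega^{\calU_{\sfe},\alpha n}$, so that $\calD_{\no}(R)=\prod_{(\sfe,j)}\bigl(|A^{(\sfe,j)}|/|\Omega^{\calU_{\sfe},\alpha n}|\bigr)$, which we may assume is nonzero (otherwise both sides vanish, using that each $\bdP^{\calU_{\sfe},\alpha n}_{\mu_{\sfe}}[\phi_{A^{(\sfe,j)}}]$ is a density, hence nonnegative, and the whole product-expectation is a genuine expectation of a nonnegative function; if some $A^{(\sfe,j)}$ is empty then $\phi_{A^{(\sfe,j)}}\equiv 0$ and the right-hand side is $0$ too). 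For $\calD_{\yes}$, I would unfold the two-step sampling: draw $\widetilde x\in\ZmodN^{\calV\times[n]}$ uniformly, then draw each $\bfy^{(\sfe,j)}$ independently from the probability mass function $\bfP^{\calU_{\sfe},\alpha n}_{\mu_{\sfe}}(\proj_{\sfe}(\widetilde x),\cdot)$. Hence
\[
\calD_{\yes}(R)=\Exu{\widetilde x\in\ZmodN^{\calV\times[n]}}{\prod_{(\sfe,j)\in\calE\times[K]}\;\sum_{\bfy\in A^{(\sfe,j)}}\bfP^{\calU_{\sfe},\alpha n}_{\mu_{\sfe}}\bigl(\proj_{\sfe}(\widetilde x),\bfy\bigr)},
\]
where I have used that conditioned on $\widetilde x$ the players' inputs are independent, so the probability that all of them land in $R$ is the product over $(\sfe,j)$ of the conditional probabilities that $\bfy^{(\sfe,j)}\in A^{(\sfe,j)}$.

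The key step is then to rewrite the inner sum in terms of the pull-back operator and the density $\phi_{A^{(\sfe,j)}}$. By \Cref{notation:density_function}, $\phi_{A^{(\sfe,j)}}(\bfy)=|\Omega^{\calU_{\sfe},\alpha n}|/|A^{(\sfe,j)}|$ for $\bfy\in A^{(\sfe,j)}$ and $0$ otherwise, so
\[
\sum_{\bfy\in A^{(\sfe,j)}}\bfP^{\calU_{\sfe},\alpha n}_{\mu_{\sfe}}\bigl(\proj_{\sfe}(\widetilde x),\bfy\bigr)
=\frac{|A^{(\sfe,j)}|}{|\Omega^{\calU_{\sfe},\alpha n}|}\sum_{\bfy\in\Omega^{\calU_{\sfe},\alpha n}}\bfP^{\calU_{\sfe},\alpha n}_{\mu_{\sfe}}\bigl(\proj_{\sfe}(\widetilde x),\bfy\bigr)\phi_{A^{(\sfe,j)}}(\bfy)
=\frac{|A^{(\sfe,j)}|}{|\Omega^{\calU_{\sfe},\alpha n}|}\cdot\bdP^{\calU_{\sfe},\alpha n}_{\mu_{\sfe}}\bigl[\phi_{A^{(\sfe,j)}}\bigr]\circ\proj_{\sfe}(\widetilde x),
\]
where the last equality is exactly \Cref{notation:linear_operator} applied at the point $\proj_{\sfe}(\widetilde x)\in\ZmodN^{\bigcup\calU_{\sfe}}$. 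Substituting this into the product over $(\sfe,j)$, the scalar factors $|A^{(\sfe,j)}|/|\Omega^{\calU_{\sfe},\alpha n}|$ pull out of the expectation and collectively equal $\calD_{\no}(R)$, leaving precisely
\[
\calD_{\yes}(R)=\calD_{\no}(R)\cdot\Exu{\widetilde x\in\ZmodN^{\calV\times[n]}}{\prod_{(\sfe,j)\in\calE\times[K]}\bdP^{\calU_{\sfe},\alpha n}_{\mu_{\sfe}}\bigl[\phi_{A^{(\sfe,j)}}\bigr]\circ\proj_{\sfe}(\widetilde x)},
\]
as desired.

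This lemma is essentially a bookkeeping computation, so I do not anticipate a genuine obstacle; the only point that needs care is the degenerate case $\calD_{\no}(R)=0$ (some $A^{(\sfe,j)}=\varnothing$), where $\phi_{A^{(\sfe,j)}}$ should be interpreted as the zero function and both sides are $0$. The one structural fact being used implicitly is that the $\bfy^{(\sfe,j)}$ are conditionally independent given $\widetilde x$, which is immediate from the ``for each player independently'' clause in the definition of $\calD_{\yes}$, and that each $\proj_{\sfe}$ depends only on the coordinates of $\widetilde x$ indexed by $\bigcup\calU_{\sfe}\subseteq\calV\times[n]$, so that $\bdP^{\calU_{\sfe},\alpha n}_{\mu_{\sfe}}[\phi_{A^{(\sfe,j)}}]\circ\proj_{\sfe}$ is a well-defined function on $\ZmodN^{\calV\times[n]}$.
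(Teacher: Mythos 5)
Your proposal is correct and follows essentially the same direct calculation as the paper's own proof: unfold $\calD_{\yes}$ via conditional independence given $\widetilde x$, rewrite the inner sum over $A^{(\sfe,j)}$ in terms of $\phi_{A^{(\sfe,j)}}$ and the pull-back operator, and pull the scalar factors out to recognize $\calD_{\no}(R)$. The brief remark on the degenerate case $\calD_{\no}(R)=0$ is a harmless addition the paper leaves implicit.
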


\begin{proof}
The result follows from direct calculation:
\begin{align*}
\calD_{\yes}(R)&=\Exu{\widetilde{x}\in \ZmodN^{\calV\times [n]}}{\prod_{(\sfe,j)\in \calE\times[K]}\left(\sum_{\bfy\in A^{(\sfe,j)}}\bfP^{\calU_{\sfe},\alpha n}_{\mu_{\sfe}}\Big(\proj_{\sfe}(\widetilde{x}),\bfy\Big)\right)}\\
&=\Exu{\widetilde{x}\in \ZmodN^{\calV\times [n]}}{\prod_{(\sfe,j)\in \calE\times[K]}\left(\sum_{\bfy\in \Omega^{\calU_{\sfe},\alpha n}}\bfP^{\calU_{\sfe},\alpha n}_{\mu_{\sfe}}\Big(\proj_{\sfe}(\widetilde{x}),\bfy\Big)\phi_{A^{(\sfe,j)}}(\bfy)\right)}\cdot\prod_{(\sfe,j)\in\calE\times [K]}\frac{\left|A^{(\sfe,j)}\right|}{\left|\Omega^{\calU_{\sfe},\alpha n}\right|}\\
&= \Exu{\widetilde{x}\in \ZmodN^{\calV\times [n]}}{\prod_{(\sfe,j)\in \calE\times [K]}\bdP_{\mu_\sfe}^{\calU_\sfe,\alpha n}\Big[\phi_{A^{(\sfe,j)}}\Big]\Big(\proj_{\sfe}(\widetilde{x})\Big)}\cdot\calD_{\no}(R).
\end{align*}
The definitions of $\calD_{\yes}$ and $\calD_{\no}$ in \Cref{def:communication_game} are used in the first and the third transitions above, respectively.
\end{proof}

\subsection{Separating Structured and Pseudorandom Parts}\label{sec:separating_structure_random}
Given a $\bfz$-global set $A\subseteq \Omega^{\calU,m}$, the goal of this subsection is to express the function $\bdP_{\mu}^{\calU,m}[\phi_{A}]$ as the product of two functions: the structured part and the pseudorandom part. Now, we first give a formal description of what the structured part looks like. 
\begin{definition}
Fix a $k$-universe $\calU$, a nonnegative integer $m$, and a one-wise independent distribution $\mu$ over $\ZNk$. Let $\bfz$ be a restriction on the space $\Omega^{\calU,m}$. We define a density function $g_{\bfz}:\ZmodN^{\bigcup \calU}\rightarrow [0,\infty)$ by
\[
g_{\bfz}(x):=\prod_{e\in \supp(\bfz)}N^{k}\mu\Big(x_{|e}-\bfz(e)\Big).
\]
\end{definition}
Next, we define the pseudorandom part. For that purpose, we introduce the following two notations. 
\begin{notation}
Fix a $k$-universe $\calU$, a nonnegative integer $m\leq |\calU|$, and a matching $M\in \calM_{\calU,\leq m}$. The canonical projection from $\ZmodN^{\bigcup\calU}$ to $\ZmodN^{\bigcup\calU_{\setminus M}}$ is denote by $\proj_{\setminus M}$.
\end{notation}
\begin{notation}
Suppose $\bfz$ is a restriction on a labeled matching space $\Omega^{\calU,m}$, and let $M:=\supp(\bfz)$. For an element $\bfy\in \Omega^{\calU,m}_{\bfz}$, we define $\bfy_{\setminus M}$ to be the restriction of the map $\bfy:\tprod \calU\rightarrow \ZNk\cup\{\nil\}$ to the set $\tprod\calU_{\setminus M}\subseteq \tprod \calU$. Therefore, $\bfy_{\setminus M}$ is an element of $\Omega^{\calU,m}_{\setminus M}$.
\end{notation}

Suppose $\bfz$ is a restriction on a labeled matching space $\Omega^{\calU,m}$. Recall from \Cref{rem:restriction_isomorphic_2} that a subset $A\subseteq \Omega^{\calU,m}_{\bfz}$ is identified with a subset $A_{\rem}\subseteq \Omega^{\calU_{M},\,m-|M|}=\Omega^{\calU,m}_{\setminus M}$, where $M:=\supp(z)$. Therefore, in addition to the density function $\phi_{A}$ defined on $\Omega^{\calU,m}$, we have another density function $\phi_{A_\rem}$ defined on $\Omega^{\calU,m}_{\setminus M}$. The following lemma establishes a relation between the pull-backs of $\phi_{A}$ and $\phi_{A_\rem}$ under the Markov operators.

\begin{lemma}\label{lem:separating_structure_pseudorandom}
Fix a $k$-universe $\calU$, a nonnegative integer $m\leq |\calU|$, and a one-wise independent distribution $\mu$ over $\ZNk$. Let $\bfz$ be a restriction on the space $\Omega^{\calU,m}$ with support $M:=\supp(\bfz)$, and let $A\subseteq \Omega_{\bfz}^{\calU,m}$. Then for every $x\in \ZmodN^{\bigcup \calU}$, we have
\begin{equation}\label{eq:separating_structure_from_randomness}
\bdP^{\calU,m}_{\mu}[\phi_{A}](x)=g_{\bfz}(x)\cdot \bdP_{\mu}^{\calU_{\setminus M},\,m-|M|}\Big[\phi_{A_\rem}\Big]\Big(\proj_{\setminus M}(x)\Big).
\end{equation}
\end{lemma}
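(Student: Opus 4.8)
The plan is to compute both sides of \eqref{eq:separating_structure_from_randomness} by unpacking the definition of the Markov kernel $\bfP^{\calU,m}_{\mu}$ from \Cref{def:Markov_kernel}, and to see that the sampling process for a labeled matching in $\Omega^{\calU,m}$ naturally factors into an ``$M$-part'' governed by the fixed restriction $\bfz$ and a ``$\setminus M$-part'' governed by the free matching on $\calU_{\setminus M}$. Concretely, fix $x\in \ZmodN^{\bigcup\calU}$. By definition,
\[
\bdP^{\calU,m}_{\mu}[\phi_{A}](x)=\sum_{\bfy\in \Omega^{\calU,m}}\bfP^{\calU,m}_{\mu}(x,\bfy)\,\phi_{A}(\bfy)
=\frac{|\Omega^{\calU,m}|}{|A|}\sum_{\bfy\in A}\bfP^{\calU,m}_{\mu}(x,\bfy),
\]
so everything reduces to understanding $\bfP^{\calU,m}_{\mu}(x,\bfy)$ for $\bfy\in A\subseteq\Omega^{\calU,m}_{\bfz}$. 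Writing $\bfy$ uniquely as the pair $(M'=\supp(\bfy), \text{labels})$, the generative process of \Cref{def:Markov_kernel} gives
\[
\bfP^{\calU,m}_{\mu}(x,\bfy)=\frac{1}{|\calM_{\calU,m}|}\prod_{e\in M'}\mu\!\left(x_{|e}-\bfy(e)\right),
\]
since the uniform choice of matching contributes $1/|\calM_{\calU,m}|$ and the independent masks contribute $\prod_e\mu(x_{|e}-\bfy(e))$. Because $\bfy\in\Omega^{\calU,m}_{\bfz}$, the support $M'$ contains $M$ and $\bfy$ agrees with $\bfz$ on $M$, so this product splits as
\[
\Bigl(\prod_{e\in M}\mu\!\left(x_{|e}-\bfz(e)\right)\Bigr)\cdot\Bigl(\prod_{e\in M'\setminus M}\mu\!\left(x_{|e}-\bfy(e)\right)\Bigr),
\]
and the first factor is exactly $N^{-k|M|}g_{\bfz}(x)$.

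**Next I would** carry out the analogous unpacking on the right-hand side. By \Cref{rem:restriction_isomorphic_2}, $A$ corresponds to $A_\rem\subseteq\Omega^{\calU_{\setminus M},\,m-|M|}$ via $\bfy\mapsto\bfy_{\setminus M}$, which is a bijection $A\to A_\rem$; moreover $\supp(\bfy)\setminus M=\supp(\bfy_{\setminus M})$ and $x_{|e}=\bigl(\proj_{\setminus M}(x)\bigr)_{|e}$ for every $e\in\supp(\bfy_{\setminus M})$ since all vertices of such $e$ lie in $\bigcup\calU_{\setminus M}$. Expanding the right-hand side,
\[
g_{\bfz}(x)\cdot\bdP_{\mu}^{\calU_{\setminus M},\,m-|M|}\!\bigl[\phi_{A_\rem}\bigr]\!\bigl(\proj_{\setminus M}(x)\bigr)
=g_{\bfz}(x)\cdot\frac{|\Omega^{\calU_{\setminus M},\,m-|M|}|}{|A_\rem|}\sum_{\bfw\in A_\rem}\frac{1}{|\calM_{\calU_{\setminus M},\,m-|M|}|}\prod_{e\in\supp(\bfw)}\mu\!\left(\bigl(\proj_{\setminus M}(x)\bigr)_{|e}-\bfw(e)\right).
\]
Substituting $\bfw=\bfy_{\setminus M}$ and using the bijection $A\to A_\rem$ together with the identifications above, the sum over $\bfw\in A_\rem$ becomes a sum over $\bfy\in A$ of $\prod_{e\in\supp(\bfy)\setminus M}\mu(x_{|e}-\bfy(e))$, which matches the second factor obtained on the left. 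It then remains to check that the combinatorial prefactors agree, i.e. that
\[
\frac{|\Omega^{\calU,m}|}{|A|}\cdot\frac{1}{|\calM_{\calU,m}|}\cdot N^{-k|M|}
=g\text{-free part of}\quad\frac{|\Omega^{\calU_{\setminus M},\,m-|M|}|}{|A_\rem|}\cdot\frac{1}{|\calM_{\calU_{\setminus M},\,m-|M|}|}.
\]
Since the bijection gives $|A|=|A_\rem|$, this boils down to the identity $|\Omega^{\calU,m}|/|\calM_{\calU,m}|=N^{km}$ (each matching of $m$ edges carries $N^{km}$ labelings), applied to both $(\calU,m)$ and $(\calU_{\setminus M},m-|M|)$, which yields the ratio $N^{km}/N^{k(m-|M|)}=N^{k|M|}$, cancelling the $N^{-k|M|}$ exactly.

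**The main obstacle** — though it is bookkeeping rather than conceptual difficulty — is keeping the combinatorial normalization constants straight: one must be careful that $\Omega^{\calU,m}$ counts matchings-with-labels (so $|\Omega^{\calU,m}|=|\calM_{\calU,m}|\cdot N^{km}$), that the map $\bfy\mapsto\bfy_{\setminus M}$ is genuinely a bijection between $\Omega^{\calU,m}_{\bfz}$ and $\Omega^{\calU,m}_{\setminus M}$ restricting to a bijection $A\to A_\rem$, and that $\mu$ being a probability mass function on $\ZNk$ (not a density) is what makes the factor $N^k$ appear in the definition of $g_{\bfz}$. Once these are pinned down, both sides are seen to equal
\[
\frac{|\Omega^{\calU,m}|}{|A|}\cdot\frac{g_{\bfz}(x)}{N^{k|M|}|\calM_{\calU,m}|}\sum_{\bfy\in A}\ \prod_{e\in\supp(\bfy)\setminus M}\mu\!\left(x_{|e}-\bfy(e)\right),
\]
completing the proof. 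I would also remark, as a sanity check, that specializing to $\bfz=\boldsymbol 0$ (so $M=\emptyset$, $g_{\bfz}\equiv 1$, $A_\rem=A$) recovers the trivial identity, and that both sides are manifestly nonnegative, consistent with $\bdP^{\calU,m}_\mu$ mapping densities to densities.
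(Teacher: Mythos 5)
Your proof is correct, and the core calculation is the same as the paper's: unpack the Markov kernel $\bfP^{\calU,m}_{\mu}(x,\bfy)$ for $\bfy\in\Omega^{\calU,m}_{\bfz}$, factor the product over $\supp(\bfy)$ into the $M$-part (which gives $g_{\bfz}(x)$ up to a power of $N^k$) and the $\supp(\bfy)\setminus M$-part (which reassembles into the smaller kernel), then sum over $A$. The only cosmetic difference is in packaging: the paper organizes the split of $\bfP^{\calU,m}_{\mu}(x,\bfy)$ as a chain of conditional probabilities (matching contains $M$, then labels on $M$ agree, then the rest matches $\bfy_{\setminus M}$), which makes the normalizing constants fall out implicitly, whereas you write down the closed form $\bfP^{\calU,m}_{\mu}(x,\bfy)=\frac{1}{|\calM_{\calU,m}|}\prod_{e\in\supp(\bfy)}\mu(x_{|e}-\bfy(e))$ and balance the prefactors by hand using $|\Omega^{\calU,m}|=N^{km}|\calM_{\calU,m}|$ and $|A|=|A_\rem|$. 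Both routes are sound and both appear in the paper in some form (the closed-form expression is used in the proof of the singular value decomposition lemma), so this is not a genuinely different argument, just a slightly more explicit bookkeeping of the same factorization.
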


\begin{proof}
For elements $x\in \ZmodN^{\bigcup\calU}$ and $\bfy\in \Omega^{\calU,m}_{\bfz}$, consider the value of $\bfP^{\calU,m}_{\mu}(x,\bfy)$, which is the probability of obtaining $\bfy$ in the sampling process of \Cref{def:Markov_kernel}. Since $\bfy(e) = \bfz(e)\neq \nil$ for all $e \in M$, the first requirement for producing $\bfy$ is that the uniformly random matching sampled in Step~1 of \Cref{def:Markov_kernel} contains $M$. This occurs with probability
\[
\frac{N^{k|M|} \cdot \left|\Omega^{\calU,m}_{\setminus M}\right|}{\left|\Omega^{\calU,m}\right|}.
\]
Conditioned on this event, the second requirement is that the labels on each $e \in M$ sampled in Step~3 of \Cref{def:Markov_kernel} match $\bfz(e)$. This occurs with probability
\[
\prod_{e \in M} \mu\big(x_{|e} - \bfz(e)\big).
\]
Finally, conditioned on the first two requirements, the third requirement is that the remainder of the labeled matching coincides with $\bfy_{\setminus M}$. This occurs with probability
\[
\bfP^{\calU_{\setminus M},\,m-|M|}_{\mu}\big(\proj_{\setminus M}(x),\,\bfy_{\setminus M}\big).
\]
Combining these factors, we have
\[
\bfP^{\calU,m}_{\mu}(x,\bfy)
= \frac{N^{k|M|} \cdot \left|\Omega^{\calU,m}_{\setminus M}\right|}{\left|\Omega^{\calU,m}\right|}
\cdot \prod_{e \in M} \mu\big(x_{|e} - \bfz(e)\big)
\cdot \bfP^{\calU_{\setminus M},\,m-|M|}_{\mu}\big(\proj_{\setminus M}(x),\,\bfy_{\setminus M}\big).
\]

It then follows from direct calculation that
\begin{align*}
\bdP^{\calU,m}_{\mu}[\phi_{A}](x)&=\frac{\left|\Omega^{\calU,m}\right|}{|A|}\sum_{\bfy\in A}\bfP^{\calU,m}_{\mu}(x,\bfy)\\
&=\frac{N^{k|M|}\cdot\left|\Omega^{\calU,m}_{\setminus M}\right|}{|A|}\left(\sum_{\bfy\in A}\bfP^{\calU_{\setminus M},\,m-|M|}_{\mu}\Big(\proj_{\setminus M}(x),\bfy_{\setminus M}\Big)\right)\prod_{e\in M}\mu(x_{|e}-\bfz(e))\\
&= \left(\prod_{e\in M}N^{k}\mu(x_{|e}-\bfz(e))\right)\cdot\frac{\left|\Omega^{\calU,m}_{\setminus M}\right|}{|A|}\sum_{\bfy\in A_\rem}\bfP^{\calU_{\setminus M},\,m-|M|}_{\mu}\Big(\proj_{\setminus M}(x),\bfy\Big)\\
&=g_{\bfz}(x)\cdot \bdP_{\mu}^{\calU_{\setminus M},\,m-|M|}\Big[\phi_{A_\rem}\Big]\Big(\proj_{\setminus M}(x)\Big). \qedhere
\end{align*}
\end{proof}
We remark that on the right hand side of \eqref{eq:separating_structure_from_randomness}, the first factor $g_{\bfz}(x)$ is the structured part of the function $\bdP^{\calU,m}_{\mu}[\phi_{A}]$, while the second factor is the pseudorandom part.

\subsection{Analyzing the Structured Part}
As promised in the introductory text of \Cref{sec:global_rectangle}, given a sequence of restrictions $\boldsymbol{\zeta} = \left(\bfz^{(\sfe,j)} \right)_{(\sfe,j)\in \calE\times [K]}$, we need to analyze the ``structured-only product''
\begin{equation}\label{eq:structured_only_product}
\prod_{(\sfe,j)\in \calE\times [K]}g_{\bfz^{(\sfe,j)}}\circ \proj_{\sfe}\, ,
\end{equation}
which is a nonnegative-valued function on $\ZmodN^{\calV\times [n]}$. A natural question is whether the function is still a density function, i.e., whether its expected value is 1.

It turns out that this is not always the case, as suggested by the following simple counterexample. Suppose there are two distinct players $(\sfe_1,j_2),(\sfe_2,j_2)$ such that $$\supp\left(\bfz^{(\sfe_1,j_1)}\right) \bigcap \supp\left(\bfz^{(\sfe_2,j_2)}\right)\neq \emptyset,$$ and $\mu_{\sfe_1}$ and $\mu_{\sfe_2}$ are two one-wise independent distributions with disjoint supports. Then, it is easy to see that the product of the two functions
\[\left(g_{\bfz^{(\sfe_{1},j_{1})}}\circ \proj_{\sfe_{1}}\right)\cdot \left(g_{\bfz^{(\sfe_{2},j_{2})}}\circ \proj_{\sfe_{2}}\right)\]
already collapses to 0, and hence the expected value of \eqref{eq:structured_only_product} is 0 in this case. 

Counterexamples of this type can be fixed by imposing the requirement that the supports in the sequence \(\left(\supp(\bfz^{(\mathsf{e},j)})\right)_{(\sfe,j)\in \calE\times [K]}\) are pairwise disjoint and their union does not contain any cycle. These conditions enable us to make use of the one-wise independent nature of the distributions, and the expected value of \eqref{eq:structured_only_product} is indeed 1 under these conditions. The formal statement and proof follow. 
\begin{lemma}\label{lem:structure_part}
Fix a $\DIHP(G,n,\alpha,K)$ communication game, where $G=(\calV,\calE,N,(\mu_{\sfe})_{\sfe\in \calE})$. Let $\bdzeta=\left(\bfz^{(\sfe, j)}\right)_{(\sfe, j) \in \calE \times [K]}$ be a sequence of restrictions, where each restriction $\bfz^{(\sfe, j)}$ is on $\Omega^{\calU_{\sfe},\alpha n}$. If the hyperedge sets \(\left\{\supp(\bfz^{(\mathsf{e},j)})\right\}_{(\sfe,j)\in \calE\times [K]}\) are pairwise disjoint, and their union does not contain any cycle, then we have
\[
\Exu{\widetilde{x}\in \ZmodN^{\calV\times [n]}}{\prod_{(\sfe,j)\in \calE\times [K]}g_{\bfz^{(\sfe,j)}}\circ \proj_{\sfe}(\widetilde{x})}=1.
\]
\end{lemma}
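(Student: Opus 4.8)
\textbf{Proof plan for Lemma~\ref{lem:structure_part}.}
The plan is to expand the expectation and reduce everything to a statement about a single hyperedge $\sfe$ at a time, exploiting the one-wise independence of the $\mu_{\sfe}$'s together with the cycle-freeness hypothesis. First I would recall that $g_{\bfz^{(\sfe,j)}}(x)=\prod_{e\in\supp(\bfz^{(\sfe,j)})}N^{k}\mu_{\sfe}\!\big(x_{|e}-\bfz^{(\sfe,j)}(e)\big)$, so that the full product is
\[
\prod_{(\sfe,j)\in\calE\times[K]}\ \prod_{e\in\supp(\bfz^{(\sfe,j)})} N^{k}\,\mu_{\sfe}\!\big(\widetilde{x}_{|e}-\bfz^{(\sfe,j)}(e)\big).
\]
Since the hyperedge sets $\supp(\bfz^{(\sfe,j)})$ are pairwise disjoint, this is a product over a collection $M=\bigcup_{(\sfe,j)}\supp(\bfz^{(\sfe,j)})$ of vertex-disjoint-within-each-clique (but globally possibly overlapping) hyperedges, one factor of the form $N^{k}\mu_{\sfe_{e}}(\widetilde{x}_{|e}-c_e)$ per $e\in M$, where $\sfe_e$ and $c_e$ are the hyperedge/label attached to $e$. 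The key structural input is that $M$, viewed as a hypergraph on the ground set $\calV\times[n]$, contains no cycle; hence it has a \emph{leaf}, i.e.\ some $e_0\in M$ containing a vertex $v_0$ that belongs to no other hyperedge of $M$.

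The main step is an induction on $|M|$ peeling off such a leaf. Condition on all coordinates of $\widetilde{x}$ except $\widetilde{x}_{v_0}$: the only factor depending on $\widetilde{x}_{v_0}$ is $N^{k}\mu_{\sfe_{e_0}}(\widetilde{x}_{|e_0}-c_{e_0})$, and as $\widetilde{x}_{v_0}$ ranges uniformly over $\bZ_N$ with the other $k-1$ coordinates of $e_0$ fixed, one-wise independence of $\mu_{\sfe_{e_0}}$ gives
\[
\Exu{\widetilde{x}_{v_0}\in\bZ_N}{N^{k}\,\mu_{\sfe_{e_0}}\!\big(\widetilde{x}_{|e_0}-c_{e_0}\big)}
= N^{k}\cdot\frac{1}{N}\cdot\frac{1}{N^{k-1}}=1,
\]
because summing a one-wise independent mass function over one coordinate (the rest fixed) yields the uniform marginal $1/N^{k-1}$ on the remaining coordinates, independent of their values. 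Thus averaging out $\widetilde{x}_{v_0}$ removes the factor for $e_0$ and reduces to the same statement for $M\setminus\{e_0\}$ on the ground set with $v_0$ deleted — which is again cycle-free, so the induction proceeds. The base case $M=\emptyset$ gives the empty product, whose expectation is $1$.

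The only genuine obstacle is bookkeeping: making sure that after deleting the leaf vertex $v_0$ the remaining configuration still fits the inductive hypothesis (cycle-freeness is inherited by sub-hypergraphs, and disjointness of the $\supp(\bfz^{(\sfe,j)})$ is preserved), and that the factor $N^{k}\mu_{\sfe}(\cdot)$ really is a probability mass function so that Fubini/iterated expectation is legitimate. None of this is deep; the one-wise independence of each $\mu_{\sfe}$ is doing all the work, exactly as the counterexample before the lemma illustrates when cycles or overlapping supports are allowed. I would phrase the induction cleanly by first proving the auxiliary claim that for any cycle-free family of labeled hyperedges $(e,c_e,\sfe_e)$ on a vertex set $W$ with the $\mu_{\sfe_e}$ one-wise independent, $\Exu{\widetilde{x}\in\bZ_N^{W}}{\prod_e N^{k}\mu_{\sfe_e}(\widetilde{x}_{|e}-c_e)}=1$, and then specializing to $W=\calV\times[n]$ and the family coming from $\bdzeta$.
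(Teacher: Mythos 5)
Your plan has a genuine gap at the key step: integrating out a single leaf vertex $v_0$ does \emph{not} eliminate the factor for $e_0$. You claim that
\[
\Exu{\widetilde{x}_{v_0}\in\bZ_N}{N^{k}\,\mu_{\sfe_{e_0}}\!\big(\widetilde{x}_{|e_0}-c_{e_0}\big)}=1
\]
because ``summing a one-wise independent mass function over one coordinate (the rest fixed) yields the uniform marginal $1/N^{k-1}$ on the remaining coordinates.'' This is false: one-wise independence says the marginal of $\mu$ on each \emph{single} coordinate is uniform, not that the $(k-1)$-coordinate marginal (what you get by summing out one coordinate with the others held fixed) is uniform. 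Concretely, take $\mu$ to be the uniform distribution on the diagonal $\{(a,a,\dots,a):a\in\bZ_N\}\subseteq\bZ_N^k$; this is one-wise independent, yet for any fixed $(b_2,\dots,b_k)$ that are not all equal, $\sum_{a\in\bZ_N}\mu(a,b_2,\dots,b_k)=0$, not $1/N^{k-1}$. So after conditioning on everything except $\widetilde{x}_{v_0}$, the expectation over $\widetilde{x}_{v_0}$ is not $1$ and the factor for $e_0$ does not vanish. (The paper's own remark right before the lemma about recovering $x_u-x_v$ from a diagonal $\mu$ is exactly a warning against this kind of step.)

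The correct unit to peel off is an entire hyperedge, not a single vertex. By the cycle-free and disjointness hypotheses, one can order the hyperedges of $E=\bigcup_{(\sfe,j)}\supp(\bfz^{(\sfe,j)})$ so that each is incident to at most one earlier one (the paper does this by showing the line graph of $E$ is a forest). Then when you condition on the predecessors of the last hyperedge $e$, at most \emph{one} coordinate of $e$ is fixed; averaging the factor $N^k\mu_{\langle e\rangle}(\widetilde{x}_{|e}-\bfz(e))$ over the remaining $\geq k-1$ free coordinates simultaneously gives exactly $1$, and here one-wise independence is used correctly (a slice of $\mu$ with a single coordinate fixed to a value has total mass $1/N$). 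Your induction idea and the use of cycle-freeness to locate a leaf are on the right track; you just need to integrate out all of $e_0$'s private vertices in one shot rather than one at a time.
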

\begin{proof}
Let 
\[
E = \bigcup_{(\sfe, j) \in \calE \times [K]} \supp\left(\bfz^{(\sfe,j)}\right)
\]
denote the set of all hyperedges appearing in the restrictions. Let $V(E)$ be the set of vertices in $\calV\times [n]$ covered by $E$. We run a breadth-first search (BFS) on the hypergraph $(V(E),E)$ and rank all edges in $E$ by the time they are discovered in the BFS. This yields a total order $\prec$ on $E$ such that each hyperedge $e\in E$ is incident to at most one vertex that is covered by some hyperedge preceding $e$ in the order. Indeed, if some hyperedge $e$ violates this condition, then by the time the BFS discovers $e$, it has also found a cycle of distinct hyperedges $e=e_{0},e_{1},e_{2},\dots,e_{\ell-1}\in E$ and distinct vertices $v_{0},v_{1},v_{2},\dots,v_{\ell-1}\in V(E)$ such that $e_{i}$ is incident to both $v_{i}$ and $v_{i+1\pmod \ell}$, for each $i\in \{0,1,\dots,\ell-1\}$. The $\ell$ hyperedges $e_{0},\dots,e_{\ell-1}$ thus together cover at most $\ell(k-1)$ vertices, violating the cycle-free assumption on $E$ (see \Cref{subsec:general_notations} for the definition of cycle-freeness in hypergraphs).

Since the edge sets $\supp(\bfz^{(\sfe,j)})$ are pairwise disjoint as $(\sfe,j)$ ranges in $\calE\times [K]$, we can define a map $\widetilde{\bfz}:E\rightarrow\ZNk$ by letting $\widetilde{\bfz}(e)=\bfz^{(\sfe,j)}(e)$ for each $(\sfe,j)\in \calE\times [K]$ and each $e\in \supp(\bfz^{(\sfe,j)})$. For each $e \in E$, let $\langle e \rangle$ denote the original hyperedge $\sfe \in \calE$ such that $e \in \prod \calU_{\sfe}$. Then, we have the identity:
\begin{equation} \label{eq:structure_part_rewrite}
\prod_{(\sfe, j) \in \calE \times [K]} g_{\bfz^{(\sfe, j)}} \circ \proj_{\sfe}(\widetilde{x}) = \prod_{e \in E} N^k \mu_{\langle e \rangle} \Big( \widetilde{x}_{|e} - \widetilde{\bfz}(e) \Big).
\end{equation}
Using the total ordering $\prec$ on $E$, we now analyze the expected value of the right-hand side of~\eqref{eq:structure_part_rewrite}. For any $e \in E$, we have:
\begin{align*}
&\quad \Exu{\widetilde{x} \in \ZmodN^{\calV \times [n]}} { \prod_{e' \preceq e} N^k \mu_{\langle e' \rangle} \Big( \widetilde{x}_{|e'} - \widetilde{\bfz}(e') \Big)  }\\
&= \Exu{\widetilde{x}} { \prod_{e' \prec e} N^k \mu_{\langle e' \rangle} \Big( \widetilde{x}_{|e'} - \widetilde{\bfz}(e') \Big) \cdot 
\Exu{\widetilde{x}}{ N^k \mu_{\langle e \rangle} \Big( \widetilde{x}_{|e} - \widetilde{\bfz}(e) \Big) \,\middle|\, \left( \widetilde{x}_{|e'} \right)_{e' \prec e}}}.
\end{align*}
Note that by our choice of the acyclic ordering $\prec$, we know that conditioning on $(\widetilde{x}_{\mid e'})_{e'\prec e}$ only fixes at most one coordinate of the coordinates in $e$. If no coordinate is fixed, it is easy to see that the inner conditional expectation evaluates to $1$. Otherwise, suppose the $i$-th vertex $v$ of $e$ is fixed to $b\in \ZmodN$ by the conditioning. In this case, the inner conditional expectation equals
\begin{align*}\Exu{\widetilde{x}}{ N^k \mu_{\langle e \rangle} \Big( \widetilde{x}_{|e} - \widetilde{\bfz}(e) \Big) \,\middle|\, \left( \widetilde{x}_{|e'} \right)_{e' \prec e}} = N\cdot \sum_{z\in \bZ_N^k,\, z_{i} = b} \mu_{\langle e\rangle}\left(z - \widetilde{\bfz}(e)\right) = N\cdot \frac{1}{N} =1 .
\end{align*}
due to the one-wise independence of $\mu_{\langle e \rangle}$.
Therefore, the overall expectation remains unchanged when we remove the term associated with $e$.

By applying this argument recursively --- removing the maximal element under $\prec$ at each step --- we can eliminate all hyperedges in $E$ without affecting the expectation. Consequently, the expected value of~\eqref{eq:structure_part_rewrite} is equal to 1, as claimed.
\end{proof}

\subsection{Analyzing the Pseudorandom Part}
For the pseudo-random part $\bdP_{\mu}^{\calU_{\setminus M},\,m-|M|}\Big[\phi_{A_\rem}\Big]$ in the decomposition \eqref{eq:separating_structure_from_randomness}, we will show that it has a good ``Fourier-decay'' property, defined as follows. 
\begin{definition}\label{def:Fourier_decay}
Suppose $\Lambda$ is a ground set, $n$ is a positive integer at most $|\Lambda|$, and $w$ is a real number in the range $(0,|\Lambda|)$. We say a density function $f:\bZ_{N}^{\Lambda}\rightarrow[0,\infty)$ is $(n,w)$-decaying if for every nonnegative integer $\ell\leq |\Lambda|$ we have $\left\|f^{=\ell}\right\|_{2}^{2}\leq F(n,\ell,w)$, where $F(n,\ell,w)$ is defined by
$$F(n,\ell,w)= 
\begin{cases}
\left(\frac{w}{n}\right)^{\ell/2}, &\text{if }0\leq \ell\leq w,\\
\left(\frac{\ell}{8n}\right)^{\ell/2}\cdot 2^{2w}, &\text{if }w<\ell \leq n,\\
0, &\text{if }\ell >n.
\end{cases}
$$
\end{definition}
In particular, it is not hard to see that a probability density function that is, say, $(n,n/2)$-decaying, must be close to uniform. 

The following simple observation ensures that the Fourier weight bound $F(n,\ell,w)$ is convenient to work with.

\begin{proposition}\label{prop:monotonicity_F}
For fixed $n$ and $\ell$, the bound $F(n,\ell,w)$ is monotone increasing in $w$.
\end{proposition}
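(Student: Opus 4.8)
The plan is to prove this by a short case analysis, splitting on the position of $d$ relative to $n$ and of $w$ relative to $d$. First I would dispose of the two degenerate cases. If $d>n$, then $F(n,d,w)=0$ for every $w$, so $F(n,d,\cdot)$ is constant and hence (trivially) non-decreasing. If $d=0$, then the clause ``$0\le\ell\le w$'' of \Cref{def:Fourier_decay} is in force for every admissible $w$, so $F(n,0,w)=(w/n)^{0}=1$ identically, and again there is nothing to prove. It therefore remains to treat the range $1\le d\le n$.

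In this range the function $w\mapsto F(n,d,w)$ is given by two pieces: on $(0,d)$ it equals $\left(\tfrac{d}{8n}\right)^{d/2}2^{2w}$, and on $[d,|\Lambda|)$ it equals $\left(\tfrac{w}{n}\right)^{d/2}$. The first piece is a fixed positive constant times $2^{2w}$, hence strictly increasing in $w$; the second piece is $n^{-d/2}$ times $w^{d/2}$ with exponent $d/2\ge\tfrac12>0$, hence also strictly increasing in $w$. So $F(n,d,\cdot)$ is monotone within each of the two pieces, and the remaining task is to understand the transition at $w=d$.

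To finish, I would compare the one-sided values at $w=d$: the supremum of the first piece over $(0,d)$ is the left-hand limit $\left(\tfrac{d}{8n}\right)^{d/2}2^{2d}$, which after the simplification $\left(\tfrac{d}{8n}\right)^{d/2}2^{2d}=\left(\tfrac{2d}{n}\right)^{d/2}$ becomes an explicit monomial in $d/n$, while the value in force at $w=d$ (coming from the first clause of \Cref{def:Fourier_decay}) is $\left(\tfrac{d}{n}\right)^{d/2}$; the statement then reduces to the one-line comparison of these two monomials. The whole proposition is elementary, and I do not anticipate any genuine obstacle: essentially all the work is in the two routine monotonicity-within-a-piece observations, and the only point calling for a moment's care is this boundary comparison at $w=d$, together with keeping the convention straight that the first branch, not the second, is the one active exactly at $w=d$.
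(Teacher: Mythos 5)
Your case split is fine, the handling of the degenerate cases $d>n$ and $d=0$ is correct, and the within-piece monotonicity observations are right. You also correctly identify the crux as the transition at $w=d$ and correctly compute the two quantities $\left(\tfrac{d}{8n}\right)^{d/2}2^{2d}=\left(\tfrac{2d}{n}\right)^{d/2}$ and $\left(\tfrac{d}{n}\right)^{d/2}$. The problem is that you then wave at "the one-line comparison of these two monomials" without actually doing it — and if you do it, it goes against you. For every $d\geq 1$ the left-hand limit satisfies $\left(\tfrac{2d}{n}\right)^{d/2} = 2^{d/2}\left(\tfrac{d}{n}\right)^{d/2} > \left(\tfrac{d}{n}\right)^{d/2}$, so the function $w\mapsto F(n,d,w)$ has a strictly \emph{downward} jump at $w=d$. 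Concretely, for any $w_1\in\left(\tfrac{3d}{4},d\right)$ one has $F(n,d,w_1)=\left(\tfrac{d}{8n}\right)^{d/2}2^{2w_1}>\left(\tfrac{d}{n}\right)^{d/2}=F(n,d,d)$, so $F(n,d,\cdot)$ is not monotone increasing. This is not a gap you can fill: with the constant $8$ appearing in \Cref{def:Fourier_decay}, the proposition is simply false.

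For what it's worth, your overall strategy mirrors the paper's own proof, which also reduces to the two pieces plus a gluing step — but the paper asserts that $F(n,\ell,\cdot)$ is continuous, and that assertion fails by exactly the same factor $2^{\ell/2}$ at $w=\ell$. Both proofs break at the same spot, and the root cause is the constant in \Cref{def:Fourier_decay}. If the $8n$ in the middle clause is replaced by $16n$, then $\left(\tfrac{\ell}{16n}\right)^{\ell/2}2^{2\ell}=\left(\tfrac{\ell}{n}\right)^{\ell/2}$, the two pieces glue continuously, and both the paper's "continuous and piecewise differentiable" argument and your boundary comparison go through without further changes. So the right conclusion of your analysis should have been: the boundary check fails, which flags an error in the definition, not a small computation left to the reader.
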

\begin{proof}
Note that the function $F(n,\ell,w)$ is continuous and piecewise differentiable in $w$, in the range $w\in (0,\infty)$. The conclusion then follows by verifying that the partial derivative of $F(n,\ell,w)$ in $w$ is always positive.
\end{proof}
We now present the desired lemma that proves Fourier decay properties for pull-backs of density functions of the form $\phi_{A}$, where $A$ is a global set. 
\begin{restatable}{lemma}{globalimpliesdecay}\label{lem:global-decay}
Fix a $k$-universe $\calU$, an integer $m$ and a real number $w>0$ such that $|\calU|\geq 10^{8}k^{3}m$ and $m\geq 2(w+1)$. Let $A\subseteq \Omega^{\calU,m}$ be a global set with $|A|= 2^{-w}\cdot \left|\Omega^{\calU,m}\right|$. Then the density function $\bdP^{\calU,m}_{\mu}[\phi_{A}]$ is $(|\calU|,w)$-decaying, for any one-wise independent distribution $\mu$ over $\ZNk$.
\end{restatable}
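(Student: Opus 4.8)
The plan is to pass through the singular value decomposition of the Markov operator $\bdP^{\calU,m}_{\mu}$ (developed in \Cref{subsec:SVD}) and reduce the bound on $\bigl\lVert(\bdP^{\calU,m}_{\mu}[\phi_{A}])^{=\ell}\bigr\rVert_2^2$ to two separate inputs: an a-priori estimate on the singular values of $\bdP^{\calU,m}_{\mu}$, and a \emph{level-$d$ inequality} on $\Omega^{\calU,m}$ for the density function of a global set. First I would record the Fourier-theoretic description of $\bdP^{\calU,m}_{\mu}$: sampling $\bfy\in\Omega^{\calU,m}$ uniformly, masking vectors $w_e\sim\mu$ independently on its edges, and a free uniform completion on the uncovered coordinates realizes the joint law of $(\bfy,x)$ under which $\bdP^{\calU,m}_{\mu}[\phi_A](x)$ is the conditional expectation of $\phi_A(\bfy)$ given $x$; averaging over $x$ shows that a character $\chi_b$ with $S:=\supp(b)$, $|S|=\ell$, only receives a contribution from labeled matchings with $S\subseteq\tcup\supp(\bfy)$, and that each edge $e$ of $\supp(\bfy)$ contributes a factor $\widehat{\mu}(b_{|e})$. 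Since $\mu$ is one-wise independent, $\widehat{\mu}$ annihilates weight-one frequencies, so only matchings whose every edge meets $S$ in $0$ or $\geq 2$ vertices survive --- in particular the edges meeting $S$ number at most $\ell/2$. Consequently, after diagonalizing, a level-$\ell$ character on $\ZmodN^{\tcup\calU}$ pairs only with the ``level-$(\leq\ell/2)$'' part of $L^2(\Omega^{\calU,m})$, with singular values built from $\widehat{\mu}$-values (each of modulus at most $1$) and from matching-counting quantities: the number of ways a valid size-$t$ partial matching $M_0$ can cover $S$, namely at most (the number of part-distinct partitions of $S$ into blocks of size $\geq 2$) times $|\calU|^{kt-\ell}$, and the probability $\Theta\bigl((m/|\calU|^k)^t\bigr)$ that a uniformly random matching of $\calM_{\calU,m}$ contains $M_0$.

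With the SVD in place, Parseval gives $\bigl\lVert(\bdP^{\calU,m}_{\mu}[\phi_A])^{=\ell}\bigr\rVert_2^2$ as a sum, over $t\leq\ell/2$, of (singular value)$^2$ times the level-$t$ Fourier weight of $\phi_A$ on $\Omega^{\calU,m}$; in particular the cross-terms between different partial matchings covering the same $S$ are absorbed by the orthogonality of the degree decomposition rather than by a lossy triangle inequality. The singular-value bookkeeping then collapses the $|\calU|$-powers: summed against the at most $(k|\calU|)^{\ell}$ choices of $S$, the factors $|\calU|^{kt-\ell}$ and $(m/|\calU|^k)^{t}$ combine so that only an $|\calU|^{-\ell/2}$-worth survives, the hypothesis $|\calU|\geq 10^{8}k^{3}m$ being used to swallow the residual $k,m$ and the $\ell^{O(\ell)}$ counting factor. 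For the level-$t$ Fourier weight of $\phi_A$ I would prove a level-$d$ inequality for global sets: globalness gives, for every restriction $\bfz$, the pointwise bound $|A\cap\Omega^{\calU,m}_{\bfz}|/|\Omega^{\calU,m}_{\bfz}|\leq 2^{|\supp(\bfz)|-w}$ (and trivially this ratio is at most $1$), and feeding this into a global hypercontractive inequality for $L^2(\Omega^{\calU,m})$ (an $L^p$-to-$L^2$ estimate specialized to low degree) yields a level-$t$ weight that is a clean polynomial-in-$w$ quantity when $t\lesssim w$ and a $2^{2w}$-flavoured fallback when $t>w$; this is exactly the source of the factor $2^{2w}$ in the second branch of $F(|\calU|,\ell,w)$.

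Assembling the two inputs: for $\ell\leq w$ the polynomial-in-$w$ branch of the level-$d$ inequality dominates, and after the collapse of the $|\calU|$-powers the total is at most $(w/|\calU|)^{\ell/2}$; for $w<\ell$ one incurs the $2^{2w}$ and an $\ell^{\Theta(\ell)}$ factor and, using $m\geq 2(w+1)$ and $|\calU|\geq 10^{8}k^{3}m$ to fix constants, lands at $\bigl(\ell/(8|\calU|)\bigr)^{\ell/2}2^{2w}$. Finally, for $\ell>|\calU|$ the claimed bound $F=0$ is trivial: a contributing $\chi_b$ needs $\supp(b)\subseteq\tcup\supp(\bfy)$ for some $\bfy\in\Omega^{\calU,m}$, and $|\tcup\supp(\bfy)|=km\leq|\calU|/(10^{8}k^{2})<|\calU|<\ell$, so $(\bdP^{\calU,m}_{\mu}[\phi_A])^{=\ell}$ vanishes identically.

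I expect the main obstacle to be the global level-$d$ inequality on $\Omega^{\calU,m}$ together with its precise interface with the SVD. A black-box global hypercontractive inequality for $L^2(\Omega^{\calU,m})$ of the type available in the literature would only give a weaker $|\calU|^{-\Omega_k(1)\cdot\ell}$-type decay; to obtain the sharp $|\calU|^{-\ell/2}$ exponent appearing in $F$ one has to prove a level-$d$ inequality tailored to this setting --- bounding only the contribution of the relevant ``level-$d$ functions'' rather than the full level-$d$ mass, in the spirit of \cite{FMW25} --- and verifying that tailored inequality, together with the diagonalization of the Markov operator and the matching-counting estimates, is what occupies \Cref{sec:Fourier_decay}.
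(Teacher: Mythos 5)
Your proposal is correct and tracks the paper's own argument in \Cref{subsec:SVD,subsec:proof-of-decay,app:global_hypercontractivity} essentially step for step: diagonalize $\bdP^{\calU,m}_{\mu}$ against the character bases, note that one-wise independence of $\mu$ kills the weight-one frequencies so that a level-$\ell$ character $\chi_b$ only pairs with $\psi_{M,\bfa}$ having $|M|\leq\ell/2$, combine the singular-value bound (\Cref{lem:CGSV_noise_rate,cor:singular_value}) with orthogonality of the $\widetilde{\chi}_b$ to pass the sum over $b$ through to $\sum_{d\leq\ell/2}\lVert P^{=d}_{\frakX}\phi_A\rVert_2^2$, and then invoke the tailored projected level-$d$ inequality (\Cref{thm:level-d-inequality,cor:projected-level-d}) for the $\ell\leq w$ branch versus the trivial $\lVert\phi_A\rVert_2^2=2^w$ bound for $\ell>w$. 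The only cosmetic difference is that you phrase the post-SVD step as ``a sum over $t$ of (singular value)$^2$ times level-$t$ weight,'' whereas the paper takes a uniform maximum of $\lVert\widetilde{\chi}_b\rVert_2^2$ over $b$ and then bounds $\sum_b|\langle\phi_A,\widetilde{\chi}_b/\lVert\widetilde{\chi}_b\rVert_2\rangle|^2$ by the total low-level mass via \Cref{cor:singular_value}(2)(3) --- but this is a presentational choice, not a different argument.
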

\begin{proof}
The proof is deferred to \Cref{sec:Fourier_decay}.
\end{proof}
\subsection{The Hybrid Method}\label{subsec:hybrid}
In order to combine the structured-part result \Cref{lem:structure_part} and the pseudorandom-part result \Cref{lem:global-decay}, we need the following important lemma. The proof of this lemma somewhat resembles the hybrid arguments used in previous works such as \cite{KKS14,CGSV24} to extract two-player communication games from multi-player ones.
\begin{lemma}
    \label{lem:product_mixing}
    For any nonnegative integer $r$, there exists a constant $\delta=\delta(r)>0$ such that the following holds. Suppose $n$ is a sufficiently large integer and $\Lambda$ is a ground set with $|\Lambda|\geq n$. For any density functions $h_{0},h_{1},\dots,h_{r}: \mathbb{Z}_N^{\Lambda}\rightarrow [0,\infty)$ such that 
    \begin{enumerate}[label=(\arabic*)]
        \item $\|h_{i}\|_{\infty}\leq 2^{\delta n^{1/3}}$ for all $i\in \{0,1,\dots,r\}$, and
        \item $h_{i}$ is $\left(n/2,\delta n^{1/3}\right)$-decaying for all $i\in \{1,2,\dots,r\}$,
    \end{enumerate}
    we have 
    \begin{align*}
        \left|\Exu{x\in \mathbb{Z}_N^{\Lambda}}{\prod_{i=0}^r h_i(x)} -1 \right| \leq  0.001. 
    \end{align*}
\end{lemma}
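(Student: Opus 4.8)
\textbf{Proof plan for Lemma~\ref{lem:product_mixing}.} The plan is to use a hybrid (telescoping) argument: we peel off the factors $h_r, h_{r-1}, \dots, h_1$ one at a time, each time using the Fourier-decay hypothesis on the peeled factor together with a uniform $\|\cdot\|_\infty$ bound on the partial product of the remaining factors. The key identity is the telescoping decomposition
\begin{equation*}
\Exu{x}{\prod_{i=0}^{r}h_i(x)} - \Exu{x}{h_0(x)}
= \sum_{t=1}^{r}\Exu{x}{\left(h_t(x)-1\right)\cdot \prod_{i=0}^{t-1}h_i(x)},
\end{equation*}
where we have used that each $h_i$ is a density function, so $\Exs{x}{h_0(x)\cdots h_{t-1}(x)\cdot 1}$ matches up correctly when we replace $h_t$ by its mean $1$ --- more precisely we expand $\prod_{i=0}^{r}h_i = h_0 + \sum_{t=1}^{r}(h_t-1)\prod_{i<t}h_i \cdot \prod_{i>t} 1$ after noting $\Exs{x}{(h_t-1)\prod_{i<t}h_i \prod_{i>t}h_i}$ telescopes once we integrate out $h_{t+1},\dots,h_r$; since $\Exs{x}{h_j} = 1$ and (crucially) we will show $\prod_{i<t}h_i$ has mean close to $1$, the cross terms are controlled. (I would in fact run the hybrid from the top index downward, replacing $h_r,\dots,h_1$ by $1$ successively, which avoids needing independence and only uses $\Exs{x}{h_j}=1$ plus Cauchy--Schwarz-type bounds.) Since $\Exs{x}{h_0(x)} = 1$ already, it suffices to bound each of the $r$ error terms by $0.001/r$, which fixes how small $\delta$ must be chosen (it will depend only on $r$).

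To bound a single error term $\Exs{x}{(h_t(x)-1)\prod_{i<t}h_i(x)}$, write $h_t - 1 = \sum_{\ell\ge 1}h_t^{=\ell}$ (the $\ell=0$ part vanishes since $h_t$ has mean $1$), and expand the partial product $P_{<t} := \prod_{i<t}h_i$ into Fourier levels as well. Then
\begin{equation*}
\left|\Exu{x}{(h_t(x)-1)P_{<t}(x)}\right|
= \left|\sum_{\ell\ge 1}\left\langle h_t^{=\ell},\, P_{<t}^{=\ell}\right\rangle\right|
\le \sum_{\ell\ge 1}\left\|h_t^{=\ell}\right\|_2\cdot\left\|P_{<t}^{=\ell}\right\|_2 .
\end{equation*}
The first factor is bounded by $\sqrt{F(n/2,\ell,\delta n^{1/3})}$ from the decaying hypothesis on $h_t$. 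For the second factor I would use the level-$d$ inequality (Proposition~\ref{prop:level_d_classical}) applied to $P_{<t}$: this gives $\|P_{<t}^{=\ell}\|_2^2 \le \|P_{<t}\|_1^2\cdot(12N\log(2\|P_{<t}\|_2/\|P_{<t}\|_1))^\ell$. Now $\|P_{<t}\|_1 = \Exs{x}{|P_{<t}(x)|} = \Exs{x}{P_{<t}(x)}$, which by induction on the hybrid we may assume is at most, say, $2$ (the inductive claim being that all partial products have mean in $[1/2,2]$ and sup-norm bounded); and $\|P_{<t}\|_2 \le \|P_{<t}\|_\infty \le \prod_{i<t}\|h_i\|_\infty \le 2^{t\delta n^{1/3}} \le 2^{r\delta n^{1/3}}$, so $\log(2\|P_{<t}\|_2/\|P_{<t}\|_1) = O(r\delta n^{1/3})$. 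Hence $\|P_{<t}^{=\ell}\|_2 \le \left(C N r\delta n^{1/3}\right)^{\ell/2}$ for an absolute constant $C$.

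Combining the two factors, the $\ell$-th summand is at most $\sqrt{F(n/2,\ell,\delta n^{1/3})}\cdot(CNr\delta n^{1/3})^{\ell/2}$. For $\ell\le \delta n^{1/3}$ (the first branch of $F$) this is $\left(\frac{\delta n^{1/3}}{n/2}\right)^{\ell/4}\cdot(CNr\delta n^{1/3})^{\ell/2} = \left(2 C^2 N^2 r^2 \delta^3 n^{1/3}\cdot n^{-1/3}\cdot\text{(adjust)}\right)^{\ell/2}$; the point is that the $n^{1/3}$ from $F$'s numerator combines with the $1/(n/2)$ to give $n^{-2/3}$, while $P_{<t}$ contributes only $n^{1/3}$, so the product behaves like $(\text{const}\cdot\delta^{3/2}\cdot n^{-1/6})^{\ell/2}\cdot$\,poly factors --- a geometric series with ratio $o(1)$, summing to $O(\delta^{3/4} n^{-1/12})$ or similar, which is $\le 0.001/(2r)$ for large $n$. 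For $w<\ell\le n/2$ (second branch), $\sqrt{F} = (\ell/(8n))^{\ell/4}2^w$ and the $(CNr\delta n^{1/3})^{\ell/2}$ factor is dominated since $\ell/(8n)\cdot(CNr\delta n^{1/3}) \le CNr\delta n^{-2/3}\ell \le \tfrac12$ once $\ell \le n$; the extra $2^{2w} = 2^{2\delta n^{1/3}}$ is killed because for $\ell > \delta n^{1/3}$ we have $(1/2)^{\ell/2} \le 2^{-\delta n^{1/3}/2}$, and $\ell$ ranges up to $n$ but the tail is still geometrically small. For $\ell > n/2$, $F = 0$ so those terms vanish (note $h_t$ is decaying at scale $n/2$, hence $h_t^{=\ell}=0$ for $\ell>n/2$). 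Summing over $t=1,\dots,r$ gives the claimed $0.001$ once $\delta=\delta(r)$ is small enough and $n$ large.

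\textbf{Main obstacle.} The delicate point is maintaining the inductive invariant that each partial product $P_{<t}$ has $L^1$-norm (equivalently mean) bounded by a constant: this is what lets us invoke Proposition~\ref{prop:level_d_classical} with a harmless $\|P_{<t}\|_1$ in front. This invariant is not given --- it must be bootstrapped from the very bound we are proving, applied to the shorter product $h_0\cdots h_{t-1}$. So the cleanest way to organize the proof is as a single induction on $r$: the statement ``$|\Exs{x}{\prod_{i=0}^r h_i} - 1|\le 0.001$'' for all valid tuples of length $r+1$ implies, in particular, $\Exs{x}{P_{<t}}\in[0.999,1.001]$ for the length-$t$ sub-tuple, and we feed this back into the level-$d$ estimate for the length-$(r+1)$ case. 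One must be careful that $\delta(r)$ is chosen monotonically (smaller for larger $r$) so that a tuple of length $r+1$, when truncated, still satisfies the hypotheses with the smaller $\delta$, which it does since both ``$\|h_i\|_\infty\le 2^{\delta n^{1/3}}$'' and ``$(n/2,\delta n^{1/3})$-decaying'' get \emph{easier} as $\delta$ shrinks (using Proposition~\ref{prop:monotonicity_F} for the latter). The only other mild subtlety is verifying the geometric-series bound uniformly across both branches of $F(n,\ell,w)$, which is routine but needs the constant $C$ and the exponent bookkeeping to be tracked honestly.
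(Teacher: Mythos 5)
Your proposal is correct and follows essentially the same route as the paper's proof: the same telescoping decomposition $\prod_i h_i - 1 = \sum_t (h_t-1)\prod_{i<t}h_i$, the same Cauchy--Schwarz over Fourier levels pairing the decay bound on $h_t$ with the classical level-$d$ inequality (Proposition~\ref{prop:level_d_classical}) applied to the partial product, and the same induction on $r$ to maintain $\|\prod_{i<t}h_i\|_1 = \Theta(1)$, including the observation that $\delta(r)$ must be monotone so truncated tuples still satisfy the hypotheses. The only blemishes are cosmetic arithmetic slips in the geometric-series bookkeeping (the per-level ratio is $O_{N,r}(\delta^{3/4})$ with the powers of $n$ cancelling exactly, not an $n^{-1/12}$ decay), which do not affect the argument since smallness in $\delta$ suffices.
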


\begin{proof} 
We let $\delta(r):=10^{-6}N^{-1}r^{-3}$. Since $\Exs{x\in \ZmodN^{\Lambda}}{h_{0}(x)}=1$, the statement clearly holds for $r=0$. We proceed by induction on $r$. In the following, assume $r\geq 1$ and the result holds for all smaller values of $r$.
Writing
\begin{equation}\label{eq:hybrid_decomposition}
\Exu{x\in \ZmodN^{\Lambda}}{\prod_{i=0}^{r}h_{i}(x)}-1=\Exu{x\in \ZmodN^{\Lambda}}{\prod_{i=0}^{r}h_{i}(x)}-\Exu{x\in\ZmodN^{\Lambda}}{h_{0}(x)}=
\sum_{j=1}^{r}\Exu{x\in \ZmodN^{\Lambda}}{(h_{j}(x)-1)\prod_{i=0}^{j-1}h_{i}(x)},
\end{equation}
it suffices to upper bound the absolute value of each summand in the sum above. 

Using the level decomposition, for each $j\in [r]$ we have 
\begin{align}
    \left|\Exu{x\in \ZmodN^{\Lambda}}{(h_{j}(x)-1)\prod_{i=0}^{j-1}h_{i}(x)}\right|
    &= \left|\sum_{\ell=0}^{|\Lambda|} \left\langle \left(h_j  -1\right)^{=\ell},\left(\prod_{i=0} ^{j-1} h_i\right)^{=\ell}\right\rangle\right| \nonumber\\
    &\leq  \sum_{\ell=1}^{|\Lambda|}\left\lVert \left(h_j -1\right)^{=\ell}\right\rVert_2 \cdot  \left\lVert  \left(\prod_{i=0} ^{j-1} h_i\right)^{=\ell}\right \rVert_2, \label{eq:hybrid_degree_decomposition}
\end{align}
where we use the Cauchy-Schwarz and the fact that $(h_j-1 )^{=0} \equiv 0$ to deduce the inequality. 

We next provide upper bounds on the level-$\ell$ Fourier weights of the two functions $h_j -1$ and $\prod_{i=0} ^{j-1} h_j$ separately. We know that $h_j $ is $\left(n/2,\delta n^{1/3}\right)$-decaying. Plugging in \Cref{def:Fourier_decay}, we obtain
\begin{equation}\label{eq:hybrid_decay}
\left\| (h_j-1)^{=\ell} \right\|_2^2\leq  
\begin{cases}
\left(2\delta n^{-2/3}\right)^{\ell/2}, &\text{if }1\leq \ell\leq \delta n^{1/3},\\
\left(\ell/(4n)\right)^{\ell/2}\cdot 2^{2\delta n^{1/3}}, &\text{if }\delta n^{1/3}<\ell\leq n,\\
0, &\text{if }\ell >n.
\end{cases}
\end{equation}

Since $\delta=\delta(r)\leq \delta(j-1)$ and $n$ is sufficiently large, we may use the induction hypothesis on $h_{0},\dots,h_{j-1}$ and obtain
\begin{align*}
    \left\lVert \prod_{i=0}^{j-1} h_i\right \rVert_1=1+\left(\Exu{x\in \ZmodN^{\Lambda}}{\prod_{i=0}^{j-1}h_{i}(x)}-1\right)\in \left[\frac{1}{2},\frac{3}{2}\right].
\end{align*}
Furthermore, since $\|h_{i}\|_{\infty}\leq 2^{\delta n^{1/3}}$ for all $i\in \{0,1,\dots,j-1\}$, the infinity norm (and hence the 2-norm) of $\prod_{i=0}^{j-1}h_{i}$ is at most $2^{\delta r n^{1/3}}$. We may apply \Cref{prop:level_d_classical} and get
\begin{equation}\label{eq:hybrid_classical_hypercontracitivity}
    \left\lVert \left(\prod_{i=0}^{j-1} h_i\right)^{=\ell}\right\rVert_2^2 \leq\left(\frac{3}{2}\right)^{2}\cdot \left(12 N\cdot (\delta r n^{1/3}+2)\right)^\ell \leq \left(50 N (\delta r n^{1/3}+2)\right)^\ell.
\end{equation}
Plugging \eqref{eq:hybrid_decay} and \eqref{eq:hybrid_classical_hypercontracitivity} into \eqref{eq:hybrid_degree_decomposition}, we get
\begin{align*}
     &\quad\left|\Exu{x\in \ZmodN^{\Lambda}}{(h_{i}(x)-1)\prod_{i=0}^{j-1}h_{i}(x)}\right|\\
    &\leq \sum_{\ell=1}^{\left\lfloor\delta n^{1/3}\right\rfloor} \left(2\delta n^{-2/3}\right)^{\ell/4}\cdot \left(50 N (\delta r n^{1/3}+2)\right)^{\ell/2} +\sum_{\ell=\left\lfloor\delta n^{1/3}\right\rfloor+1}^{n} \left(\frac{\ell}{4n}\right)^{\ell/4}\cdot 2^{\delta n^{1/3}} \cdot 2^{\delta r n^{1/3}} \\
     &\leq \sum_{\ell=1}^{\left\lfloor\delta n^{1/3}\right\rfloor} (100N\delta r)^{\ell/2} + \sum_{\ell=\left\lfloor\delta n^{1/3}\right\rfloor+1}^{\left\lfloor 2^{-4r-4}n\right\rfloor} \left(\frac{\ell }{4n}\cdot 2^{4r+4}\right)^{\delta n^{1/3}/4}+\sum_{\ell=\left\lfloor 2^{-4r-4}n\right\rfloor +1}^{n}4^{-\ell/4}\cdot 2^{(r+1)\delta n^{1/3}}\\
     &\leq \frac{0.001}{r}. 
\end{align*}
for sufficiently large $n$ and our choice of $\delta$. The conclusion then follows by \eqref{eq:hybrid_decomposition}.
\end{proof}

\subsection{Proof of the Discrepancy Bound}
Now, we have all the ingredients needed to prove  Lemma \ref{lem:discrepancy_bound}, restated below. 
\discrepancybound*
\begin{proof}
    Let $\bdzeta=\left(\bfz^{(\sfe,j)}\right)_{(\sfe,j)\in\calE\times [K]}$ and let $R=\prod_{(\sfe,j)\in \calE\times [K]}A^{(\sfe,j)}$. Let $h_{0}:\ZmodN^{\calV\times [n]}\rightarrow [0,\infty)$ be defined by
    \[
    h_{0}(\widetilde{x}):=\prod_{(\sfe,j)\in \calE\times [K]}g_{\bfz^{(\sfe,j)}}\circ \proj_{\sfe}(\widetilde{x}).
    \]
    For each $(\sfe,j)\in \calE\times [K]$, we have a function $h^{(\sfe,j)}:\ZmodN^{\calV\times [n]}\rightarrow[0,\infty)$ defined by
    \[
    h^{(\sfe,j)}:=\bdP_{\mu_\sfe}^{(\calU_{\sfe})_{\setminus M},\,\alpha n-|M|}\left[\phi_{A_\rem^{(\sfe,j)}}\right]\circ \proj_{\setminus M}\circ\proj_{\sfe},
    \]
    where $M$ stands for $\supp(\bfz^{(\sfe,j)})$. 
    
    By \Cref{lem:relating_yes_no,lem:separating_structure_pseudorandom}, we can now write
    \begin{align}
        \frac{\left|\calD_\yes(R) - \calD_\no (R)\right|}{\calD_\no(R)}  &= \left|\Exu{\widetilde{x}\in \ZmodN^{\calV\times [n]}}{\prod_{(\sfe,j)\in \calE\times [K]}\bdP^{\calU_{\sfe},\alpha n}_{\mu_{\sfe}}\Big[\phi_{A^{(\sfe,j)}}\Big]\circ\proj_{\sfe}(\widetilde{x})} -1 \right|\nonumber\\
        &= \left|\Exu{\widetilde{x}\in\ZmodN^{\calV\times [n]}}{h_{0}(\widetilde{x})\cdot\prod_{(\sfe,j)\in \calE\times[K]}h^{(\sfe,j)}(\widetilde{x})}-1\right|.\label{eq:RHS_to_bound}
    \end{align}
    It suffices to upper bound the right hand side above. 

    Note that since the infinity norm of each $g_{\bfz^{(\sfe,j)}}$ is clearly at most $N^{k\left|\supp(\bfz^{\sfe,j})\right|}$, and using the goodness assumption $\sum_{(\sfe,j)\in \calE\times[K]}\left|\supp(\bfz^{\sfe,j})\right|\leq \gamma n^{1/3}$, we have
    \begin{equation}\label{eq:h_0_infinity_norm}
    \left\|h_{0}\right\|_{\infty}\leq \prod_{(\sfe,j)\in\calE\times [K]}\left\|g_{\bfz^{(\sfe,j)}}\right\|_{\infty}\leq N^{k\cdot \gamma n^{1/3}}=2^{(k\log N)\gamma n^{1/3}}.
    \end{equation}
    The goodness assumption also implies
    \begin{align*}
        \left|A_\rem^{(\sfe,j)}\right| / \left|\Omega_{\bfz ^{(\sfe,j)}}^{\calU_\sfe,\alpha n}\right| \geq 2^{-\gamma n^{1/3}}\quad\text{and hence}\quad \left \lVert \phi_{A_\rem^{(\sfe,j)}}\right\rVert_\infty \leq 2^{\gamma n^{1/3}}.
    \end{align*}
    It then follows from \Cref{prop:infinity_norm_contraction} that
    \begin{equation}\label{eq:h_i_infinity_norm}
    \|h^{(\sfe,j)}\|_{\infty}\leq 2^{\gamma n^{1/3}}\quad \text{for each}\quad (\sfe,j)\in\calE\times [K].
    \end{equation}
    Finally, since we always have
    \[
    \left|(\calU_{\sfe})_{\setminus \supp(\bfz^{(\sfe,j)})}\right|=n-\left|\supp(\bfz^{(\sfe,j)})\right|\geq 10^{8}k^3\left(\alpha n-\left|\supp(\bfz^{(\sfe,j)})\right|\right),
    \]
    we may apply \Cref{lem:global-decay} to $A^{(\sfe,j)}_{\rem}$ and obtain that the function 
    \[
    p^{(e,j)}:=\bdP_{\mu_{\sfe}}^{\calU_{\setminus M},\,m-|M|}\left[\phi_{A_\rem^{(\sfe,j)}}\right],\quad\text{ where } M=\supp(\bfz^{(\sfe,j)})
    \]
    is $\left(\left|(\calU_{\sfe})_{\setminus\supp(\bfz^{(\sfe,j)})}\right|,\gamma n^{1/3}\right)$-decaying for all $(\sfe,j)\in \calE\times [K]$. Note that 
    $$\left|(\calU_{\sfe})_{\setminus\supp(\bfz^{(\sfe,j)})}\right|\geq n-\alpha n\geq n/2,$$
    which implies that $p^{(\sfe,j)}$ is $\left(n/2,\gamma n^{1/3}\right)$-decaying by \Cref{prop:monotonicity_F}. Since $h^{(\sfe,j)}$ has the same Fourier spectrum as $p^{(\sfe,j)}$ in the sense that
    \begin{align*}
        \widehat{h^{(\sfe,j)}}(b) = \begin{cases}
        0, &\text{if }\supp(b)\nsubseteq (\bigcup \calU_\sfe)_{\setminus \supp(\bfz^{(\sfe,j)})},\\
        \widehat{p^{(\sfe,j)}}\left(\proj_{\setminus M}\circ\proj_{\sfe}(b)\right), &\text{if }\supp(b)\subseteq (\bigcup \calU_\sfe)_{\setminus \supp(\bfz^{(\sfe,j)})}
        \end{cases}
    \end{align*}
    for any $b\in \ZmodN^{\calV\times [n]}$, we also have that
    \begin{equation}\label{eq:h_i_decaying}
    h^{(\sfe,j)}\quad\text{is}\quad \left(n/2,\gamma n^{1/3}\right)\text{-decaying, for }(\sfe,j)\in\calE\times [K].
    \end{equation}
    Due to the established properties \eqref{eq:h_0_infinity_norm}, \eqref{eq:h_i_infinity_norm} and \eqref{eq:h_i_decaying}, we may now apply \Cref{lem:product_mixing} to $h_{0}$ and $(h^{(\sfe,j)})_{(\sfe,j)\in\calE\times[K]}$. It follows that as long as $\gamma$ is chosen to be less than 
    $$\frac{1}{k\log N}\cdot\delta(K)=\frac{1}{k\log N}\cdot\frac{1}{10^{6}N(K+1)^{2}},$$
    and $n$ is sufficiently large, the right hand side of \eqref{eq:RHS_to_bound} is upper-bounded by $0.001$, as desired.
\end{proof}

\section{Fourier Decay from Global Hypercontractivity}
\label{sec:Fourier_decay}

The goal of this section is to prove~\Cref{lem:global-decay}. The high-level strategy follows the approach of~\cite[Section 4]{FMW25}. The first step is to establish a global hypercontractivity result for functions on $\Omega^{\calU, m}$, formulated as a projected level-$d$ inequality. The second step is to apply this inequality to the density function $\phi_A : \Omega^{\calU, m} \to [0, \infty)$, where $A\subseteq \Omega^{\calU,m}$, and to show how the resulting bound yields the desired Fourier decay for the function $\bdP^{\calU, m}_{\mu}[\phi_A]$.

Since the proof of the projected level-$d$ inequality\footnote{We are aware of an alternative, shorter proof of the projected level-$d$ inequality (yielding slightly weaker parameters) that avoids the machinery of \Cref{app:global_hypercontractivity}. We nevertheless include \Cref{app:global_hypercontractivity}, as it may provide additional conceptual insight.} closely follows that of~\cite[Section 4]{FMW25}, we defer the details to~\Cref{app:global_hypercontractivity}. Nonetheless, in order to formalize the inequality, we must first introduce several preliminary definitions. This preparatory material occupies~\Cref{subsec:Characters_on_Omega,subsec:derivatives,subsec:projections}. Then, in~\Cref{subsec:SVD}, we study structural properties of the operator $\bdP^{\calU, m}_{\mu}$ that enable the projected level-$d$ inequality for $\phi_A$ to imply Fourier decay of $\bdP^{\calU, m}_{\mu}[\phi_A]$. We conclude with a proof of \Cref{lem:global-decay} in \Cref{subsec:proof-of-decay}.

\subsection{Fourier Characters}\label{subsec:Characters_on_Omega}

We begin by introducing a collection of character functions on $\Omega^{\calU,m}$. The characters are indexed by pairs $(M,\bfa)$ where $M\in \calM_{\calU,\leq m}$ is a ``partial matching'' and $\bfa:M\rightarrow\ZNk\setminus\{0\}$ is a labeling on the edges in $M$. To facilitate the definition of the character functions, we first introduce the following probability values.

\begin{definition}\label{def:Psi}
For integers $n,m$ such that $n\geq m\geq 0$, we define $\Psi(n,m,0):=1$, and for $1\leq d\leq m$ we define inductively $\Psi(n,m,d):=mn^{-k}\cdot \Psi(n-1,m-1,d-1)$. 
\end{definition}

It is easy to see that $\Psi(n,m,d)$ is equal to the probability that a fixed partial matching $M\in \calM_{\calU,d}$, where $\calU$ is a $k$-universe of cardinality $n$, is contained in a uniformly random matching drawn from $\calM_{\calU,m}$. We are now ready to define the character functions:

\begin{definition}\label{def:characters}
For a matching $M\in\calM_{\calU,\leq m}$ and a map $\bfa:M\rightarrow \ZNk\setminus\{0\}$, we call $(M,\bfa)$ a character index on $\Omega^{\calU,m}$ and define the character function $\psi_{M,\bfa}:\Omega\rightarrow\bC$ by
$$\psi_{M,\bfa}(\bfy):=\Psi(|\calU|,m,|M|)^{-1/2}\cdot\prod_{e \in M}\chi_{\bfa(e)}(\bfy(e)).$$
Specially, we define $\chi_{\bfa(e)}(\bfy(e)) =0$ when $\bfy(e) = \nil$ (see \Cref{subsec:general_notations} for how $\chi_{\bfa(e)}(\cdot)$ is defined on standard inputs).
\end{definition}

Note that the character functions defined above do not form a complete basis for $L^{2}(\Omega^{\calU,m})$. Nevertheless, they will be sufficient for our purpose. The following proposition shows that these characters form an orthonormal set.

\begin{proposition}\label{prop:ortho_set}
For character indices $(M_{1},\bfa_{1})$ and $(M_{2},\bfa_{2})$ on $\Omega^{\calU,m}$, we have $\inp{ \psi_{M_{1},\bfa_{1}}}{\psi_{M_{2},\bfa_{2}}}=\mathbbm{1}\{M_{1}=M_{2}\text{ and }\bfa_{1}=\bfa_{2}\}$.
\end{proposition}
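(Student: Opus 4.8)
The plan is to compute the inner product $\inp{\psi_{M_1,\bfa_1}}{\psi_{M_2,\bfa_2}}$ directly from the definitions by splitting on whether $M_1=M_2$. Recall that by \Cref{subsec:general_notations} the inner product on $L^2(\Omega^{\calU,m})$ is $\inp{f}{g}=\Exu{\bfy\in\Omega^{\calU,m}}{f(\bfy)\overline{g(\bfy)}}$, i.e.\ the average over a uniformly random labeled matching $\bfy$. Writing $\bfy=(M_{\bfy},\bfc)$ where $M_{\bfy}=\supp(\bfy)$ is a uniformly random element of $\calM_{\calU,m}$ and $\bfc:M_{\bfy}\to\ZNk$ is a uniformly random labeling conditioned on $M_{\bfy}$, the product $\psi_{M_1,\bfa_1}(\bfy)\overline{\psi_{M_2,\bfa_2}(\bfy)}$ vanishes unless both $M_1\subseteq M_{\bfy}$ and $M_2\subseteq M_{\bfy}$ (this is where the convention $\chi_{\bfa(e)}(\nil)=0$ is used). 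So the first step is to observe that the expectation decomposes as: the probability over $M_{\bfy}$ that $M_1\cup M_2\subseteq M_{\bfy}$, times the conditional expectation of $\prod_{e\in M_1}\chi_{\bfa_1(e)}(\bfc(e))\cdot\prod_{e\in M_2}\overline{\chi_{\bfa_2(e)}(\bfc(e))}$, all multiplied by the normalizing factor $\Psi(|\calU|,m,|M_1|)^{-1/2}\Psi(|\calU|,m,|M_2|)^{-1/2}$.

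First I would handle the case $M_1\neq M_2$. Then there is an edge $e_0$ lying in exactly one of $M_1,M_2$ — say $e_0\in M_1\setminus M_2$. Conditioned on $M_{\bfy}\supseteq M_1\cup M_2$, the labels $\{\bfc(e)\}_{e\in M_{\bfy}}$ are independent and uniform over $\ZNk$, so I can isolate the factor $\Exu{\bfc(e_0)\in\ZNk}{\chi_{\bfa_1(e_0)}(\bfc(e_0))}$. Since $\bfa_1(e_0)\neq 0$ (character indices label edges by nonzero group elements), this is the average of a nontrivial additive character over $\ZNk$, which is $0$; hence the whole inner product is $0$. (If instead $M_1,M_2$ are incomparable or equal as sets but differ only because there is no such $e_0$, then $M_1=M_2$ as sets, and we are in the second case.) This is the routine part. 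For the case $M_1=M_2=:M$, the probability that $M_{\bfy}\supseteq M$ is exactly $\Psi(|\calU|,m,|M|)$ by the remark following \Cref{def:Psi}, which cancels the product of the two normalizing factors to leave $1$; and the conditional expectation becomes $\prod_{e\in M}\Exu{\bfc(e)\in\ZNk}{\chi_{\bfa_1(e)-\bfa_2(e)}(\bfc(e))}$, which equals $1$ if $\bfa_1=\bfa_2$ and $0$ otherwise (again because a nontrivial character averages to zero). Combining the two cases gives $\inp{\psi_{M_1,\bfa_1}}{\psi_{M_2,\bfa_2}}=\ind{M_1=M_2\text{ and }\bfa_1=\bfa_2}$, as claimed.

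The only mild subtlety — and the closest thing to an obstacle — is bookkeeping the conditional-independence structure of the labels and confirming that conditioning on $M_{\bfy}\supseteq M_1\cup M_2$ genuinely leaves each label $\bfc(e)$ independent and uniform (which holds because, in a uniformly random element of $\Omega^{\calU,m}$, the support and the labels are independent, and the labels are i.i.d.\ uniform on $\ZNk$ given the support). Once that is in place the computation is a two-line character-orthogonality argument. I do not anticipate needing anything beyond \Cref{def:Psi}, \Cref{def:characters}, and the standard fact that $\Exu{x\in\ZNk}{\chi_b(x)}=\ind{b=0}$.
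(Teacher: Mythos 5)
Your proof is correct and is essentially the same argument as the paper's, differing only in how the cases are organized: the paper first splits on whether $M_1\cup M_2$ is a matching (if not, some $\bfy(e)=\nil$ forces the integrand to vanish; if so, it combines the two labelings into a single $\bfa$ and applies character orthogonality on each edge of $M_1\cup M_2$), whereas you factor out the support probability first and then split on $M_1=M_2$ versus $M_1\neq M_2$. Both reduce to the same two facts — $\Pr[M_{\bfy}\supseteq M]=\Psi(|\calU|,m,|M|)$ and $\Exs{x\in\ZNk}{\chi_b(x)}=\ind{b=0}$ — so this is a cosmetic reorganization rather than a genuinely different route. (The parenthetical ``If instead $M_1,M_2$ are incomparable or equal as sets but differ only because there is no such $e_0$'' is a non-issue: whenever $M_1\neq M_2$ as sets, such an $e_0$ always exists; you could simply delete that sentence.)
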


\begin{proof}
We divide the argument into cases. 

\textbf{Case 1:} $M_{1}\cup M_{2}$ is not a matching. Then for each $\bfy\in \Omega^{\calU,m}$, there exists an hyperedge $e\in M_{1}\cup M_{2}$ with $\bfy(e)=\nil$. This forces $$\psi_{M_{1},\bfa_{1}}(\bfy)\cdot \overline{\psi_{M_{2},\bfa_{2}}(\bfy)}=0$$ for each $\bfy\in \Omega^{\calU,m}$, and thus $\inp{ \psi_{M_{1},\bfa_{1}}}{\psi_{M_{2},\bfa_{2}}}=0$.

\textbf{Case 2:} $M_{1}\cup M_{2}$ is a matching. For each $e\in M_{1}\cup M_{2}$, let 
$$\bfa(e):=\begin{cases}
\bfa_{1}(e), &\text{if }e\in M_{1}\setminus M_{2},\\
\bfa_{1}(e)-\bfa_{2}(e), &\text{if }e\in M_{1}\cap M_{2},\\
-\bfa_{2}(e), &\text{if }e\in M_{2}\setminus M_{1}.
\end{cases}
$$
By \Cref{def:characters} we have
\begin{equation}\label{eq:inner_product_characters}
\inp{ \psi_{M_{1},\bfa_{1}}}{\psi_{M_{2},\bfa_{2}}}= \Psi(|\calU|,m,|M_{1}|)^{-1/2}\cdot\Psi(|\calU|,m,|M_{2}|)^{-1/2}\cdot\Exu{\bfy\in\Omega^{\calU,m}}{\prod_{e\in M_{1}\cup M_{2}}\chi_{\bfa(e)}(\bfy(e))}.
\end{equation}
For a uniformly random $\bfy\in\Omega^{\calU,m}$ conditioned on $\supp(\bfy)\supseteq M_{1}\cup M_{2}$, the labels $\{\bfy(e)\}_{e\in M_{1}\cup M_{2}}$ are independent and each uniformly distributed on $\ZNk$. Therefore, \eqref{eq:inner_product_characters} implies that $\inp{ \psi_{M_{1},\bfa_{1}}}{\psi_{M_{2},\bfa_{2}}}=0$ unless $\bfa(e)=0$ for all $e\in M_{1}\cup M_{2}$. Since $\bfa_{1}(e)\neq 0$ for all $e\in M_{1}$ and $\bfa_{2}(e)\neq 0$ for all $e\in M_{2}$ by the definition of characters, it follows that $\inp{ \psi_{M_{1},\bfa_{1}}}{\psi_{M_{2},\bfa_{2}}}=0$ unless $M_{1}=M_{2}$ and $\bfa_{1}=\bfa_{2}$, in which case the right hand side of \eqref{eq:inner_product_characters} clearly evaluates to 1.
\end{proof}

\subsection{Discrete Derivatives}\label{subsec:derivatives}

In this subsection, we introduce the notion of discrete derivatives for functions over $\Omega^{\calU,m}$, as well as a related notion of globalness. These concepts are necessary for the statement of the projected level-$d$ inequality.

Among all elements of the space $\Omega^{\calU,m}$, we sometimes need to consider those labeled matchings that contain a fixed partial matching $S\in \calM_{\calU,\leq m}$. Such labeled matchings are determined by two choices:

\begin{enumerate}
    \item Assigning labels to the edges in \( S \), which corresponds to selecting an element from \( \Map{S}{\ZNk} \).
    \item Choosing the remaining labeled matching on \( \calU_{\setminus S}\), which corresponds to an element in the space \( \Omega^{\calU_{\setminus S},\,m-|S|} \).
\end{enumerate}
This leads to the following definition:

\begin{definition}\label{def:embedding}
For a matching $S\in \calM_{\calU,\leq m}$, there is a canonical embedding 
\[
\mathfrak{i}:\Omega^{\calU_{\setminus S},\,m-|S|}\times\Map{S}{\ZNk}\hookrightarrow \Omega^{\calU,m}.
\]
This embedding proceeds by mapping a pair $(\bfy,\bfz)$ from the left hand side to the labeled matching $\boldsymbol{\xi}\in \Omega^{\calU,m}$ defined by 
\begin{enumerate}
\item $\boldsymbol{\xi}(e)=\bfy(e)$ for $e\in \tprod (\calU_{\setminus S})$,
\item $\boldsymbol{\xi}(e)=\bfz(e)$ for $e\in S$, and 
\item $\boldsymbol{\xi}(e)=\nil$ for all other $e\in \tprod \calU$. 
\end{enumerate}
\end{definition}

The following definition provides the key gadget for defining discrete derivatives on $\Omega^{\calU,m}$.

\begin{definition}\label{def:gadget_H}
Let $S$ be a finite set. For any map $\bfz\in \Map{S}{\ZNk}$, we define its Hamming weight $\|\bfz\|_{\sfH}$ to be the number of edges $e\in S$ such that $\bfz(e)$ is a nonzero element of the Abelian group $\ZNk$. We define a function $H_{S}:\Map{S}{\ZNk}\rightarrow\mathbb{C}$ by letting
$$H_{S}(\bfz):=(-1)^{\|\bfz\|_{\sfH}}\left(N^{k}-1\right)^{|S|-\|\bfz\|_{\sfH}}.$$
\end{definition}

We are now ready to define the discrete derivative operators.

\begin{definition}\label{def:derivative}
Consider a function $f:\Omega^{\calU,m}\rightarrow\mathbb{C}$. For a matching $S\in \calM_{\calU,\leq m}$ and a label $\bfx\in \Map{S}{\ZNk}$, we define a function $D_{S,\bfx}[f]:\Omega^{\calU,m}_{\setminus S}\rightarrow\mathbb{R}$ by 
\[
D_{S,\bfx}[f](\bfy):=\Exu{\bfz:S\rightarrow\ZNk}{H_{S}(\bfz)\cdot f\Big(\fraki(\bfy,\bfx-\bfz)\Big)},
\]
where the embedding $\fraki:\Omega^{\calU,m}_{\setminus S}\times \Map{S}{\ZNk}\rightarrow\Omega^{\calU,m}$ is as in \Cref{def:embedding}.
\end{definition}

The discrete derivative operators provide a means of measuring the ``globalness'' of a \emph{function} on $\Omega^{\calU,m}$, beyond the previous globalness notion (\Cref{def:global_set}) which is defined only for \emph{subsets} of $\Omega^{\calU,m}$. The following definition of derivative-based globalness parallels \cite[Definition 4.4]{KLM23} and \cite[Definition 4.8]{FMW25}.

\begin{definition}\label{def:derivative-based-global}
Let $r,\lambda>0$ and $1\leq p<\infty$. For a function $f:\Omega^{\calU,m}\rightarrow\mathbb{C}$, we say it is $(r,\lambda,d)$-$L^{p}$-global if for every matching $S\in\calM_{\calU,\leq d}$ and label $\bfx\in \Map{S}{\ZNk}$, we have $\left\|D_{S,\bfx}f\right\|_{p}\leq r^{|S|}\lambda$.
\end{definition}

The following proposition (similar to \cite[Lemma 4.9]{KLM23} and \cite[Proposition 4.10]{FMW25}) shows that the two notions of globalness are closely related: the globalness of a subset in the sense of \Cref{def:global_set} implies the derivative-based globalness of the indication function of the subset as in \Cref{def:derivative-based-global}. 

\begin{proposition}\label{prop:connecting-two-globalness}
Suppose a subset $A\subseteq \Omega^{\calU,m}$ is a global set (in the sense of \Cref{def:global_set}). Let $1_{A}:\Omega^{\calU,m}\rightarrow\{0,1\}$ be the indicator function of $A$. Then for every $1\leq p<\infty$, the function $1_{A}$ is $(4, \|1_{A}\|_{p}, m)$-$L^{p}$-global.
\end{proposition}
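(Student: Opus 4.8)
The plan is to unwind the two definitions and reduce the claim to a purely counting statement about how the relative density of $A$ changes when we further restrict labels. Fix a matching $S \in \calM_{\calU, \leq m}$ with $|S| = d' \le m$ and a label $\bfx \in \Map{S}{\ZNk}$. First I would compute $D_{S,\bfx}[1_A](\bfy)$ explicitly for a fixed $\bfy \in \Omega^{\calU,m}_{\setminus S}$. Expanding the definition of the discrete derivative and of $H_S$, and using that $\Exu{\bfz:S\to\ZNk}{\cdot}$ is the uniform average over $(N^k)^{d'}$ labels, one gets
\[
D_{S,\bfx}[1_A](\bfy) = \frac{1}{N^{kd'}}\sum_{\bfz:S\to\ZNk} (-1)^{\|\bfz\|_\sfH}(N^k-1)^{d'-\|\bfz\|_\sfH}\, 1_A\!\bigl(\fraki(\bfy,\bfx-\bfz)\bigr).
\]
The key combinatorial observation is that this is (up to the normalization $N^{-kd'}$) an inclusion-exclusion sum: grouping $\bfz$ by the set $T \subseteq S$ of edges where $\bfz(e) \neq 0$, and substituting $\bfx - \bfz$, the sum becomes $N^{-kd'}\sum_{T\subseteq S}(-1)^{|T|}(N^k-1)^{d'-|T|}\cdot(\text{number of completions in }A\text{ agreeing with }\bfx\text{ off }T)$, which telescopes to count exactly the labeled matchings in $A$ extending $\fraki(\bfy, \bfx|_{S})$ with a sign-alternating correction. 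Carrying the inclusion-exclusion through, $|D_{S,\bfx}[1_A](\bfy)|$ is bounded by the relative density of $A$ inside the restricted fiber determined by the restriction $\bfz^\ast$ whose support is $S$ and whose labels are given by $\bfx$ (when $\bfx$ has full support $S$), times a factor coming from the $(N^k-1)/N^k$-type ratios.

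Second, I would invoke the globalness hypothesis. By \Cref{def:global_set}, since $A$ is global (i.e. $\boldsymbol{0}$-global), for any restriction $\bfz'$ with $|\supp(\bfz')| = d'$ we have
\[
\frac{\bigl|A \cap \Omega^{\calU,m}_{\bfz'}\bigr|}{\bigl|\Omega^{\calU,m}_{\bfz'}\bigr|} \le 2^{d'}\cdot \frac{|A|}{|\Omega^{\calU,m}|}.
\]
This is exactly the bound that controls the pointwise value of $D_{S,\bfx}[1_A]$ when $\bfx$ is supported on all of $S$; for $\bfx$ supported on a subset, one decomposes the derivative along the edges where $\bfx$ vanishes and applies the inequality to the relevant sub-restriction, which is where the base $4 = 2^2$ (rather than $2$) enters: each such edge contributes one factor of $2$ from the globalness inequality and one more factor of $2$ from absorbing the $(N^k-1)^{\text{power}}/N^{k\cdot\text{power}}$ ratios and the sign cancellations into a clean bound. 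Thus one obtains a pointwise bound $\|D_{S,\bfx}[1_A]\|_\infty \le 4^{d'}\cdot (|A|/|\Omega^{\calU,m}|)$ on the restricted domain.

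Third, to pass from the $L^\infty$ bound to the required $L^p$ bound $\|D_{S,\bfx}[1_A]\|_p \le 4^{d'}\|1_A\|_p$, I would note $\|1_A\|_p^p = |A|/|\Omega^{\calU,m}|$ for all $p$ (since $1_A$ is $0/1$-valued and the norm is with respect to the uniform measure), and more carefully track the $L^p$ norm of the derivative directly: instead of bounding $|D_{S,\bfx}[1_A]|$ pointwise by its worst case, one uses that $\Exu{\bfy}{|D_{S,\bfx}[1_A](\bfy)|^p}$ is controlled by $\Exu{\bfy}{|D_{S,\bfx}[1_A](\bfy)|}\cdot (\text{sup bound})^{p-1}$, and the $L^1$ norm of the derivative is itself bounded by $(\text{ratio of }(N^k-1)\text{-factors})\cdot |A|/|\Omega^{\calU,m}| \le 4^{d'}\|1_A\|_1^{?}$; combining with the sup bound gives $\|D_{S,\bfx}[1_A]\|_p \le 4^{d'}\|1_A\|_p$ after a short Hölder-type interpolation. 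This matches the pattern in \cite[Proposition 4.10]{FMW25} and \cite[Lemma 4.9]{KLM23}.

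The main obstacle I anticipate is the bookkeeping in the inclusion-exclusion expansion of $D_{S,\bfx}[1_A]$: one must correctly identify, after the substitution $\bfx - \bfz \mapsto \bfz$, which restricted fiber $\Omega^{\calU,m}_{\bfz'}$ each term corresponds to, handle the edges where the resulting label is $0$ (which do \emph{not} shrink the fiber in the same way as nonzero labels, because of how $\supp$ and $\Omega^{\calU,m}_{\bfz'}$ interact), and verify that the alternating $(N^k-1)^{d'-|T|}$ weights combine with the fiber-size ratios $|\Omega^{\calU,m}_{\bfz'}|/|\Omega^{\calU,m}|$ so that the globalness inequality applies cleanly term by term. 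Getting the constant to be exactly $4$ (rather than some larger power of $2$) requires being slightly careful that the two sources of factor-$2$ loss — the globalness bound and the $(N^k-1)/N^k$ ratios — are the only ones, and in particular that the signs in $H_S$ produce genuine cancellation rather than merely triangle-inequality bounds on each term. Once the pointwise/$L^1$ estimate is set up correctly, the passage to general $L^p$ is routine.
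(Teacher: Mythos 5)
Your central step — the claimed pointwise bound $\lVert D_{S,\bfx}[1_A]\rVert_\infty \le 4^{|S|}\cdot |A|/|\Omega^{\calU,m}|$ — is false, and this is where the argument breaks. At a fixed $\bfy\in\Omega^{\calU,m}_{\setminus S}$, the derivative $D_{S,\bfx}[1_A](\bfy)=\Exu{\bfz:S\to\ZNk}{H_S(\bfz)\,1_A\bigl(\fraki(\bfy,\bfx-\bfz)\bigr)}$ depends on which fraction of the $(N^k)^{|S|}$-size fiber $\{\fraki(\bfy,\bfz')\}_{\bfz'}$ lies in $A$. Globalness controls the \emph{averaged} density of $A$ over the domains $\Omega^{\calU,m}_{\bfz^*}$, not the density inside the fiber over a \emph{single} $\bfy$: a global set can easily meet a particular fiber in exactly one point, giving $|D_{S,\bfx}[1_A](\bfy)|\approx (N^k-1)^{|S|}/N^{k|S|}\approx 1$ there even when $|A|/|\Omega^{\calU,m}|$ is tiny. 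Relatedly, your worry about whether "the signs in $H_S$ produce genuine cancellation rather than merely triangle-inequality bounds" points in the wrong direction — the proof needs no cancellation and no inclusion-exclusion at all.

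The correct (and much shorter) route, which is the paper's, bounds the $L^p$ norm directly via the triangle inequality. By Minkowski,
\[
\lVert D_{S,\bfx}[1_A]\rVert_p\;\le\;\Exu{\bfz:S\to\ZNk}{\bigl|H_S(\bfz)\bigr|\cdot\bigl\lVert 1_A\bigl(\fraki(\cdot,\bfx-\bfz)\bigr)\bigr\rVert_p}.
\]
For each fixed label $\bfz'$ on $S$, the function $1_A(\fraki(\cdot,\bfz'))$ is the indicator (on $\Omega^{\calU,m}_{\setminus S}$) of a set whose density is, by globalness applied to the restriction supported on $S$ with labels $\bfz'$, at most $2^{|S|}\cdot |A|/|\Omega^{\calU,m}|=2^{|S|}\lVert 1_A\rVert_p^p$. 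Being Boolean-valued, its $L^p$ norm is therefore at most $2^{|S|/p}\lVert 1_A\rVert_p$. Separately, a direct computation gives $\Exu{\bfz}{|H_S(\bfz)|}=\bigl(2(N^k-1)/N^k\bigr)^{|S|}\le 2^{|S|}$. Combining, $\lVert D_{S,\bfx}[1_A]\rVert_p\le 2^{|S|}\cdot 2^{|S|/p}\lVert 1_A\rVert_p\le 4^{|S|}\lVert 1_A\rVert_p$. So the base $4$ arises as $2^{|S|}$ (the $L^1$ mass of $H_S$) times $2^{|S|/p}\le 2^{|S|}$ (the globalness loss transferred into $L^p$), not from two separate factor-$2$ losses absorbed into an $L^\infty$ estimate.
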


\begin{proof}
Consider an arbitrary matching $S\in\calM_{\calU,\leq m}$, and let $\fraki:\Omega^{\calU,m}_{\setminus S}\times \Map{S}{\ZNk}\hookrightarrow \Omega^{\calU,m}$ be the embedding defined in \Cref{def:embedding}. For any fixed $\bfz\in\Map{S}{\ZNk}$, by \Cref{def:global_set}, the function $1_{A}(\fraki(\cdot,\bfz)):\Omega^{\calU,m}_{\setminus S}\rightarrow \{0,1\}$ is the indicator function of a set of size at most $2^{|S|}\cdot|A| \cdot |\Omega^{\calU,m}_{\setminus S}|/|\Omega^{\calU,m}|$. As $1_{A}$ is Boolean-valued, we get 
$\left\|1_{A}(\fraki(\cdot,\bfz))\right\|_{p}^{p}\leq 2^{|S|}\cdot \left\|1_{A}\right\|_{p}^{p}$. Therefore, for any $\bfx\in \Map{S}{\ZNk}$, we have (using the Minkowski inequality)
\begin{align*}
\left\|D_{S,\bfx}[1_{A}]\right\|_{p}&=
\left\|\Exu{\bfz: S\rightarrow \ZNk}{H_{S}(\bfz)\cdot 1_{A}\Big(\mathfrak{i}(\cdot,\bfx-\bfz)\Big)}\right\|_{p}
\leq \Exu{\bfz: S\rightarrow \ZNk}{\left|H_{S}(\bfz)\right|\cdot\left\|1_{A}\Big(\fraki(\cdot,\bfx-\bfz)\Big)\right\|_{p}}\\
&\leq \Exu{\bfz: S\rightarrow \ZNk}{|H_{S}(\bfz)|}\cdot 2^{|S|/p}\cdot \|1_{A}\|_{p}\leq 4^{|S|}\cdot\|1_{A}\|_{p},
\end{align*}
where we used the simple calculation
\[\Exu{\bfz: S\rightarrow \ZNk}{|H_{S}(\bfz)|}=\left(\frac{1}{N^{k}}\cdot (N^{k}-1)+\frac{N^{k}-1}{N^{k}}\cdot 1\right)^{|S|}\leq 2^{|S|}.\qedhere\]
\end{proof}

\subsection{Level-$d$ Projection}\label{subsec:projections}

As suggested in its name, the ``projected level-$d$ inequality'' studies the projection of a function onto the linear subspace spanned by a collection of level-$d$ character functions. This is formalized in the following two definitions.

\begin{definition}\label{def:frakX}
For a nonnegative integer $d$, we denote by $\frakX^{\calU,d}$ the collection of pairs $(M,\bfa)$ where $M\in \calM_{\calU,d}$ and $\bfa\in \Map{M}{\ZNk\setminus \{0\}}$. We also write $\frakX^{\calU,\leq m}:=\bigcup_{0\leq d\leq m}\frakX^{\calU,d}$. 
\end{definition}

\begin{definition}\label{def:level-d-projection}
Define the operator $P_{\frakX}^{=d}:L^{2}(\Omega^{\calU,m})\rightarrow L^{2}(\Omega^{\calU,m})$ to be the orthogonal projection onto the linear subspace of the Hilbert space $L^{2}(\Omega^{\calU,m})$ spanned by the characters $\psi_{M,\bfa}$, where $(M,\bfa)$ ranges in $ \frakX^{\calU,d}$.
\end{definition}

Using the fact that $\{\psi_{M,\bfa}:M\in\calM_{U,d}\text{ and }\bfa\in \Map{M}{\ZNk\setminus\{0\}}\}$ forms an orthonormal set (see~\Cref{prop:ortho_set}), we have the following direct formula for projections.

\begin{proposition}\label{prop:formula-of-projection}
Given an integer $d\geq 0$, for each function $f:\Omega^{U,m}\rightarrow\mathbb{C}$ we have
$$P_{\frakX}^{=d}[f](\bfy):=\sum_{(M,\bfa)\in\frakX^{\calU,d}}\langle f,\psi_{M,\bfa}\rangle \cdot \psi_{M,\bfa}(\bfy).$$
\end{proposition}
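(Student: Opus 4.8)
\textbf{Proof proposal for \Cref{prop:formula-of-projection}.}

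The plan is to verify directly that the right-hand side is the orthogonal projection of $f$ onto the span of $\{\psi_{M,\bfa} : (M,\bfa)\in\frakX^{\calU,d}\}$, using only the orthonormality of the characters established in \Cref{prop:ortho_set}. First I would record the abstract fact from Hilbert space theory: if $\{e_i\}_{i\in I}$ is an orthonormal set in a Hilbert space $\mathcal{H}$ and $V=\lspan\{e_i\}$, then the orthogonal projection $P_V$ onto $V$ is given by $P_V[f]=\sum_{i\in I}\langle f,e_i\rangle e_i$. Here the index set $I=\frakX^{\calU,d}$ is finite (since $\calU$ and $d$ are finite, there are finitely many matchings of size $d$ and finitely many labelings), so there are no convergence issues, and \Cref{prop:ortho_set} tells us precisely that $\{\psi_{M,\bfa}:(M,\bfa)\in\frakX^{\calU,d}\}$ is an orthonormal set in $L^2(\Omega^{\calU,m})$ — note that distinct pairs in $\frakX^{\calU,d}$ give distinct characters, and the inner product is $1$ when the pairs agree and $0$ otherwise.

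The remaining step is then purely formal: define $g:=\sum_{(M,\bfa)\in\frakX^{\calU,d}}\langle f,\psi_{M,\bfa}\rangle\,\psi_{M,\bfa}$, which manifestly lies in the span $V$ of the $\psi_{M,\bfa}$'s. To conclude $g=P^{=d}_{\frakX}[f]$ it suffices to check $f-g\perp V$, i.e.\ $\langle f-g,\psi_{M',\bfa'}\rangle=0$ for every $(M',\bfa')\in\frakX^{\calU,d}$. Expanding $\langle g,\psi_{M',\bfa'}\rangle=\sum_{(M,\bfa)}\langle f,\psi_{M,\bfa}\rangle\langle\psi_{M,\bfa},\psi_{M',\bfa'}\rangle=\langle f,\psi_{M',\bfa'}\rangle$ by orthonormality (\Cref{prop:ortho_set}), so $\langle f-g,\psi_{M',\bfa'}\rangle=0$. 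Since $P^{=d}_{\frakX}$ was defined in \Cref{def:level-d-projection} as exactly the orthogonal projection onto $V$, and orthogonal projections onto a closed subspace are unique, this gives $P^{=d}_{\frakX}[f]=g$, which is the claimed formula.

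I do not expect any genuine obstacle here: the statement is a textbook consequence of having an orthonormal spanning set, and the only nontrivial input — that the $\psi_{M,\bfa}$ indexed by $\frakX^{\calU,d}$ are orthonormal — is furnished verbatim by \Cref{prop:ortho_set}. The one small point to be careful about is that one should check distinct elements of $\frakX^{\calU,d}$ index distinct characters (so that the sum really is over an orthonormal set rather than a multiset), but this is immediate since the map $(M,\bfa)\mapsto\psi_{M,\bfa}$ is injective on $\frakX^{\calU,d}$: from $\psi_{M,\bfa}$ one recovers $\supp$ information determining $M$ and then the labeling $\bfa$. The write-up should be only a few lines.
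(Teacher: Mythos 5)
Your proof is correct and follows the same route as the paper, which simply remarks that the formula is an immediate consequence of the orthonormality in \Cref{prop:ortho_set}; you have merely written out the standard Hilbert-space projection argument in full. No issues.
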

\begin{proof}
    Follows immediately from~\Cref{prop:ortho_set}.
\end{proof}

We are now ready to state the projected level-$d$ inequality, which is proved in \Cref{app:global_hypercontractivity}:

\begin{restatable}[Projected level-$d$ inequality]{theorem}{projleveld}\label{thm:level-d-inequality}
Fix integers $d,m$ such that $|\calU|\geq 2km$ and $m\geq 2(d+1)$. Suppose $f:\Omega^{\calU,m}\rightarrow\bC$ is both $(r,\lambda_{1},d)$-$L^{1}$-global and $(r,\lambda_{2},d)$-$L^{2}$-global, where $d\leq \log(\lambda_{2}/\lambda_{1})$ and $r\geq 1$. Then
\begin{equation}\label{eq:projected-level-d}
\left\|P_{\frakX}^{=d}f\right\|_{2}^{2}\leq \lambda_{1}^{2}\left(\frac{10^{5}r^{2}\log(\lambda_{2}/\lambda_{1})}{d}\right)^{d}.
\end{equation}
\end{restatable}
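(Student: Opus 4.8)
\textbf{Proof proposal for Theorem~\ref{thm:level-d-inequality} (projected level-$d$ inequality).}
The plan is to follow the same two-stage strategy as in~\cite[Section 4]{FMW25}: first prove a \emph{global hypercontractive inequality} for the space $\Omega^{\calU,m}$ relating the $L^4$-norm (or a higher even norm) of $P_{\frakX}^{=d}f$ to a weighted combination of the $L^2$-norms of the discrete derivatives $D_{S,\bfx}f$, and then interpolate between the $L^1$- and $L^2$-globalness hypotheses via H\"older's inequality to extract the stated bound. More precisely, one first establishes an inequality of the shape
\[
\left\|P_{\frakX}^{=d}f\right\|_{2}^{2}
\;\le\;
\bigl(Cr^{2}q\bigr)^{d}\cdot \max_{0\le j\le d}\;\max_{S\in\calM_{\calU,j},\ \bfx}\;\left\|D_{S,\bfx}f\right\|_{q/(q-1)}^{2}\cdot(\text{combinatorial factor})
\]
for each even integer $q\ge 2$, where the combinatorial factor accounts for the normalization constants $\Psi(|\calU|,m,|M|)$ appearing in the characters (Definition~\ref{def:characters}) and for the count of matchings of each size; the condition $|\calU|\ge 2km$ is what guarantees these normalization factors are benign (comparable to $(m/|\calU|)^{|M|}$ up to constants). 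Then, using the $(r,\lambda_1,d)$-$L^1$-globalness and $(r,\lambda_2,d)$-$L^2$-globalness together with the log-convexity of $L^p$-norms, each term $\|D_{S,\bfx}f\|_{q/(q-1)}$ is bounded by $(r^{|S|}\lambda_1)^{1-2/q}(r^{|S|}\lambda_2)^{2/q}$, and choosing $q\approx 2\log(\lambda_2/\lambda_1)$ (which is at least a constant multiple of $d$ by the hypothesis $d\le\log(\lambda_2/\lambda_1)$) optimizes the resulting expression to give the claimed $\lambda_1^2\bigl(10^5 r^2\log(\lambda_2/\lambda_1)/d\bigr)^d$.

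The key technical input—the global hypercontractive inequality on $\Omega^{\calU,m}$—should itself be proved by a \emph{tensorization/induction on the number of edges $m$} argument, reducing to a single-coordinate (one-edge) hypercontractive estimate. Concretely, one views a random element of $\Omega^{\calU,m}$ as being built by sequentially revealing which edge occupies each ``slot'' together with its $\ZNk$-label; conditioning on the matching structure, the labels are uniform and independent over $\ZNk$, so the classical hypercontractivity bound for $\bZ_N$-products (Proposition~\ref{prop:classical_hypercontractivity}) handles the label directions, while the edge-selection directions contribute the $r^{|S|}$ and the $\Psi$-normalization factors. The derivative operators $D_{S,\bfx}$ defined via the gadget $H_S$ (Definitions~\ref{def:gadget_H} and~\ref{def:derivative}) are precisely engineered so that $D_{S,\bfx}f$ isolates the ``restricted Fourier mass'' on matchings containing $S$ with prescribed labels; one checks by a direct Fourier computation (expanding $\psi_{M,\bfa}$ and using orthogonality, Proposition~\ref{prop:ortho_set}) that $\|P_{\frakX}^{=d}f\|_2^2$ can be written as an average of $\|D_{S,\bfx}f\|_2^2$-type quantities over $S\in\calM_{\calU,d}$, which is the identity that seeds the induction.

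The step I expect to be the main obstacle is getting the \emph{constants and the normalization bookkeeping} right in the global hypercontractive inequality—specifically, controlling the ratios $\Psi(|\calU|,m,d)/\Psi(|\calU|,m,d')$ and the matching counts $|\calM_{\calU,d}|$ uniformly in $d\le m$ under only the mild hypothesis $|\calU|\ge 2km$, so that these factors do not degrade the final bound beyond the stated $10^5 r^2$. In~\cite{FMW25} the analogous obstacle is somewhat milder because they work with matchings on a complete graph over $\bF_2$, whereas here the $k$-partite structure and the large cyclic group $\bZ_N$ introduce extra combinatorial factors (powers of $N^k-1$ from $H_S$, and $k$-dependent edge counts) that must be absorbed carefully; the condition $m\ge 2(d+1)$ is what ensures there is enough ``room'' for the derivative arguments to make sense at all levels $j\le d$. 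Once this bookkeeping is in place, the interpolation step and the optimization of $q$ are routine. A shorter alternative route (noted in the footnote to this theorem) would bypass the full hypercontractivity machinery and instead combine the black-box level-$d$ inequality for $L^2(\Omega^{\calU,m})$ derivable from existing global hypercontractivity results with an ad hoc argument, at the cost of worse constants; I would present the self-contained proof in the appendix and defer to that route only if the constants prove intractable.
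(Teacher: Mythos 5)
Your overall architecture --- a derivative-based hypercontractive inequality on $\Omega^{\calU,m}$, followed by interpolation between the $L^1$- and $L^2$-globalness hypotheses and an optimization of the even moment $q\approx\log(\lambda_2/\lambda_1)$ --- matches the paper's strategy, and you correctly identify the roles of the gadget $H_S$, the $\Psi$-normalizations, and the hypotheses $|\calU|\ge 2km$ and $m\ge 2(d+1)$. However, there is a genuine gap in how you propose to prove the hypercontractive inequality itself. You suggest a tensorization/induction on the number of edges $m$, ``conditioning on the matching structure'' so that classical product-space hypercontractivity (\Cref{prop:classical_hypercontractivity}) handles the label directions. The problem is that $\Omega^{\calU,m}$ is not a product space: the edge-selection coordinates are sampled without replacement (an edge's presence in the matching excludes every edge sharing a vertex with it), so the space does not factor as a one-edge space tensored with an $(m-1)$-edge space, and the characters $\psi_{M,\bfa}$ are not products of single-coordinate characters with respect to any product measure. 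Your sketch handles only the (conditionally independent) label directions and leaves the combinatorial edge-selection directions --- which is where the actual difficulty and the $\Psi$-factors live --- unaddressed; you flag the obstacle as ``constants and normalization bookkeeping,'' but the real obstacle is the non-product structure. The paper circumvents it by introducing a genuine product space $\Gamma^{\calU,p}$ with $p=\Psi(|\calU|,m,d)^{1/d}$, proving term-by-term moment-comparison inequalities between Fourier-inverse functions on the two spaces (\Cref{lem:comparison_q_norm,lem:comparison_2_norm}, relying on the near-multiplicativity of $\Psi$ established in \Cref{prop:approximate-product}), and then importing the hypercontractive inequality of \cite{KLM23} for product spaces as a black box (\Cref{lem:KLM-theorem}). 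This comparison device is the missing ingredient in your plan, and it is not mere bookkeeping.

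A second, smaller structural point: the paper does not prove a one-shot inequality bounding $\left\|P_{\frakX}^{=d}f\right\|_{2}^{2}$ by a maximum over derivatives of $f$; it proceeds by induction on the level $d$. The induction is needed because the derivative-based hypercontractive bound (\Cref{lem:derivative-based-inequality}) controls $\left\|P_{\frakX}^{=d}f\right\|_{2q}$ in terms of $\left\|D_{S,\bfx}P_{\frakX}^{=d}f\right\|_{2}$ for nonempty $S$, and these are in turn controlled by applying the level-$(d-|S|)$ statement to $D_{S,\bfx}f$, which inherits globalness with degraded parameters by \Cref{cor:globalness-of-derivative} and commutes with the projection by \Cref{lem:derivative-projection-commute}. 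Your proposed inequality is roughly what one obtains after unrolling this induction, but as stated you have not justified it; if you keep your formulation you will still need to supply the inductive mechanism (or an equivalent) to control the derivative norms appearing on its right-hand side.
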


We note that since the bound provided by~\Cref{thm:level-d-inequality} grows (exponentially) with the level~$d$, a projection onto the span of character functions across multiple levels can be roughly bounded by the level-$d$ bound corresponding to the highest level involved. We formalize this observation in the following corollary.

\begin{corollary}\label{cor:projected-level-d}
Under the same conditions as \Cref{thm:level-d-inequality}, for any real number $\ell\in [1,\log(\lambda_{2}/\lambda_{1})]$ we have
\[
\sum_{d=1}^{\lfloor \ell\rfloor}\left\|P^{=d}_{\frakX}[f]\right\|_{2}^{2}\leq \ell\cdot\lambda_{1}^{2}\left(\frac{10^{5}r^{2}\log(\lambda_{2}/\lambda_{1})}{\ell}\right)^{\ell}.
\]
\end{corollary}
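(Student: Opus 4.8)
The plan is to invoke \Cref{thm:level-d-inequality} term by term and then bound the resulting sum by its largest summand times the number of summands. The first thing I would observe is that it suffices to have the hypotheses of \Cref{thm:level-d-inequality} at the top level $d_0:=\lfloor\ell\rfloor$: if $f$ is $(r,\lambda_1,d_0)$-$L^1$-global and $(r,\lambda_2,d_0)$-$L^2$-global, then it is automatically $(r,\lambda_1,d)$-$L^1$-global and $(r,\lambda_2,d)$-$L^2$-global for every $d\le d_0$, since \Cref{def:derivative-based-global} only quantifies over matchings $S\in\calM_{\calU,\le d}$ and decreasing $d$ only removes constraints. Likewise the side conditions $|\calU|\ge 2km$, $m\ge 2(d+1)$, and $d\le\log(\lambda_2/\lambda_1)$ persist for all $d\le\lfloor\ell\rfloor\le\ell\le\log(\lambda_2/\lambda_1)$. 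Hence \Cref{thm:level-d-inequality} applies to each $d\in\{1,\dots,\lfloor\ell\rfloor\}$ and yields
\[
\left\|P^{=d}_{\frakX}[f]\right\|_2^2 \;\le\; \lambda_1^2\left(\frac{C}{d}\right)^d,
\qquad\text{where }\; C:=10^5\,r^2\log(\lambda_2/\lambda_1).
\]

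Next I would show that the function $x\mapsto (C/x)^x$ is nondecreasing on $[1,\ell]$. Writing $(C/x)^x=\exp\!\big(x(\ln C-\ln x)\big)$, its logarithmic derivative equals $\ln C-\ln x-1$, which is positive exactly when $x<C/e$. Since $r\ge 1$, we have $C=10^5 r^2\log(\lambda_2/\lambda_1)\ge 10^5\log(\lambda_2/\lambda_1)\ge 10^5\,\ell> e\,\ell\ge e\,x$ for every $x\in[1,\ell]$, so $x<C/e$ holds throughout $[1,\ell]$ and the map is indeed nondecreasing there. (The precise base of the logarithm is immaterial, as any reasonable choice keeps $C\gg e\ell$.)

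Finally, combining these two facts: for each $d\in\{1,\dots,\lfloor\ell\rfloor\}$ we get $(C/d)^d\le(C/\lfloor\ell\rfloor)^{\lfloor\ell\rfloor}\le(C/\ell)^{\ell}$, using monotonicity first at $d\le\lfloor\ell\rfloor$ and then at $\lfloor\ell\rfloor\le\ell$. Summing over the $\lfloor\ell\rfloor\le\ell$ values of $d$ gives
\[
\sum_{d=1}^{\lfloor\ell\rfloor}\left\|P^{=d}_{\frakX}[f]\right\|_2^2
\;\le\; \lfloor\ell\rfloor\cdot\lambda_1^2\left(\frac{C}{\ell}\right)^{\ell}
\;\le\; \ell\cdot\lambda_1^2\left(\frac{10^5\,r^2\log(\lambda_2/\lambda_1)}{\ell}\right)^{\ell},
\]
as claimed. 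There is no real obstacle here; the only point worth stating carefully — and hence the closest thing to a subtlety — is the monotonicity of the globalness hypotheses and side conditions in $d$, which lets a single set of assumptions at level $\lfloor\ell\rfloor$ drive the application of \Cref{thm:level-d-inequality} at every lower level. Everything else is the elementary calculus of $x\mapsto(C/x)^x$.
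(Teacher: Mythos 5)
Your proof is correct and uses essentially the same argument as the paper: apply \Cref{thm:level-d-inequality} at each level $d\le\lfloor\ell\rfloor$ and then use the monotonicity of $x\mapsto\lambda_1^2(C/x)^x$ on $[1,\log(\lambda_2/\lambda_1)]$ (where $C=10^5r^2\log(\lambda_2/\lambda_1)$) to bound each summand by the $d=\ell$ value. The paper's proof states this monotonicity as a one-line observation; your version simply spells out the logarithmic-derivative calculation and the downward-monotonicity of the globalness hypotheses in $d$, both of which the paper leaves implicit.
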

\begin{proof}
It suffices to observe that the expression on the right hand side of \eqref{eq:projected-level-d} is monotone increasing in $d$ in the range $1\leq d\leq \log(\lambda_{2}/\lambda_{1})$, even when $d$ takes non-integral values.
\end{proof}

\subsection{Singular Value Decomposition}\label{subsec:SVD}

Having formalized the projected level-$d$ inequality on $\Omega^{\calU,m}$, we now proceed to the second step of this section: analyzing the operator $\bdP^{\calU,m}_{\mu}$. A key property of this operator is that it admits a clean singular value decomposition: it maps character functions on $\Omega^{\calU,m}$ to scalar multiples of character functions on $\ZmodN^{\bigcup \calU}$. We remark that this map is not 1-to-1, as distinct characters on $\Omega^{\calU,m}$ may be mapped to the same character on $\ZmodN^{\bigcup \calU}$.

Given a character index $b \in \ZmodN^{\bigcup \calU}$, the following definition identifies all characters on $\Omega^{\calU,m}$ that are mapped by $\bdP^{\calU,m}_{\mu}$ to nonzero scalar multiples of $\chi_b$.

\begin{definition}\label{def:character_group}
Given a character index \(b \in \mathbb{Z}_N^{\bigcup\calU}\), define \(\calX^{\circ}(b)\) to be the collection of character indices \((M, \bfa) \in \frakX^{\calU, m}\) satisfying:
\begin{enumerate}[label=(\arabic*)]
    \item For every \(e \in M\), we have \(b_{|e} = \bfa(e)\);
    \item For every \(e \in M\), the vector \(b_{|e} \in \mathbb{Z}_N^k\) is nonzero on at least two coordinates;
    \item If a vertex \(v \in \tcup\calU\) does not appear in any edge of \(M\), then \(b_v = 0\).
\end{enumerate}
If condition (2) is removed, the resulting (larger) collection is denoted \(\calX(b)\), omitting the superscript \(\circ\). Note that $\calX(b)$ is empty if $|\supp(b)|> km$. 
\end{definition}

The condition (2) in \Cref{def:character_group} is especially important since it captures the property that not too many characters on $\Omega^{\calU,m}$ are associated with a same character on $\ZmodN^{\calV\times [n]}$ under $\bdP^{\calU,m}_{\mu}$, as shown by the following lemma. 

\begin{lemma}[{\cite[Lemma 6.9]{CGSV24}}]\label{lem:CGSV_noise_rate}
For any character index $b\in \ZmodN^{\calU}$ with $|\supp(b)|=d$, if $|\calU|>100km$ then
\[
\sum_{(M,\bfa)\in \calX^{\circ}(b)}\Psi(|\calU|,m,|M|)\leq \left(\frac{100k^{3}md}{|\calU|^{2}}\right)^{d/2}.
\]
\end{lemma}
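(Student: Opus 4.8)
The plan is a direct combinatorial estimate, with no appeal to other lemmas. First note that by condition~(1) of \Cref{def:character_group}, the labeling $\bfa$ is completely determined by the matching $M$ (namely $\bfa(e)=b_{|e}$), so it suffices to bound $\sum_{M}\Psi(|\calU|,m,|M|)$, where $M$ ranges over all matchings in $\calM_{\calU,\leq m}$ such that (a)~every vertex of $\supp(b)$ lies on some edge of $M$ (condition~(3)), and (b)~every edge $e\in M$ has $b_{|e}$ nonzero on at least two coordinates (condition~(2)), equivalently $e$ contains at least two vertices of $\supp(b)$. We may assume $d\le km$, since otherwise no matching of size at most $m$ can cover $\supp(b)$ and the left-hand side is $0$. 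Since each edge of an admissible $M$ covers between $2$ and $k$ of the $d$ vertices of $\supp(b)$, and together they cover all of them, the size $\ell:=|M|$ satisfies $\lceil d/k\rceil\le \ell\le \lfloor d/2\rfloor$, and I would stratify the sum according to $\ell$.

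For a fixed $\ell$, I will bound the number of admissible $M$ with $|M|=\ell$. Such an $M$ is specified (injectively) by two pieces of data: (i)~the partition $\{T_{1},\dots,T_{\ell}\}$ of $\supp(b)$ into the \emph{unordered} blocks $T_{j}$, where $T_{j}$ records the supp-vertices lying on the $j$-th edge --- each $|T_{j}|\ge 2$, with at most one element per part $U_{i}$ --- and (ii)~for each block, a choice of $k-|T_{j}|$ further vertices, one in each part not already met by $T_{j}$, completing it to an edge. Dropping the ``at most one per part'' constraint and the disjointness of the padding vertices only increases the count, so the number of choices in~(i) is at most the number $P(d,\ell)$ of partitions of a $d$-element set into $\ell$ blocks each of size $\ge 2$, and the number in~(ii) is at most $\prod_{j}|\calU|^{\,k-|T_{j}|}=|\calU|^{\,k\ell-d}$. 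A routine estimate gives $P(d,\ell)\le \frac{d!}{(d-2\ell)!\,2^{\ell}\,\ell!}\cdot\ell^{\,d-2\ell}$ (fix two distinguished elements in each block, then distribute the remaining $d-2\ell$). Combining this with $\Psi(|\calU|,m,\ell)=\prod_{i=0}^{\ell-1}\frac{m-i}{(|\calU|-i)^{k}}\le m^{\ell}(|\calU|-m)^{-k\ell}\le (2m)^{\ell}|\calU|^{-k\ell}$, where the last step uses $|\calU|>100km$ to absorb $\bigl(1-m/|\calU|\bigr)^{-k}<2$, yields
\[
\sum_{\substack{(M,\bfa)\in\calX^{\circ}(b)\\ |M|=\ell}}\Psi(|\calU|,m,|M|)\ \le\ |\calU|^{-d}\cdot \frac{d!}{(d-2\ell)!\,\ell!}\,\ell^{\,d-2\ell}\,m^{\ell}\ =:\ |\calU|^{-d}\,b_{\ell}.
\]

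It then remains to show $\sum_{\ell}b_{\ell}\le (100k^{3}md)^{d/2}$, which gives the claimed bound $|\calU|^{-d}(100k^{3}md)^{d/2}=(100k^{3}md/|\calU|^{2})^{d/2}$. At the top of the range $\ell=\lfloor d/2\rfloor$, Stirling's formula gives $b_{\lfloor d/2\rfloor}\le \frac{d!}{(\lfloor d/2\rfloor)!}\,m^{d/2}=O\bigl((2dm/e)^{d/2}\bigr)$, which is comfortably below $(100k^{3}md)^{d/2}$, with a large multiplicative slack. For smaller $\ell$ I would control the ratio $b_{\ell}/b_{\ell+1}$ and invoke $|\calU|>100km$ (hence $d\le km$ and $\ell\le m$) to show that $\sum_{\ell}b_{\ell}$ is at most a constant multiple of its largest term, so the slack in the endpoint estimate absorbs the geometric tail.

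The main obstacle is the tightness of the count in step~(i): one must enumerate the \emph{unordered} block partition $P(d,\ell)$ rather than an ordered version, as an extra factor of $\ell!$ would break the bound. This is exactly what makes the $d^{\Theta(d)}$-growth of the combinatorial count cancel against the $|\calU|^{-d}$ produced by $\Psi$, leaving the correct scaling $(md/|\calU|^{2})^{d/2}$. A secondary technical point is verifying, uniformly in $\ell$ (and in the profile $(|S_{i}|)_{i\in[k]}$ of $\supp(b)$ across the parts, which matters when $d<km$), that $b_{\ell}$ is geometrically dominated by the endpoint term; here again the hypothesis $|\calU|>100km$ is precisely what keeps all the relevant ratios bounded.
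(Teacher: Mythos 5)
The paper does not prove this lemma; it cites it verbatim as \cite[Lemma 6.9]{CGSV24}. Your proposal is a self-contained reconstruction, so I can only assess correctness, not compare to a proof the paper does not contain.

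Your setup is right: by condition~(1) the label $\bfa$ is forced by $M$, conditions~(2)--(3) force the edges of $M$ to partition $\supp(b)$ into blocks of size $\ge 2$, and the count of admissible $M$ of size $\ell$ is at most $P(d,\ell)\,|\calU|^{k\ell-d}$ with $P(d,\ell)\le \frac{d!}{(d-2\ell)!\,2^\ell\,\ell!}\,\ell^{d-2\ell}$; combined with $\Psi(|\calU|,m,\ell)\le (2m)^\ell|\calU|^{-k\ell}$ (valid since $|\calU|>100km$) this gives the per-$\ell$ contribution $|\calU|^{-d}\,b_\ell$ with $b_\ell=\frac{d!}{(d-2\ell)!\,\ell!}\,\ell^{d-2\ell}\,m^\ell$.

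The gap is in the last step. You claim the endpoint $\ell=\lfloor d/2\rfloor$ is the largest term and that $|\calU|>100km$ keeps the ratios $b_\ell/b_{\ell+1}$ bounded so the rest is a geometric tail. Neither is correct: $|\calU|$ has already been factored out, so the hypothesis is irrelevant to these ratios, and writing $j=d-2\ell$ one has $b_\ell/b_{\ell+1}=\tfrac{\ell+1}{j(j-1)m}\bigl(\tfrac{\ell}{\ell+1}\bigr)^j$, which for $\ell$ near $d/2$ (small $j$) is $\gg 1$. So $b_{\lfloor d/2\rfloor}$ is in general much \emph{smaller} than nearby terms; the maximum of $b_\ell$ sits at an interior $\ell^*$, roughly $d/2-\Theta\bigl(\sqrt{d/m}\bigr)$. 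The endpoint-domination argument cannot be carried out as stated. (Also, for odd $d$ the endpoint term is $\frac{d!}{((d-1)/2)!}\cdot\frac{d-1}{2}\,m^{(d-1)/2}$, which is not $\le\frac{d!}{(\lfloor d/2\rfloor)!}m^{d/2}$ in general.)

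The summation is instead cleanly handled by a trinomial bound, avoiding any term-ratio analysis. Using $\ell!\le (d/2)^\ell$ and $\ell^{d-2\ell}\le(d/2)^{d-2\ell}$,
\[
b_\ell \;=\; \frac{d!}{(d-2\ell)!\,(\ell!)^2}\cdot\ell!\cdot\ell^{d-2\ell}\,m^\ell
\;\le\; \frac{d!}{(d-2\ell)!\,(\ell!)^2}\Bigl(\tfrac{d}{2}\Bigr)^{d-2\ell}\Bigl(\tfrac{dm}{2}\Bigr)^{\ell},
\]
so
\[
\sum_{\ell} b_\ell
\;\le\;
\sum_{j+\ell_1+\ell_2=d}\frac{d!}{j!\,\ell_1!\,\ell_2!}\Bigl(\tfrac{d}{2}\Bigr)^{j}\Bigl(\sqrt{\tfrac{dm}{2}}\Bigr)^{\ell_1}\Bigl(\sqrt{\tfrac{dm}{2}}\Bigr)^{\ell_2}
\;=\;\Bigl(\tfrac{d}{2}+\sqrt{2dm}\Bigr)^{d}.
\]
Since $d\le km$, one has $\bigl(\tfrac{d}{2}+\sqrt{2dm}\bigr)^2\le \tfrac{kdm}{4}+dm\sqrt{2k}+2dm\le 4k\,dm\le 100k^3\,dm$, which gives $\sum_\ell b_\ell\le (100k^3 md)^{d/2}$ and hence the claimed bound after multiplying by $|\calU|^{-d}$. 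Substituting this step for your endpoint argument closes the gap; the rest of your proof is sound.
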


We are now ready to present the singular value decomposition lemma. In particular, in the proof of the lemma, we will show why the one-wise independence of $\mu$ corresponds to the condition (2) in \Cref{def:character_group}.

\begin{lemma}[Singular value decomposition]\label{lem:singular_value}
Fix a character index $b\in \ZmodN^{\bigcup\calU}$ and a one-wise independent distribution $\mu$ over $\ZNk$. There exist complex numbers $R(M,\bfa)$, each with absolute value at most 1, for all character indices $(M,\bfa)\in \calX^{\circ}(b)$, such that
\[
\Big\langle \bdP^{\calU,m}_{\mu}[f], \chi_b \Big\rangle_{L^2\big(\mathbb{Z}_N^{\bigcup\calU}\big)}
= \sum_{(M, \bfa) \in \calX^{\circ}(b)} R(M, \bfa)\cdot\left\langle f, \sqrt{\Psi(|\calU|,m,|M|)}\psi_{M,\bfa} \right\rangle
\]
holds for any function $f\in L^{2}(\Omega^{\calU,m})$.
\end{lemma}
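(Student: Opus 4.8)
The plan is to compute the Fourier coefficient $\langle \bdP^{\calU,m}_\mu[f],\chi_b\rangle$ directly from the definition of the Markov kernel in \Cref{def:Markov_kernel}, expand everything in characters, and read off the claimed singular-value formula. First I would write
\[
\Big\langle \bdP^{\calU,m}_{\mu}[f], \chi_b \Big\rangle
=\Exu{x\in\ZmodN^{\bigcup\calU}}{\sum_{\bfy\in\Omega^{\calU,m}}\bfP^{\calU,m}_\mu(x,\bfy)\,f(\bfy)\,\overline{\chi_b(x)}}
=\sum_{\bfy\in\Omega^{\calU,m}}f(\bfy)\cdot\Exu{x}{\bfP^{\calU,m}_\mu(x,\bfy)\,\overline{\chi_b(x)}}.
\]
Fixing $\bfy$ with $\supp(\bfy)=M_{\bfy}$, the quantity $\bfP^{\calU,m}_\mu(x,\bfy)$ is, by \Cref{def:Markov_kernel}, equal to $\frac{1}{|\calM_{\calU,m}|}\prod_{e\in M_{\bfy}}\mu(x_{|e}-\bfy(e))$ when $M_{\bfy}$ is a legal matching (which it is, since $\bfy\in\Omega^{\calU,m}$), and this depends on $x$ only through the coordinates in $\bigcup M_{\bfy}$. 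The key expansion step: use the inverse Fourier transform of $\mu$ on $\ZNk$, writing $\mu(z)=\sum_{a\in\ZNk}\widehat\mu(a)\chi_a(z)$ for each factor; then $\prod_{e\in M_{\bfy}}\mu(x_{|e}-\bfy(e))=\sum_{\bfa:M_{\bfy}\to\ZNk}\big(\prod_e\widehat\mu(\bfa(e))\overline{\chi_{\bfa(e)}(\bfy(e))}\big)\cdot\chi_{(\bigoplus_e\bfa(e))}(x)$, where the character of $x$ has $e$-th block equal to $\bfa(e)$. Taking $\Exu{x}{\,\cdot\,\overline{\chi_b(x)}}$ kills all terms except those where the resulting character of $x$ equals $\chi_b$ exactly — i.e., $b_{|e}=\bfa(e)$ for $e\in M_{\bfy}$ and $b_v=0$ for $v\notin\bigcup M_{\bfy}$. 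That forces $\bfa$ to be determined by $b$ and $M_{\bfy}$, and it forces $(M_{\bfy},b_{|\cdot})$ to be a legal character index; in particular $\widehat\mu$ contributes the scalar.

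Next I would rearrange the sum over $\bfy$ by grouping according to $M=\supp(\bfy)$: the surviving contribution is $\sum_{M}\big(\prod_{e\in M}\widehat\mu(b_{|e})\big)\cdot\frac{1}{|\calM_{\calU,m}|}\cdot N^{k(m-|M|)}\cdot\big(\sum_{\bfy:\supp(\bfy)=M}f(\bfy)\,\overline{\prod_{e\in M}\chi_{b_{|e}}(\bfy(e))}\big)$, where the counting factor comes from the number of $\bfy\in\Omega^{\calU,m}$ extending a fixed labeling on $M$. Comparing with \Cref{def:characters}, the inner sum over $\bfy$ is exactly $|\Omega^{\calU,m}|\cdot\Psi(|\calU|,m,|M|)^{1/2}\cdot\langle f,\psi_{M,b_{|\cdot}}\rangle$ up to the normalization bookkeeping, so after collecting constants one gets $\langle \bdP^{\calU,m}_\mu[f],\chi_b\rangle=\sum_{M}\big(\prod_{e\in M}\widehat\mu(b_{|e})\big)\cdot\langle f,\sqrt{\Psi(|\calU|,m,|M|)}\,\psi_{M,b_{|\cdot}}\rangle$, where $M$ ranges over matchings such that $(M,b_{|\cdot})$ is a character index and $b$ is supported exactly on $\bigcup M$. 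Setting $R(M,\bfa):=\prod_{e\in M}\widehat\mu(\bfa(e))$ for $(M,\bfa)\in\calX^\circ(b)$ gives the statement, once we check (i) each $\bfa(e)=b_{|e}$ is nonzero (true since $b_{|e}$ has support of size $\ge 2$ by condition (2), so in particular is nonzero, matching the requirement $\bfa(e)\in\ZNk\setminus\{0\}$ in $\frakX^{\calU,m}$), and (ii) $|\widehat\mu(a)|\le 1$ for all $a$, which holds because $\widehat\mu(a)=\Exu{z\sim\mu}{\chi_a(z)}$ is an average of unit-modulus complex numbers; hence $|R(M,\bfa)|\le 1$.

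The one place requiring genuine care — and what I expect to be the main obstacle — is verifying that the index set is precisely $\calX^\circ(b)$ rather than the larger $\calX(b)$, i.e., that condition (2) of \Cref{def:character_group} (each $b_{|e}$ nonzero on at least two coordinates) is automatically enforced. This is exactly where one-wise independence of $\mu$ enters: if $b_{|e}$ were supported on a single coordinate $i$, then $\widehat\mu(b_{|e})=\Exu{z\sim\mu}{\chi_{b_{|e}}(z)}=\Exu{z_i}{\exp(2\pi\mathrm i (b_{|e})_i z_i/N)}=0$ because the $i$-th marginal of $\mu$ is uniform on $\ZmodN$; such terms vanish and may be dropped from the sum, leaving only $\calX^\circ(b)$. (The case $b_{|e}=0$ is excluded separately since $\bfa(e)$ must be nonzero for a character index.) I would state this as a short lemma or inline observation. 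The remaining work — tracking the normalization constants $\Psi$, the factor $N^{k(m-|M|)}$, and $|\Omega^{\calU,m}|$ versus $|\calM_{\calU,m}|$ — is routine bookkeeping using \Cref{def:Psi} and \Cref{def:characters}, and I would relegate it to a couple of displayed lines.
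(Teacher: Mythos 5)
Your proposal is correct and follows essentially the same route as the paper: expand the Markov kernel, compute the Fourier coefficient by matching characters over $x$ (which enforces $\bfa(e)=b_{|e}$ on each matched edge and $b_v=0$ off the matching), re-index in terms of the $\psi_{M,\bfa}$, and then use one-wise independence of $\mu$ to observe that $\widehat{\mu}(t)=0$ whenever $t$ has exactly one nonzero coordinate, which drops the sum from $\calX(b)$ to $\calX^\circ(b)$. One small caution: you define $\widehat{\mu}$ both via the inversion formula $\mu(z)=\sum_a\widehat{\mu}(a)\chi_a(z)$ (which gives $\widehat{\mu}(a)=N^{-k}\sum_z\mu(z)\overline{\chi_a(z)}$) and via $\Exs{z\sim\mu}{\chi_a(z)}$; these differ by a factor of $N^k$ and a conjugation, and the $N^{k(m-|M|)}$ counting factor you introduce must absorb that discrepancy, exactly as the paper does by working with $r(t):=N^k\widehat{\mu}(t)$ throughout.
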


\begin{proof}
Recall from Section~\ref{subsec:general_notations} that $\mu(\cdot)$ denotes the probability mass function of $\mu$. For each character index \(t \in \mathbb{Z}_N^k\), define
\[
r(t) := \sum_{z \in \mathbb{Z}_N^k} \mu(z)\, \overline{\chi_t(z)} = N^k \cdot \widehat{\mu}(t).
\]
Since \(\mu\) is assumed to be one-wise independent, we know that \(\widehat{\mu}(t) = 0\) for any \(t \in \mathbb{Z}_N^k\) with exactly one nonzero coordinate. Thus, \(r(t) = 0\) for such \(t\). Additionally, we have \(r(0) = 1\) and \(|r(t)| \leq 1\) for all \(t\) since $|r(t)|\leq  \mathbb{E}_{z \sim \mu}\, \left|\overline{\chi_t(z)}\right|=1$.

From \Cref{def:Markov_kernel}, we can express
\[
\bfP^{\calU,m}_{\mu}(x, \bfy) = \frac{1}{|\calM_{\calU, m}|} \prod_{e \in \supp(\bfy)} \mu(x_{|e} - \bfy(e)).
\]
It is easy to see that \(\left\langle \bfP^{\calU,m}_{\mu}(\cdot,\bfy),\chi_{b}\right\rangle_{L^{2}\big(\mathbb{Z}_N^{\bigcup\calU}\big)}\neq 0\) only if 
\begin{equation}\label{eq:calYb}
\text{any }v\in \supp(b)\text{ is contained in some edge }e\in \supp(\bfy).
\end{equation}
Let $\calY(b)$ denote the collection of $\bfy\in\Omega^{\calU,m}$ satisfying \eqref{eq:calYb}. We can then calculate:
\begin{align}
\Big\langle \bdP^{\calU,m}_{\mu}[f], \chi_b \Big\rangle_{L^2\big(\mathbb{Z}_N^{\bigcup\calU}\big)}
&=  \sum_{\bfy\in\Omega^{\calU,m}}f(\bfy)\left\langle\bfP^{\calU,m}_{\mu}(\cdot,\bfy),\chi_{b}\right\rangle_{L^{2}\big(\mathbb{Z}_N^{\bigcup\calU}\big)}\nonumber \\
&= \frac{1}{|\calM_{\calU, m}|} \sum_{\bfy\in \calY(b)}\left(f(\bfy) \prod_{e \in \supp(\bfy)} \Exu{z \in \mathbb{Z}_N^k} {\mu(z - \bfy(e))\, \overline{\chi_{b_{|e}}(z)} }\right) \nonumber \\
&= \frac{1}{|\calM_{\calU, m}|} \sum_{\bfy\in \calY(b)}\left( f(\bfy) \prod_{e \in \supp(\bfy)} \widehat{\mu}(b_{|e})\, \overline{\chi_{b_{|e}}(\bfy(e))}\right) \nonumber \\
&= \frac{1}{|\calM_{\calU, m}| \cdot (N^k)^{\alpha n}} \sum_{\bfy\in \calY(b)}\left( f(\bfy) \prod_{e \in \supp(\bfy)} r(b_{|e})\, \overline{\chi_{b_{|e}}(\bfy(e))}\right)\nonumber \\
&= \Exu{\bfy \in \Omega^{\calU, m}}{\ind{\bfy\in\calY(b)}\cdot f(\bfy) \prod_{e \in \supp(\bfy)} r(b_{|e})\, \overline{\chi_{b_{|e}}(\bfy(e))}}. \label{eq:SVD_x_side}
\end{align}

Now consider a fixed \(\bfy \in \Omega^{\calU,m}\). If $\bfy\not\in\calY(b)$, we have $\psi_{M,\bfa}(\bfy)=0$ for all $(M,\bfa)\in \calX(b)$. If $\bfy\in\calY(b)$, we have $\psi_{M,\bfa}(\bfy)\neq 0$ for exactly one $(M,\bfa)\in \calX(b)$. --- namely, the unique pair \((M,\bfa)\) with \(M \subseteq \supp(\bfy)\). Therefore, we have:
\begin{equation}\label{eq:SVD_y_side}
\sum_{(M,\bfa) \in \calX(b)} R(M,\bfa) \sqrt{\Psi(|\calU|,m,|M|)}\cdot\overline{\psi_{M,\bfa}(\bfy)}
= \ind{\bfy\in\calY(b)}\cdot \prod_{e \in \supp(\bfy)} r(b_{|e})\,\overline{\chi_{b_{|e}}(\bfy(e))},
\end{equation}
where \(R(M,\bfa) := \prod_{e \in M} r(\bfa(e))\). Since each $r(\bfa(e))$ has absolute value at most 1, so does $R(M,\bfa)$.

Comparing \eqref{eq:SVD_x_side} and \eqref{eq:SVD_y_side}, we conclude:
\[
\Big\langle \bdP^{\calU,m}_{\mu}[f], \chi_b \Big\rangle_{L^2\big(\mathbb{Z}_N^{\bigcup\calU}\big)}
= \sum_{(M, \bfa) \in \calX(b)} R(M, \bfa)\cdot\left\langle f, \sqrt{\Psi(|\calU|,m,|M|)}\psi_{M,\bfa} \right\rangle.
\]
Finally, observe that any \((M, \bfa) \in \calX(b)\) with \(R(M, \bfa) \neq 0\) must satisfy condition (2) of \Cref{def:character_group}, and hence lies in \(\calX^{\circ}(b)\), as desired.
\end{proof}

The following corollary summarizes what we have revealed about the singular value decomposition of the operator $\bdP^{\calU,m}_{\mu}$.

\begin{corollary}\label{cor:singular_value}
Fix a one-wise independent distribution \(\mu\) over \(\mathbb{Z}_N^k\). For any character index \(b \in \mathbb{Z}_N^{\bigcup \calU}\), the adjoint operator \(\left(\bdP^{\calU,m}_{\mu}\right)^{\dagger} : L^2\big(\mathbb{Z}_N^{\bigcup \calU}\big) \rightarrow L^2(\Omega^{\calU,m})\) maps the character function \(\chi_b \in L^2\big(\mathbb{Z}_N^{\bigcup \calU}\big)\) to a function \(\widetilde{\chi}_b \in L^2(\Omega^{\calU,m})\) satisfying the following:
\begin{enumerate}[label=(\arabic*)]
    \item If \(|\supp(b)| = \ell\), then \(\|\widetilde{\chi}_b\|_2^2 \leq \left(100k^3 m \ell\, |\calU|^{-2}\right)^{\ell/2}\).
    
    \item If \(|\supp(b)| = \ell\), then \(\widetilde{\chi}_b\) lies in the linear subspace of \(L^2(\Omega^{\calU,m})\) spanned by the character functions \(\psi_{M,\bfa}\) for \((M,\bfa) \in \bigcup_{d=1}^{\lfloor \ell/2\rfloor}\frakX^{\calU, d}\).

    \item For distinct indices \(b, b' \in \mathbb{Z}_N^{\bigcup \calU}\), the functions \(\widetilde{\chi}_b\) and \(\widetilde{\chi}_{b'}\) are orthogonal.
\end{enumerate}
\end{corollary}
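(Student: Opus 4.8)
\textbf{Proof plan for \Cref{cor:singular_value}.}
The three items should follow from \Cref{lem:singular_value} together with the combinatorial bound of \Cref{lem:CGSV_noise_rate}. First I would unpack the definition of the adjoint: by \Cref{lem:singular_value}, for every $f\in L^2(\Omega^{\calU,m})$ we have $\langle \bdP^{\calU,m}_{\mu}[f],\chi_b\rangle = \sum_{(M,\bfa)\in\calX^{\circ}(b)} R(M,\bfa)\,\langle f,\sqrt{\Psi(|\calU|,m,|M|)}\,\psi_{M,\bfa}\rangle$, and since $\langle \bdP^{\calU,m}_{\mu}[f],\chi_b\rangle = \langle f,(\bdP^{\calU,m}_{\mu})^{\dagger}[\chi_b]\rangle$ for all $f$, this identifies
\[
\widetilde{\chi}_b := (\bdP^{\calU,m}_{\mu})^{\dagger}[\chi_b] = \sum_{(M,\bfa)\in\calX^{\circ}(b)} \overline{R(M,\bfa)}\,\sqrt{\Psi(|\calU|,m,|M|)}\,\psi_{M,\bfa}.
\]
(One has to be slightly careful about the complex conjugation coming from the Hermitian inner product, but this does not affect any of the three claims.)

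For item (1), I would use the orthonormality of the characters $\psi_{M,\bfa}$ (\Cref{prop:ortho_set}) to compute $\|\widetilde{\chi}_b\|_2^2 = \sum_{(M,\bfa)\in\calX^{\circ}(b)} |R(M,\bfa)|^2\,\Psi(|\calU|,m,|M|)$. Since each $|R(M,\bfa)|\le 1$, this is at most $\sum_{(M,\bfa)\in\calX^{\circ}(b)}\Psi(|\calU|,m,|M|)$, and \Cref{lem:CGSV_noise_rate} (applicable because $|\calU|\ge 10^8 k^3 m > 100km$ under the hypotheses we will invoke this corollary with) bounds this by $\left(100k^3 m\ell/|\calU|^2\right)^{\ell/2}$ where $\ell=|\supp(b)|$. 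For item (2), I would note that every $(M,\bfa)\in\calX^{\circ}(b)$ satisfies condition (2) of \Cref{def:character_group}, meaning each edge $e\in M$ consumes at least two nonzero coordinates of $b$; since the edges of a matching are disjoint, $2|M|\le|\supp(b)|=\ell$, so $|M|\le\lfloor \ell/2\rfloor$, placing $\psi_{M,\bfa}$ in $\bigcup_{d=1}^{\lfloor\ell/2\rfloor}\frakX^{\calU,d}$ (the case $|M|=0$ is excluded since it would force $b=0$, hence $\ell=0$). Thus $\widetilde{\chi}_b$ is a linear combination of such $\psi_{M,\bfa}$.

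For item (3), orthogonality of $\widetilde{\chi}_b$ and $\widetilde{\chi}_{b'}$ for $b\neq b'$: I would argue that $\calX^{\circ}(b)$ and $\calX^{\circ}(b')$ are disjoint, so the two expansions above have no common basis vector $\psi_{M,\bfa}$, and orthonormality of the $\psi_{M,\bfa}$ then gives $\langle\widetilde{\chi}_b,\widetilde{\chi}_{b'}\rangle=0$. Disjointness holds because if $(M,\bfa)\in\calX^{\circ}(b)\cap\calX^{\circ}(b')$, then by condition (1) of \Cref{def:character_group} we have $b_{|e}=\bfa(e)=b'_{|e}$ for all $e\in M$, and by condition (3) both $b$ and $b'$ vanish off the vertices covered by $M$; hence $b=b'$. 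None of these steps is a genuine obstacle; the only points requiring a little care are the bookkeeping of the complex conjugate in the adjoint and making sure the hypothesis of \Cref{lem:CGSV_noise_rate} is met, which is why the corollary will be stated (and used) only in a regime where $|\calU|$ is large relative to $km$. I would close by remarking that this corollary is exactly what lets us transfer the projected level-$d$ inequality (\Cref{thm:level-d-inequality}, via \Cref{cor:projected-level-d}) from $\Omega^{\calU,m}$ to a Fourier-decay statement for $\bdP^{\calU,m}_{\mu}[\phi_A]$ on $\ZmodN^{\bigcup\calU}$, which is the content of the forthcoming proof of \Cref{lem:global-decay}.
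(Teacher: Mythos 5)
Your proposal is correct and follows essentially the same route as the paper: unpack the adjoint via \Cref{lem:singular_value} to get $\widetilde{\chi}_b = \sum_{(M,\bfa)\in\calX^{\circ}(b)}\overline{R(M,\bfa)}\sqrt{\Psi(|\calU|,m,|M|)}\,\psi_{M,\bfa}$, then derive item (1) from $|R(M,\bfa)|\leq 1$, orthonormality (\Cref{prop:ortho_set}), and \Cref{lem:CGSV_noise_rate}; item (2) from condition (2) of \Cref{def:character_group} forcing $|M|\leq\lfloor\ell/2\rfloor$; and item (3) from conditions (1) and (3) of \Cref{def:character_group} giving disjointness of the indexing sets. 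The only cosmetic difference is that you argue disjointness of $\calX^{\circ}(b)$ and $\calX^{\circ}(b')$ while the paper states it for the larger sets $\calX(b)$ and $\calX(b')$; both are valid and use the same two conditions.
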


\begin{proof}
For $(M,\bfa)\in \calX^{\circ}(b)$, let $R(M,\bfa)$ be the complex numbers from \Cref{lem:singular_value}. The conclusion of \Cref{lem:singular_value} implies that
\begin{equation}\label{eq:transposed_operator_on_characters}
\left(\bdP^{\calU,m}_{\mu}\right)^{\dagger}[\chi_{b}]=\sum_{(M,\bfa)\in \calX^{\circ}(b)}\overline{R(M,\bfa)}\cdot \sqrt{\Psi(|\calU|,m,|M|)}\cdot \psi_{M,\bfa}.
\end{equation}
The three statements in the corollary can then be easily deduced.

For statement (1): since each $R(M,\bfa)$ has absolute value at most 1, it follows that 
\[
\left\|\left(\bdP^{\calU,m}_{\mu}\right)^{\dagger}[\chi_{b}]\right\|_{2}^{2}\leq \sum_{(M,\bfa)\in\calX^{\circ}}\Psi(|\calU|,m,|M|)\leq \left(\frac{100k^{3}m\ell}{|\calU|^{2}}\right)^{\ell/2},
\]
due to \Cref{lem:CGSV_noise_rate}.

For statement (2): condition (2) of \Cref{def:character_group}, for any $(M,\bfa)\in \calX^{\circ}(b)$, we have $|M|\leq |\supp(b)|/2=\ell/2$, and hence the statement follows from \eqref{eq:transposed_operator_on_characters}.

For statement (3): conditions (1) and (3) of \Cref{def:character_group} ensures that $\calX(b)$ and $\calX(b')$ are disjoint. The orthogonality then follows from \eqref{eq:transposed_operator_on_characters} and \Cref{prop:ortho_set}.
\end{proof}

\subsection{Proof of \Cref{lem:global-decay}}\label{subsec:proof-of-decay}

We now prove \Cref{lem:global-decay}, restated below.
\globalimpliesdecay*

\begin{proof}
For any character index $b\in \ZmodN^{\bigcup\calU}$, we define the function \(\widetilde{\chi}_{b}=\left(\bdP^{\calU,m}_{\mu}\right)^{\dagger}[\chi_{b}]\) as in \Cref{cor:singular_value}. For any positive integer $\ell$, we have
\begin{align*}
\left\|\bdP^{\calU,m}_{\mu}[\varphi_A]^{=\ell}\right\|_{2}^{2}&=\sum_{\substack{b \in \mathbb{Z}_N^{\bigcup \calU} \\ |\supp(b)| = \ell}} 
\left| \left\langle \bdP^{\calU,m}_{\mu}[\varphi_A], \chi_b \right\rangle \right|^2
= 
\sum_{\substack{b \in \mathbb{Z}_N^{\bigcup \calU} \\ |\supp(b)| = \ell}} 
\left| \left\langle \varphi_A, \widetilde{\chi}_b \right\rangle \right|^2 \\
&\leq 
\left( \frac{100k^3 m \ell}{|\calU|^2} \right)^{\ell/2}
\sum_{\substack{b \in \mathbb{Z}_N^{\bigcup \calU} \\ |\supp(b)| = \ell}} 
\left| \left\langle \varphi_A, \frac{\widetilde{\chi}_b}{\| \widetilde{\chi}_b \|_2} \right\rangle \right|^2
\tag*{(by \Cref{cor:singular_value}(1))} \\
&\leq 
\left( \frac{100k^3 m \ell}{|\calU|^2} \right)^{\ell/2}
\sum_{d=1}^{\lfloor \ell/2\rfloor}\left\| P^{=d}_{\frakX}[\varphi_A] \right\|_2^2
\tag*{(by \Cref{cor:singular_value}(2)(3))}.
\end{align*}
Since $A$ is a global set in $\Omega^{\calU,m}$ of size $2^{-w}\cdot\left|\Omega^{\calU,m}\right|$, it follows from \Cref{prop:connecting-two-globalness} that $\varphi_{A}=2^{w}\cdot 1_{A}$ is both $(4,1,m)$-$L^{1}$-global and $(4,2^{w/2},m)$-$L^{2}$-global. If $\ell\leq w$, we can apply \Cref{cor:projected-level-d} to the final line of the above display and get
\[
\left\|\bdP^{\calU,m}_{\mu}[\varphi_A]^{=\ell}\right\|_{2}^{2}\leq \left(\frac{100k^{3}m\ell}{|\calU|^{2}}\right)^{\ell/2}\cdot \frac{\ell}{2}\left(\frac{10^{5}\cdot 4\cdot(w/2)}{\ell/2}\right)^{\ell/2}= \frac{\ell}{2}\left(\frac{4\cdot 10^{7}k^{3}mw}{|\calU|^{2}}\right)^{\ell/2}\leq \left(\frac{w}{|\calU|}\right)^{\ell/2},
\]
since $|\calU|\geq 10^{8}k^{3}m$ and $\frac{\ell}{2}\leq 2^{\ell/2}$. If $\ell>w/2$, we note that \(\sum_{d=1}^{\lfloor \ell/2\rfloor}\left\| P^{=d}_{\frakX}[\varphi_A]\right\|_2^2\leq \|\varphi_{A}\|_{2}^{2}\). Therefore, we have
\[
\left\|\bdP^{\calU,m}_{\mu}[\varphi_A]^{=\ell}\right\|_{2}^{2}\leq \left(\frac{100k^{3}m\ell}{|\calU|^{2}}\right)^{-\ell/2}\cdot 2^{w}\leq 2^{w}\cdot \left(\frac{\ell}{4|\calU|}\right)^{\ell/2},
\]
since $|\calU|\geq 10^{8}k^{3}m$. Combining the above two displays, we conclude for $1\leq \ell\leq km$ that
\[\left\|\bdP^{\calU,m}_{\mu}[\varphi_A]^{=\ell}\right\|_{2}^{2}\leq F(|\calU|,\ell,w/2),\]
and thus the proof is complete.
\end{proof}

\bibliography{reference}

@inproceedings{Yos11,
author = {Yoshida, Yuichi},
title = {Optimal constant-time approximation algorithms and (unconditional) inapproximability results for every bounded-degree CSP},
year = {2011},
isbn = {9781450306911},
publisher = {Association for Computing Machinery},
address = {New York, NY, USA},
url = {https://doi.org/10.1145/1993636.1993725},
doi = {10.1145/1993636.1993725},
booktitle = {Proceedings of the Forty-Third Annual ACM Symposium on Theory of Computing},
pages = {665–674},
numpages = {10},
keywords = {rounding schemes, property testing, linear programmings, constraint satisfaction problems, constant-time approximation},
location = {San Jose, California, USA},
series = {STOC '11}
}

@article{CGSV24,
author = {Chou, Chi-Ning and Golovnev, Alexander and Sudan, Madhu and Velusamy, Santhoshini},
title = {Sketching Approximability of All Finite CSPs},
year = {2024},
issue_date = {April 2024},
publisher = {Association for Computing Machinery},
address = {New York, NY, USA},
volume = {71},
number = {2},
issn = {0004-5411},
url = {https://doi.org/10.1145/3649435},
doi = {10.1145/3649435},
journal = {J. ACM},
month = apr,
articleno = {15},
numpages = {74},
keywords = {Streaming algorithms, communication lower bound, inapproximability, constraint satisfaction problem}
}

@inproceedings{Tre01,
author = {Trevisan, Luca},
title = {Non-approximability results for optimization problems on bounded degree instances},
year = {2001},
isbn = {1581133499},
publisher = {Association for Computing Machinery},
address = {New York, NY, USA},
url = {https://doi.org/10.1145/380752.380839},
doi = {10.1145/380752.380839},
booktitle = {Proceedings of the Thirty-Third Annual ACM Symposium on Theory of Computing},
pages = {453–461},
numpages = {9},
location = {Hersonissos, Greece},
series = {STOC '01}
}

@article{KLM23,
  title={Sharp hypercontractivity for global functions},
  author={Keller, Nathan and Lifshitz, Noam and Marcus, Omri},
  journal={arXiv preprint arXiv:2307.01356},
  year={2023} 
}

@article{FederVardi,
  author       = {Tom\'as Feder and Moshe Y. Vardi},
  title        = {The Complexity of Constraint Satisfaction Problems},
  journal      = {Proceedings of the 25th Annual ACM Symposium on Theory of Computing (STOC)},
  year         = {1993},
  note         = {Posits the CSP Dichotomy Conjecture},
}

@article{Zhuk,
  author       = {Dmitriy Zhuk},
  title        = {A Proof of the CSP Dichotomy Conjecture},
  journal      = {Journal of the ACM},
  volume       = {67},
  number       = {5},
  pages        = {30:1--30:78},
  year         = {2020},
  doi          = {10.1145/3402029},
}

@inproceedings{Bulatov,
  author       = {Andrei A. Bulatov},
  title        = {A Dichotomy Theorem for Nonuniform CSPs},
  booktitle    = {Proceedings of the 58th IEEE Symposium on Foundations of Computer Science (FOCS)},
  year         = {2017}
}

@inproceedings{KMW06,
author = {Kuhn, Fabian and Moscibroda, Thomas and Wattenhofer, Roger},
title = {The price of being near-sighted},
year = {2006},
isbn = {0898716055},
publisher = {Society for Industrial and Applied Mathematics},
address = {USA},
booktitle = {Proceedings of the Seventeenth Annual ACM-SIAM Symposium on Discrete Algorithm},
pages = {980–989},
numpages = {10},
location = {Miami, Florida},
series = {SODA '06}
}

@inproceedings{Tre96,
author = {Trevisan, Luca},
title = {Positive Linear Programming, Parallel Approximation and {PCP}s},
year = {1996},
isbn = {3540616802},
publisher = {Springer-Verlag},
address = {Berlin, Heidelberg},
booktitle = {Proceedings of the Fourth Annual European Symposium on Algorithms},
pages = {62–75},
numpages = {14},
series = {ESA '96}
}

@article{fotakis1997linear,
  title={Linear programming and fast parallel approximability},
  author={Fotakis, Dimitris A and Spirakis, Paul G},
  journal={Unpublished manuscript},
  year={1997}
}

@inproceedings{KK19,
author = {Kapralov, Michael and Krachun, Dmitry},
title = {An optimal space lower bound for approximating MAX-CUT},
year = {2019},
isbn = {9781450367059},
publisher = {Association for Computing Machinery},
address = {New York, NY, USA},
url = {https://doi.org/10.1145/3313276.3316364},
doi = {10.1145/3313276.3316364},
booktitle = {Proceedings of the 51st Annual ACM SIGACT Symposium on Theory of Computing},
pages = {277–288},
numpages = {12},
keywords = {Fourier analysis, MAX-CUT, Streaming lower bounds},
location = {Phoenix, AZ, USA},
series = {STOC 2019}
}

@article{FMW25,
  title={Multi-Pass Streaming Lower Bounds for Approximating Max-Cut},
  author={Fei, Yumou and Minzer, Dor and Wang, Shuo},
  journal={arXiv preprint arXiv:2503.23404},
  year={2025}
}

@article{laurent2012semidefinite,
  title={Semidefinite optimization},
  author={Laurent, Monique and Vallentin, Frank},
  journal={Lecture Notes, available at http://page. mi. fu-berlin. de/fmario/sdp/laurentv. pdf},
  year={2012}
}

@misc{odonnell2021analysisbooleanfunctions,
      title={Analysis of Boolean Functions}, 
      author={Ryan O'Donnell},
      year={2021},
      eprint={2105.10386},
      archivePrefix={arXiv},
      primaryClass={cs.DM},
      url={https://arxiv.org/abs/2105.10386}, 
}

@book{mohri2018foundations,
  title={Foundations of machine learning},
  author={Mohri, Mehryar and Rostamizadeh, Afshin and Talwalkar, Ameet},
  year={2018},
  publisher={MIT press}
}

@book{peleg2000distributed,
  title={Distributed computing: a locality-sensitive approach},
  author={Peleg, David},
  year={2000},
  publisher={SIAM}
}

@inproceedings{Rag08,
author = {Raghavendra, Prasad},
title = {Optimal algorithms and inapproximability results for every CSP?},
year = {2008},
isbn = {9781605580470},
publisher = {Association for Computing Machinery},
address = {New York, NY, USA},
url = {https://doi.org/10.1145/1374376.1374414},
doi = {10.1145/1374376.1374414},
booktitle = {Proceedings of the Fortieth Annual ACM Symposium on Theory of Computing},
pages = {245–254},
numpages = {10},
keywords = {unique games conjecture, semidefinite programming, rounding schemes, dictatorship tests, constraint satisfaction problem},
location = {Victoria, British Columbia, Canada},
series = {STOC '08}
}

@inproceedings{KKS14,
  title={Streaming lower bounds for approximating MAX-CUT},
  author={Kapralov, Michael and Khanna, Sanjeev and Sudan, Madhu},
  booktitle={Proceedings of the Twenty-Sixth Annual ACM-SIAM Symposium on Discrete Algorithms},
  pages={1263--1282},
  year={2014},
  organization={SIAM}
}

@inproceedings{raz1997separation,
  title={Separation of the monotone NC hierarchy},
  author={Raz, Ran and McKenzie, Pierre},
  booktitle={Proceedings 38th Annual Symposium on Foundations of Computer Science},
  pages={234--243},
  year={1997},
  organization={IEEE}
}

@inproceedings{goos2017query,
  title={Query-to-communication lifting for BPP},
  author={G{\"o}{\"o}s, Mika and Pitassi, Toniann and Watson, Thomas},
  booktitle={2017 IEEE 58th Annual Symposium on Foundations of Computer Science (FOCS)},
  pages={132--143},
  year={2017},
  organization={IEEE}
}

@inproceedings{chou2020optimal,
  title={Optimal streaming approximations for all boolean max-2csps and max-ksat},
  author={Chou, Chi-Ning and Golovnev, Alexander and Velusamy, Santhoshini},
  booktitle={2020 IEEE 61st Annual Symposium on Foundations of Computer Science (FOCS)},
  pages={330--341},
  year={2020},
  organization={IEEE}
}

@inproceedings{saxena2025streaming,
  title={Streaming Algorithms via Local Algorithms for Maximum Directed Cut},
  author={Saxena, Raghuvansh R and Singer, Noah G and Sudan, Madhu and Velusamy, Santhoshini},
  booktitle={Proceedings of the 2025 Annual ACM-SIAM Symposium on Discrete Algorithms (SODA)},
  pages={3392--3408},
  year={2025},
  organization={SIAM}
}

@inproceedings{KKS15,
author = {Kapralov, Michael and Khanna, Sanjeev and Sudan, Madhu},
title = {Streaming lower bounds for approximating MAX-CUT},
year = {2015},
publisher = {Society for Industrial and Applied Mathematics},
address = {USA},
booktitle = {Proceedings of the Twenty-Sixth Annual ACM-SIAM Symposium on Discrete Algorithms},
pages = {1263–1282},
numpages = {20},
location = {San Diego, California},
series = {SODA '15}
}

@article{SudanSurvey,
  title={Streaming and Sketching Complexity of CSPs: A survey},
  author={Sudan, Madhu},
  journal={arXiv preprint arXiv:2205.02744},
  year={2022}
}

@article{hwang2024oblivious,
  title={Oblivious Algorithms for Maximum Directed Cut: New Upper and Lower Bounds},
  author={Hwang, Samuel and Singer, Noah G and Velusamy, Santhoshini},
  journal={arXiv preprint arXiv:2411.12976},
  year={2024}
}

@inproceedings{chou2022linear,
  title={Linear space streaming lower bounds for approximating {CSP}s},
  author={Chou, Chi-Ning and Golovnev, Alexander and Sudan, Madhu and Velingker, Ameya and Velusamy, Santhoshini},
  booktitle={Proceedings of the 54th Annual ACM SIGACT Symposium on Theory of Computing},
  pages={275--288},
  year={2022}
}

@inproceedings{saxena2023improved,
  title={Improved streaming algorithms for Maximum Directed Cut via smoothed snapshots},
  author={Saxena, Raghuvansh R and Singer, Noah G and Sudan, Madhu and Velusamy, Santhoshini},
  booktitle={2023 IEEE 64th Annual Symposium on Foundations of Computer Science (FOCS)},
  pages={855--870},
  year={2023},
  organization={IEEE}
}

@inproceedings{guruswami2017streaming,
  title={Streaming complexity of approximating max 2csp and max acyclic subgraph},
  author={Guruswami, Venkatesan and Velingker, Ameya and Velusamy, Santhoshini},
  booktitle={Approximation, Randomization, and Combinatorial Optimization. Algorithms and Techniques (APPROX/RANDOM 2017)},
  pages={8--1},
  year={2017},
  organization={Schloss Dagstuhl--Leibniz-Zentrum f{\"u}r Informatik}
}

@inproceedings{Khot02,
  author       = {Subhash Khot},
  title        = {On the Power of Unique 2-Prover 1-Round Games},
  booktitle    = {Proceedings of the 17th Annual {IEEE} Conference on Computational
                  Complexity, Montr{\'{e}}al, Qu{\'{e}}bec, Canada, May 21-24,
                  2002},
  pages        = {25},
  publisher    = {{IEEE} Computer Society},
  year         = {2002},
  url          = {https://doi.org/10.1109/CCC.2002.1004334},
  doi          = {10.1109/CCC.2002.1004334},
  timestamp    = {Fri, 24 Mar 2023 00:04:21 +0100},
  biburl       = {https://dblp.org/rec/conf/coco/Khot02.bib},
  bibsource    = {dblp computer science bibliography, https://dblp.org}
}

@INPROCEEDINGS{AKSY20,
  author={Assadi, Sepehr and Kol, Gillat and Saxena, Raghuvansh R. and Yu, Huacheng},
  booktitle={2020 IEEE 61st Annual Symposium on Foundations of Computer Science (FOCS)}, 
  title={Multi-Pass Graph Streaming Lower Bounds for Cycle Counting, MAX-CUT, Matching Size, and Other Problems}, 
  year={2020},
  volume={},
  number={},
  pages={354-364},
  keywords={Testing;Complexity theory;Approximation algorithms;Protocols;Computer science;Ice;Transmission line matrix methods;Graph Streaming;Communication Complexity;Max Cut;Maximum Matching;Matrix Rank;Schatten Norms},
  doi={10.1109/FOCS46700.2020.00041}}

@inproceedings{AN21,
author = {Assadi, Sepehr and N, Vishvajeet},
title = {Graph streaming lower bounds for parameter estimation and property testing via a streaming XOR lemma},
year = {2021},
isbn = {9781450380539},
publisher = {Association for Computing Machinery},
address = {New York, NY, USA},
url = {https://doi.org/10.1145/3406325.3451110},
doi = {10.1145/3406325.3451110},
booktitle = {Proceedings of the 53rd Annual ACM SIGACT Symposium on Theory of Computing},
pages = {612–625},
numpages = {14},
keywords = {XOR lemma, Property Testing, Pointer Chasing, MAXCUT, Graph Streaming, Communication Complexity},
location = {Virtual, Italy},
series = {STOC 2021}
}

@article{Assadi,
author = {Assadi, Sepehr},
title = {Recent Advances in Multi-Pass Graph Streaming Lower Bounds},
year = {2023},
issue_date = {September 2023},
publisher = {Association for Computing Machinery},
address = {New York, NY, USA},
volume = {54},
number = {3},
issn = {0163-5700},
url = {https://doi.org/10.1145/3623800.3623808},
doi = {10.1145/3623800.3623808},
abstract = {Wait! What does an article on 'multi-pass graph streaming' has anything to do with a column for 'distributed computing'?},
journal = {SIGACT News},
month = sep,
pages = {48–75},
numpages = {28}
}

@inproceedings{goldreich1997property,
  title={Property testing in bounded degree graphs},
  author={Goldreich, Oded and Ron, Dana},
  booktitle={Proceedings of the twenty-ninth annual ACM symposium on Theory of computing},
  pages={406--415},
  year={1997}
}

@article{parnas2002testing,
  title={Testing the diameter of graphs},
  author={Parnas, Michal and Ron, Dana},
  journal={Random Structures \& Algorithms},
  volume={20},
  number={2},
  pages={165--183},
  year={2002},
  publisher={Wiley Online Library}
}

@article{kaufman2004tight,
  title={Tight bounds for testing bipartiteness in general graphs},
  author={Kaufman, Tali and Krivelevich, Michael and Ron, Dana},
  journal={SIAM Journal on computing},
  volume={33},
  number={6},
  pages={1441--1483},
  year={2004},
  publisher={SIAM}
}

@article{blais2012property,
  title={Property testing lower bounds via communication complexity},
  author={Blais, Eric and Brody, Joshua and Matulef, Kevin},
  journal={computational complexity},
  volume={21},
  number={2},
  pages={311--358},
  year={2012},
  publisher={Springer}
}

@article{parnas2007approximating,
  title={Approximating the minimum vertex cover in sublinear time and a connection to distributed algorithms},
  author={Parnas, Michal and Ron, Dana},
  journal={Theoretical Computer Science},
  volume={381},
  number={1-3},
  pages={183--196},
  year={2007},
  publisher={Elsevier}
}

@inproceedings{behnezhad2024approximating,
  title={Approximating maximum matching requires almost quadratic time},
  author={Behnezhad, Soheil and Roghani, Mohammad and Rubinstein, Aviad},
  booktitle={Proceedings of the 56th Annual ACM Symposium on Theory of Computing},
  pages={444--454},
  year={2024}
}

@article{mahabadi20250,
  title={A 0.51-Approximation of Maximum Matching in Sublinear $n^{1.5}$ Time},
  author={Mahabadi, Sepideh and Roghani, Mohammad and Tarnawski, Jakub},
  journal={arXiv preprint arXiv:2506.01669},
  year={2025}
}

@inproceedings{bury2015sublinear,
  title={Sublinear estimation of weighted matchings in dynamic data streams},
  author={Bury, Marc and Schwiegelshohn, Chris},
  booktitle={Algorithms-ESA 2015: 23rd Annual European Symposium, Patras, Greece, September 14-16, 2015, Proceedings},
  pages={263--274},
  year={2015},
  organization={Springer}
}
\bibliographystyle{alpha}

\appendix

\section{The Decomposition Lemma}\label{app:regularity_decomposition}
In this appendix, we prove~\Cref{lem:regularity_decomposition}. The proof is a rather straightforward (sometimes verbatim) adaptation of \cite[Section 2]{FMW25}. The main difference from \cite{FMW25} is that here we refine the analysis so that the decomposition applies to protocols with communication complexity as large as $\Theta(\sqrt{n})$, rather than just $\Theta(n^{1/3})$. Nevertheless, this added strength is not needed for the current paper --- we still only invoke \Cref{lem:regularity_decomposition} for protocols of complexity $O(n^{1/3})$, as in \cite{FMW25}.

\subsection{The Set Decomposition Lemma}
The first step towards the decomposition of communication protocols is the decomposition of sets of possible inputs to a single player. Recall from \Cref{def:communication_game} that the input to a player is chosen in a labeled matching space of the form $\Omega^{\calU,m}$, where $\calU$ is a $k$-universe and $m\leq|\calU|$ is a nonnegative integer. We develop the following set decomposition lemma that applies to a large subset $A\subseteq \Omega^{\calU, m}$, analogous to~\cite[Lemma 2.6]{FMW25}. 

Recall from \Cref{def:restrictions} that restrictions on $\Omega^{\calU,m}$ are given by elements of $\Omega^{\calU,\leq m}$.

\begin{lemma}[Set decomposition]\label{lem:reg}
    Fix a $k$-universe $\calU$ and a nonnegative integer $m\leq |\calU|$. Let $\bfz' \in \Omega^{\calU,\leq m}$ be a restriction on $\Omega^{\calU,m}$, and take any subset $A\subseteq \Omega^{\calU,m}_{\bfz'}$. Then we can decompose $A$ into a disjoint union of subsets $A_{(1)},A_{(2)},\dots,A_{(\ell )}$ such that:
    \begin{enumerate}[label=(\arabic*)]
        \item {\bf Globalness:} for each $i\in [\ell]$, there exists a restriction $\bfz_{(i)}$ that subsumes $\bfz'$ such that $A_{(i)}\subseteq \Omega_{\bfz_{(i)}}$ and $A_{(\ell)}$ is $\bfz_{(\ell)}$-global.
        \item {\bf Size of the restrictions}: the restrictions $\bfz_{(i)}$ satisfy the following inequality:
        \[\sum_{i=1}^\ell \frac{\left|A_{(i)}\right|}{|A|} \left(\left|\supp(\bfz_{(i)})\right| +\log_{2} \left(\frac{\left|\Omega_{\bfz_{(i)}}^{\calU, m}\right|}{\left|A_{(i)}\right|}\right)\right)\leq \left|\supp(\bfz')\right|+\log_{2}\left(\frac{\left|\Omega_{\bfz'}^{\calU, m }\right|}{|A|}\right)+2.\]
    \end{enumerate}
\end{lemma}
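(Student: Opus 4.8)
The plan is to prove \Cref{lem:reg} by a greedy/iterative ``peeling'' argument that repeatedly extracts the largest non-global piece and refines the restriction. Concretely, I would build a binary decomposition tree whose root is the pair $(A,\bfz')$. At a generic node we have a pair $(B,\bfw)$ with $B\subseteq \Omega^{\calU,m}_{\bfw}$ and $\bfw$ subsuming $\bfz'$. If $B$ is $\bfw$-global we declare the node a leaf and set $A_{(i)}:=B$, $\bfz_{(i)}:=\bfw$. Otherwise, by the negation of \Cref{def:global_set}, there is a restriction $\bfw'$ subsuming $\bfw$ with
\[
\frac{\bigl|B\cap \Omega^{\calU,m}_{\bfw'}\bigr|}{\bigl|\Omega^{\calU,m}_{\bfw'}\bigr|} > 2^{|\supp(\bfw')|-|\supp(\bfw)|}\cdot \frac{\bigl|B\cap \Omega^{\calU,m}_{\bfw}\bigr|}{\bigl|\Omega^{\calU,m}_{\bfw}\bigr|};
\]
I would choose such a $\bfw'$, say with $|\supp(\bfw')|$ as small as possible (one extra edge suffices by taking a ``minimal violating'' restriction — this keeps the per-step increment controlled), split $B$ into $B\cap \Omega^{\calU,m}_{\bfw'}$ (recurse with restriction $\bfw'$) and $B\setminus\Omega^{\calU,m}_{\bfw'}$ (recurse with restriction $\bfw$). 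The recursion must terminate because $\bigl|\Omega^{\calU,m}_{\bfw}\bigr|$ strictly decreases whenever we add an edge to the support, and the ``stay'' branch has strictly smaller size than its parent.

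\textbf{The potential function.} The heart of the matter is item (2), the bound on $\sum_i \frac{|A_{(i)}|}{|A|}\bigl(|\supp(\bfz_{(i)})| + \log_2(|\Omega^{\calU,m}_{\bfz_{(i)}}|/|A_{(i)}|)\bigr)$. Following \cite[Lemma 2.6]{FMW25}, I would track, for each node $(B,\bfw)$ in the tree, the quantity
\[
\Phi(B,\bfw) \;:=\; |\supp(\bfw)| + \log_2\!\left(\frac{\bigl|\Omega^{\calU,m}_{\bfw}\bigr|}{|B|}\right),
\]
and show that at each internal node, the $|B|$-weighted average of $\Phi$ over the two children is at most $\Phi(B,\bfw)$ plus a small ``defect'' that telescopes. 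The key computation: if $p := |B\cap\Omega^{\calU,m}_{\bfw'}|/|B|$ is the fraction going to the ``stay'' child and $q := |\Omega^{\calU,m}_{\bfw'}|/|\Omega^{\calU,m}_{\bfw}| = 2^{-r}$ (where $r = |\supp(\bfw')|-|\supp(\bfw)| \ge 1$ is the number of added edges times $\log_2$ of the per-edge label count; with the normalization of \Cref{def:restrictions} one has $|\Omega^{\calU,m}_{\bfw'}|/|\Omega^{\calU,m}_{\bfw}|$ a fixed ratio determined by $r$), then the globalness violation gives $p > 2^{|\supp(\bfw')|-|\supp(\bfw)|}\cdot q = 1$ — wait, this is exactly the point where one must be careful with the normalization. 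I would use the inequality in the form $p/q > 2^{r}$, i.e. $p > 2^{2r}$-ish after accounting for both the density ratio and the $2^r$ slack; the clean statement is that the ``stay'' child captures an anomalously large fraction of $B$. Then the weighted average of $\Phi$ over children equals
\[
p\cdot\Bigl(\Phi(B,\bfw) + r + \log_2(q) - \log_2 p\Bigr) + (1-p)\cdot\Bigl(\Phi(B,\bfw) - \log_2(1-p)\Bigr),
\]
and using $r + \log_2 q = 0$ (or $\le 0$) together with $-p\log_2 p - (1-p)\log_2(1-p) = H(p) \le 1$, the increment is at most $H(p)\le 1$. Since the tree is binary and each edge-addition strictly increases $|\supp|$, a charging argument (as in \cite{FMW25}) bounds the total accumulated defect by $2$: intuitively, each leaf's ``path defect'' is at most $2$ because the potential $\Phi$ is nonnegative at leaves and started at $\Phi(A,\bfz')$, and each internal node adds at most the entropy term which can be absorbed.

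\textbf{Main obstacle.} I expect the main obstacle to be making the telescoping/charging argument for item (2) fully rigorous with the correct constant $2$ — in particular pinning down exactly how much $\Phi$ can increase per split and showing the increments sum to at most $2$ across any root-to-leaf decomposition, uniformly. The subtlety is that the entropy term $H(p)$ can be as large as $1$ at every level, so naively one only gets a bound of $2\cdot(\text{depth})$; the resolution (as in \cite{FMW25}) is that whenever $p$ is close to $1$ (so $H(p)$ is small) the support can grow, and whenever $p$ is bounded away from $1$ the mass $|B|$ drops by a constant factor, so a weighted count shows only $O(1)$ total defect. I would need to carefully reproduce this two-regime accounting. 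The globalness claim in item (1) for $A_{(\ell)}$ is essentially immediate from the termination condition of the recursion, and the disjointness of the $A_{(i)}$ is built into the construction, so those parts should be routine. The remark after \Cref{rem:restriction_isomorphic} also tells us the restricted domains behave like honest labeled-matching spaces, which is what licenses applying the definition of globalness recursively inside $\Omega^{\calU,m}_{\bfw}$.
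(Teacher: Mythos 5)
Your binary-tree decomposition is a genuinely different construction from the paper's, and the sketch leaves the crucial item (2) unproved. Let me be concrete about both points.

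The paper's \Cref{alg:decomposition} does a \emph{linear peeling}: it repeatedly finds a restriction $\bfz_{(i)}$ with \emph{largest possible} support that violates the $\bfz'$-globalness inequality for the \emph{current remaining set}, peels off $A_{(i)} := A \cap \Omega^{\calU,m}_{\bfz_{(i)}}$, and re-examines what remains against the \emph{original} $\bfz'$. The maximality of $\bfz_{(i)}$ is what makes the globalness of each $A_{(i)}$ immediate: if $A_{(i)}$ were not $\bfz_{(i)}$-global, chaining the two violating inequalities would produce an even larger violator for $A$ against $\bfz'$, contradicting maximality. Your proposal instead builds a binary tree, takes a \emph{minimal} violator, and recursively refines the restriction along the ``stay'' branch. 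Those are different algorithms, not reformulations of the same one.

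Two gaps. First, your parenthetical ``one extra edge suffices by taking a minimal violating restriction'' is false: $\bfz'$-globalness can fail only via a restriction that adds $r\ge 2$ edges while holding for every one-extra-edge restriction (\Cref{def:global_set} quantifies over all subsuming $\bfz'$, not just those with $|\supp|$ one larger). Your calculation with $r$ a free parameter is fine, but the claim that $r=1$ suffices should be dropped. Second, and more importantly, you correctly identify that item (2) — the bound with the constant $2$ — is the crux, but the ``two-regime accounting'' you sketch does not obviously close it. Using the violating inequality $p > 2^r q$, the per-node increment to the weighted potential is bounded by $-(1-p)\log_2(1-p)$, but this must then be summed over all internal nodes of the tree, weighted by node mass; your heuristic (small entropy vs.\ mass drop) does not yield a bound independent of depth without further work. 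The paper's argument sidesteps this entirely: because the peeling is linear, writing $A_{(\ge i)} := \bigcup_{j\ge i} A_{(j)}$ and taking logs of the maximality inequality gives
\[
|\supp(\bfz_{(i)})| + \log_2\frac{|\Omega^{\calU,m}_{\bfz_{(i)}}|}{|A_{(i)}|}
\;\le\;
|\supp(\bfz')| + \log_2\frac{|\Omega^{\calU,m}_{\bfz'}|}{|A|} + \log_2\frac{|A|}{|A_{(\ge i)}|},
\]
and the weighted sum of $\log_2\bigl(|A|/|A_{(\ge i)}|\bigr)$ is recognized as a lower Riemann sum for $\int_0^1 \log_2\frac{1}{1-x}\,\mathrm{d}x = \log_2 e < 2$. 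That single integral estimate, together with the linear (rather than branching) structure, is what gives the clean constant; I would encourage you to look for a reduction of your tree to this linear order, or to adopt the maximal-violator / linear-peeling scheme directly.
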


\begin{proof}
The proof of the lemma is algorithmic, and the decomposition algorithm is given in \Cref{alg:decomposition}.
\begin{algorithm}
\DontPrintSemicolon
\SetKwInOut{Input}{Input}\SetKwInOut{Output}{Output}
    \caption{Decompose$(A,\bfz')$}\label{alg:decomposition}
    \Input{a restriction $\bfz'$ and a set $A\subseteq \Omega_{\bfz'}^{\calU, m }$}
    \Output{a sequence of sets $A_{(1)},\dots,A_{(\ell )}$ and a sequence of restrictions $z_{(1)},\dots,z_{(\ell )}$}
    $i\gets 0$\;
    \While {$A$ is not $\bfz'$-global}{
        $i \gets i+1$\;
        find a restriction $\bfz_{(i)}$ with largest possible support size such that $\bfz_{(i)}$ subsumes $\bfz'$ and 
        \begin{gather}\label{eq:restriction-maximality}
        \frac{\left|A\cap \Omega_{\bfz_{(i)}}^{\calU,m}\right|}{\left|\Omega_{\bfz_{(i)}}^{\calU,m}\right|}> 2^{\left|\supp(\bfz_{(i)})\right|-|\supp(\bfz')|}\cdot\frac{|A|}{\left|\Omega_{\bfz'}^{\calU,m}\right|}
        \end{gather}
        
        $A_{(i)}\gets A\cap\Omega_{\bfz_{(i)}}^{\calU,m }$\;
        $A\gets A\setminus A_{(i)}$\; 
    }
    \If{$A$ is nonempty (and $\bfz'$-global)}{
        $i \gets i+1$\;
        $A_{(i+1)} \gets A$\; 
        $z_{(i+1)}\gets \bfz'$\;
    }
    $\ell \leftarrow i$\; 
\end{algorithm}

To check the first property, assume on the contrary that some $A_{(i)}$ is not $\bfz_{(i)}$-global. Then by \Cref{def:global_set}, there must exists a restriction $\bfz$ that subsumes $\bfz_{(i)}$ and 
\begin{equation}\label{eq:assume-not-global}
\frac{\left|A_{(i)}\cap \Omega_{\bfz}^{\calU,m}\right|}{\left|\Omega_{\bfz}^{\calU,m}\right|}>2^{|\supp(\bfz)|-\left|\supp(\bfz_{(i)})\right|}\cdot\frac{\left|A_{(i)}\cap \Omega_{\bfz_{(i)}}^{\calU,m}\right|}{\left|\Omega_{\bfz_{(i)}}^{\calU,m}\right|}.
\end{equation}
Since the choice of $\bfz_{(i)}$ satisfies \eqref{eq:restriction-maximality} and $A_{(i)}=A\cap \Omega_{\bfz_{(i)}}$, we can combine \eqref{eq:assume-not-global} and \eqref{eq:restriction-maximality} and get
\[
\frac{\left|A\cap \Omega_{\bfz}^{\calU,m}\right|}{\left|\Omega_{\bfz}^{\calU,m}\right|}>2^{|\supp(\bfz)|-\left|\supp(\bfz')\right|}\cdot\frac{\left|A\cap \Omega_{\bfz'}^{\calU,m}\right|}{\left|\Omega_{\bfz'}^{\calU,m }\right|}.
\]
This contradicts the maximality of the choice of $\bfz_{(i)}$ since $|\supp(\bfz)|>\left|\supp(\bfz_{(i)})\right|$.

We now check the second property, and towards that end we denote $A_{(\geq i)} = \bigcup_{j=i}^\ell A_{(j)}$. 
Taking logs of~\eqref{eq:restriction-maximality} gives 
\[
\log_{2} \frac{\left|\Omega^{\calU,m}_{\bfz_{(i)}}\right|}{\left|A_{(i)}\right|} + \left| \supp(\bfz_{(i)})\right| \leq \log_{2} \frac{\left|\Omega^{\calU,m}_{\bfz'}\right|}{\left|A_{(\geq i)}\right|} + |\supp(\bfz')| = \log_{2} \frac{\left|\Omega^{\calU,m}_{\bfz'}\right|}{|A|} + \log_{2} \frac{|A|}{\left|A_{(\geq i)}\right|}+ |\supp(\bfz')|. 
\]
Multiplying this inequality by $|A_{(i)}|/|A|$ and summing over all $i\in [\ell]$ we get: 
\begin{align*}
    \sum_{i=1}^\ell \frac{|A_{(i)}|}{|A|} \left(|\supp(z_{(i)})|+\log_{2} \frac{\left|\Omega^{\calU,m}_{\bfz_{(i)}}\right|}{|A_{(i)}|}\right)&\leq \log_{2}\frac{\left|\Omega^{\calU,m}_{\bfz'}\right|}{|A|}+|\supp(\bfz')|+\sum_{i=1}^\ell \frac{|A_{(i)}|}{|A|}\cdot \log_{2} \frac{|A|}{|A_{(\geq i)}|}\\ 
    &\leq \log_{2}\frac{\left|\Omega^{\calU,m}_{\bfz'}\right|}{|A|}+|\supp(\bfz')|+\int_0^1 \log_{2} \frac{1}{1-x} \, \mathrm{d} x\\
    &\leq \log_{2}\frac{\left|\Omega^{\calU,m}_{\bfz'}\right|}{|A|} +|\supp(\bfz')|+2,
\end{align*}
where the second transition is because $\sum_{i=1}^\ell \frac{|A_{(i)}|}{|A|}\cdot \log_{2} \frac{|A|}{|A_{(\geq i)}|}$ is a lower Riemann sum for
the function $f(x) = \log_{2}\left(1/(1-x)\right)$ with points $x_{(i)} = \frac{|A_{(1)}|}{|A|}+\dots+\frac{|A_{(i-1)}|}{|A|}$, and the last transition is by a direct calculation.
\end{proof}

\subsection{From Arbitrary Protocols to Global Protocols}

Now we shift our focus from subsets of a single labeled matching space $\Omega^{\calU,m}$ to communication protocols for the game $\DIHP(G,n,\alpha,K)$ (defined in \Cref{def:communication_game}).
For the remainder of this appendix, we fix a distribution-labeled $k$-graph $G=(\calV,\calE,N,(\mu_{\sfe})_{\sfe\in\calE})$ along with an integer $K > 0$ and a parameter $\alpha > 0$. 

We will show how to use~\Cref{lem:reg} to transform any protocol into a \emph{global protocol}. 
Before doing so, we must formally define what we mean by a “global protocol” and how we measure its cost.
A global protocol proceeds in discrete communication rounds, with exactly one player speaking in each round.
The globalness requirement demands that, at the end of every round, the set of inputs consistent with the transcript so far is global.
To quantify the amount of information revealed during the communication process, we introduce the following potential function for structured rectangles (as defined in \Cref{def:structured_rectangle}).

\begin{definition}
    For restrictions $\boldsymbol{\zeta}=\left(\bfz^{(\sfe,j)}\right)_{(\sfe,j)\in \calE\times [K]}$ and a rectangle $R= \prod_{(\sfe, j) \in \calE\times [K]}A^{(\sfe, j)}$ such that $A^{(\sfe, j)}\subseteq \Omega^{\calU_\sfe,m}_{\bfz^{(\sfe,j)}}$,  we define the potential of $(\boldsymbol{\zeta},R)$ as: 
    \begin{align*}
    \phi(\boldsymbol{\zeta},R):= \sum_{(\sfe,j)\in \calE\times [K]} \left|\supp(\bfz^{(\sfe,j)})\right|+ \log_{2} \left(\frac{\left|\Omega^{\calU,m}_{\bfz^{(i)}}\right|}{|A^{(i)}|}
    \right). 
    \end{align*}
\end{definition}

The formal definition of global protocols is given as follows.
\begin{definition}\label{def:global_protocol}
    A communication protocol $\Pi$ for $\textsf{DIHP}(G,n,\alpha,K)$ is called an $r$-round global communication protocol if it specifies the following procedure of communications: 
    \begin{itemize}
        \item the $K|\calE|$ players take turns to send messages according $\Pi$; 
        \item there are at most $r$ rounds of communications, and there is only one player sending message in a single round;
        \item the length of message in each round of communications is not bounded; instead, from the perspective of rectangles, after each round of communications, a $\boldsymbol{\zeta}$-global rectangle $R$ is further partitioned into a disjoint union of rectangles $R_{(1)},\dots,R_{(\ell)}$ such that: (1) $R_{(i)}$ is $\boldsymbol{\zeta}_{(i)}$-global; (2) $\boldsymbol{\zeta}_{(i)}$ subsumes $\boldsymbol{\zeta}$; (3) the following inequality holds: 
        \begin{align*}
            \sum_{i=1}^\ell\frac{|R_{(i)}|}{|R|}\phi(\boldsymbol{\zeta}_{(i)},R_{(i)})\leq \phi(\boldsymbol{\zeta},R) + 3. 
        \end{align*}
    \end{itemize}
\end{definition}
Note that in the above definitions of global protocols, we measure a protocol by its average potential but not its communication cost. 

We now show an explicit construction of a global protocol $\Pi^{\reff}$ given any communication protocol $\Pi$, and show that $\adv(\Pi^{\reff})\geq \adv(\Pi)$. 
\begin{lemma}\label{lem:arbitrary_to_global}
    Given a communication protocol $\Pi$ for $\mathsf{DIHP}(G,n,\alpha,K)$ with communication complexity at most $r$, we can construct an $r$-round global protocol $\Pi^{\reff}$ for $\mathsf{DIHP}(G,n,\alpha,K)$ such that for any leaf rectangle $R$ of $\Pi^{\reff}$, the output of $\Pi$ is constant on $R$. 
\end{lemma}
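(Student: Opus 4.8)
\textbf{Proof proposal for Lemma~\ref{lem:arbitrary_to_global}.}

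The plan is to simulate the original protocol $\Pi$ round-by-round, and after each message of $\Pi$ is sent, insert an extra ``cleanup'' step that restores globalness via the set decomposition lemma (\Cref{lem:reg}). I would set up an induction on the communication rounds of $\Pi$. At any point in the simulation, the set of joint inputs consistent with the transcript so far is a rectangle $R = \prod_{(\sfe,j)}A^{(\sfe,j)}$, and we maintain the invariant that each $A^{(\sfe,j)}$ is $\bfz^{(\sfe,j)}$-global for some restriction sequence $\bdzeta$, i.e., $(\bdzeta,R)$ is a structured rectangle. When $\Pi$ has some player $(\sfe,j)$ speak, this player's component $A^{(\sfe,j)}$ is partitioned into at most $2^{(\text{length of message})}$ subsets according to the message sent; correspondingly the rectangle $R$ is partitioned into subrectangles, one per message value. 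Each such subrectangle has all components global \emph{except} possibly the $(\sfe,j)$-component, call it $B$, which is merely a subset of the global set $A^{(\sfe,j)}$. I would then apply \Cref{lem:reg} with $A := B$ and with $\bfz' := \bfz^{(\sfe,j)}$ (using that $A^{(\sfe,j)}\subseteq \Omega^{\calU_\sfe,\alpha n}_{\bfz^{(\sfe,j)}}$, so $B\subseteq \Omega^{\calU_\sfe,\alpha n}_{\bfz^{(\sfe,j)}}$ as well) to further decompose $B$ into global pieces $B_{(1)},\dots,B_{(\ell)}$ with associated restrictions $\bfz_{(1)},\dots,\bfz_{(\ell)}$ each subsuming $\bfz^{(\sfe,j)}$. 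This two-stage split — first by the message of $\Pi$, then by the decomposition lemma — is declared to be a single round of $\Pi^{\reff}$ in which player $(\sfe,j)$ speaks; the message of $\Pi^{\reff}$ encodes both the $\Pi$-message and the index of the decomposition piece.

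The next step is to check that $\Pi^{\reff}$ as constructed is a legitimate $r$-round global protocol in the sense of \Cref{def:global_protocol}. The round count is immediate: each round of $\Pi^{\reff}$ corresponds to exactly one bit (or one message) of $\Pi$, so the number of rounds is at most $|\Pi|\le r$; exactly one player speaks per round, namely the same player who speaks in the corresponding step of $\Pi$. The globalness of each resulting leaf-rectangle component is exactly the ``Globalness'' conclusion (item (1)) of \Cref{lem:reg} for the freshly-split component, together with the inductive hypothesis for the untouched components. The potential inequality is the crux of the bookkeeping: I would observe that splitting $A^{(\sfe,j)}$ into the message-pieces $B$ \emph{does not increase} the average potential (splitting a set into a disjoint union can only redistribute, not increase, the quantity $|\supp(\bfz)| + \log_2(|\Omega_{\bfz}|/|A|)$ averaged against $|A|$, since $\sum_B |B|\log_2(|\Omega_{\bfz}|/|B|) \le |A^{(\sfe,j)}|\log_2(|\Omega_\bfz|/|A^{(\sfe,j)}|) + |A^{(\sfe,j)}| H(\text{split})$; actually one must be slightly careful here — the standard fact is that the conditional-entropy-style decomposition of $\log_2(1/|A|)$ into pieces is the relevant bound, and this is exactly what \Cref{lem:reg}(2) already handles internally), and then the decomposition-lemma step contributes the additive ``$+2$'' of \Cref{lem:reg}(2). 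Allowing a slack of $+3$ (rather than $+2$) in \Cref{def:global_protocol} absorbs any loss from the message-split stage. Since the potential only tracks a single player's component and summing over $(\sfe,j)$ is linear, I would write $\phi(\bdzeta,R) = \sum_{(\sfe,j)}(\text{per-player term})$, note only the speaking player's term changes, and apply \Cref{lem:reg}(2) to that term.

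Finally, for the ``output of $\Pi$ is constant on $R$'' property: by construction, every leaf rectangle $R$ of $\Pi^{\reff}$ refines a leaf rectangle of $\Pi$ (the message-split stages of $\Pi^{\reff}$ reconstruct precisely the transcript of $\Pi$), and $\Pi$'s output is by definition constant on its own leaf rectangles; hence it is constant on the finer $R$. One should also confirm $\adv(\Pi^{\reff}) \ge \adv(\Pi)$ — indeed $\adv(\Pi^{\reff}) = \adv(\Pi)$ since the leaves of $\Pi^{\reff}$ refine those of $\Pi$ and we may let $\Pi^{\reff}$ output $a_R$ equal to $\Pi$'s value on the containing $\Pi$-leaf; advantage is preserved under refinement when the output is the refined-leaf output that agrees with the original. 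I expect the main obstacle to be the careful accounting in the potential inequality — in particular, making sure that the message-splitting stage is handled correctly (it is tempting to think it ``costs nothing,'' but one must correctly invoke the entropy-type inequality, which is essentially the same Riemann-sum argument used inside the proof of \Cref{lem:reg}), and that we correctly track that only the speaking player's component changes so that the global (summed) potential inequality follows from the single-player one. A secondary subtlety is ensuring that the restriction $\bfz^{(\sfe,j)}$ we feed into \Cref{lem:reg} really does satisfy $B \subseteq \Omega^{\calU_\sfe,\alpha n}_{\bfz^{(\sfe,j)}}$, which follows because globalness of $A^{(\sfe,j)}$ w.r.t. $\bfz^{(\sfe,j)}$ includes the containment $A^{(\sfe,j)}\subseteq \Omega^{\calU_\sfe,\alpha n}_{\bfz^{(\sfe,j)}}$ by \Cref{def:global_set}, and $B\subseteq A^{(\sfe,j)}$.
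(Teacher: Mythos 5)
Your approach is essentially the same as the paper's: simulate $\Pi$ step by step, and after each message apply \Cref{lem:reg} to restore globalness of the speaking player's component, bundling the $\Pi$-message and the decomposition index into a single round of $\Pi^{\reff}$; the potential bookkeeping then allocates at most $1$ for the split by $\Pi$'s message and $+2$ from \Cref{lem:reg}(2), giving the $+3$ of \Cref{def:global_protocol}. Two small points deserve tightening. First, you hedge with ``one bit (or one message)'': the paper explicitly normalizes so that each step of $\Pi$ sends exactly one bit (and pads the protocol tree to uniform depth $r$); this is not cosmetic, since the message-split stage increases the average potential by exactly the Shannon entropy of the split, and only a binary split guarantees this is at most $1$, hence fits under the $+3$ budget per round. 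If messages could be longer than one bit, the per-round increment bound in \Cref{def:global_protocol} could fail even if the total increase is controlled. Second, your opening claim that the message-split ``does not increase the average potential'' is false as stated, and the inequality you write down in the same parenthetical actually shows the increase equals $|A^{(\sfe,j)}|\cdot H(\text{split})$; you then self-correct, but the cleaner statement is: splitting increases the weighted-average log-density term by exactly the entropy of the split, which is $\leq 1$ for a binary split, and \Cref{lem:reg}(2) adds at most $2$ more, giving $+3$. Your handling of globalness (only the speaking player's component changes, inductive hypothesis carries the rest), the per-player decomposition of $\phi$, and the observation that leaf rectangles of $\Pi^{\reff}$ refine those of $\Pi$ (so $\Pi$ is constant on each) all match the paper.
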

\begin{proof}
         We start with some setup. For convenience, given an arbitrary communication protocol $\Pi$ with $|\Pi|=r$, we consider its tree structure. Without loss of generality, assume that at each round, a player sends exactly one bit of message. In this case, the communication tree is a binary tree. Furthermore, we extend the tree so that all leaf nodes lie at the same depth. In particular, these modifications do not increase the communication cost or decrease the advantage of $\Pi$.  Each node $u$ on the tree has an associated rectangle $R_u = \prod_{(\sfe, j) \in \calE\times [K]}A^{(\sfe, j)}$. We use $\mathcal{N}_d$ to denote the set of all rectangles (nodes) of $\Pi$ of depth $d$, where root node is of depth $0$. In particular, $\mathcal{N}_{r}$ denotes the set of all leaf rectangles (nodes) of $\Pi$, and each leaf rectangle (node) is labeled with an output, either ``1'' or ``0''. 

    With the setup described above, we now construct the global protocol $\Pi^{\reff}$ (where the superscript ``$\reff$'' stands for ``refined''). 
    The formal construction of $\Pi^{\reff}$ is described in Algorithm \ref{alg:refinement}, but it is helpful to think
    of the construction slightly less formally. Note that viewing the protocol $\Pi$ as a communication tree, we have that each node in it corresponds to a rectangle. Thus, 
    in $\Pi^{\reff}$ we proceed going over the nodes of this tree, starting with the root node of $\Pi$, and decompose each one of these rectangle into structured rectangles (as defined in \Cref{def:structured_rectangle}) using~\Cref{lem:reg}. 
\begin{algorithm}
\DontPrintSemicolon
\SetKwInOut{Input}{Input}\SetKwInOut{Output}{Output}
    \caption{Construction of the global protocol $\Pi^{\reff}$}\label{alg:refinement}
    \Input{player $(\sfe,j)$ gets input $\bfy^{(\sfe,j)}\in\Omega^{\calU_\sfe, \alpha n }$}
    \Output{a bit $\texttt{ans}\in\{0,1\}$}
    initialize: $v\leftarrow$ the root of $\Pi$; for every $(\sfe,j)\in \calE\times [K]$, $A^{(\sfe,j)} \leftarrow \Omega^{\calU_\sfe,\alpha n}$, $\bfz^{(\sfe,j)}  = \emptyset$; $R\leftarrow \prod_{(\sfe,j)} A^{(\sfe,j)}$\;
    \While {\text{$v$ is not a leaf node}}{
        suppose player $(\sfe,j)$ communicates a bit at node $v$ according to $\Pi$\;
        let $A^{(\sfe,j)}=A_{0}\cup A_{1}$ be the partition\footnotemark~at $v$ according to $\Pi$\;
        let $b \in\{0,1\}$ be such that $\bfy^{(\sfe,j)}\in A_{b}$\;
        player $(\sfe,j)$ sends $b$, and we update $A^{(\sfe,j)}\leftarrow A_{b}$, $R\leftarrow \prod_{(\sfe,j)} A^{(\sfe,j)}$, $v\gets v_{b}$\;
        \If{$A^{(\sfe,j)}$ is not $\bfz^{(\sfe,j)}$-global}{
            $(A_{(1)},\bfz_{(1)}),\dots,(A_{(\ell)},\bfz_{(\ell)})\leftarrow\mathrm{Decompose}(A^{(\sfe,j)},\bfz^{(\sfe,j)})$ 
             \tcp*{run \Cref{alg:decomposition}}
            
            let $t\in [\ell]$ be such that $\bfy^{(\sfe,j)}\in A_{(t)}$\; 
            
            player $(\sfe,j)$ sends $t$, and we update $A^{(\sfe,j)}\leftarrow A_{_{(t)}}, \bfz^{(\sfe,j)} \leftarrow \bfz_{(t)}$, $R\leftarrow \prod_{(\sfe,j)} A^{(\sfe,j)}$\;
        }
    }
    let $\texttt{ans} = 1$ if $\mathcal{D}_{\mathrm{yes}}(R)\geq \mathcal{D}_{\mathrm{no}}(R)$, otherwise let $\texttt{ans} = 0$\;
    output $\texttt{ans}$\;
\end{algorithm}
\footnotetext{
Strictly speaking, the protocol $\Pi$ at node $v$ does not directly divide the set $A^{(\sfe,j)}$ itself, since $A^{(\sfe,j)}$ is a set dynamically maintained during the execution of the \emph{refined} protocol $\Pi^{\reff}$. However, there always exists a superset $A^{(\sfe,j)}_{\text{original}}\supseteq A^{(\sfe,j)}$ that is divided at node $v$ into two subsets based on the message of player $(\sfe,j)$ in the original protocol $\Pi$. The partition $A^{(\sfe,j)}=A_{0}\cup A_{1}$ is then the restriction of the partition of $A^{(\sfe,j)}_{\text{original}}$ according to $\Pi$. 
}
    
    \paragraph{$\Pi^{\mathrm{ref}}$ is global:}
    to analyze the protocol $\Pi^{\mathrm{ref}}$, we 
    define a round of communication as the event in which a player $(\sfe,j)$ sends both a bit $b$ and an integer $t$ (see Lines 3–10 in Algorithm \ref{alg:refinement}). Note that every rectangle produced by the refined protocol $\Pi^{\reff}$ after rounds of communications is global with respect to some associated restriction. We will keep track of the restriction corresponding to each rectangle $R$. We define $\mathcal{N}^{\mathrm{ref}}_d$ to be the set of all restriction-rectangle pairs $(\boldsymbol{\zeta}, R)$ that are generated by $\Pi^{\mathrm{ref}}$ after the first $d$ rounds of communication. There are two subtle differences between $\mathcal{N}_d$ and $\mathcal{N}^{\mathrm{ref}}_d$:
\begin{itemize}
    \item $\mathcal{N}_d$ is a set of rectangles, whereas $\mathcal{N}^{\mathrm{ref}}_d$ is a set of pairs, each consisting of a restriction and a rectangle.
    \item The notion of ``depth'' differs: a rectangle $R \in \mathcal{N}_d$ is obtained by protocol $\Pi$ after exactly $d$ bits have been communicated, while a pair $(\boldsymbol{\zeta}, R) \in \mathcal{N}^{\mathrm{ref}}_d$ is produced by protocol $\Pi^{\mathrm{ref}}$ after $d$ rounds of communication, with each round involving the transmission of a bit $b \in \{0,1\}$ and an integer $t$.
\end{itemize}
First, we show that $\Pi^{\reff}$ is a global protocol as per~\Cref{def:global_protocol}. The discussion above shows that (1) $\Pi^{\reff}$ has exactly $r$ rounds of communications; (2) after $0\leq d\leq r$ rounds of communications, the resulting rectangles in $\mathcal{N}_d^{\reff}$ are all global with respect to restrictions; (3) for all $(\boldsymbol{\zeta},R)\in \mathcal{N}_{d-1}^{\reff}$ and $(\boldsymbol{\zeta}',R')\in \mathcal{N}_{d}^{\reff}$ such that $R'\subseteq R$, we have $\boldsymbol{\zeta}'$ subsume $\boldsymbol{\zeta}$. Thus, we 
have the first two items in~\Cref{def:global_protocol}, and
we next show the third item.

It suffices to upper bound the potential increment after each round of communications. Assume that after $d$ rounds of communication according to $\Pi^{\reff}$, we obtain a structured rectangle $(\boldsymbol{\zeta},R)\in\mathcal{N}_{d}^{\reff}$ and player $(\sfe,j)$ will speak in the next round. The communication of the player $(\sfe,j)$ divides $A^{(\sfe,j)}$ into two parts $A_0,A_1$, which decomposes the rectangle $R$ into two disjoint rectangles $R_0,R_1$ via the message of $b$ (see lines 3 to 6 in Algorithm \ref{alg:refinement}). In lines 7 to 10, $R_0$ and $R_1$ are further decomposed into several structured rectangles separately by the message $t\in [\ell]$. We have:
\begin{align*}
    \sum_{\substack{(\boldsymbol{\zeta}',R')\in \mathcal{N}_{d+1} ^{\reff}\\ R'\subseteq R}}\frac{|R'|}{|R|}\cdot  \phi(\boldsymbol{\zeta}',R') &= \frac{|R_0|}{|R|}\sum_{\substack{(\boldsymbol{\zeta}',R')\in \mathcal{N}_{d+1} ^{\reff}\\ R'\subseteq R_0}}\frac{|R'|}{|R_0|}\cdot \phi(\boldsymbol{\zeta}',R')+ \frac{|R_1|}{|R|}\sum_{\substack{(\boldsymbol{\zeta}',R')\in \mathcal{N}_{d+1} ^{\reff}\\ R'\subseteq R_1}}\frac{|R'|}{|R_1|}\cdot \phi(\boldsymbol{\zeta}',R')\\
    &\leq \frac{|R_0|}{|R|}\left(\phi(\boldsymbol{\zeta},R)+\log_{2}\left(\frac{|R|}{|R_0|}\right)+2\right)+  \frac{|R_1|}{|R|}\left(\phi(\boldsymbol{\zeta},R)+\log_{2}\left(\frac{|R|}{|R_1|}\right)+2\right) \\
    &\leq \phi(\boldsymbol{\zeta},R)+ 3,
\end{align*}
where the second transition is by~\Cref{lem:reg}, and the last transition comes from the fact that the binary entropy is upper bounded by $1$.

\paragraph{$\Pi$ is constant on leaf rectangles of $\Pi^\reff$:} 
    by construction, it is easy to see that each leaf rectangle $R$ of $\Pi$ is subdivided into several subrectangles by the refined protocol $\Pi^{\reff}$, thus the statment holds. 
\end{proof}
\begin{remark}
    The structured rectangles induced by the refined protocol $\Pi^\reff$ are exactly the collection $\calR$ that we want to construct in Lemma \ref{lem:regularity_decomposition} (after deleting some bad pairs). We will formally prove it in Section \ref{sec:bound_bad_pairs}. 
\end{remark}

\subsection{Bounding the Weights of ``Bad'' Rectangles}\label{sec:bound_bad_pairs}
Having transformed arbitrary protocols to global ones, we proceed to analyze global protocols. Assume that we have a $r$-round global protocol $\Pi$. To prove Lemma \ref{lem:regularity_decomposition} for $\Pi$, it suffices to prove that only a small fraction of the structured rectangles in $\calR$ are not $(10^5 r)$-good (as defined in \Cref{def:good_rec}). To this end, we make the following definitions.

\begin{definition}
Let $\Pi$ be an $r$-round global protocol.  
For each $0 \leq d \leq r$, we let $\mathcal{R}^{d}(\Pi)$ denote the collection of structured rectangles obtained after $d$ rounds of communication in $\Pi$.  
In particular, we write $\calR^\leaf(\Pi)=\calR^r(\Pi)$ for the set of structured rectangles at the leaves of the protocol tree. For $d\geq r+1$, the collection $\calR^{d}(\Pi)$ is an emptyset.
\end{definition}

\begin{definition}\label{def:deleting_bad_pairs}
Let $\Pi$ be a global protocol for $\DIHP(G,n,\alpha,K)$ and let $W$ be a positive real number. We define
$\calR^{\mathrm{bad}}(\Pi,W)$ as the following set of structured rectangles:
\begin{align*}
    \calR^{\mathrm{bad}}(\Pi,W) :=\left\{(\boldsymbol{\zeta},R)\in\calR^\leaf: \text{$(\boldsymbol{\zeta},R)$ is not $W$-good as per \Cref{def:good_rec}}\right\}.
\end{align*}
\end{definition}
The main lemma of this section is the following.
\begin{lemma}\label{lem:analysis_of_global_protocols}
    For any fixed distribution-labeled $k$-graph $G$, integer $K>0$ and parameter $\alpha>0$, there exists a constant $\eta>0$ such that if $r\leq  \eta \sqrt{n}$, any $r$-round global protocol $\Pi$ for $\mathsf{DIHP}(G,n,\alpha,K)$ satisfies 
    \begin{align*}
        \sum_{(\bdzeta,R)\in \calR^{\bad}(\Pi,10^5r)}\calD_{\no}(R) \leq 0.01. 
    \end{align*}
\end{lemma}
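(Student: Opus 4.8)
\textbf{Proof plan for \Cref{lem:analysis_of_global_protocols}.}

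The plan is to track, for a random NO-input, the three possible failure modes that make a structured rectangle $(\bdzeta,R)\in\calR^\leaf(\Pi)$ fail to be $(10^5 r)$-good, and to bound the $\calD_{\no}$-mass of each separately. For a uniformly random input $\bfY\sim\calD_{\no}$, the protocol $\Pi$ routes $\bfY$ to a unique leaf structured rectangle $(\bdzeta(\bfY),R(\bfY))$. By definition of $\calD_{\no}$ being uniform on $\prod_{(\sfe,j)}\Omega^{\calU_\sfe,\alpha n}$, we have $\calD_{\no}(R)=\prod_{(\sfe,j)}|A^{(\sfe,j)}|/|\Omega^{\calU_\sfe,\alpha n}|$, and so $\sum_{(\bdzeta,R)\in\calR^{\bad}}\calD_{\no}(R)=\Pr_{\bfY\sim\calD_{\no}}[(\bdzeta(\bfY),R(\bfY))\in\calR^{\bad}(\Pi,10^5r)]$. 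I would split $\calR^{\bad}$ into three (possibly overlapping) sub-collections according to which of the three conditions in \Cref{def:good_rec} is violated, and bound each.

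\textbf{Condition (3): small relative density of some $A^{(\sfe,j)}$.} This is the main workhorse and is controlled by the potential function. Since $\Pi$ is an $r$-round global protocol, \Cref{def:global_protocol} gives that after each round the average potential increases by at most $3$; since the initial rectangle is the whole space with potential $0$, a standard martingale/telescoping argument over the $r$ rounds shows $\Exu{\bfY\sim\calD_{\no}}{\phi(\bdzeta(\bfY),R(\bfY))}\leq 3r$. Here one should observe that the expectation of the potential over the children, weighted by $|R'|/|R|$, is exactly the conditional expectation of the potential at the next round given $\bfY$ reaches $R$ (since $\calD_{\no}$ is uniform, the weights $|R'|/|R|$ are precisely the conditional probabilities), so the expected potential is a submartingale with bounded increments. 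By Markov's inequality, $\Pr_{\bfY\sim\calD_{\no}}[\phi(\bdzeta(\bfY),R(\bfY))\geq 10^4 r]\leq 3/10^4<0.003$. Now note that whenever a leaf $(\bdzeta,R)$ has $\phi(\bdzeta,R)\leq 10^4 r$, then automatically both $\sum_{(\sfe,j)}|\supp(\bfz^{(\sfe,j)})|\leq 10^4 r\leq 10^5 r$ (giving condition (2)) and, for each $(\sfe,j)$, $\log_2(|\Omega^{\calU_\sfe,\alpha n}_{\bfz^{(\sfe,j)}}|/|A^{(\sfe,j)}|)\leq 10^4 r\leq 10^5 r$, i.e. $|A^{(\sfe,j)}|/|\Omega^{\calU_\sfe,\alpha n}_{\bfz^{(\sfe,j)}}|\geq 2^{-10^5 r}$ (giving condition (3)), since each summand in the potential is nonnegative. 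So conditions (2) and (3) in \Cref{def:good_rec} fail only on a set of $\calD_{\no}$-mass at most $0.003$.

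\textbf{Condition (1): disjointness and cycle-freeness of the supports.} The supports $\supp(\bfz^{(\sfe,j)})$ are matchings inside the $k$-universes $\calU_\sfe$, and distinct players $(\sfe,j),(\sfe,j')$ with the same $\sfe$ have clouds that are literally the same vertex set, while players with different $\sfe$ may share some clouds. Conditioned on reaching a leaf whose total support size is at most $10^4 r$ (which, by the previous paragraph, covers all but $\calD_{\no}$-mass $0.003$ of the inputs), the union $E=\bigcup_{(\sfe,j)}\supp(\bfz^{(\sfe,j)})$ has at most $10^4 r$ hyperedges living over a vertex set of size $|\calV|\cdot n$. Each hyperedge $e$ of a player $(\sfe,j)$ is, under the uniform NO-distribution (even conditioned on the transcript-consistency, which only restricts labels, not the underlying matching choice — this is the content of the footnote in \Cref{alg:refinement} and the fact that globalness/decomposition is about densities), a uniformly random edge of the complete $k$-partite hypergraph on $\calU_\sfe$, chosen vertex-disjointly from the other edges of the same player. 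A first-moment computation shows that the probability that the $\leq 10^4 r$ edges across all players form a cycle (i.e. some $\ell$ of them cover $\leq \ell(k-1)$ vertices) or two of them intersect improperly is $O((10^4 r)^2/n)=O(r^2/n)$; taking $\eta$ small enough that $r\leq \eta\sqrt n$ forces this to be below $0.003$. Summing the three contributions gives $\sum_{(\bdzeta,R)\in\calR^{\bad}(\Pi,10^5 r)}\calD_{\no}(R)\leq 0.003+0.003+0.003<0.01$, as desired.

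\textbf{Main obstacle.} The delicate point is the cycle-freeness bound for condition (1): one must argue carefully that, conditioned on reaching a particular leaf $(\bdzeta,R)$, the \emph{underlying matchings} $\supp(\bfz^{(\sfe,j)})$ are distributed like uniformly random (small) matchings in their respective $k$-partite hypergraphs, so that the first-moment estimate applies. This requires unwinding how \Cref{alg:decomposition} and \Cref{alg:refinement} interact with the uniform measure — the decomposition steps are defined purely in terms of densities and do not privilege any particular matching, so the conditional law of the matching structure given a leaf is still (close to) uniform, but spelling this out cleanly is the technical heart. (In \cite{FMW25} the analogous step for $k=2$ is a union bound over pairs of edges sharing a vertex; here the statement is essentially the same, with $k$-uniform hyperedges and the appropriate notion of cycle, and the calculation is routine once the distributional claim is in place.) This is also precisely where the $\Theta(\sqrt n)$ threshold — and hence the restriction $r\leq\eta\sqrt n$ — comes from.
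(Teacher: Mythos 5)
Your treatment of conditions (2) and (3) via the potential function is exactly the paper's argument (your $\calR_1$-type set), with the same bounded-increment/Markov structure, and is correct. The gap is in condition (1), the disjointness/cycle-freeness of the supports, which you correctly flag as the technical heart but leave in a form that does not quite hold up.

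The statement ``conditioned on reaching a particular leaf $(\bdzeta,R)$, the underlying matchings $\supp(\bfz^{(\sfe,j)})$ are distributed like uniformly random matchings'' cannot be made rigorous as written: the restrictions $\bfz^{(\sfe,j)}$ are \emph{deterministic functions of the leaf}, not random objects conditioned on it. More importantly, the edges in $\supp(\bfz^{(\sfe,j)})$ are not sampled uniformly from $\prod\calU_{\sfe}$ --- they are chosen by \Cref{alg:decomposition} precisely because they have anomalously high conditional density in $A^{(\sfe,j)}$, which is the opposite of a uniformity claim. So a direct first-moment computation over the leaf supports is not available, and ``the decomposition steps do not privilege any particular matching'' is exactly the point in contention. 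The paper resolves this by never claiming distributional control of the supports. Instead it (i) proves a structural claim (\Cref{claim:cycle}): whenever a leaf is bad, there is some ancestor round at which a \emph{freshly revealed hyperedge of $\bfy^{(\sfe,j)}$} (not of $\bfz^{(\sfe,j)}$) already collides with the vertices revealed so far, i.e.\ the leaf rectangle lies inside the ``collision set'' $B(\bdzeta^{\anc},R^{\anc})$ of that ancestor; and (ii) bounds $\calD_{\no}(B(\bdzeta,R))$ at each node (\Cref{lem:low_cycle_probability}) using only the one-sided density bound that $\bfz$-globalness of $A^{(\sfe,j)}$ gives, namely that the fraction of $\bfy\in A^{(\sfe,j)}$ containing a fixed edge $e$ is at most $2\times$ the unconditional fraction. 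The $O(r^2/n)$ total then comes from telescoping $|V(\bdzeta)|^2-|V(\widetilde\bdzeta)|^2$ along the protocol tree. So your high-level heuristic (union bound over collisions, $O(r^2/n)$ threshold) is pointing in the right direction, but the mechanism that makes it work --- propagate the bad event to the first ancestor where a collision could occur, then bound the collision probability there with globalness, not uniformity --- is missing, and without it the argument for condition (1) does not go through.
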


We observe that the desired \Cref{lem:regularity_decomposition} (restated below) follows immediately from \Cref{lem:analysis_of_global_protocols}. 

\DecompositionLemma*

\begin{proof}[Proof of \Cref{lem:regularity_decomposition} assuming \Cref{lem:analysis_of_global_protocols}]
Let $r = |\Pi|$, and apply \Cref{lem:arbitrary_to_global} to transform $\Pi$ into an $r$-round global protocol $\Pi^{\mathrm{ref}}$.  
Consider the collection
\[
    \calR := \calR^{\mathrm{leaf}}\!\left(\Pi^{\mathrm{ref}}\right) 
    \setminus \calR^{\mathrm{bad}}\!\left(\Pi^{\mathrm{ref}}, 10^{5} r\right).
\]
We claim that $\calR$ satisfies the three conditions stated in the lemma.  

The second condition follows directly from \Cref{def:deleting_bad_pairs}, and the third condition follows from the guarantee of \Cref{lem:arbitrary_to_global}.  
The first condition follows from \Cref{lem:analysis_of_global_protocols} together with the obvious identity
\[
    \sum_{(\boldsymbol{\zeta}, R) \in \calR^{\mathrm{leaf}}(\Pi^{\mathrm{ref}})} 
    \calD_{\no}(R) = 1.\qedhere
\]
\end{proof}

The remainder of this section is devoted for the proof of \Cref{lem:analysis_of_global_protocols}, modulo two technical claims that are proved in~\Cref{app:techinical_lemma}. The following notations will be useful in the proof.

\begin{notation}
For a set $T$ of $k$-hyperedges on the vertex set $\calV \times [n]$, we write $V(T)$ for the set of vertices incident to at least one hyperedge in $T$. For a restriction sequence $\boldsymbol{\zeta} = \left( \bfz^{(\sfe,j)} \right)_{(\sfe,j) \in \calE \times [K]}$, we define
\[
V(\bdzeta):=\bigcup_{(\sfe,j)\in \calE\times [K]}V\left(\supp(\bfz^{(\sfe,j)})\right).
\]
\end{notation}

\begin{proof}[Proof of \Cref{lem:analysis_of_global_protocols}]
First, we define the following two subcollections $\calR_{1},\calR_{2}\subseteq \calR^{\leaf}(\Pi)$:
    \begin{enumerate}[label=(\arabic*)]
        \item $\mathcal{R}_1$ is the collection of structured rectangles $(\boldsymbol{\zeta}, R)\in \mathcal{R}^\leaf(\Pi)$ such that $\phi(\zeta, R)> 10^5 r$; 
        \item $\mathcal{R}_2$ is the collection of 
        structured rectangles $(\boldsymbol{\zeta}, R)\in \mathcal{R}^\leaf(\Pi)$ such that either the hyperedge sets $\left(\supp(\bfz^{(\sfe,j)})\right)_{(\sfe,j)\in \calE\times [K]}$ are not pairwise disjoint or their union contains a cycle. 
    \end{enumerate}
    It is easy to see that $\calR^\bad(\Pi,10^5 r) \subseteq \calR_1\cup \calR_2 $. The pairs that violate the first condition in the definition of goodness (\Cref{def:good_rec}) are included in $\calR_2$, while the pairs violating the second or the third conditions are included in $\calR_1$. It suffices to prove the following bounds regarding the two subcollections $\mathcal{R}_1, \mathcal{R}_2$:
    \begin{enumerate}[label=(\arabic*)]
        \item $\sum_{(\boldsymbol{\zeta}, R)\in \mathcal{R}_1} \Dno(R)\leq 0.005$;
        \item $\sum_{(\boldsymbol{\zeta},R)\in \mathcal{R}_2\setminus\mathcal{R}_1} \Dno (R)\leq 0.005$.
    \end{enumerate}

    \paragraph{Upper bound for $\mathcal{R}_1$.} To upper bound of the total weight of structured rectangles $(\boldsymbol{\zeta}, R)$ with $\phi(\boldsymbol{\zeta},R)\geq  10^5 r$, we first bound the weighted sum of potentials $\phi(\boldsymbol{\zeta},R)$ over all leaf pairs $(\boldsymbol{\zeta},R)\in \mathcal{R}^{\leaf}(\Pi)$. More precisely, we show that
    \begin{align}\label{eq:weighted_sum}
        \sum_{(\boldsymbol{\zeta},R)\in \mathcal{R}^{\leaf}(\Pi)} \frac{|R|}{\left|\prod_{(\sfe, j) \in \calE \times [K]} \Omega^{\calU_{\sfe}, \alpha n}\right|}\cdot \phi (\boldsymbol{\zeta},R) \leq 3\cdot r,
    \end{align}
    and the proof proceeds by induction argument on the depth $d$: we prove that for all $d$,
    \begin{align}\label{eq:induction_hypo}
    \sum_{(\boldsymbol{\zeta},R)\in \mathcal{R}^d(\Pi)} \frac{|R|}{\left|\prod_{(\sfe, j) \in \calE \times [K]} \Omega^{\calU_{\sfe}, \alpha n}\right|}\cdot \phi(\boldsymbol{\zeta},R)\leq 3\cdot d.
    \end{align}
    When $d=0$ the statement is clear as $\mathcal{R}^0(\Pi)$ only
    contains the trivial rectangle $\prod_{(\sfe, j) \in \calE \times [K]} \Omega^{\calU_{\sfe}, \alpha n}$, whose potential equals $0$.
    Let $d>0$ and assume that~\eqref{eq:induction_hypo} holds for $d-1$.  We have
    \begin{align*}
        &\quad\sum_{(\boldsymbol{\zeta}',R')\in\mathcal{R}^d(\Pi)} \frac{|R'|}{\left|\prod_{(\sfe, j) \in \calE \times [K]} \Omega^{\calU_{\sfe}, \alpha n}\right|}\cdot \phi(\boldsymbol{\zeta}',R') \\ 
        &= \sum_{(\boldsymbol{\zeta},R)\in \mathcal{R}^{d-1}(\Pi)} \frac{|R|}{\left|\prod_{(\sfe, j) \in \calE \times [K]} \Omega^{\calU_{\sfe}, \alpha n}\right|}\sum_{\substack{(\boldsymbol{\zeta}',R')\in \mathcal{R}^d(\Pi)\\ R'\subseteq R,}} \frac{|R'|}{|R|}\cdot \phi(\boldsymbol{\zeta}',R') \\
        &\leq \sum_{(\boldsymbol{\zeta},R)\in \mathcal{R}^{d-1}(\Pi)} \frac{|R|}{\left|\prod_{(\sfe, j) \in \calE \times [K]} \Omega^{\calU_{\sfe}, \alpha n}\right|} \cdot \left(\phi(\boldsymbol{\zeta},R)+3 \right)\leq 3(d-1) + 3 = 3d,
    \end{align*}
    where the second transition is by~\Cref{def:global_protocol}, and the last transition is due to the inductive hypothesis. This completes the inductive step, and in particular establishes~\eqref{eq:weighted_sum}.
    
    The bound on $\mathcal{R}_1$ now follows by Markov's inequality applied on~\eqref{eq:weighted_sum}:
    \begin{align*}
        \sum_{(\boldsymbol{\zeta},R)\in \mathcal{R}_1} \Dno(R) &= \sum_{(\boldsymbol{\zeta},R)\in \mathcal{R}_1} \frac{|R|}{\left|\prod_{(\sfe, j) \in \calE \times [K]} \Omega^{\calU_{\sfe}, \alpha n}\right|} \\ &\leq \frac{1}{10^{5}r}\sum_{(\bdzeta,R)\in \mathcal{R}_{\leaf}}\frac{|R|}{\left|\prod_{(\sfe, j) \in \calE \times [K]} \Omega^{\calU_{\sfe}, \alpha n}\right|}\cdot \phi(\boldsymbol{\zeta},R) \\&\leq \frac{3\cdot r}{10^5  r}<0.005.
    \end{align*}
    \paragraph{Upper bound for $\mathcal{R}_2\setminus \mathcal{R}_1$.} For an integer $d\in\{1,2,\dots,r\}$ and a structured rectangle $(\bdzeta,R)\in \calR^{d}(\Pi)$, we write \[(\bdzeta,R)\mapsto \left(\widetilde{\bdzeta},\widetilde{R}\right)\] 
    for the unique parent structured rectangle $\left(\widetilde{\bdzeta},\widetilde{R}\right)\in \calR^{d-1}(\Pi)$ such that $R\subseteq \widetilde{R}$. When we write a summation \[\sum_{(\bdzeta,R)\mapsto \left(\widetilde{\bdzeta},\widetilde{R}\right)}(\cdot),\]
    we mean the sum over all structured rectangles $(\bdzeta,R)\in \bigcup_{d=1}^{r}\calR^d(\Pi)$ together with their respective parents $\left(\widetilde{\bdzeta},\widetilde{R}\right)$.
    
    The main component of the proof is to obtain the following inequality for some constant $J$:
    \begin{align}\label{eq:cycle}
        \sum_{(\boldsymbol{\zeta},R)\in \mathcal{R}_2\setminus \mathcal{R}_1} \Dno(R) \leq \frac{J}{n}\cdot \sum_{\substack{(\bdzeta,R)\mapsto \left(\widetilde{\bdzeta},\widetilde{R}\right)\\ |V(\boldsymbol{\zeta})|\leq 10^5 r}}\Dno(R) \left(|V(\bdzeta)|^2 - \left|V(\widetilde{\bdzeta})\right|^2\right).
    \end{align}
    Indeed, once we have \eqref{eq:cycle}, a change of variables on the right-hand side gives
    \begin{align}
    \sum_{(\boldsymbol{\zeta},R)\in \mathcal{R}_2\setminus \mathcal{R}_1} \Dno(R) &\leq \frac{J}{n}\left(\sum_{\substack{(\bdzeta,R)\mapsto \left(\widetilde{\bdzeta},\widetilde{R}\right)\\ |V(\boldsymbol{\zeta})|\leq 10^5 r}}\calD_{\no}(R)|V(\bdzeta)|^{2}-\sum_{\substack{(\bdzeta',R')\mapsto (\bdzeta,R)\\ |V(\boldsymbol{\zeta}')|\leq 10^5 r}}\calD_{\no}(R')|V(\bdzeta)|^{2}\right) \nonumber\\
    &=\frac{J}{n}\cdot \sum_{d=0}^{r}\sum_{\substack{(\bdzeta,R)\in \calR^{d}(\Pi)\\ |V(\bdzeta)|\leq 10^5 r}}|V(\bdzeta)|^{2}\cdot \calD\Big(\mathrm{Reduced}(\bdzeta,R)\Big), \label{eq:change_of_variables}
    \end{align}
    where for any $(\bdzeta,R)\in \calR^{d}(\Pi)$ with $0\leq d\leq r$, we define 
    \[\mathrm{Reduced}(\bdzeta,R):=R\setminus\left(\bigcup_{(\bdzeta',R')} R'\right),\] with the union taken over all children pairs $(\bdzeta',R')\in\calR^{d+1}(\Pi)$ such that $|V(\bdzeta')|\leq 10^5r$ and $(\bdzeta',R')\mapsto (\bdzeta,R)$. Note that since any leaf pair $(\bdzeta,R)\in \calR^{r}(\Pi)$ has no children in $\calR^{r+1}(\Pi)$ (which is empty), for these structured rectangles we have $\mathrm{Reduced}(\bdzeta,R)\defeq R$.

    Since the collection of rectangles $\mathrm{Reduced}(\bdzeta,R)$ for $(\bdzeta,R)\in \bigcup_{d=0}^{r}\calR^{d}(\Pi)$ are clearly pairwise disjoint, the sum of their $\calD_{\no}$ weights is at most 1. We can thus conclude from \eqref{eq:change_of_variables}
    \begin{align*}
    \sum_{(\boldsymbol{\zeta},R)\in \mathcal{R}_2\setminus \mathcal{R}_1} \Dno(R)\leq \frac{J}{n}\cdot (10^5r)^{2}\sum_{d=0}^{r}\sum_{\substack{(\bdzeta,R)\in \calR^{d}(\Pi)\\ |V(\bdzeta)|\leq 10^5 r}} \calD\left(\mathrm{Reduced}(\bdzeta,R)\right)\leq \frac{J}{n}\cdot (10^5r)^{2}.
    \end{align*}
    When $\eta$ is chosen to be sufficiently small and $r\leq \eta\sqrt{n}$, the value of $\frac{J}{n}\cdot (10^5r)^{2}$ is at most $0.005$, as desired.

    \paragraph{Proof of the inequality \eqref{eq:cycle}.} 
    Suppose $(\bdzeta,R)$ is a structured rectangle in $\bigcup_{d=1}^{r}\calR^{d}(\Pi)$, where the restriction sequence $\bdzeta$ is written out as $\left(\bfz^{(\sfe,j)}\right)_{(\sfe,j)\in \calE\times [K]}$. Let $\left(\widetilde{\bdzeta},\widetilde{R}\right)$ be parent of $(\bdzeta,R)$. We define the subset   $B(\boldsymbol{\zeta},R)\subseteq R$ as the set of all joint inputs $\left(\bfy^{(\sfe,j)}\right)_{(\sfe,j)\in \calE\times K}\in R$ satisfying the following: there exists a player $(\sfe,j)\in \calE\times [K]$ and an edge $e\in \supp \left(\bfy^{(\sfe,j)}\right)\setminus \supp \left(\bfz^{(\sfe,j)}\right)$ such that 
    \[\left|V(\{e\}) \cap V(\boldsymbol{\zeta})\right|\geq 2\quad\text{and}\quad \left| V(\{e\}) \cap \left(V(\boldsymbol{\zeta}) \setminus V(\widetilde{\boldsymbol{\zeta}})\right)\right|\geq 1.\]
    We make the following two claims, the proofs of  which are deferred to \Cref{app:techinical_lemma}. 
    \begin{claim}\label{claim:cycle}
    For every structured rectangle $(\boldsymbol{\zeta},R)\in \mathcal{R}_2\setminus \mathcal{R}_1$, there exists an integer $d\in [r]$ and a structured rectangle $(\boldsymbol{\zeta}^{\anc},R^\anc)\in \mathcal{R}^{d}(\Pi)$ such that $|V(\boldsymbol{\zeta}^\anc)|\leq 10^5 r$ and $R\subseteq B(\boldsymbol{\zeta}^\anc,R^\anc)$. 
    \end{claim}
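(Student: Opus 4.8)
\textbf{Proof plan for Claim~\ref{claim:cycle}.}

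The plan is to fix a structured rectangle $(\boldsymbol{\zeta},R)\in\calR_2\setminus\calR_1$ and trace backwards through the global protocol tree to find the moment where the ``cycle or overlap'' defect in $\bdzeta$ first appeared. Concretely, since $(\bdzeta,R)\notin\calR_1$ we have $\phi(\bdzeta,R)\le 10^5 r$, which in particular bounds $\sum_{(\sfe,j)}|\supp(\bfz^{(\sfe,j)})|\le 10^5 r$ and hence $|V(\bdzeta)|\le k\cdot 10^5 r$ (the correct goodness constant in the statement is $10^5 r$, but any linear-in-$r$ bound suffices here and we may absorb the factor $k$ into the constant, or re-examine exactly which quantity the lemma tracks). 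Because $(\bdzeta,R)\in\calR_2$, either the supports $\supp(\bfz^{(\sfe,j)})$ are not pairwise disjoint, or their union contains a cycle. In either case there is a "witnessing" hyperedge: in the first case a hyperedge $e$ lying in two supports, in the second a hyperedge $e$ whose addition to the union closes a cycle, i.e.\ $e$ shares at least two vertices with the union of the other supports. The key point is that in both cases, $e$ meets $V(\bdzeta)$ in at least two vertices.

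Next I would walk up the protocol tree from the leaf $(\bdzeta,R)$ toward the root, examining the chain of ancestors $(\bdzeta_r,R_r)\mapsto(\bdzeta_{r-1},R_{r-1})\mapsto\cdots$. The witnessing hyperedge $e$ belongs to $\supp(\bfy^{(\sfe,j)})$ for every joint input in $R$ (for the relevant player, and also for a second player in the overlap case), but $e$ is \emph{not} in $\supp(\bfz^{(\sfe',j')})$ at the root. So there is a first ancestor $(\bdzeta^\anc,R^\anc)=(\bdzeta_d,R_d)$, arrived at after $d$ rounds, at which $e$ has just become "committed" in the sense that $V(\{e\})\cap V(\bdzeta_d)$ reaches size $\ge 2$ while also picking up a genuinely new vertex not present in the parent $\bdzeta_{d-1}$. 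This is exactly the condition defining membership of $R$ in $B(\bdzeta^\anc,R^\anc)$: there is a player $(\sfe',j')$ and an edge $e\in\supp(\bfy^{(\sfe',j')})\setminus\supp(\bfz^{(\sfe',j')}_d)$ with $|V(\{e\})\cap V(\bdzeta_d)|\ge 2$ and $|V(\{e\})\cap(V(\bdzeta_d)\setminus V(\bdzeta_{d-1}))|\ge 1$. Since $\bdzeta_d$ subsumes $\bdzeta_{d-1}$ and $V(\bdzeta_d)\subseteq V(\bdzeta)$ with $|V(\bdzeta)|\le 10^5 r$, we get $|V(\bdzeta^\anc)|\le 10^5 r$, and $R\subseteq R_d$ gives $R\subseteq B(\bdzeta^\anc,R^\anc)$ as required.

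The main obstacle, and the step that needs care, is verifying that such a "first moment" $d$ genuinely exists and that at that moment \emph{both} of the defining inequalities for $B(\bdzeta^\anc,R^\anc)$ hold simultaneously for the \emph{same} hyperedge $e$. The subtlety is that the vertex set $V(\bdzeta_d)$ grows as $d$ increases (supports only get subsumed, never shrink, along a root-to-leaf path), so $|V(\{e\})\cap V(\bdzeta_d)|$ is nondecreasing in $d$; it starts at some value $<2$ when $e$ is not yet relevant and ends at $\ge 2$ at the leaf. One then takes $d$ minimal with $|V(\{e\})\cap V(\bdzeta_d)|\ge 2$. At this $d$ the count jumped from $\le 1$ to $\ge 2$, so the round-$d$ refinement must have added to $\bdzeta_{d-1}$ a support containing a vertex of $e$ that was not in $V(\bdzeta_{d-1})$ — this gives the second inequality. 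One must also handle the boundary/degenerate cases: if $e$ was already entirely inside $V(\bdzeta_{d-1})$ this argument breaks, but then the "new vertex" is supplied by a different witnessing edge (in the cycle case there may be several candidate edges and one picks one whose final incident vertex enters last), and in the overlap case the two players' supports meeting at $e$ pin down the argument. Once the existence of $d$ and the simultaneous validity of the two inequalities is nailed down, the containment $R\subseteq B(\bdzeta^\anc,R^\anc)$ and the bound $|V(\bdzeta^\anc)|\le 10^5 r$ are immediate.
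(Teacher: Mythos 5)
Your high-level strategy is right — identify a witnessing hyperedge $e$, exploit the monotone growth of $V(\bdzeta_d)$ along the root-to-leaf path, and take the first round where $e$ acquires two covered vertices. But the proposal has a real gap in how $e$ is chosen, and your own diagnosis of the trouble in the last paragraph is not pointing at the actual obstruction.

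The issue is not ``$e$ already entirely inside $V(\bdzeta_{d-1})$'' — that literally cannot happen once you take $d$ minimal with $|V(\{e\})\cap V(\bdzeta_d)|\ge 2$, since the count at $d-1$ is then $\le 1<k$. The actual problem is the third defining requirement of $B(\bdzeta^\anc,R^\anc)$, namely that $e\in\supp(\bfy^{(\sfe',j')})\setminus\supp\bigl({\bfz^\anc}^{(\sfe',j')}\bigr)$ for \emph{some} player $(\sfe',j')$. If, at your minimal round $d$, the hyperedge $e$ is added \emph{in its entirety} to a single player's restriction (which is perfectly possible, since a round of the global protocol can attach a whole matching), then $e$ lies in $\supp\bigl({\bfz^\anc}^{(\sfe',j')}\bigr)$ for that player, and in the cycle case there is no other player holding $e$. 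Your proposal never rules this out, and a vague ``pick a different witnessing edge'' does not fix it: you would then need to re-run the monotonicity argument for the new edge and re-check the membership condition, with no guarantee of termination.

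The paper resolves this by not choosing $e$ at the leaf at all. Instead, walk up from the leaf to the \emph{first ancestor} $(\bdzeta',R')$ whose restriction supports are pairwise disjoint and cycle-free, and let $(\bdzeta'',R'')$ be its child on the path (the first ``bad'' node from the root side). Exactly one player $(\sfe^*,j^*)$ changes between $\bdzeta'$ and $\bdzeta''$; pick $e$ in the added matching $M:=\supp\bigl(\bfz''^{(\sfe^*,j^*)}\bigr)\setminus\supp\bigl(\bfz'^{(\sfe^*,j^*)}\bigr)$ with $\ge 2$ vertices already covered by $V(\bdzeta')$. This gives you both facts you need for free: $e$ already has $\ge 2$ vertices in $V(\bdzeta')$ (so you can search for $(\bdzeta^\anc,R^\anc)$ among ancestors of $(\bdzeta',R')$, never crossing back below it), and $e\notin\supp\bigl(\bfz'^{(\sfe^*,j^*)}\bigr)$, hence $e\notin\supp\bigl({\bfz^\anc}^{(\sfe^*,j^*)}\bigr)$ for every ancestor $\bdzeta^\anc$ of $\bdzeta'$, which is precisely the membership condition you were missing. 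Without this ``first good ancestor'' step, your argument does not close.

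A smaller point: the bound $|V(\bdzeta)|\le 10^5 r$ as stated (versus $k\cdot 10^5 r$) is indeed a constant-factor quibble and does not affect the structure of the proof, as you correctly note.
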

    \begin{claim}\label{lem:low_cycle_probability}
    There exists a constant $J$ (depending only on $G,\alpha,K$) such that for any structured rectangle $(\boldsymbol{\zeta},R)\in\bigcup_{d=1}^{r}\calR^d(\Pi)$ with $|V(\boldsymbol{\zeta})| \leq \sqrt{n}$, if $\left(\widetilde{\bdzeta},\widetilde{R}\right)$ is its parent, then
    \begin{align*}
    \mathcal{D}_{\mathrm{no}}(B(\boldsymbol{\zeta},R)) \leq \frac{J}{n}\cdot \Dno(R) \left(|V(\bdzeta)|^2 - \left|V(\widetilde{\bdzeta})\right|^2\right).
    \end{align*}
\end{claim}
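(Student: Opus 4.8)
\textbf{Proof plan for Claim~\ref{lem:low_cycle_probability}.} The plan is to bound $\calD_{\no}(B(\bdzeta,R))$ by a union bound over the ``witnessing'' data that places a joint input into the bad set $B(\bdzeta,R)$. Recall that under $\calD_{\no}$ every player's input is an independent uniformly random labeled matching, so it suffices to analyze the single player $(\sfe,j)$ whose matching contains the offending edge $e$; the other coordinates contribute exactly the factor needed to convert the resulting probability bound into a multiple of $\calD_{\no}(R)$. More precisely, writing $R=\prod_{(\sfe,j)}A^{(\sfe,j)}$, I would condition on the event $\bfY\in R$ and estimate, for each fixed player $(\sfe,j)$, the conditional probability that $\supp(\bfy^{(\sfe,j)})$ contains an edge $e$ meeting $V(\bdzeta)$ in at least two vertices and meeting $V(\bdzeta)\setminus V(\widetilde\bdzeta)$ in at least one vertex. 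Since $A^{(\sfe,j)}$ is $\bfz^{(\sfe,j)}$-global (here the globalness with respect to $\bfz^{(\sfe,j)}$, together with $\supp(\bfz^{(\sfe,j)})$ already sitting inside $V(\bdzeta)$, is what lets us add one new edge while paying only a factor of $2$ in relative density, by Definition~\ref{def:global_set}), the conditional probability that a \emph{specified} new edge $e$ lies in $\supp(\bfy^{(\sfe,j)})$ is at most $2$ times its unconditional probability in a uniform matching of $\calM_{\calU_{\sfe},\alpha n}$, which is $O(\alpha n)\cdot n^{-k}=O(n^{1-k})$.

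The next step is to count the relevant edges $e$. An edge $e$ contributing to $B(\bdzeta,R)$ must have at least two of its $k$ vertices in $V(\bdzeta)$, and at least one of these ``new'' (in $V(\bdzeta)\setminus V(\widetilde\bdzeta)$). The number of ways to pick such a pair of vertices is at most $|V(\bdzeta)\setminus V(\widetilde\bdzeta)|\cdot|V(\bdzeta)| = O\big(|V(\bdzeta)|^2-|V(\widetilde\bdzeta)|^2\big)$ up to a constant depending only on $k$ (using $b^2-a^2=(b-a)(b+a)\ge (b-a)\cdot b$ when $b\ge a$, and $|V(\bdzeta)|\ge|V(\widetilde\bdzeta)|$), and the remaining $k-2$ vertices of $e$ can be chosen among the $O(n)$ vertices in the appropriate clouds, giving at most $O(n^{k-2})$ completions. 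Multiplying the edge count $O\big(n^{k-2}(|V(\bdzeta)|^2-|V(\widetilde\bdzeta)|^2)\big)$ by the per-edge probability $O(n^{1-k})$ and by the number of players $|\calE|K=O(1)$, and then summing the resulting conditional bound against $\calD_{\no}(R)$, yields $\calD_{\no}(B(\bdzeta,R))\le \tfrac{J}{n}\cdot\calD_{\no}(R)\cdot(|V(\bdzeta)|^2-|V(\widetilde\bdzeta)|^2)$ for a constant $J=J(G,\alpha,K)$, which is exactly the claim. The hypothesis $|V(\bdzeta)|\le\sqrt n$ is used to guarantee that the various ``at most'' counts are genuinely $O(n^{k-2})$ and that the globalness factor stays bounded (the supports involved have size $O(|V(\bdzeta)|)=O(\sqrt n)$, well within the matching size $\alpha n$).

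The main obstacle I anticipate is handling the conditioning on $\bfY\in R$ correctly: one cannot simply say that $\supp(\bfy^{(\sfe,j)})$ is a uniform matching after conditioning, since $A^{(\sfe,j)}$ may be a complicated set. The clean way around this is to invoke globalness in the form of Definition~\ref{def:global_set} applied to the restriction $\bfz^{(\sfe,j)}$ extended by the single edge $e$ (with any label): this shows that $|A^{(\sfe,j)}\cap\Omega^{\calU_{\sfe},\alpha n}_{\bfz'}|/|\Omega^{\calU_{\sfe},\alpha n}_{\bfz'}|\le 2\cdot|A^{(\sfe,j)}|/|\Omega^{\calU_{\sfe},\alpha n}_{\bfz^{(\sfe,j)}}|$ for the extended restriction $\bfz'$, and summing over the $N^k$ label choices for $e$ bounds the conditional probability that $e\in\supp(\bfy^{(\sfe,j)})$ by $2N^k$ times the relative size of the one-edge-extended restricted space, i.e.\ $2N^k\cdot\Psi(|\calU_{\sfe}|,\alpha n,|\supp(\bfz^{(\sfe,j)})|+1)/\Psi(|\calU_{\sfe}|,\alpha n,|\supp(\bfz^{(\sfe,j)})|)=O(\alpha n\cdot n^{-k})$ by Definition~\ref{def:Psi}. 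Absorbing all the $k$-, $N$-, $\alpha$-, and $|\calE|K$-dependence into $J$ completes the argument; everything else is a routine counting estimate.
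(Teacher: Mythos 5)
Your proposal is correct and follows essentially the same route as the paper's proof: both decompose $B(\bdzeta,R)$ by the player contributing the offending edge, bound the number of candidate edges by $|V(\bdzeta)\setminus V(\widetilde\bdzeta)|\cdot|V(\bdzeta)|\cdot n^{k-2}\leq n^{k-2}(|V(\bdzeta)|^2-|V(\widetilde\bdzeta)|^2)$, and invoke $\bfz^{(\sfe,j)}$-globalness of $A^{(\sfe,j)}$ on the one-edge-extended restriction (summed over labels) to bound each conditional inclusion probability by $2$ times the unconditional one, which is $O(\alpha n\cdot n^{-k})$. The only bookkeeping slip is a spurious factor of $N^k$ in your final display — the ratio $|\Omega^{\calU_\sfe,\alpha n}_{\bfz'}|/|\Omega^{\calU_\sfe,\alpha n}_{\bfz^{(\sfe,j)}}|$ for a single labeled extension $\bfz'$ already carries a $1/N^k$ relative to $\Psi(|\calU_\sfe|,\alpha n,|\supp|+1)/\Psi(|\calU_\sfe|,\alpha n,|\supp|)$, so after summing over the $N^k$ label choices the $N^k$ cancels rather than appearing — but this is harmless since $N$ is a constant absorbed into $J$.
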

    \Cref{claim:cycle} implies
    \begin{align}\label{eq:cycle_2}
        \sum_{(\boldsymbol{\zeta},R)\in \mathcal{R}_2\setminus \mathcal{R}_1} \Dno(R) \leq \sum_{d=1}^{r}\sum_{\substack{\left(\boldsymbol{\zeta},R\right)\in\mathcal{R}^d(\Pi) \\ |\boldsymbol{\zeta}|\leq 10^5 r}} \Dno(B(\boldsymbol{\zeta},R)),
    \end{align}
    since the structured rectangles in $\calR_{2}\setminus \calR_{1}$ are pairwise disjoint. Then plugging \Cref{lem:low_cycle_probability} into \eqref{eq:cycle_2} yields the desired inequality \eqref{eq:cycle}, as long as $\eta\leq 10^{-5}$.  
\end{proof}

    \subsection{Proofs of the Technical Claims}\label{app:techinical_lemma}
    In this subsection, we finish the proofs of \Cref{claim:cycle} and \Cref{lem:low_cycle_probability}, thereby completing the proof of \Cref{lem:analysis_of_global_protocols}. Throughout the proofs, we keep the notations introduced in the proof of \Cref{lem:analysis_of_global_protocols}.
    We first prove \Cref{claim:cycle}.
    
    \begin{proof}[Proof of \Cref{claim:cycle}.]
        Fix a structured rectangle $(\boldsymbol{\zeta},R)\in \mathcal{R}_2\setminus \mathcal{R}_1$. By definition, $(\boldsymbol{\zeta},R)\in \mathcal{R}^{\mathrm{leaf}}(\Pi)=\mathcal{R}^r(\Pi)$ and $|V(\boldsymbol{\zeta})|\le 10^5 r$.  
        
        There is a unique path $P$ in the protocol tree of $\Pi$ that starts at this leaf $(\boldsymbol{\zeta},R)$ and terminates at the root rectangle (namely the whole space $\prod_{(\sfe,j)\in\mathcal{E}\times[K]}\Omega^{\mathcal{U}_{\sfe},\alpha n}$), obtained by iteratively replacing each node by its parent. In other words, each rectangle on the path $P$ is the parent of its predecessor, and the sequence ends at the root. Let  
    \[
\left(\bdzeta'=\left(\bfz'^{(\sfe,j)}\right)_{(\sfe,j)\in\calE\times[K]},\,R'\right)
    \]  
    be the first structured rectangle along this path that satisfies the first condition of goodness (in \Cref{def:good_rec}), namely:  
    \begin{itemize}
        \item The support sets $\bigl(\supp(\bfz^{(\sfe,j)})\bigr)_{(\sfe,j)\in \calE\times [K]}$ are pairwise disjoint;  
        \item The union $\bigcup_{(\sfe,j)\in \calE\times [K]}\supp(\bfz^{(\sfe,j)})$ is cycle-free.  
    \end{itemize}
        
    Note that by the definition of $\calR_{2}$, the initial structured rectangle on the path $P$ --- namely, the leaf $(\bdzeta,R)$ --- does not satisfy this condition. On the other hand, the final structured rectangle on the path, i.e., the whole space $\prod_{(\sfe,j)\in\mathcal{E}\times[K]}\Omega^{\mathcal{U}_{\sfe},\alpha n}$ with empty restrictions, clearly does. Hence $(\bdzeta',R')$ is well defined as an element of $\calR^{d}(\Pi)$ for some $d\in\{0,1,\dots,r-1\}$.

    We let
    \[
\left(\bdzeta''=\left(\bfz''^{(\sfe,j)}\right)_{(\sfe,j)\in\calE\times[K]},\,R''\right)
    \]
    be the predecessor of $(\bdzeta',R')$ on the path. By the choice of $(\bdzeta',R')$, we know that the structured rectangle $(\bdzeta'',R'')$ does not satisfy the first condition of goodness. In particular, $\bdzeta''\neq \bdzeta'$. Furthermore, because exactly one player speaks in each round of the protocol $\Pi$, there exists exactly one player $(\sfe^*,j^*)\in\calE\times [K]$ for which $\bfz''^{(\sfe^*,j^*)} \neq \bfz'^{(\sfe^*,j^*)}$. 

    The fact that $(\bdzeta'',R'')$ does not satisfy the first condition of goodness gives rise to the following two cases.
    
    \paragraph{Case 1: $\bigcup_{(\sfe,j)\in \calE\times [K]} \supp\left(\bfz''^{(\sfe,j)}\right)$ contains a cycle.} Since $(\bdzeta',R')$ satisfies the first condition of goodness, we know that in this case, the union

    \begin{equation}\label{eq:union_E}
    E:=\bigcup_{(\sfe,j)\in \calE\times [K]}\supp\left(\bfz'^{(\sfe,j)}\right)
    \end{equation}
    is cycle-free, while appending the matching
    \begin{equation}\label{eq:appending_M}
    M:=\supp\left(\bfz''^{(\sfe^*,j^*)}\right)\setminus \supp\left(\bfz'^{(\sfe^*,j^*)}\right)
    \end{equation}
    to $E$ creates some cycle. We claim that there must exist an edge $e\in M$ such that at least two of its vertices already lie in $V(E)$. Indeed, suppose otherwise. Then there would exist a set of edges in $E\cup M$, with $\ell_{1}$ edges from $E$ and $\ell_{2}$ edges from $M$, that covers at most $(\ell_{1}+\ell_{2})(k-1)$ vertices (see \Cref{subsec:general_notations} for the definition of cycle-freeness). Since $E$ is cycle-free the $\ell_{1}$ edges from $E$ together cover at least $\ell_{1}(k-1)+1$ vertices. Then each of the $\ell_{2}$ edges from $M$ would then contribute at least $k-1$ new vertices, giving an additional $\ell_{2}(k-1)$. Thus the total number of covered vertices would be at least $(\ell_{1}+\ell_{2})(k-1)+1$, contradicting the assumption. Hence such an edge $e$ must exist.

    Now let \[\left(\bdzeta^{\anc}=\left({\bfz^\anc}^{(\sfe,j)}\right)_{(\sfe,j)\in \calE\times [K]},\,R^{\anc}\right)\] be the last structured rectangle along the segment of the path $P$ from $(\bdzeta',R')$ to the root such that $V(\bdzeta^{\anc})$ contains at least two vertices of $e$. We clearly have $\left|V(\bdzeta^{\anc})\right|\leq |V(\bdzeta)|\leq 10^5r$. It remains to show $
R \subseteq B(\bdzeta^{\anc},R^{\anc})$. Since $R \subseteq R''$, it suffices to show that $R'' \subseteq B(\bdzeta^{\anc},R^{\anc})$. Indeed, as $e$ belongs to the difference
\[
\supp(\bfz''^{(\sfe^*,j^*)}) \setminus \supp(\bfz'^{(\sfe^*,j^*)}),
\]
we deduce that $e$ also belongs to
\[
\supp(\bfy''^{(\sfe^*,j^*)}) \setminus \supp({\bfz^{\anc}}^{(\sfe^*,j^*)}),
\]
since this is an enlargement of the larger set and a shrinking of the smaller one. Consequently, we have 
\[
\left(\bfy''^{(\sfe,j)}\right)_{(\sfe,j)\in\calE\times[K]} \in B(\bdzeta^{\anc},R^{\anc}),
\]
because 
\[
\left|V(\{e\}) \cap V(\bdzeta^{\anc})\right| \geq 2
\quad \text{and} \quad
\left|V(\{e\}) \cap \left(V(\bdzeta^{\anc}) \setminus V(\widetilde{\bdzeta^{\anc}})\right)\right| \geq 1,
\]
where $\widetilde{\bdzeta^{\anc}}$ denotes the restriction sequence of the parent of $(\bdzeta^{\anc},R^{\anc})$.
    
        \paragraph{Case 2: The support sets $\bigl(\supp(\bfz^{(\sfe,j)})\bigr)_{(\sfe,j)\in \calE\times [K]}$ are not pairwise disjoint.} In this case, we still define the edge sets $E$ and $M$ as in \eqref{eq:union_E} and \eqref{eq:appending_M}. Now there must exists an edge $e\in M$ that already belongs to $E$. In particular, at least two vertices of $e$ belongs to $V(E)$. We can again let $(\bdzeta^{\anc},R^{\anc})$ be the last structured rectangle along the segment of the path $P$ from $(\bdzeta',R')$ to the root such that $V(\bdzeta^{\anc})$ contains at least two vertices of $e$. The same conclusions as in Case 1 follows.
    \end{proof}
    Next, we prove \Cref{lem:low_cycle_probability}. 
    \begin{proof}[Proof of \Cref{lem:low_cycle_probability}.]

    Let $\bdzeta=\left(\bfz^{(\sfe,j)}\right)_{(\sfe,j)\in \calE\times [K]}$ and $R=\prod_{(\sfe,j)\in \calE\times [K]}A^{(\sfe,j)}$. For each $(\sfe^*,j^*)\in \calE\times [K]$, we define $B^{(\sfe^*,j^*)}(\boldsymbol{\zeta},R)$ to be the set of all joint inputs 
    \[(\bfy^{(\sfe,j)})_{(\sfe,j)\in \mathcal{E}\times [K]}\in R\]
    whose $(\sfe^*,j^*)$-coordinate, $\bfy^{(\sfe^*,j^*)}$, satisfies the following: there exists an edge 
    \[e\in \supp \left(\bfy^{(\sfe^*,j^*)}\right)\setminus \supp \left(\bfz^{(\sfe^*,j^*)}\right)\] such that
   \[\left|V(\{e\}) \cap V(\boldsymbol{\zeta})\right|\geq 2\quad\text{and}\quad\left| V(\{e\}) \cap \left(V(\boldsymbol{\zeta}) \setminus V(\Tilde{\boldsymbol{\zeta}})\right)\right|\geq 1.\]

   By definition, $B(\boldsymbol{\zeta},R)= \bigcup_{(\sfe,j)\in \calE\times [K]} B^{(\sfe,j)} (\boldsymbol{\zeta},R)$. Therefore, it suffices to show that for any fixed player $(\sfe,j)\in \calE\times [K]$, we have \begin{equation}\label{eq:cycle_prob_fixed_player}
   \mathcal{D}_{\mathrm{no}}
    \left(B^{(\sfe,j)}(\boldsymbol{\zeta},R)\right)\leq \frac{J}{|\calE|K\cdot n}\cdot  \mathcal{D}_{\mathrm{no}}(R) \left(|V(\boldsymbol{\zeta})|^2 - \left|V(\widetilde{\boldsymbol{\zeta}})\right|^2\right)
    \end{equation}
    for some constant $J$. In the remaining of the proof, we fix a player $(\sfe,j)\in \calE\times[K]$.
    
    We define the edge set
    \[E:=\left\{e\in \tprod (\calU_{\sfe})_{\setminus \supp(\bfz^{(\sfe,j)})}\,\middle|\, \left|V(\{e\}) \cap V(\boldsymbol{\zeta})\right|\geq 2\text{ and }\left| V(\{e\}) \cap \left(V(\boldsymbol{\zeta}) \setminus V(\Tilde{\boldsymbol{\zeta}})\right)\right|\geq 1\right\}.\]
    It is easy to see that
    \begin{equation}\label{eq:cycle_prob_size_of_E}
    |E|\leq \left|V(\bdzeta)\setminus V(\widetilde{\bdzeta})\right|\cdot \left|V(\bdzeta)\right|\cdot n^{k-2}\leq n^{k-2}\left(|V(\boldsymbol{\zeta})|^2 - \left|V(\widetilde{\boldsymbol{\zeta}})\right|^2\right).
    \end{equation}
    For each edge $e\in E$, we define the restricted domain
    \[
    \Omega[e]:=\left\{\bfy\in \Omega^{\calU_{\sfe},\alpha n}_{\bfz^{(\sfe,j)}}\,\middle|\, e\in \supp(\bfy)\right\}.
    \]
    By the definitions of $B^{(\sfe,j)}(\bdzeta,R)$ and the uniformity of $\calD_{\no}$, we have
    \begin{equation}\label{eq:cycle_prob_ratio_D}
    \frac{\mathcal{D}_{\mathrm{no}}
    \left(B^{(\sfe,j)}(\boldsymbol{\zeta},R)\right)}{
    \Dno(R)
    }=\frac{\left|A^{(\sfe,j)}\cap\bigcup_{e\in E}\Omega[e]\right|}{\left|A^{(\sfe,j)}\right|}\leq \sum_{e\in E}\frac{\left|A^{(\sfe,j)}\cap \Omega[e]\right|}{\left|A^{(\sfe,j)}\right|}.
    \end{equation}
    The $\bfz^{(\sfe,j)}$-globalness of $A^{(\sfe,j)}$ implies that for each $e\in E$,
    \begin{equation}\label{eq:cycle_prob_globalness}
    \frac{\left|A^{(\sfe,j)}\cap \Omega[e]\right|}{\left|A^{(\sfe,j)}\right|}
    \leq
    2\cdot \frac{\left|\Omega^{\calU_{\sfe},\alpha n}_{\bfz^{(\sfe,j)}}\cap\Omega[e]\right|}{\left|\Omega^{\calU_{\sfe},\alpha n}_{\bfz^{(\sfe,j)}}\right|}
    =2\cdot \frac{\alpha n-\left|\supp(\bfz^{(\sfe,j)})\right|}{\left(n-\left|\supp(\bfz^{(\sfe,j)})\right|\right)^{k}}\leq \frac{J}{|\calE|K\cdot n^{k-1}}  \end{equation}
    for some constant $J>0$.
    Here, the last transition uses the assumption that $|V(\boldsymbol{\zeta})|\leq \sqrt{n}$. Combining \eqref{eq:cycle_prob_size_of_E}, \eqref{eq:cycle_prob_ratio_D} and \eqref{eq:cycle_prob_globalness} yields the desired inequality \eqref{eq:cycle_prob_fixed_player}.
\end{proof}

\section{Global Hypercontractivity in $\Omega$}
\label{app:global_hypercontractivity}

In this appendix we prove \Cref{thm:level-d-inequality}. The proof is a rather straightforward adaptation of \cite[Section 4]{FMW25}. We begin in \Cref{subsec:derivatives_compose,subsec:operators_commute} by establishing basic properties of the derivative operators (defined in \Cref{def:derivative}) and the projection operators (defined in \Cref{def:level-d-projection}). In \Cref{subsec:compartison_product,subsec:hypercontractive}, we incorporate a result from \cite{KLM23} by comparing our labeled matching space $\Omega^{\calU,m}$ with a product space. Finally, we conclude with the proof of \Cref{thm:level-d-inequality} in \Cref{subsec:level-d}.

The following two notations will be used throughout this appendix.

\begin{notation}
Suppose $S$ and $T$ are disjoint finite sets. For two maps $\bfx_{1}:S\rightarrow \ZNk$ and $\bfx_{2}:T\rightarrow\ZNk$, we define their concatenation $\bfx_{1}\uplus \bfx_{2}:S\sqcup T\rightarrow \ZNk$ by setting $(\bfx_{1}\uplus \bfx_{2})(e):=\bfx_{1}(e)$ for $e\in S$ and $(\bfx_{1}\uplus \bfx_{2})(e):=\bfx_{2}(e)$ for $e\in T$.
\end{notation}

\begin{notation}
Suppose $S$ and $M$ are finite sets such that $S\subseteq M$. For a map $\bfa:M\rightarrow\ZNk$, we define $\bfa_{|S}:S\rightarrow \ZNk$ to be the restriction of $\bfa$ to $S$, and define $\bfa_{\setminus S}:M\setminus S\rightarrow\ZNk$ to be the restriction of $\bfa$ to $M\setminus S$. 
\end{notation}

\subsection{Derivatives Compose}\label{subsec:derivatives_compose}

We observe that the gadget function $H_{S}$ (defined in \Cref{def:gadget_H}) used in the definition of the discrete derivative operators has the following simple Fourier decomposition.

\begin{lemma}\label{lem:derivative_primitive}
We have the identity
$H_{S}=\sum_{\bfa:S\rightarrow\ZNk\setminus\{0\}}\chi_{\bfa}.$
\end{lemma}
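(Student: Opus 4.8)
\textbf{Proof plan for Lemma~\ref{lem:derivative_primitive}.}
The claim is that the gadget function $H_S:\Map{S}{\ZNk}\to\bC$ defined in \Cref{def:gadget_H} by $H_S(\bfz)=(-1)^{\|\bfz\|_{\sfH}}(N^k-1)^{|S|-\|\bfz\|_{\sfH}}$ decomposes as a sum of characters $\sum_{\bfa:S\to\ZNk\setminus\{0\}}\chi_{\bfa}$ over the product group $\Map{S}{\ZNk}\cong(\ZNk)^{S}$. The plan is to reduce to the one-coordinate case and then multiply. Because the group $\Map{S}{\ZNk}$ is a direct product indexed by $e\in S$, and the Hamming weight $\|\bfz\|_{\sfH}=\sum_{e\in S}\mathbbm{1}\{\bfz(e)\neq 0\}$ is additive over coordinates, the function factorizes:
\[
H_S(\bfz)=\prod_{e\in S}\Big((-1)^{\mathbbm{1}\{\bfz(e)\neq 0\}}(N^k-1)^{1-\mathbbm{1}\{\bfz(e)\neq 0\}}\Big)=\prod_{e\in S}h(\bfz(e)),
\]
where $h:\ZNk\to\bC$ is given by $h(0)=N^k-1$ and $h(t)=-1$ for $t\neq 0$.

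The one-coordinate identity to prove is $h=\sum_{a\in\ZNk\setminus\{0\}}\chi_a$. This is immediate from the standard fact that $\sum_{a\in\ZNk}\chi_a(t)=N^k\cdot\mathbbm{1}\{t=0\}$ (orthogonality of characters / the character sum over the full dual group): subtracting the $a=0$ term, which is the constant function $1$, gives $\sum_{a\neq 0}\chi_a(t)=N^k\mathbbm{1}\{t=0\}-1$, which equals $N^k-1$ when $t=0$ and $-1$ when $t\neq 0$, i.e.\ exactly $h(t)$. Then, expanding the product over $e\in S$ and using the multiplicativity of characters on product groups, $\chi_{\bfa}(\bfz)=\prod_{e\in S}\chi_{\bfa(e)}(\bfz(e))$, we get
\[
H_S(\bfz)=\prod_{e\in S}\sum_{a_e\in\ZNk\setminus\{0\}}\chi_{a_e}(\bfz(e))=\sum_{\bfa:S\to\ZNk\setminus\{0\}}\prod_{e\in S}\chi_{\bfa(e)}(\bfz(e))=\sum_{\bfa:S\to\ZNk\setminus\{0\}}\chi_{\bfa}(\bfz),
\]
which is the claimed identity.

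There is no real obstacle here; the only thing to be slightly careful about is bookkeeping the conventions — that $\chi_{\bfa}$ on the product group is the product of the per-coordinate characters $\chi_{\bfa(e)}$ (as fixed in \Cref{subsec:general_notations}), and that the index set in $\sum_{\bfa:S\to\ZNk\setminus\{0\}}$ ranges over maps sending every $e\in S$ to a \emph{nonzero} element, matching the product-of-$h$ expansion term by term. Once these conventions are lined up, the identity follows by the two displays above.
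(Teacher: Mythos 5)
Your proof is correct and takes essentially the same approach as the paper's: both arguments reduce to the single-coordinate identity $\sum_{a\in\ZNk\setminus\{0\}}\chi_a(t)=N^k\ind{t=0}-1$ via the product structure of $\Map{S}{\ZNk}$, and then expand the product over $e\in S$. The only difference is cosmetic — the paper starts from the character sum and simplifies it to $H_S$, whereas you start from $H_S$ and build up the character sum — but the underlying computation is identical.
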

\begin{proof}
Straightforward calculation shows that for any $\bfz\in \Map{S}{\ZNk}$,
\begin{align*}
\sum_{\bfa:S\rightarrow\ZNk\setminus\{0\}}\chi_{\bfa}(\bfz)&=\sum_{\bfa:S\rightarrow\ZNk\setminus\{0\}}\left(\prod_{e\in S}\chi_{\bfa(e)}(\bfz(e))\right)=\prod_{e\in S}\left(\sum_{\bfa(e)\in \ZNk\setminus\{0\}}\chi_{\bfa(e)}(\bfz(e))\right)\\
&=\prod_{e\in S}\left(N^{k}\cdot \ind{\bfz(e)=0}-\chi_{0}(\bfz(e))\right)=\prod_{e\in S}\left(N^{k}\cdot \ind{\bfz(e)=0}-1\right)\\
&=\prod_{e\in S}\left(N^{k}-1\right)^{\ind{\bfz(e)=0}}(-1)^{\ind{\bfz(e)\neq 0}}=H_{S}(\bfz).\qedhere
\end{align*}
\end{proof}

This Fourier decomposition allows us to prove that the derivative operators (defined in \Cref{def:derivative}) compose with each other in the following natural way.

\begin{lemma}\label{lem:derivatives-compose}
Suppose $S$ and $T$ are vertex disjoint matchings in $\calM_{\calU,\leq m}$ with $|S\cup T|\leq m$. Fix labels $\bfx_{1}:S\rightarrow\ZNk$ and $\bfx_{2}:T\rightarrow\ZNk$. For any $f:\Omega^{\calU,m}\rightarrow\mathbb{C}$ we have
\[
D_{S,\bfx_{1}}D_{T,\bfx_{2}}[f]=D_{S\cup T,\bfx_{1}\uplus \bfx_{2}}[f].
\]
\end{lemma}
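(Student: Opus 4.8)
The plan is to prove the composition identity $D_{S,\bfx_{1}}D_{T,\bfx_{2}}[f]=D_{S\cup T,\bfx_{1}\uplus\bfx_{2}}[f]$ by expanding both sides through \Cref{def:derivative} and matching the resulting averages term by term, using the multiplicative structure of $H_{S\cup T}$ over the disjoint union $S\sqcup T$. The key starting observation is that since $S$ and $T$ are vertex disjoint, $S\cup T$ is again a matching, and $\Omega^{\calU,m}_{\setminus(S\cup T)}$ can be canonically identified with $\left(\Omega^{\calU,m}_{\setminus S}\right)_{\setminus T}$ via the relevant embeddings $\fraki$; in particular, for $\bfy\in\Omega^{\calU,m}_{\setminus(S\cup T)}$, labels $\bfz_{1}:S\to\ZNk$, $\bfz_{2}:T\to\ZNk$, the two ways of plugging in — first filling the $S$-edges and then the $T$-edges, versus filling all edges of $S\sqcup T$ at once — produce the same element $\fraki(\bfy,(\bfx_1-\bfz_1)\uplus(\bfx_2-\bfz_2))\in\Omega^{\calU,m}$. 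This is the compatibility of the embeddings that I would state explicitly as a small preliminary remark (or cite from the construction in \Cref{def:embedding}).

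\textbf{Main steps.} First I would unfold the left-hand side: by definition,
\[
D_{T,\bfx_{2}}[f](\bfy)=\Exu{\bfz_{2}:T\to\ZNk}{H_{T}(\bfz_{2})\cdot f\bigl(\fraki(\bfy,\bfx_{2}-\bfz_{2})\bigr)},
\]
and then applying $D_{S,\bfx_{1}}$ to this function on $\Omega^{\calU,m}_{\setminus S}$ — noting that the ambient universe for the inner derivative has shrunk to $\calU_{\setminus S}$ but $T\subseteq\tprod\calU_{\setminus S}$ still, since $S$ and $T$ are vertex disjoint — gives
\[
D_{S,\bfx_{1}}D_{T,\bfx_{2}}[f](\bfw)=\Exu{\bfz_{1}:S\to\ZNk}{\Exu{\bfz_{2}:T\to\ZNk}{H_{S}(\bfz_{1})H_{T}(\bfz_{2})\cdot f\bigl(\fraki(\bfw,(\bfx_1-\bfz_1)\uplus(\bfx_2-\bfz_2))\bigr)}}
\]
for $\bfw\in\Omega^{\calU,m}_{\setminus(S\cup T)}$, where I have used the embedding-compatibility remark to collapse the nested $\fraki$'s into a single one. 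Second, I would invoke the multiplicativity of the gadget function, namely $H_{S\cup T}(\bfz_{1}\uplus\bfz_{2})=H_{S}(\bfz_{1})\cdot H_{T}(\bfz_{2})$, which is immediate from \Cref{def:gadget_H} since $\|\bfz_1\uplus\bfz_2\|_{\sfH}=\|\bfz_1\|_{\sfH}+\|\bfz_2\|_{\sfH}$ and $|S\cup T|=|S|+|T|$ (alternatively one can read it off \Cref{lem:derivative_primitive}, since $\chi_{\bfa_1\uplus\bfa_2}=\chi_{\bfa_1}\chi_{\bfa_2}$). Third, since a uniformly random map $\bfz:S\sqcup T\to\ZNk$ is the same as an independent pair $(\bfz_1,\bfz_2)$ with $\bfz_1:S\to\ZNk$ and $\bfz_2:T\to\ZNk$ uniform, the double expectation over $(\bfz_1,\bfz_2)$ is exactly $\Exu{\bfz:S\cup T\to\ZNk}{H_{S\cup T}(\bfz)\cdot f(\fraki(\bfw,(\bfx_1\uplus\bfx_2)-\bfz))}$, which is $D_{S\cup T,\bfx_{1}\uplus\bfx_{2}}[f](\bfw)$ by definition. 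Assembling these three identifications completes the argument.

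\textbf{Anticipated obstacle.} The only genuinely delicate point is the bookkeeping around the embeddings $\fraki$ and the fact that applying the inner derivative $D_{T,\bfx_{2}}$ first lands us in $L^{2}(\Omega^{\calU_{\setminus S},\,m-|S|})$ rather than $L^{2}(\Omega^{\calU,m})$, so that one must check $T$ is still a legitimate matching in the restricted universe and that the two embeddings $\Omega^{\calU,m}_{\setminus S}\times\Map{S}{\ZNk}\hookrightarrow\Omega^{\calU,m}$ and $\Omega^{\calU,m}_{\setminus(S\cup T)}\times\Map{T}{\ZNk}\hookrightarrow\Omega^{\calU,m}_{\setminus S}$ compose to the single embedding $\Omega^{\calU,m}_{\setminus(S\cup T)}\times\Map{S\sqcup T}{\ZNk}\hookrightarrow\Omega^{\calU,m}$. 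The vertex-disjointness hypothesis on $S$ and $T$ is precisely what makes this compatibility hold, so I would make sure to use it explicitly. Everything else is a routine rearrangement of finite averages and poses no difficulty.
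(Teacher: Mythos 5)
Your proof is correct and follows essentially the same route as the paper's: unfold both derivative operators via \Cref{def:derivative}, collapse the nested embeddings into a single $\fraki$ using vertex-disjointness of $S$ and $T$, and use the multiplicativity $H_{S\cup T}(\bfz_1\uplus\bfz_2)=H_S(\bfz_1)H_T(\bfz_2)$ together with the fact that a uniform $\bfz:S\sqcup T\to\ZNk$ is the same as an independent uniform pair. One small slip in the prose: after first applying $D_{T,\bfx_2}$, the resulting function lives on $\Omega^{\calU,m}_{\setminus T}$ (not $\Omega^{\calU,m}_{\setminus S}$), so the condition to check is that $S$ remains a matching in $\calU_{\setminus T}$ — but your displayed formula and the logic are correct.
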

\begin{proof}
We have the following three canonical embeddings from \Cref{def:embedding}:
\begin{align*}
\fraki&:\Omega^{\calU,m}_{\setminus (S\cup T)}\times \Map{S\cup T}{\ZNk}\hookrightarrow \Omega^{\calU,m},\\
\fraki_{1}&:\Omega^{\calU,m}_{\setminus (S\cup T)}\times\Map{S}{\ZNk}\hookrightarrow \Omega^{\calU,m}_{\setminus T},\text{ and}\\
\fraki_{2}&:\Omega^{\calU,m}_{\setminus (S\cup T)}\times\Map{S}{\ZNk}\times \Map{T}{\ZNk}\hookrightarrow \Omega^{\calU,m}.
\end{align*}
Directly from \Cref{def:derivative} we get
\begin{align*}
D_{S,\bfx_{1}}D_{T,\bfx_{2}}[f](\bfy)&=\Exu{\bfz_{1}:S\rightarrow\ZNk}{H_{S}(\bfz_{1})\cdot D_{T,\bfx_{2}}[f]\Big(\fraki_{1}(\bfy,\bfx_{1}-\bfz_{1})\Big)}\\
&=\Exu{\bfz_{1}:S\rightarrow\ZNk}{H_{S}(\bfz_{1})\cdot \Exu{\bfz_{2}:T\rightarrow\ZNk}{H_{T}(\bfz_{2})\cdot f\Big(\fraki_{2}(\bfy,\bfx_{1}-\bfz_{1},\bfx_{2}-\bfz_{2})\Big)}}\\
&=\Exu{\bfz:S\cup T\rightarrow\ZNk}{H_{S\cup T}(\bfz)\cdot f\Big(\fraki(\bfy,(\bfx_{1}\uplus \bfx_{2})-\bfz\Big)}\\
&=D_{S\cup T,\bfx_{1}\uplus \bfx_{2}}[f](\bfy).\qedhere
\end{align*}
\end{proof}

\Cref{lem:derivatives-compose} implies the following important corollary about the derivative-based globalness notion (defined in \Cref{def:derivative-based-global}).

\begin{corollary}\label{cor:globalness-of-derivative}
If $f:\Omega^{\calU,m}\rightarrow\mathbb{C}$ is $(r,\lambda,d)$-$L^{p}$-global, then for any matching $S\in\calM_{\calU,\leq d}$ and label $\bfx\in \Map{S}{\ZNk}$, the derivative $D_{S,\bfx}[f]$ is $(r,r^{|S|}\lambda ,d-|S|)$-$L^{p}$-global.
\end{corollary}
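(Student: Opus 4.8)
\textbf{Proof plan for \Cref{cor:globalness-of-derivative}.} The statement is that the derivative operator $D_{S,\bfx}$ shifts the globalness parameters in a predictable way: it shrinks the ``level budget'' $d$ by $|S|$, and multiplies the norm bound $\lambda$ by the factor $r^{|S|}$. The natural approach is to unwind the definition of $(r,\lambda,d-|S|)$-$L^{p}$-globalness for $D_{S,\bfx}[f]$: we must show that for every matching $T\in\calM_{\calU,\leq d-|S|}$ (necessarily vertex-disjoint from $S$, since derivatives only make sense on the reduced universe $\calU_{\setminus S}$) and every label $\bfx'\in\Map{T}{\ZNk}$, one has $\bigl\|D_{T,\bfx'}\bigl[D_{S,\bfx}[f]\bigr]\bigr\|_{p}\leq r^{|T|}\cdot\bigl(r^{|S|}\lambda\bigr)$. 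The key is then simply to invoke the composition law, \Cref{lem:derivatives-compose}, which gives $D_{T,\bfx'}D_{S,\bfx}[f]=D_{S\cup T,\bfx\uplus\bfx'}[f]$.

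First I would observe that since $T\subseteq(\calU_{\setminus S})$ as a matching, $S$ and $T$ are vertex-disjoint, and $|S\cup T|=|S|+|T|\leq |S|+(d-|S|)=d\leq m$, so the hypotheses of \Cref{lem:derivatives-compose} are met and the composition identity applies. Next, since $|S\cup T|\leq d$ and $f$ is assumed $(r,\lambda,d)$-$L^{p}$-global (\Cref{def:derivative-based-global}), applying that definition directly to the matching $S\cup T$ and the label $\bfx\uplus\bfx'$ yields
\[
\bigl\|D_{S\cup T,\bfx\uplus\bfx'}[f]\bigr\|_{p}\leq r^{|S\cup T|}\lambda = r^{|S|+|T|}\lambda = r^{|T|}\cdot r^{|S|}\lambda.
\]
Combining this with the composition identity gives exactly $\bigl\|D_{T,\bfx'}\bigl[D_{S,\bfx}[f]\bigr]\bigr\|_{p}\leq r^{|T|}\cdot(r^{|S|}\lambda)$, which is the required bound with globalness parameters $(r,\,r^{|S|}\lambda,\,d-|S|)$.

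There is no real obstacle here; the only point requiring a modicum of care is the bookkeeping that the ``reduced'' universe $\calU_{\setminus S}$ on which $D_{S,\bfx}[f]$ lives interacts correctly with a further derivative $D_{T,\bfx'}$ — i.e., that taking the derivative in direction $T$ of a function on $\Omega^{\calU,m}_{\setminus S}$ corresponds, under the canonical identification of \Cref{rem:restriction_isomorphic}, to the restriction to $\Omega^{\calU,m}_{\setminus(S\cup T)}$ of $D_{S\cup T,\bfx\uplus\bfx'}[f]$. This is precisely the content of \Cref{lem:derivatives-compose} (stated there with $D_{S,\bfx_1}D_{T,\bfx_2}[f]$, which I would reindex to match the roles of $S$ and $T$ here), so the corollary follows in a couple of lines once that lemma is in hand.
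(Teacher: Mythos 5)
Your proof is correct and follows the same route as the paper's: both apply the composition law (\Cref{lem:derivatives-compose}) to identify $D_{T,\bfx'}D_{S,\bfx}[f]$ with $D_{S\cup T,\bfx\uplus\bfx'}[f]$, and then invoke the $(r,\lambda,d)$-$L^p$-globalness of $f$ on the matching $S\cup T$ of size $|S|+|T|\le d$. The bookkeeping about the reduced universe $\calU_{\setminus S}$ that you flag is handled implicitly by \Cref{lem:derivatives-compose}, exactly as you anticipate.
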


\begin{proof}
For each matching $T\in \calM_{\calU_{\setminus S},\,\leq d-|S|}$, we know that $S\cup T\in \calM_{\calU,\leq d}$. So by the assumption that $f$ is $(r,\lambda,d)$-$L^{p}$-global, we have $\left\|D_{S\cup T,\bfx''}f\right\|_{p}\leq r^{|S|+|T|}\lambda$ for any label $\bfx''\in \Map{S\cup T}{\ZNk}$.
By \Cref{lem:derivatives-compose} it follows that $\left\|D_{T,\bfx'}\left[D_{S,\bfx}f\right]\right\|_{p}\leq r^{|T|}\cdot r^{|S|}\lambda$ for any label $\bfx'\in \Map{T}{\ZNk}$, as required.
\end{proof}

\subsection{Projections Commutes with Derivatives}\label{subsec:operators_commute}

The goal of this subsection is to show that the derivative operators ``commute'' with the projection operators defined in \Cref{def:level-d-projection}. For that purpose, we first compute the derivatives of character functions.

\begin{proposition}\label{prop:derivative-for-pure-functions}
On the space $\Omega^{\calU,m}$, given two pairs $(S,\bfx),(M,\bfa)\in \frakX^{\calU,\leq m}$, we have
\[
D_{S,\bfx}[\psi_{M,\bfa}]=\begin{cases}
\Psi(|\calU|,m,|S|)^{-1/2}\cdot \chi_{\bfa_{|S}}(\bfx)\cdot\psi_{M\setminus S,\,\bfa_{\setminus S}}&\text{if }S\subseteq M,\\
0 &\text{if }S\not\subseteq M,
\end{cases}
\]
We note that in the above equation, $\psi_{M,\bfa}$ is a character on $\Omega^{\calU,m}$, while $\psi_{M\setminus S,\,\bfa_{\setminus S}}$ is a character on $\Omega^{\calU,m}_{\setminus S}$.
\end{proposition}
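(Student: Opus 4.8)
The plan is to compute $D_{S,\bfx}[\psi_{M,\bfa}]$ directly from the definitions (\Cref{def:derivative} and \Cref{def:characters}), using the Fourier-decomposition of the gadget function $H_{S}$ from \Cref{lem:derivative_primitive}. First I would recall that, by \Cref{def:characters},
\[
\psi_{M,\bfa}(\bfy)=\Psi(|\calU|,m,|M|)^{-1/2}\prod_{e\in M}\chi_{\bfa(e)}(\bfy(e)),
\]
with the convention $\chi_{\bfa(e)}(\nil)=0$. Plugging an embedded point $\fraki(\bfy,\bfx-\bfz)$ into this product splits the product over $M$ into the part indexed by $S\cap M$ (where the label is dictated by $\bfx-\bfz$) and the part indexed by $M\setminus S$ (where the label comes from $\bfy$, viewed in $\Omega^{\calU,m}_{\setminus S}$). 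Crucially, if $S\not\subseteq M$, then for some $e\in S\setminus M$ the point $\fraki(\bfy,\bfx-\bfz)$ has label $\nil$ on an edge of $S$ only if... actually the cleaner way: if $e\in S\setminus M$, then $\psi_{M,\bfa}$ does not depend on the label at $e$ at all, so averaging $H_S(\bfz)$ over the $e$-coordinate $\bfz(e)$ factors out $\Exu{\bfz(e)\in\ZNk}{(\text{the $e$-factor of }H_S)}$; by \Cref{lem:derivative_primitive} this $e$-factor is $\sum_{\bfb(e)\neq 0}\chi_{\bfb(e)}(\bfx(e)-\bfz(e))$, whose average over $\bfz(e)$ is $0$ since each nonzero character has mean zero. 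Hence the whole derivative vanishes when $S\not\subseteq M$.

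When $S\subseteq M$, I would carry out the computation in the order: (i) write $H_S(\bfz)=\sum_{\bfb:S\to\ZNk\setminus\{0\}}\chi_{\bfb}(\bfz)$ via \Cref{lem:derivative_primitive}; (ii) substitute into $D_{S,\bfx}[\psi_{M,\bfa}](\bfy)=\Psi(|\calU|,m,|M|)^{-1/2}\Exu{\bfz}{H_S(\bfz)\prod_{e\in M}\chi_{\bfa(e)}\big(\fraki(\bfy,\bfx-\bfz)(e)\big)}$; (iii) separate the product into $e\in S$ and $e\in M\setminus S$ factors, noting the $M\setminus S$ factors together equal $\prod_{e\in M\setminus S}\chi_{\bfa(e)}(\bfy(e))$ which is exactly $\Psi(|\calU|,m,|M\setminus S|)^{1/2}\psi_{M\setminus S,\bfa_{\setminus S}}(\bfy)$; (iv) for the $S$-factors, interchange the finite sum over $\bfb$ with the expectation over $\bfz$, and evaluate $\Exu{\bfz(e)\in\ZNk}{\chi_{\bfb(e)}(\bfz(e))\chi_{\bfa(e)}(\bfx(e)-\bfz(e))}=\chi_{\bfa(e)}(\bfx(e))\cdot\Exu{\bfz(e)}{\chi_{\bfb(e)-\bfa(e)}(\bfz(e))}$, which is $\chi_{\bfa(e)}(\bfx(e))$ if $\bfb(e)=\bfa(e)$ and $0$ otherwise. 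Since $\bfa(e)\neq 0$ for $e\in M\supseteq S$, the constraint $\bfb(e)=\bfa(e)$ is compatible with $\bfb(e)\in\ZNk\setminus\{0\}$, so exactly one term survives, contributing $\prod_{e\in S}\chi_{\bfa(e)}(\bfx(e))=\chi_{\bfa_{|S}}(\bfx)$.

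Collecting the constants, the prefactor works out to $\Psi(|\calU|,m,|M|)^{-1/2}\cdot\Psi(|\calU|,m,|M\setminus S|)^{1/2}$, and I would use the inductive identity $\Psi(n,m,d)=mn^{-k}\Psi(n-1,m-1,d-1)$ from \Cref{def:Psi} together with $|M\setminus S|=|M|-|S|$ to rewrite this ratio as $\Psi(|\calU|,m,|S|)^{-1/2}$ — this is a short telescoping computation, essentially because $\Psi(n,m,d)$ has a product form $\prod_{i=0}^{d-1}\frac{m-i}{(n-i)^k}$, so the ratio $\Psi(n,m,d_1)/\Psi(n,m,d_2)$ for $d_1\ge d_2$ depends only on the ``extra'' $d_1-d_2$ leading factors, which is precisely $\Psi(n,m,d_1-d_2)$ with $d_1-d_2=|S|$ in our case. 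The main (very minor) obstacle is bookkeeping: making sure the $\nil$-convention does not cause trouble — it does not, because $D_{S,\bfx}[\psi_{M,\bfa}]$ is a function on $\Omega^{\calU,m}_{\setminus S}$, and for $\bfy\in\Omega^{\calU,m}_{\setminus S}$ the embedded point $\fraki(\bfy,\bfx-\bfz)$ always has full support $S\cup\supp(\bfy)\supseteq M$ so no $\chi_{\bfa(e)}(\nil)$ ever appears in the surviving computation — and keeping the two different ambient spaces ($\Omega^{\calU,m}$ versus $\Omega^{\calU,m}_{\setminus S}$) straight when writing $\psi_{M\setminus S,\bfa_{\setminus S}}$. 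Since everything reduces to one-coordinate character-orthogonality and a telescoping of $\Psi$, there is no genuine difficulty.
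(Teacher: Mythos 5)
Your approach is the same as the paper's: decompose $H_S$ via \Cref{lem:derivative_primitive}, split the product over $M$ into $S$- and $(M\setminus S)$-parts, use character orthogonality to extract $\chi_{\bfa_{|S}}(\bfx)$, and telescope the $\Psi$'s. Your handling of $S\not\subseteq M$ is in fact mildly cleaner than the paper's (you avoid the case split on whether $M\cup S$ is a matching by observing directly that $\psi_{M,\bfa}$ is independent of the coordinate at any $e\in S\setminus M$, so the $e$-factor of $H_S$ can be pulled out and averaged to zero).

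However, there is a genuine error in the normalization bookkeeping, and the right answer comes out only because two mistakes cancel. First, in step (iii) you write
\[
\prod_{e\in M\setminus S}\chi_{\bfa(e)}(\bfy(e)) = \Psi(|\calU|,m,|M\setminus S|)^{1/2}\,\psi_{M\setminus S,\bfa_{\setminus S}}(\bfy),
\]
but this is wrong: $\psi_{M\setminus S,\bfa_{\setminus S}}$ lives on $\Omega^{\calU,m}_{\setminus S} \cong \Omega^{\calU_{\setminus S},\,m-|S|}$, and \Cref{def:characters} normalizes it by $\Psi(|\calU_{\setminus S}|, m-|S|, |M\setminus S|)^{-1/2}=\Psi(|\calU|-|S|, m-|S|, |M\setminus S|)^{-1/2}$, not by $\Psi(|\calU|,m,|M\setminus S|)^{-1/2}$. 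Second, the telescoping identity you invoke, $\Psi(n,m,d_1)/\Psi(n,m,d_2)=\Psi(n,m,d_1-d_2)$, is false: from $\Psi(n,m,d)=\prod_{i=0}^{d-1}\frac{m-i}{(n-i)^k}$ one sees that the quotient picks off the \emph{trailing} factors $\prod_{i=d_2}^{d_1-1}\frac{m-i}{(n-i)^k}$, which equal $\Psi(n-d_2,m-d_2,d_1-d_2)$, not the leading ones $\Psi(n,m,d_1-d_2)$. The correct chain is: the prefactor is $\Psi(|\calU|,m,|M|)^{-1/2}\cdot\Psi(|\calU|-|S|,m-|S|,|M\setminus S|)^{1/2}$, and the true telescoping identity $\Psi(|\calU|,m,|M|)=\Psi(|\calU|,m,|S|)\cdot\Psi(|\calU|-|S|,m-|S|,|M|-|S|)$ reduces it to $\Psi(|\calU|,m,|S|)^{-1/2}$. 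Your two errors (wrong ambient parameters in the character normalization, wrong parameters in the telescoping) exactly compensate each other, so the formula you arrive at is right, but the intermediate steps as written are not.

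Also a small notational slip in the $S\not\subseteq M$ case: the $e$-factor of $H_S(\bfz)$ is $\sum_{b\neq 0}\chi_b(\bfz(e))$, evaluated at $\bfz(e)$, not at $\bfx(e)-\bfz(e)$; this does not affect the conclusion (the mean over $\bfz(e)$ is still zero), but the object you should be averaging is the factor of $H_S$, not the shifted character appearing later in the computation.
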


\begin{proof} We consider the following two cases respectively.

\textbf{Case 1:} $S\not\subseteq M$. If $M\cup S$ is not a matching, then $\psi_{M,\bfa}(\fraki(\bfy,\bfz))=0$ for all $\bfy\in\Omega^{\calU,m}_{\setminus S}$ and $\bfz\in\Map{S}{\ZNk}$, and hence $D_{S,\bfx}[\psi_{M,\bfa}]=0$ by definition. If $M\cup S$ is a matching, pick an edge $e\in S\setminus M$ and let $\bfx_{\setminus \{e\}}$ be the restriction of $\bfx$ to $S\setminus\{e\}$. It is easy to see that the value of the function $h:=D_{S\setminus\{e\},\,\bfx_{\setminus\{e\}}}[\psi_{M,\bfa}]$ at an input $\bfy'\in \Omega^{\calU,m}_{\setminus(S\setminus\{e\})}$ does not depend on the coordinate $\bfy'(e)$. Therefore by \Cref{def:derivative}, for any $\bfy\in \Omega^{\calU,m}_{\setminus S}$ we can pick an arbitrary $\bfy'\in  \Omega^{\calU,m}_{\setminus(S\setminus\{e\})}$ that extends $\bfy$ and have
$$D_{\{e\},\bfx(e)}[h](\bfy)=\frac{1}{N^{k}}\cdot \left(N^{k}-1\right)\cdot h(\bfy')+\frac{N^{k}-1}{N^{k}}\cdot (-1)\cdot h(\bfy')=0.$$
Now by \Cref{lem:derivatives-compose} we have $D_{S,\bfx}[\psi_{M,\bfa}]=D_{\{e\},\bfx(e)}[h]=0$.

\textbf{Case 2:} $S\subseteq M$. By \Cref{def:characters,def:derivative}, for $\bfy\in\Omega^{\calU,m}_{\setminus S}$ we have
\begin{equation}\label{eq:der_of_char_0}
D_{S,\bfx}[\psi_{M,\bfa}](\bfy)=\Psi(|\calU|,m,|M|)^{-1/2}\prod_{e\in M\setminus S}\chi_{\bfa(e)}(\bfy(e))\cdot \Exu{\bfz:S\rightarrow\ZNk}{H_{S}(\bfz)\prod_{e\in S}\chi_{\bfa(e)}(\bfx(e)-\bfz(e))}
\end{equation}
Using \Cref{lem:derivative_primitive}, we have
\begin{align}
\Exu{\bfz:S\rightarrow\ZNk}{H_{S}(\bfz)\cdot\prod_{e\in S}\chi_{\bfa(e)}(\bfx(e)-\bfz(e))}&=\chi_{\bfa}(\bfx)\cdot \inp{H_{S}}{\chi_{\bfa_{|S}}}\nonumber\\
&=\chi_{\bfa}(\bfx)\cdot \inp{\sum_{\bfa':S\rightarrow\ZNk\setminus\{0\}}\chi_{\bfa'}}{\chi_{\bfa_{|S}}}=\chi_{\bfa}(\bfx). \label{eq:der_of_char_1}
\end{align}
By \Cref{def:derivative} again we have
\begin{equation}\label{eq:der_of_char_2}
\prod_{e\in M\setminus S}\chi_{\bfa(e)}(\bfy(e))=\Psi\left(|\calU_{\setminus S}|,m-|S|,|M\setminus S|\right)^{1/2}\cdot\psi_{M\setminus S,\,\bfa_{\setminus S}}(\bfy).
\end{equation}
Finally, by \Cref{def:Psi} we know that
\begin{equation}\label{eq:der_of_char_3}
\Psi(|\calU|,m,|M|)^{-1/2}=\Psi(|\calU|,m,|S|)^{-1/2}\cdot \Psi\left(|\calU_{\setminus S}|,m-|S|,|M\setminus S|\right)^{-1/2}.
\end{equation}
Plugging \eqref{eq:der_of_char_1}, \eqref{eq:der_of_char_2}, and \eqref{eq:der_of_char_3} into \eqref{eq:der_of_char_0} yields the conclusion. 
\end{proof}

The following lemma then shows that the projection operators (defined in \Cref{def:level-d-projection}) commute with the derivative operators, up to the obvious change in degrees. 

\begin{lemma}\label{lem:derivative-projection-commute}
Given an integer $d\leq m$, a function $f:\Omega^{\calU,m}\rightarrow\mathbb{C}$ and any character $(S,\bfx)\in \frakX^{\calU,\leq m}$ with $|S|\leq d$, we have $$D_{S,\bfx}P_{\frakX}^{=d}[f]=P_{\frakX}^{=d-|S|}D_{S,\bfx}[f].$$
\end{lemma}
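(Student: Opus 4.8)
The statement is a linear identity between two operators on $L^2(\Omega^{\calU,m})$, so by linearity it suffices to verify it on a spanning set. The natural choice is the collection of characters $\psi_{M,\bfa}$ for $(M,\bfa)\in\frakX^{\calU,\leq m}$, since these span the relevant subspace and we already know exactly how both $D_{S,\bfx}$ and $P^{=d}_{\frakX}$ act on them (from \Cref{prop:derivative-for-pure-functions} and \Cref{def:level-d-projection}/\Cref{prop:formula-of-projection} respectively). More precisely, I would first reduce to the case $f=\psi_{M,\bfa}$ for an arbitrary character index $(M,\bfa)\in\frakX^{\calU,\leq m}$; since the two operators in question are linear and the characters span a subspace containing the image of both sides, checking the identity on characters that lie outside the span requires a small extra remark. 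Concretely, $P^{=d}_{\frakX}$ kills everything orthogonal to $\mathrm{span}\{\psi_{M',\bfa'}: (M',\bfa')\in\frakX^{\calU,d}\}$, so $D_{S,\bfx}P^{=d}_{\frakX}[f]$ depends only on the projection of $f$ onto that span; and on the right-hand side, $P^{=d-|S|}_{\frakX}D_{S,\bfx}[f]$ — one checks that if $f$ is orthogonal to all level-$d$ characters then $D_{S,\bfx}[f]$ is orthogonal to all level-$(d-|S|)$ characters (this follows from the adjoint relationship implicit in \Cref{prop:derivative-for-pure-functions}, since $D_{S,\bfx}$ applied to a character is a scalar times a character, so its adjoint maps level-$(d-|S|)$ characters into level-$d$ characters). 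Hence both sides vanish on the orthogonal complement, and it is enough to check characters indexed by $\frakX^{\calU,d}$.

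\textbf{The character computation.} Fix $(M,\bfa)\in\frakX^{\calU,d}$, i.e. $|M|=d$ and $\bfa:M\to\ZNk\setminus\{0\}$. On the left-hand side, $P^{=d}_{\frakX}[\psi_{M,\bfa}]=\psi_{M,\bfa}$ by orthonormality (\Cref{prop:ortho_set}), so $D_{S,\bfx}P^{=d}_{\frakX}[\psi_{M,\bfa}]=D_{S,\bfx}[\psi_{M,\bfa}]$. By \Cref{prop:derivative-for-pure-functions}, this equals $0$ if $S\not\subseteq M$, and equals $\Psi(|\calU|,m,|S|)^{-1/2}\chi_{\bfa_{|S}}(\bfx)\,\psi_{M\setminus S,\,\bfa_{\setminus S}}$ if $S\subseteq M$. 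For the right-hand side: if $S\not\subseteq M$ then $D_{S,\bfx}[\psi_{M,\bfa}]=0$ and so $P^{=d-|S|}_{\frakX}$ applied to it is $0$, matching. If $S\subseteq M$, then $D_{S,\bfx}[\psi_{M,\bfa}]$ is a scalar multiple of $\psi_{M\setminus S,\,\bfa_{\setminus S}}$, which is itself a character on $\Omega^{\calU,m}_{\setminus S}$ of level exactly $|M\setminus S|=d-|S|$ (here $\bfa_{\setminus S}$ is still nowhere zero since $\bfa$ was, so $(M\setminus S,\bfa_{\setminus S})\in\frakX^{\calU_{\setminus S},\,d-|S|}$). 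Therefore $P^{=d-|S|}_{\frakX}$ fixes it, and the right-hand side equals the same scalar times $\psi_{M\setminus S,\,\bfa_{\setminus S}}$, matching the left-hand side exactly. This closes the case analysis.

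\textbf{Main obstacle.} There is no serious obstacle here: the lemma is a bookkeeping consequence of \Cref{prop:derivative-for-pure-functions} together with the definition of the projections. The only subtlety that needs a sentence of care is the reduction step — justifying that it suffices to check characters in $\frakX^{\calU,d}$ rather than all of $\frakX^{\calU,\leq m}$, or rather, verifying that both operators annihilate the orthogonal complement of the level-$d$ span; this uses that $D_{S,\bfx}$ sends a level-$e$ character to a (scalar multiple of a) level-$(e-|S|)$ character when $S$ is a sub-matching and to zero otherwise, so the preimage under $D_{S,\bfx}$ of the level-$(d-|S|)$ span meets the level-$e$ characters only for $e=d$. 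One should also note the degenerate edge cases: when $|S|=d$ the projection $P^{=d-|S|}_{\frakX}=P^{=0}_{\frakX}$ is projection onto the constants, and when $S=\emptyset$ both sides reduce to $P^{=d}_{\frakX}[f]$ since $D_{\emptyset,\cdot}$ is the identity; both are consistent with the formula. I expect the entire proof to occupy at most half a page.
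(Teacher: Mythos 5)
There is a genuine gap in the reduction step. You want to conclude that if $f$ is orthogonal to every level-$d$ character $\psi_{M,\bfa}$, then $D_{S,\bfx}[f]$ is orthogonal to every level-$(d-|S|)$ character $\psi_{T,\bfa'}$ on $\Omega^{\calU,m}_{\setminus S}$, and you justify this as ``the adjoint relationship implicit in \Cref{prop:derivative-for-pure-functions}.'' That inference does not go through: \Cref{prop:derivative-for-pure-functions} tells you how $D_{S,\bfx}$ acts on the characters $\psi_{M,\bfa}$, but these characters do \emph{not} span $L^{2}(\Omega^{\calU,m})$ (the paper points this out explicitly after \Cref{def:characters}). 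Knowing the action of an operator on a non-spanning orthonormal set does not determine its adjoint. In a toy example, take $T:\bR^{3}\to\bR^{2}$ with $Te_1=f_1$, $Te_2=f_2$, $Te_3=f_1$: one cannot deduce $T^{\dagger}f_1\in\lspan\{e_1,e_2\}$ from the first two facts, and indeed it is false here.

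The claim you are after --- that $D_{S,\bfx}^{\dagger}\psi_{T,\bfa'}$ lies in the span of the level-$d$ characters $\psi_{S\cup T,\bfa''\uplus\bfa'}$ --- is true, but proving it requires actually computing $\langle D_{S,\bfx}[f],\psi_{T,\bfa'}\rangle$ for arbitrary $f$, using the definition of $D_{S,\bfx}$ together with the Fourier expansion $H_{S}=\sum_{\bfa''}\chi_{\bfa''}$ from \Cref{lem:derivative_primitive}. That computation is precisely the content of the second half of the paper's proof, so the reduction does not save you any work; your argument reduces to theirs. (Your verification on the level-$d$ characters themselves is correct and matches the paper's first display, and you are also right that both sides trivially vanish on characters of level $\neq d$; the only hole is the part of $L^{2}(\Omega^{\calU,m})$ orthogonal to \emph{all} the $\psi_{M,\bfa}$.)
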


\begin{proof}
On the one hand, it follows from \Cref{prop:formula-of-projection,prop:derivative-for-pure-functions} that
\begin{equation}\label{eq:derivative-of-level-d}
D_{S,\bfx}P_{\frakX}^{=d}[f]=\Psi(|\calU|,m,|S|)^{-1/2}\sum_{\substack{(M,\bfa)\in \frakX^{\calU,d}\\
M\supseteq S}}\langle f,\psi_{M,\bfa}\rangle\cdot \chi_{\bfa_{|S}}(\bfx)\cdot \psi_{M\setminus S,\,\bfa_{\setminus S}}.
\end{equation}
Note that in the above equation $\psi_{M,\bfa}$ is the character on $\Omega^{\calU,m}$ while $\psi_{M\setminus S,\,\bfa_{\setminus S}}$ is the character on $\Omega^{\calU,m}_{\setminus S}$.

On the other hand, for any character $(T,\bfa)\in \calM_{\calU_{\setminus S},\, d-|S|}$, we can calculate (using \Cref{lem:derivative_primitive} in the second transition)
\begin{align*}
&\quad\Psi\Big(|\calU_{\setminus S}|,m-|S|,|T|\Big)^{1/2}\cdot\left\langle D_{S,\bfx}[f],\psi_{T,\bfa'}\right\rangle\\
&=  
\Exu{\bfy\in\Omega^{\calU,m}_{\setminus S}}{\Exu{\bfz:S\rightarrow\ZNk}{H_{S}(\bfz)\cdot f\Big(\fraki(\bfy,\bfx-\bfz)\Big)}\cdot \overline{\prod_{e\in T}\chi_{\bfa'(e)}(\bfy(e))}}\\
&=\Exu{\bfy\in \Omega^{\calU,m}_{\setminus S},\,\bfz:S\rightarrow\ZNk}{\left(\sum_{\bfa'':S\rightarrow\ZNk\setminus\{0\}}\chi_{\bfa''}(\bfx-\bfz)\right)\cdot f\Big(\fraki(\bfy,\bfz)\Big)\cdot \overline{\prod_{e\in T}\chi_{\bfa'(e)}(\bfy(e))}}\\
&=\sum_{\bfa'':S\rightarrow\ZNk\setminus\{0\}}\chi_{\bfa''}(\bfx)\cdot\Exu{\bfy\in \Omega^{\calU,m}_{\setminus S},\,\bfz:S\rightarrow\ZNk}{f\Big(\fraki(\bfy,\bfz)\Big)\cdot \overline{\prod_{e\in S}\chi_{\bfa''(e)}(\bfz(e))\prod_{e\in T}\chi_{\bfa'(e)}(\bfy(e))}}\\
&=\sum_{\substack{\bfa:S\cup T\rightarrow\ZNk\setminus\{0\}\\
\bfa_{\setminus S}=\bfa'
}}\chi_{\bfa_{|S}}(\bfx)\cdot\Exu{\bfy\in \Omega^{\calU,m}_{\setminus S},\,\bfz:S\rightarrow\ZNk}{f\Big(\fraki(\bfy,\bfz)\Big)\cdot \overline{\prod_{e\in S}\chi_{\bfa(e)}(\bfz(e))\prod_{e\in T}\chi_{\bfa(e)}(\bfy(e))}}\\
&=\sum_{\substack{\bfa:S\cup T\rightarrow\ZNk\setminus\{0\}\\
\bfa_{\setminus S}=\bfa'
}}\chi_{\bfa_{|S}}(\bfx)\cdot\Psi(|\calU|,m,|S|)^{-1}\cdot\Exu{\boldsymbol{\xi}\in\Omega^{\calU,m}}{f(\boldsymbol{\xi})\cdot \overline{\prod_{e\in S\cup T}\chi_{\bfa(e)}(\boldsymbol{\xi}(e))}}\\
&=\Psi(|\calU|,m,|S|)^{-1}\cdot\sum_{\bfa':S\rightarrow\ZNk\setminus\{0\}}\chi_{\bfa'}(\bfx)\cdot\Psi(|\calU|,m,|S\cup T|)^{1/2}\cdot\left\langle f,\psi_{S\cup T,\bfa'\uplus\bfa}\right\rangle.
\end{align*}
In the fifth transition above, we use the facts that $\prod_{e\in S\cup T}\chi_{\bfa(e)}(\boldsymbol{\xi}(e))$ is nonzero only if $\boldsymbol{\xi}$ lies in the image of the embedding $\fraki:\Omega^{\calU,m}_{\setminus S}\times \Map{S}{\ZNk}\hookrightarrow\Omega^{\calU,m}$, and that this image has size $\Psi(|\calU|,m,|S|)\cdot\left|\Omega^{\calU,m}\right|$. 
Now plugging into the above display the relation
$$\Psi\Big(|\calU_{\setminus S}|,m-|S|,|T|\Big)^{1/2}=\Psi(|\calU|,m,|S|)^{-1/2}\cdot \Psi(|\calU|,m,|S\cup T|)^{1/2}$$
yields
$$\left\langle D_{S,\bfx}[f],\psi_{T,\bfa'}\right\rangle=\Psi(|\calU|,m,|S|)^{-1/2}\cdot \sum_{\bfa':S\rightarrow\ZNk\setminus\{0\}}\chi_{\bfa'}(\bfx)\cdot\left\langle f,\psi_{S\cup T,\bfa'\uplus \bfa}\right\rangle.$$
Comparing the above with \eqref{eq:derivative-of-level-d}, one can see that $D_{S,\bfx}P_{\frakX}^{=d}[f]=P_{\frakX}^{=d-|S|}D_{S,\bfx}[f]$.
\end{proof}

\subsection{Comparison with Product Space}\label{subsec:compartison_product}

In this subsection, we compare our labeled matching space $\Omega^{\calU,m}$ with the following product space:

\begin{definition}
Fix $p\in (0,1)$. Consider a random element $Z$ of $\Map{\tprod\calU}{\ZNk\cup\{\nil\}}$ such that for each $e\in \tprod\calU$, the value $Z(e)$ is independent and identically distributed according to
$$\Pr{Z(e)=\nil}=1-p,\quad\text{and}\quad\Pr{Z(e)=z}=N^{-k}p\text{ for each }z\in \ZNk.$$
The ground set $\Map{\tprod\calU}{\ZNk\cup\{\nil\}}$ endowed with the distribution of such a random element $Z$ is a probability space, which we denote by $\Gamma^{\calU,p}$. 
\end{definition}

\subsubsection{Mimicking on a Given Level}

We show that for a fixed degree $d\leq m$, one can choose an appropriate parameter $p$ such that the degree-$d$ Fourier level of the product space $\Gamma^{\calU,p}$ ``mimics'' the level-$d$ characters of the space $\Omega^{\calU,m}$. To make this notion precise, we start with an abstract collection of Fourier coefficients indexed by $\frakX^{\calU,d}$ (as defined in \Cref{def:frakX}), and compare the corresponding “Fourier inverse functions” they define on the two spaces.  
\begin{definition}\label{def:Fourier_inverse_Omega}
Fix a nonnegative integers $d$ such that $d\leq m$. For any map $\varphi:\frakX^{\calU,d}\rightarrow\bC$, we associate with it a ``Fourier inverse'' function $\varphi^{\vee}_{(m)}\in L^{2}(\Omega^{\calU,m})$ by
$$\varphi^{\vee}_{(m)}(\bfy):=\sum_{(M,\bfa)\in \frakX^{\calU,d}}\varphi(M,\bfa)\cdot\prod_{e\in M}\chi_{\bfa(e)}(\bfy(e)).$$
\end{definition}

\begin{definition}
Fix $p\in (0,1)$. For any map $\varphi:\frakX^{\calU,d}\rightarrow\bC$, we define a function $\varphi^{\natural}_{(p)}\in L^{2}(\Gamma^{\calU,p})$. Specifically, for every $\bfy\in \Map{\tprod\calU}{\ZNk\cup\{\nil\}}$ we let
$$\varphi^{\natural}_{(p)}(\bfy):=\sum_{(M,\bfa)\in \frakX^{\calU,d}}\left|\varphi(M,\bfa)\right|\cdot\prod_{e\in M}\chi_{\bfa(e)}(\bfy(e)).$$
\end{definition}

The following proposition demonstrates that for an appropriate parameter $p$, the two Fourier inverse functions have the same $L^{2}$-norm.
\begin{proposition}\label{prop:L2_comparison}
Fix integers $d,m$ such that $d\leq m\leq |\calU|$, and let $p=\Psi(|\calU|,m,d)^{1/d}$. For any $\varphi:\frakX^{\calU,d}\rightarrow\bC$, we have $\left\|\varphi^{\vee}_{(m)}\right\|_{2}=\left\|\varphi^{\natural}_{(p)}\right\|_{2}$.
\end{proposition}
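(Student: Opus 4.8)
The plan is to evaluate both $L^{2}$-norms explicitly in terms of the coefficients $\{\varphi(M,\bfa)\}_{(M,\bfa)\in\frakX^{\calU,d}}$ and to observe that the two resulting sums are literally equal for the stated choice of $p$. Throughout one may assume $d\ge 1$; when $d=0$ the index set $\frakX^{\calU,0}$ is a singleton and both sides reduce to $|\varphi(\emptyset,\emptyset)|$, so there is nothing to prove. Along the way I would also record that $p\in(0,1)$: unwinding the recursion in \Cref{def:Psi} gives the closed form $\Psi(|\calU|,m,d)=\bigl(m!/(m-d)!\bigr)\big/\prod_{i=0}^{d-1}(|\calU|-i)^{k}$, which lies in $(0,1)$ whenever $1\le d\le m\le|\calU|$, so $\Gamma^{\calU,p}$ is well defined.

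First I would rewrite $\varphi^{\vee}_{(m)}$ using the orthonormal characters of \Cref{def:characters}. Every $(M,\bfa)\in\frakX^{\calU,d}$ has $|M|=d$, so $\prod_{e\in M}\chi_{\bfa(e)}(\bfy(e))=\Psi(|\calU|,m,d)^{1/2}\,\psi_{M,\bfa}(\bfy)$, and hence
\[
\varphi^{\vee}_{(m)}=\Psi(|\calU|,m,d)^{1/2}\sum_{(M,\bfa)\in\frakX^{\calU,d}}\varphi(M,\bfa)\,\psi_{M,\bfa}.
\]
By \Cref{prop:ortho_set} the family $\{\psi_{M,\bfa}:(M,\bfa)\in\frakX^{\calU,d}\}$ is orthonormal in $L^{2}(\Omega^{\calU,m})$, so Parseval gives
\[
\bigl\|\varphi^{\vee}_{(m)}\bigr\|_{2}^{2}=\Psi(|\calU|,m,d)\sum_{(M,\bfa)\in\frakX^{\calU,d}}\bigl|\varphi(M,\bfa)\bigr|^{2}.
\]

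Second I would compute $\bigl\|\varphi^{\natural}_{(p)}\bigr\|_{2}^{2}$ directly from the product structure of $\Gamma^{\calU,p}$, expanding the square and using independence of the coordinates $\bfy(e)$. For a pair of indices $(M_{1},\bfa_{1}),(M_{2},\bfa_{2})$, the cross term factors over $e\in M_{1}\cup M_{2}$; for $e\in M_{1}\cap M_{2}$ the single-coordinate expectation equals $\Exu{\bfy(e)}{\chi_{\bfa_{1}(e)-\bfa_{2}(e)}(\bfy(e))\,\ind{\bfy(e)\ne\nil}}=p\cdot\ind{\bfa_{1}(e)=\bfa_{2}(e)}$, using $\sum_{z\in\ZNk}N^{-k}\chi_{t}(z)=\ind{t=0}$, while for $e\in M_{1}\triangle M_{2}$ it vanishes since the labels in $\frakX^{\calU,d}$ are nonzero. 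Thus only the diagonal terms $M_{1}=M_{2}=M$, $\bfa_{1}=\bfa_{2}=\bfa$ survive, each contributing $p^{d}$, so
\[
\bigl\|\varphi^{\natural}_{(p)}\bigr\|_{2}^{2}=p^{d}\sum_{(M,\bfa)\in\frakX^{\calU,d}}\bigl|\varphi(M,\bfa)\bigr|^{2}.
\]
Since $p^{d}=\Psi(|\calU|,m,d)$ by the choice of $p$, the two displayed expressions coincide, which is the claim.

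This statement is an exact bookkeeping identity, so there is no real obstacle; the only step requiring a little care is the vanishing of the off-diagonal terms in the second computation, which must be argued by splitting into the $M_{1}\cap M_{2}$ and $M_{1}\triangle M_{2}$ cases and invoking the nonvanishing of the labels — precisely the same mechanism that drives \Cref{prop:ortho_set}.
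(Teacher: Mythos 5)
Your proposal is correct and follows essentially the same approach as the paper's proof: both argue by explicitly evaluating the two $L^{2}$-norms as $\Psi(|\calU|,m,d)\sum|\varphi(M,\bfa)|^{2}$ and $p^{d}\sum|\varphi(M,\bfa)|^{2}$ respectively, and comparing. You merely spell out the orthogonality/independence bookkeeping (and the $d=0$, $p\in(0,1)$ sanity checks) that the paper leaves implicit under "simply expanding."
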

\begin{proof}
Simply expanding $\left\|\varphi^{\vee}_{(m)}\right\|_{2}^{2}=\Exu{\bfy\in \Omega^{\calU,m}}{\varphi^{\vee}_{(m)}(\bfy)\cdot \overline{\varphi^{\vee}_{(m)}(\bfy)}}$ yields
\begin{equation}\label{eq:expanding_Omega}
\left\|\varphi^{\vee}_{(m)}\right\|_{2}^{2}=\sum_{(M,\bfa)\in \frakX^{\calU,d}}\left|\varphi(M,\bfa)\right|^{2}\cdot \Psi(|\calU|,m,d),
\end{equation}
and expanding
$\left\|\varphi^{\natural}_{(p)}\right\|_{2}^{2}=\Exu{\bfy\sim\Gamma^{\calU,m}}{\varphi^{\natural}_{(p)}(\bfy)\cdot \overline{\varphi^{\natural}_{(m)}(\bfy)}}$ yields
\begin{equation}\label{eq:expanding_Gamma}
\left\|\varphi^{\natural}_{(p)}\right\|_{2}^{2}=\sum_{(M,\bfa)\in \frakX^{\calU,d}}\left|\varphi(M,\bfa)\right|^{2}\cdot p^{d}.
\end{equation}
The conclusion then follows by comparing \eqref{eq:expanding_Omega} with \eqref{eq:expanding_Gamma} and using $p^{d}=\Psi(|\calU|,m,d)$.
\end{proof}
\subsubsection{Mimicking on other Levels}

We claim that not only does the parameter $p$ chosen in \Cref{prop:L2_comparison} not only ensures that the degree-$d$ Fourier levels of $\Gamma^{\calU,p}$ and $\Omega^{\calU,m}$ resemble each other, but also that the nearby levels --- those not too far from degree $d$ --- exhibit a similar approximation under the same choice of $p$. This effect on nearby levels is primarily due to the fact that the probability parameters $\Psi(n,m,d)$ (defined in \Cref{def:Psi}) grows approximately exponentially in $d$ when $d$ is small, as formalized below.  

\begin{proposition}\label{prop:approximate-product}
Fix integers $n,m,d$ such that $n\geq 2km$ and $m\geq 2(d+1)$. Let $p=\Psi(n,m,d)^{1/d}$.
\begin{enumerate}[label=(\arabic*)]
\item For $\ell\in\{0,1,\dots,d\}$ we have $p^{\ell}\leq \Psi(n,m,\ell)\leq (2p)^{\ell}$.
\item For $\ell\in\{d,d+1,\dots,m\}$ we have $\Psi(n,m,\ell)\leq p^{\ell}$.
\end{enumerate}
\end{proposition}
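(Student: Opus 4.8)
The statement is a purely arithmetic estimate about the quantities $\Psi(n,m,\ell)$, so the plan is to work directly from the recursive definition $\Psi(n,m,0)=1$ and $\Psi(n,m,d)=mn^{-k}\cdot\Psi(n-1,m-1,d-1)$. Unwinding this recursion gives the closed form
\[
\Psi(n,m,\ell)=\prod_{i=0}^{\ell-1}\frac{m-i}{(n-i)^{k}},
\]
valid for $0\le\ell\le m$. With $p:=\Psi(n,m,d)^{1/d}$, so that $p^{d}=\prod_{i=0}^{d-1}\frac{m-i}{(n-i)^{k}}$, the two claimed inequalities become statements comparing the ``partial product up to $\ell$'' with the $\ell$-th power of the geometric mean of the first $d$ factors.

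For part (1), I would introduce the factors $c_i:=\frac{m-i}{(n-i)^{k}}$ for $0\le i\le d-1$, so $\Psi(n,m,\ell)=\prod_{i=0}^{\ell-1}c_i$ and $p^{d}=\prod_{i=0}^{d-1}c_i$. The key monotonicity observation is that $c_i$ is \emph{decreasing} in $i$: as $i$ increases, the numerator $m-i$ shrinks and the denominator $(n-i)^k$ also shrinks, so one must check that the numerator shrinks relatively faster, i.e.\ $\frac{c_{i+1}}{c_i}=\frac{m-i-1}{m-i}\cdot\left(\frac{n-i}{n-i-1}\right)^{k}\le 1$; this is where the hypotheses $n\ge 2km$ and $m\ge 2(d+1)$ enter, since $\left(\frac{n-i}{n-i-1}\right)^{k}\le 1+\frac{2k}{n-i}$ (for $n$ large relative to $k$) while $\frac{m-i-1}{m-i}\le 1-\frac{1}{m}$, and $\frac{2k}{n-i}\le\frac{1}{m}$ follows from $n\ge 2km$ and $i\le d-1<m$. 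Given that the $c_i$ are decreasing, the geometric mean $p=\left(\prod_{i=0}^{d-1}c_i\right)^{1/d}$ satisfies $c_{\ell}\le p\le c_0$ for every $\ell$ in range, hence for $\ell\le d$ the partial product $\Psi(n,m,\ell)=\prod_{i=0}^{\ell-1}c_i\ge p^{\ell}$ (each factor is at least $c_{\ell-1}\ge\dots$, actually each $c_i\ge c_{d-1}$ but more simply: the product of the $\ell$ \emph{largest} of the $c_i$'s is at least the product of any $\ell$ of them, in particular at least $p^\ell$ by the AM–GM-style fact that the largest $\ell$-fold sub-product dominates the $\ell$-th power of the overall geometric mean). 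For the upper bound $\Psi(n,m,\ell)\le(2p)^{\ell}$, I would instead bound each factor crudely: $c_i\le\frac{m}{(n-i)^k}$ and $p\ge\Psi(n,m,d)^{1/d}\ge\left(\prod_{i=0}^{d-1}\frac{m-d+1}{n^k}\right)^{1/d}$, and then show $c_0\le 2p$, i.e.\ the largest factor is at most twice the geometric mean — which again follows because all factors are within a factor of $2$ of each other, a consequence of $c_{d-1}/c_0\ge 1/2$; this ratio estimate comes from $\prod_{i=0}^{d-2}\frac{c_{i+1}}{c_i}\ge\left(1-\frac{1}{m}\right)^{d-1}\cdot 1\ge 1/2$ using $d-1\le m/2$ (which is exactly $m\ge 2(d+1)$, giving $d-1\le m/2-2<m/2$) and $(1-1/m)^{m/2}\ge 1/2$.

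For part (2), I would again use that the factors $c_i$ are decreasing together with the extension of this monotonicity to $i\ge d$: for $d\le\ell\le m$ we have $\Psi(n,m,\ell)=\Psi(n,m,d)\cdot\prod_{i=d}^{\ell-1}c_i=p^{d}\cdot\prod_{i=d}^{\ell-1}c_i$, and since each $c_i$ with $i\ge d$ satisfies $c_i\le c_{d-1}\le c_0$ — and in fact $c_i\le p$ once $i$ is at least the ``crossover index'' where $c_i$ drops below the geometric mean — I need $\prod_{i=d}^{\ell-1}c_i\le p^{\ell-d}$, i.e.\ $c_i\le p$ for all $i\ge d$. This holds because $c_d\le c_{d-1}$ and, more to the point, $p$ being the geometric mean of $c_0,\dots,c_{d-1}$ is at least the minimum $c_{d-1}$, and $c_d\le c_{d-1}\le$ … so I must be slightly careful: the clean statement is $c_i\le p$ for $i\ge d-1$ is false in general (it could be $c_{d-1}>p$), but $c_i\le p$ for $i$ somewhat larger is what is true, and since the product telescopes I only need the \emph{product} bound. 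The cleanest route: $\prod_{i=d}^{\ell-1}c_i\le c_{d}^{\ell-d}$, and $c_d\le$ (geometric mean of $c_1,\dots,c_d$) $\le$ (geometric mean of $c_0,\dots,c_{d-1}$)$\cdot(c_d/c_0)^{1/d}\le p$ — using again that the $c_i$ decrease so shifting the window down can only increase the geometric mean, hence geometric mean of $c_1,\dots,c_d\le$ geometric mean of $c_0,\dots,c_{d-1}=p$, and $c_d\le$ that geometric mean trivially since $c_d$ is the smallest of $c_1,\dots,c_d$. This closes part (2).

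\textbf{Main obstacle.} The conceptual content is light — everything reduces to ``the factors $c_i=\frac{m-i}{(n-i)^k}$ form a decreasing sequence whose spread over the first $d$ indices is at most a factor of $2$.'' The only place requiring genuine care is pinning down the exact inequalities among $n,k,m,d$ needed to make both ``$c_i$ decreasing'' and ``$c_0/c_{d-1}\le 2$'' rigorous, i.e.\ checking that $n\ge 2km$ forces $\left(\frac{n-i}{n-i-1}\right)^k\le 1+\frac{1}{m}$ over the relevant range of $i$, and that $m\ge 2(d+1)$ gives $(1-1/m)^{d-1}\ge 1/2$. I expect this bookkeeping — rather than any idea — to be the bulk of the work, and I would isolate it as two short arithmetic sub-claims before assembling the four bounds.
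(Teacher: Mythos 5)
Your proposal takes the same route as the paper: unwind the recursion to $\Psi(n,m,\ell)=\prod_{i=0}^{\ell-1}c_i$ with $c_i=\Psi(n-i,m-i,1)=\tfrac{m-i}{(n-i)^{k}}$, establish that the $c_i$ decrease, observe that over $i\in\{0,\dots,d\}$ they vary by at most a factor of $2$, and close by comparing geometric means. The paper phrases the spread bound as $c_0/c_d\le \tfrac{m}{m-d}\le 2$, whereas you establish $c_0/p\le \tfrac{m}{m-d+1}\le 2$ via $c_0=m/n^k$ and $p\ge(m-d+1)/n^k$; these are interchangeable. Your part~(2) argument ($c_d\le p$ and telescoping) matches the paper's exactly.

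Two arithmetic slips are worth noting, though neither affects the structure. First, the per-factor bound $\tfrac{c_{i+1}}{c_i}\ge 1-\tfrac1m$ has its inequality reversed: since $\tfrac{m-i-1}{m-i}=1-\tfrac{1}{m-i}$ and $m-i\le m$, each factor is at most (not at least) $1-\tfrac1m$, so $(1-1/m)^{d-1}$ is not a valid lower bound on $\prod_{i=0}^{d-2}c_{i+1}/c_i$. Telescoping gives the correct bound directly: $\tfrac{c_{d-1}}{c_0}=\tfrac{m-d+1}{m}\cdot\tfrac{n^k}{(n-d+1)^k}\ge\tfrac{m-d+1}{m}\ge\tfrac12$. (This derivation is in any case redundant with your cleaner $c_0/p\le m/(m-d+1)$ route.) Second, the monotonicity check as stated doesn't quite close: from $n\ge 2km$ and $i<m$ one only gets $n-i>m(2k-1)$, hence $\tfrac{2k}{n-i}<\tfrac2m$ rather than $\le\tfrac1m$, so the bound $(1+\tfrac{1}{n-i-1})^k\le 1+\tfrac{2k}{n-i}$ followed by $\tfrac{2k}{n-i}\le\tfrac1m$ leaves a gap. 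A cleaner route is to note that $c_{i+1}\le c_i$ is equivalent to $k(m-i)\le n-i-1$, and since the left side is at most $km$ and the right side is at least $n-m$, this holds whenever $n\ge km+m$, which is implied by $n\ge 2km$. (The paper simply asserts the monotonicity without justification, so your instinct to verify it was correct; the specific constants just need tightening.)
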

\begin{proof}
For $i\in\{0,1,\dots,m-1\}$, we have
$$\frac{\Psi(n-i-1,m-i-1,1)}{\Psi(n-i,m-i,1)}=\frac{(n-i)^{k}}{(n-i-1)^{k}}\cdot\frac{m-i-1}{m-i}<1.$$
So $\Psi(n-i,m-i,1)$ is decreasing in $i$. Furthermore, 
\[
\frac{\Psi(n,m,1)}{\Psi(n-d,m-d,1)}=\frac{(n-d)^{k}}{n^{k}}\cdot\frac{m}{m-d}\leq \frac{m}{m-d}\leq 2.
\]
Therefore, for $\ell\in\{0,1,\dots,d\}$ we have
\begin{align*}
\Psi(n,m,\ell)&=\prod_{i=0}^{\ell-1}\Psi(n-i,m-i,1)\leq 2^{\ell}\cdot\Psi(n-d,m-d,1)^{\ell}\\
&\leq 2^{\ell}\cdot\left(\prod_{i=0}^{d-1}\Psi(n-i,m-i,1)\right)^{\ell/d}=2^{\ell}\cdot\Psi(n,m,d)^{\ell/d}=(2p)^{\ell},
\end{align*}
as well as
$$\Psi(n,m,\ell)=\prod_{i=0}^{\ell-1}\Psi(n-i,m-i,1)\geq \left(\prod_{i=0}^{d}\Psi(n-i,m-i,1)\right)^{\ell/d}=\Psi(n,m,d)^{\ell/d}=p^{\ell}.$$
For $\ell\geq d$ we have
\begin{align*}
\Psi(n,m,\ell)&=\Psi(n,m,d)\cdot\prod_{i=d}^{\ell-1}\Psi(n-i,m-i,1)\leq \Psi(n,m,d)\cdot\Psi(n-d,m-d,1)^{\ell-d}\\
&\leq \Psi(n,m,d)\cdot\left(\prod_{i=0}^{d-1}\Psi(n-i,m-i,1)\right)^{(\ell-d)/d}=\Psi(n,m,d)^{\ell/d}=p^{\ell}.\qedhere
\end{align*}
\end{proof}

\subsubsection{Comparison of $q$-Norms}

Recall that \Cref{prop:L2_comparison} establishes the equality of the $L^2$-norms of the two Fourier inverse functions corresponding to the same collection of Fourier coefficients. Equipped with \Cref{prop:approximate-product}, we now extend this comparison (but with an inequality instead of equality) to the $L^q$-norms of these functions, for any positive even integer $q$.

\begin{lemma}\label{lem:comparison_q_norm}
Fix integers $d,m$ such that $d\leq m\leq |\calU|$, and let $p=\Psi(|\calU|,m,d)^{1/d}$. For any $\varphi:\frakX^{\calU,d}\rightarrow\bC$ and any positive integer $q$, we have $\left\|\varphi^{\vee}_{(m)}\right\|_{2q}\leq \left\|\varphi^{\natural}_{(p)}\right\|_{2q}$.
\end{lemma}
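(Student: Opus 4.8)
\textbf{Proof plan for Lemma~\ref{lem:comparison_q_norm}.} The plan is to expand the $2q$-th power of each $L^{2q}$-norm as a sum over $2q$-tuples of character indices, compare the two expansions term by term, and use the fact that one space ($\Gamma^{\calU,p}$) has a product structure (hence a tensorized moment computation) while the other ($\Omega^{\calU,m}$) has a moment controlled by the probabilities $\Psi(|\calU|,m,\cdot)$, which by~\Cref{prop:approximate-product} are dominated by powers of $p$. First I would write
\begin{align*}
\left\|\varphi^{\vee}_{(m)}\right\|_{2q}^{2q}
 &= \Exu{\bfy\in \Omega^{\calU,m}}{\prod_{i=1}^{q}\varphi^{\vee}_{(m)}(\bfy)\prod_{i=1}^{q}\overline{\varphi^{\vee}_{(m)}(\bfy)}} \\
 &= \sum_{(M_{1},\bfa_{1}),\dots,(M_{2q},\bfa_{2q})\in \frakX^{\calU,d}} \left(\prod_{i=1}^{q}\varphi(M_{i},\bfa_{i})\prod_{i=q+1}^{2q}\overline{\varphi(M_{i},\bfa_{i})}\right)\cdot \Exu{\bfy\in\Omega^{\calU,m}}{\prod_{i=1}^{q}\chi_{\bfa_{i}}(\bfy)\prod_{i=q+1}^{2q}\overline{\chi_{\bfa_{i}}(\bfy)}}.
\end{align*}
The key observation, exactly as in the proof of~\Cref{prop:ortho_set}, is that for a uniformly random $\bfy\in\Omega^{\calU,m}$ the inner expectation vanishes unless $M_{1}\cup\cdots\cup M_{2q}$ is a matching \emph{and} the (signed) labels cancel edge by edge; and in that case the expectation equals $\Psi(|\calU|,m,|M_{1}\cup\cdots\cup M_{2q}|)$. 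The analogous computation on $\Gamma^{\calU,p}$ gives, by independence of the coordinates $Z(e)$, that the corresponding inner expectation equals $p^{|M_{1}\cup\cdots\cup M_{2q}|}$ under the same matching/cancellation condition (the condition ``$Z(e)\ne\nil$ on all the relevant edges'' contributes a factor $p$ per edge, and the character averaging over $\ZNk$ forces the same cancellation). Since $|M_{1}\cup\cdots\cup M_{2q}|\le 2qd$, provided $m\ge 2qd$ (which one should note is a mild hypothesis satisfied in the application, or can be absorbed by noting the terms with larger union vanish anyway) one has $|M_{1}\cup\cdots\cup M_{2q}|$ in the ``small $\ell$'' regime, where~\Cref{prop:approximate-product}(1) gives $p^{\ell}\le \Psi(|\calU|,m,\ell)$; more robustly, I would invoke part (2) for $\ell\ge d$ and part (1) for $\ell\le d$, and in both cases conclude $\Psi(|\calU|,m,\ell)\le$ (something comparable to $p^{\ell}$) — but here we actually want the \emph{reverse} direction, so the cleaner route is: $\Psi(|\calU|,m,\ell)\le p^{\ell}$ whenever $\ell\ge d$ by part (2), and whenever $\ell<d$ we use part (1) to bound $\Psi(|\calU|,m,\ell)\le (2p)^{\ell}$, but since all $M_i$ have size exactly $d$ and their union is a matching containing each, $\ell\ge d$ always; hence uniformly $\Psi(|\calU|,m,\ell)\le p^{\ell}$ for every relevant term.

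With this pointwise domination of coefficients in hand, the comparison is finished by replacing $\varphi$ with $|\varphi|$: the term-by-term bound gives
\begin{align*}
\left\|\varphi^{\vee}_{(m)}\right\|_{2q}^{2q}
 &= \sum_{\substack{(M_i,\bfa_i)\\ \text{admissible}}} \left(\prod_{i\le q}\varphi(M_i,\bfa_i)\prod_{i>q}\overline{\varphi(M_i,\bfa_i)}\right)\Psi(|\calU|,m,|\bigcup M_i|) \\
 &\le \sum_{\substack{(M_i,\bfa_i)\\ \text{admissible}}} \left(\prod_{i=1}^{2q}|\varphi(M_i,\bfa_i)|\right) p^{|\bigcup M_i|}
 = \left\|\varphi^{\natural}_{(p)}\right\|_{2q}^{2q},
\end{align*}
where the last equality is the very same expansion carried out on $\Gamma^{\calU,p}$ for the nonnegative-coefficient function $\varphi^{\natural}_{(p)}$ (whose coefficients are $|\varphi(M,\bfa)|\ge 0$, so no cancellation is lost and the admissible-term sum is exactly the $2q$-th moment). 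Taking $2q$-th roots yields the claim.

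\textbf{Main obstacle.} The delicate point is the direction of the inequality in the passage from $\Psi(|\calU|,m,\ell)$ to $p^{\ell}$: we need $\Psi(|\calU|,m,\ell)\le p^{\ell}$ for every $\ell$ that can arise as $|M_{1}\cup\cdots\cup M_{2q}|$, and~\Cref{prop:approximate-product} gives this cleanly only for $\ell\ge d$. One therefore has to argue that in every non-vanishing term the union has size at least $d$ — which is immediate since each $M_i$ by itself has exactly $d$ edges and the union is a matching — so every relevant $\ell$ falls in the regime of part~(2). A secondary subtlety is making sure the ``admissible'' index-tuples (those for which the $\Omega$-expectation is nonzero) are a \emph{subset} of those for which the $\Gamma$-expectation is nonzero; this holds because the cancellation condition is identical on the two spaces (it only involves the labels $\bfa_i$ and the matching structure), while the extra constraint on $\Omega$ — that $\bigcup M_i$ must be an \emph{actual} matching inside a size-$|\calU|$ universe with at most $m$ edges — only \emph{removes} terms, and removed terms have nonnegative contribution on the $\Gamma$ side, so dropping them preserves the inequality. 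Once these two bookkeeping points are nailed down, the rest is the routine moment expansion sketched above.
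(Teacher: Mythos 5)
Your proposal is correct and follows essentially the same approach as the paper: expand $\|\cdot\|_{2q}^{2q}$ as a sum over $2q$-tuples of character indices, identify the inner expectations as $\Psi(|\calU|,m,\ell)$ vs.\ $p^{\ell}$ (where $\ell=|\bigcup_i M_i|$) under the cancellation condition, invoke \Cref{prop:approximate-product}(2) for $\ell\geq d$ (which always holds since each $M_i$ has $d$ edges), and compare term by term after replacing $\varphi$ by $|\varphi|$. The paper packages the cancellation condition as the set $\textsf{GOOD}$ and phrases the comparison as a pointwise bound $\mathscr{E}_{\Omega}\leq\mathscr{E}_{\Gamma}$ on the inner expectations, but the content is the same, including the two bookkeeping points you flag (the direction of the $\Psi$-vs-$p^{\ell}$ inequality, and the fact that the $\Omega$-side supports fewer nonzero terms). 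One small tightening: in your final display, the last ``$=$'' should be ``$\leq$'' if the index set is restricted to $\Omega$-admissible tuples, since the $\Gamma$-expansion includes additional nonnegative terms where $\bigcup_i M_i$ is not a matching or has size exceeding $m$ — you handle this correctly in the prose, and it should just be reflected in the display. Also, passing from the complex sum $\sum(\prod\varphi)\Psi$ to the nonnegative sum $\sum(\prod|\varphi|)p^{\ell}$ implicitly uses that the left side is real (being a norm) and the triangle inequality; worth one word.
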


\begin{proof}
Let $\mathsf{GOOD}$ be the collection of all sequences $(M_{1},\bfa_{1}),\dots,(M_{2q},\bfa_{2q})\in \calM_{\calU,d}$ such that for every $e\in \tprod\calU$, 
$$
\sum_{i=1}^{q}\widetilde{\bfa_{i}}(e)=\sum_{i=q+1}^{2q}\widetilde{\bfa_{i}}(e),
$$
where $\widetilde{\bfa_{i}}:\tprod\calU\rightarrow \ZNk$ is the extension of $\bfa_{i}$ by value 0 on $(\tprod\calU)\setminus M_{i}$, for each $i\in [2q]$. It is easy to see that for the expected value
\begin{equation}\label{eq:GOOD-expected-1}
\mathscr{E}_{\Omega}\Big((M_{1},\bfa_{1}),\dots,(M_{2q},\bfa_{2q})\Big):=\Exu{\bfy\in \Omega^{\calU,m}}{\prod_{i=1}^{q}\left(\prod_{e\in M_{i}}\chi_{\bfa_{i}(e)}(\bfy(e))\right) \prod_{i=q+1}^{2q}\left(\prod_{e\in M_{i}}\overline{\chi_{\bfa_{i}(e)}(\bfy(e))}\right)}
\end{equation}
equals $\Psi\Big(|\calU|,m,\big|M_{1}\cup\dots\cup M_{2q}\big|\Big)$ if $((M_{1},\bfa_{1}),\dots,(M_{2q},\bfa_{2q}))\in \textsf{GOOD}$ and equals 0 otherwise. The same conclusion also holds if the expected value in \eqref{eq:GOOD-expected-1} is evaluated not for a random element $\bfy\in \Omega^{\calU,m}$ but for $\bfy$ sampled from $\Gamma^{\calU,p}$:
\begin{equation}\label{eq:GOOD-expected-2}
\mathscr{E}_{\Gamma}\Big((M_{1},\bfa_{1}),\dots,(M_{2q},\bfa_{2q})\Big):=\Exu{\bfy\in \Gamma^{\calU,p}}{\prod_{i=1}^{q}\left(\prod_{e\in M_{i}}\chi_{\bfa_{i}(e)}(\bfy(e))\right) \prod_{i=q+1}^{2q}\left(\prod_{e\in M_{i}}\overline{\chi_{\bfa_{i}(e)}(\bfy(e))}\right)}
\end{equation}
equals $p^{|M_{1}\cup\dots\cup M_{2q}|}$ if $((M_{1},\bfa_{1}),\dots,(M_{2q},\bfa_{2q}))\in \textsf{GOOD}$ and equals 0 otherwise. Therefore the values of \eqref{eq:GOOD-expected-1} and \eqref{eq:GOOD-expected-2} are always nonnegative real numbers. Furthermore, since $p^{|M_{1}\cup\dots\cup M_{2q}|}$ is always at least $\Psi\Big(|\calU,m,\big|M_{1}\cup\dots\cup M_{2q}\big|\Big)$, by \Cref{prop:approximate-product}(2), the value of \eqref{eq:GOOD-expected-1} is always at most the value of \eqref{eq:GOOD-expected-2}, whether the sequence $((M_{1},\bfa_{1}),\dots,(M_{2q},\bfa_{2q}))$ belongs to $\textsf{GOOD}$ or not. Now since $ \left\|\varphi^{\vee}_{(m)}\right\|_{2q}^{2q}=
\Exu{\bfy\in \Omega^{\calU,m}}{\varphi^{\vee}_{(m)}(\bfy)^{q}\cdot\overline{\varphi^{\vee}_{(m)}(\bfy)^{q}}}$ expands into
\[\sum_{(M_{1},\bfa_{1}),\dots,(M_{2q},\bfa_{2q})\in \frakX^{\calU,d}}\left(\prod_{i=1}^{q}\varphi(M_{i},\bfa_{i})\prod_{i=q+1}^{2q}\overline{\varphi(M_{i},\bfa_{i})}\cdot\mathscr{E}_{\Omega}\Big((M_{1},\bfa_{1}),\dots,(M_{2q},\bfa_{2q})\Big)\right)
\]
and $\left\|\varphi^{\natural}_{(p)}\right\|_{2q}^{2q}=
\Exu{\bfy\sim \Gamma^{\calU,p}}{\varphi^{\natural}_{(p)}(\bfy)^{q}\cdot\overline{\varphi^{\natural}_{(p)}(\bfy)^{q}}}$ expands into
\[\sum_{(M_{1},\bfa_{1}),\dots,(M_{2q},\bfa_{2q})\in \frakX^{\calU,d}}\left(\prod_{i=1}^{q}\left|\varphi(M_{i},\bfa_{i})\right|\prod_{i=q+1}^{2q}\left|\varphi(M_{i},\bfa_{i})\right|\cdot\mathscr{E}_{\Gamma}\Big((M_{1},\bfa_{1}),\dots,(M_{2q},\bfa_{2q})\Big)\right),
\]
by term-wise comparison it follows that $\left\|\varphi^{\vee}_{(m)}\right\|_{2q}^{2q}\leq \left\|\varphi^{\natural}_{(p)}\right\|_{2q}^{2q}$.
\end{proof}

\subsubsection{Comparison of Derivatives}

The final comparison required between the spaces $\Gamma^{\calU,p}$ and $\Omega^{\calU,m}$ concerns their respective derivative operators. For the space $\Gamma^{\calU,p}$, we adopt the following definition of formal derivatives that act purely on Fourier coefficients.

\begin{definition}\label{def:formal_derivative}
Fix a nonnegative integer $d\leq|\calU|$. For any matching $S\in \calM_{\calU,\leq d}$ and $\bfx\in \Map{S}{\ZNk}$, we define the formal derivative operator $\widehat{D_{S,\bfx}}:\Map{\frakX^{\calU,d}}{\bC}\rightarrow\Map{\frakX^{\calU_{\setminus S},\,d-|S|}}{\bC}$ as follows. For each $\varphi:\frakX^{\calU,d}\rightarrow\bC$ and $(T,\bfa)\in \frakX^{\calU_{\setminus S},\,d-|S|}$, let
$$\left(\widehat{D_{S,\bfx}}[\varphi]\right)(T,\bfa):=\sum_{\bfa':S\rightarrow\ZNk\setminus\{0\}}\chi_{\bfa'}(\bfx)\cdot \varphi(S\cup T, \bfa'\uplus \bfa).$$
\end{definition}

Our comparison lemma for derivatives is as follows.
\begin{lemma}\label{lem:comparison_2_norm}
Fix integers $d,m$ such that $d\leq m\leq |\calU|$, and let $p=\Psi(|\calU|,m,d)^{1/d}$. For any $\varphi:\frakX^{\calU,d}\rightarrow\bC$, any $S\in \calM_{\calU,\leq d}$ and any $\bfx\in \Map{S}{\ZNk}$, we have $$\left\|D_{S,\bfx}\big[\varphi^{\vee}_{(m)}\big]\right\|_{2}\geq 2^{-|S|/2}\cdot \left\|\left(\widehat{D_{S,\bfx}}[\varphi]\right)^{\natural}_{(p)}\right\|_{2}.$$
\end{lemma}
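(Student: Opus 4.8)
The statement compares the $L^2$-norm of a discrete derivative on $\Omega^{\calU,m}$ with the $L^2$-norm (equivalently, the $\ell^2$-mass of Fourier coefficients) of the formally differentiated Fourier collection on the product space $\Gamma^{\calU,p}$. The natural route is to first compute $D_{S,\bfx}[\varphi^{\vee}_{(m)}]$ explicitly in terms of $\varphi$, showing that it is itself a ``Fourier inverse'' function of a differentiated collection. Concretely, recall from \Cref{def:Fourier_inverse_Omega} that $\varphi^{\vee}_{(m)}(\bfy)=\sum_{(M,\bfa)}\varphi(M,\bfa)\prod_{e\in M}\chi_{\bfa(e)}(\bfy(e))$, which (up to the normalizing constants $\Psi$) is a linear combination of the characters $\psi_{M,\bfa}$. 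Applying \Cref{prop:derivative-for-pure-functions} term by term kills all terms with $S\not\subseteq M$ and, for terms with $S\subseteq M$, produces $\Psi(|\calU|,m,|S|)^{-1/2}\chi_{\bfa_{|S}}(\bfx)\,\psi_{M\setminus S,\bfa_{\setminus S}}$. Re-expressing $\psi$ back in terms of the un-normalized monomials $\prod_{e}\chi$ and collecting terms indexed by $(T,\bfa')\in\frakX^{\calU_{\setminus S},\,d-|S|}$ with $T=M\setminus S$, we find that $D_{S,\bfx}[\varphi^{\vee}_{(m)}]$ equals (up to an explicit $\Psi$-ratio) the function $\bigl(\widehat{D_{S,\bfx}}[\varphi]\bigr)^{\vee}_{(m-|S|)}$ on the smaller space $\Omega^{\calU_{\setminus S},\,m-|S|}$, where $\widehat{D_{S,\bfx}}$ is exactly the formal derivative of \Cref{def:formal_derivative}.

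Once this identification is in place, the plan is to track the normalization constants carefully. The only subtlety is that $\varphi^{\vee}$ on $\Omega^{\calU,m}$ uses normalization $\Psi(|\calU|,m,d)$, while after differentiating we land on $\Omega^{\calU_{\setminus S},\,m-|S|}$ with its own natural normalization $\Psi(|\calU|-|S|,\,m-|S|,\,d-|S|)$; and the product-space side involves $p=\Psi(|\calU|,m,d)^{1/d}$, not the analogous parameter $p'=\Psi(|\calU|-|S|,m-|S|,d-|S|)^{1/(d-|S|)}$ for the smaller universe. I would handle this by expanding both $\|D_{S,\bfx}[\varphi^{\vee}_{(m)}]\|_2^2$ and $\|(\widehat{D_{S,\bfx}}[\varphi])^{\natural}_{(p)}\|_2^2$ directly as sums over $(T,\bfa')$ of $|(\widehat{D_{S,\bfx}}\varphi)(T,\bfa')|^2$ times a power of the relevant probability parameter --- this is exactly the computation behind \Cref{prop:L2_comparison} and \eqref{eq:expanding_Omega}, \eqref{eq:expanding_Gamma}. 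On the $\Omega$-side, the coefficient of $|(\widehat{D_{S,\bfx}}\varphi)(T,\bfa')|^2$ with $|T|=d-|S|$ works out to $\Psi(|\calU|,m,d)/\Psi(|\calU|,m,|S|)$; on the $\Gamma$-side, it is $p^{d-|S|}=\Psi(|\calU|,m,d)^{(d-|S|)/d}$. The desired inequality then reduces to the elementary bound
\[
\frac{\Psi(|\calU|,m,d)}{\Psi(|\calU|,m,|S|)}\;\ge\;2^{-|S|}\cdot\Psi(|\calU|,m,d)^{(d-|S|)/d},
\]
which rearranges to $\Psi(|\calU|,m,|S|)\le 2^{|S|}\Psi(|\calU|,m,d)^{|S|/d}=(2p)^{|S|}$ --- precisely \Cref{prop:approximate-product}(1), valid since $|S|\le d$ and the hypotheses $|\calU|\ge 2km$, $m\ge 2(d+1)$ are in force.

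The main obstacle, and the step that needs the most care, is the bookkeeping in the first part: verifying that after applying \Cref{prop:derivative-for-pure-functions} to each monomial and regrouping by $T=M\setminus S$, the inner sum over $\bfa'$ that appears is exactly $\sum_{\bfa':S\to\ZNk\setminus\{0\}}\chi_{\bfa'}(\bfx)\,\varphi(S\cup T,\bfa'\uplus\bfa)$, matching \Cref{def:formal_derivative}. One has to be careful that $D_{S,\bfx}[\varphi^{\vee}_{(m)}]$ is a function on the restricted space $\Omega^{\calU,m}_{\setminus S}$ (which is canonically $\Omega^{\calU_{\setminus S},\,m-|S|}$ by \Cref{rem:restriction_isomorphic}), and that the monomials $\prod_{e\in T}\chi_{\bfa(e)}$ appearing there are genuinely the monomials underlying the characters $\psi_{T,\bfa}$ on this smaller space. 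A clean way to organize this is to first prove the norm identity for $S=\emptyset$ (which is \Cref{prop:L2_comparison}), then for singleton $S$, then compose using \Cref{lem:derivatives-compose} on the $\Omega$-side and the evident composition rule for $\widehat{D_{S,\bfx}}$ on the $\Gamma$-side; induction on $|S|$ turns the $2^{-|S|/2}$ factor into a product of $2^{-1/2}$ factors, one per coordinate of $S$. After that, the remaining inequalities are routine applications of \Cref{prop:approximate-product}.
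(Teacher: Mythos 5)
Your proposal is correct and follows essentially the same route as the paper's proof: identify $D_{S,\bfx}\bigl[\varphi^{\vee}_{(m)}\bigr]$ with $\bigl(\widehat{D_{S,\bfx}}[\varphi]\bigr)^{\vee}_{(m-|S|)}$ via \Cref{prop:derivative-for-pure-functions}, expand both squared $2$-norms as weighted sums of $\bigl|\widehat{D_{S,\bfx}}[\varphi](T,\bfa')\bigr|^{2}$, and reduce the comparison of weights to $\Psi(|\calU|,m,|S|)\le(2p)^{|S|}$ from \Cref{prop:approximate-product}(1). (The "$\Psi$-ratio" you leave implicit is in fact exactly $1$ by the multiplicativity $\Psi(|\calU|,m,d)=\Psi(|\calU|,m,|S|)\cdot\Psi(|\calU_{\setminus S}|,m-|S|,d-|S|)$, which is consistent with the coefficients you then compute.)
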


\begin{proof}
We first note that it follows easily from \Cref{def:characters,def:formal_derivative} and \Cref{prop:derivative-for-pure-functions} that
$$D_{S,\bfx}\big[\varphi^{\vee}_{(m)}\big]=\left(\widehat{D_{S,\bfx}}[\varphi]\right)^{\vee}_{(m-|S|)}$$
as functions on $\Omega^{\calU, d}_{\setminus S}$. Therefore we may write for convenience $\varphi':=\widehat{D_{S,\bfx}}[\varphi]$ and it suffices to show 
\begin{equation}\label{eq:derivative_comparison_goal}
\left\|(\varphi')^{\vee}_{(m-|S|)}\right\|_{2}\geq 2^{-|S|/2} \left\|(\varphi')^{\natural}_{(p)}\right\|_{2}.
\end{equation}

Similarly to \eqref{eq:expanding_Omega} and \eqref{eq:expanding_Gamma}, we have the identities
$$
\left\|(\varphi')^{\vee}_{(m)}\right\|_{2}^{2}=\sum_{(T,\bfa)\in \frakX^{\calU_{\setminus S},\,d-|S|}}\left|\varphi(T,\bfa)\right|^{2}\cdot \Psi\Big(|\calU_{\setminus S}|,m-|S|,d-|S|\Big),
$$
and 
$$
\left\|(\varphi')^{\natural}_{(p)}\right\|_{2}^{2}=\sum_{(T,\bfa)\in \frakX^{\calU_{\setminus S},\,d-|S|}}\left|\varphi(T,\bfa)\right|^{2}\cdot p^{d-|S|}.
$$
Now note that by \Cref{prop:approximate-product}(1) we have
$$
\Psi\Big(|\calU_{\setminus S}|,m-|S|,d-|S|\Big)=\frac{\Psi(|\calU|,m,d)}{\Psi(|\calU|,m,|S|)}\geq \frac{p^{d}}{(2p)^{|S|}}=2^{-|S|}p^{d-|S|},
$$
and combining the above three displays leads to the desired conclusion \eqref{eq:derivative_comparison_goal}.
\end{proof}

The next lemma is a standard identity for formal derivatives over product spaces (cf. \cite[Theorem 4.6]{KLM23} for analogous statements).

\begin{lemma}\label{lem:derivative_identity}
Fix $p\in (0,1)$. For any $\varphi:\frakX^{\calU,d}$, we have the identity
\[ 
\sum_{S\in\calM_{\calU,\leq d}} p^{|S|}\Exu{\bfx:S\rightarrow\ZNk}{\left\|\left(\widehat{D_{S,\bfx}}[\varphi]\right)^{\natural}_{(p)}\right\|_{2}^{2}}= 2^{d}\left\|\varphi^{\natural}_{(p)}\right\|_{2}^{2}.
\]
\end{lemma}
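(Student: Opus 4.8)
\textbf{Proof plan for Lemma~\ref{lem:derivative_identity}.} The plan is to verify the identity coefficient-by-coefficient in the Fourier expansion, exploiting that both sides are quadratic forms in the coefficients $\varphi(M,\bfa)$ that are diagonal in the pair $(M,\bfa)$. On the right-hand side, $\left\|\varphi^{\natural}_{(p)}\right\|_{2}^{2}=\sum_{(M,\bfa)\in\frakX^{\calU,d}}\left|\varphi(M,\bfa)\right|^{2}p^{d}$ exactly as computed in the proof of \Cref{prop:L2_comparison} (equation~\eqref{eq:expanding_Gamma}), since the characters $\prod_{e\in M}\chi_{\bfa(e)}(\cdot)$ are orthonormal in $L^2(\Gamma^{\calU,p})$ and each $e\in M$ contributes a factor of $p$ to the expectation. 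So the right side equals $2^d p^d \sum_{(M,\bfa)}\left|\varphi(M,\bfa)\right|^2$.

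For the left-hand side, I would fix $S\in\calM_{\calU,\leq d}$ and expand $\left\|\left(\widehat{D_{S,\bfx}}[\varphi]\right)^{\natural}_{(p)}\right\|_{2}^{2}$ using the same orthonormality in $L^2(\Gamma^{\calU_{\setminus S},p})$: by \Cref{def:formal_derivative} this equals $\sum_{(T,\bfa)\in\frakX^{\calU_{\setminus S},d-|S|}}\bigl|\sum_{\bfa':S\to\ZNk\setminus\{0\}}\chi_{\bfa'}(\bfx)\varphi(S\cup T,\bfa'\uplus\bfa)\bigr|^{2}\,p^{d-|S|}$. Now averaging over $\bfx:S\to\ZNk$ uniformly: $\Exu{\bfx}{\chi_{\bfa'}(\bfx)\overline{\chi_{\bfa''}(\bfx)}}=\ind{\bfa'=\bfa''}$, so the cross terms vanish and $\Exu{\bfx:S\to\ZNk}{\bigl|\sum_{\bfa'}\chi_{\bfa'}(\bfx)\varphi(S\cup T,\bfa'\uplus\bfa)\bigr|^{2}}=\sum_{\bfa':S\to\ZNk\setminus\{0\}}\left|\varphi(S\cup T,\bfa'\uplus\bfa)\right|^{2}$. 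Therefore $p^{|S|}\Exu{\bfx}{\left\|(\widehat{D_{S,\bfx}}\varphi)^{\natural}_{(p)}\right\|_2^2}=p^{d}\sum_{(T,\bfa)}\sum_{\bfa'}\left|\varphi(S\cup T,\bfa'\uplus\bfa)\right|^{2}$, which is $p^d$ times the total squared mass of $\varphi$ concentrated on those $(M,\bfa)\in\frakX^{\calU,d}$ with $S\subseteq M$.

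Summing over all matchings $S\in\calM_{\calU,\leq d}$, each fixed pair $(M,\bfa)$ with $|M|=d$ contributes $\left|\varphi(M,\bfa)\right|^{2}p^{d}$ once for every sub-matching $S\subseteq M$, and $M$ has exactly $2^{d}$ sub-matchings (including $S=\emptyset$ and $S=M$). Hence the left side equals $p^{d}\sum_{(M,\bfa)\in\frakX^{\calU,d}}2^{d}\left|\varphi(M,\bfa)\right|^{2}=2^{d}\left\|\varphi^{\natural}_{(p)}\right\|_{2}^{2}$, as desired. The only point requiring a little care is bookkeeping the re-indexing: the map $(S,(T,\bfa),\bfa')\mapsto (S\cup T,\bfa'\uplus\bfa)$ is a bijection between the index set of the left-hand sum and pairs $\bigl(S,(M,\bfa)\bigr)$ with $S\subseteq M\in\calM_{\calU,d}$, which follows since $S$ and $T$ are vertex-disjoint matchings (as $S\cup T$ must be a matching with $d$ edges) and the concatenation $\bfa'\uplus\bfa$ with $\bfa'$ nowhere zero recovers a genuine element of $\Map{M}{\ZNk\setminus\{0\}}$. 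I do not anticipate a substantive obstacle here; this is a direct orthogonality computation, and the main thing to get right is the combinatorial factor $2^{d}=\sum_{S\subseteq M}1$.
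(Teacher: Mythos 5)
Your proof is correct and follows essentially the same route as the paper's: expand both sides using orthogonality of the (unnormalized) characters on the product space $\Gamma^{\calU,p}$ to produce the factor $p^{d}$, average over $\bfx:S\to\ZNk$ to kill the cross terms in the definition of $\widehat{D_{S,\bfx}}[\varphi]$, then reindex by $(M,\bfa'')=(S\cup T,\bfa'\uplus\bfa)$ and count the $2^{d}$ sub-matchings $S\subseteq M$. The only minor imprecision is calling the characters ``orthonormal'' in $L^{2}(\Gamma^{\calU,p})$ while simultaneously (and correctly) noting they carry a factor $p^{d}$ — they are orthogonal with squared norm $p^{|M|}$ — but your computations account for this correctly.
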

\begin{proof}
Expanding the 2-norm on the left hand side yields
\begin{align}
\left\|\left(\widehat{D_{S,\bfx}}[\varphi]\right)^{\natural}_{(p)}\right\|_{2}^{2}&=\Exu{\bfy\in \Gamma^{\calU_{\setminus S},\,p}}{\left(\widehat{D_{S,\bfx}}[\varphi]\right)^{\natural}_{(p)}(\bfy)\cdot \overline{\left(\widehat{D_{S,\bfx}}[\varphi]\right)^{\natural}_{(p)}(\bfy)}}\nonumber\\
&=\sum_{(T,\bfa)\in \frakX^{\calU_{\setminus S},\,d-|S|}}\left|\widehat{D_{S,\bfx}}[\varphi](T,\bfa)\right|^{2}\cdot p^{d-|S|}.\label{eq:derivative_identity_1}
\end{align}
By \Cref{def:formal_derivative} we have
\begin{align}
&\quad\Exu{\bfx:S\rightarrow \ZNk}{\sum_{(T,\bfa)\in \frakX^{\calU_{\setminus S},\,d-|S|}}\left|\widehat{D_{S,\bfx}}[\varphi](T,\bfa)\right|^{2}}\nonumber\\
&=\sum_{(T,\bfa)\in \frakX^{\calU_{\setminus S},\,d-|S|}}\Exu{\bfx:S\rightarrow \ZNk}{\left|\sum_{\bfa':S\rightarrow\ZNk\setminus\{0\}}\chi_{\bfa'}(\bfx)\cdot \varphi(S\cup T,\bfa'\uplus\bfa)\right|^{2}}\nonumber\\
&=\sum_{(T,\bfa)\in \frakX^{\calU_{\setminus S},\,d-|S|}}\left(\sum_{\bfa'_{1},\bfa'_{2}:S\rightarrow\ZNk\setminus\{0\}}\varphi(S\cup T,\bfa'_{1}\uplus \bfa)\overline{\varphi(S\cup T,\bfa'_{2}\uplus \bfa)}\cdot \Exu{\bfx:S\rightarrow\ZNk}{\chi_{\bfa'_{1}-\bfa'_{2}}(\bfx)}\right).\nonumber\\
&=\sum_{(T,\bfa)\in \frakX^{\calU_{\setminus S},\,d-|S|}}\sum_{\bfa':S\rightarrow\ZNk\setminus\{0\}}\left|\varphi(S\cup T,\bfa'\uplus\bfa)\right|^{2}=\sum_{\substack{(M,\bfa'')\in \frakX^{\calU,d}\\ M\supseteq S}}\left|\varphi(M,\bfa'')\right|^{2}.\label{eq:derivative_identity_2}
\end{align}
Combining \eqref{eq:derivative_identity_1} and \eqref{eq:derivative_identity_2}, we obtain the desired result
\begin{align*}
\sum_{S\in\calM_{\calU,\leq d}} p^{|S|}\Exu{\bfx:S\rightarrow\ZNk}{\left\|\left(\widehat{D_{S,\bfx}}[\varphi]\right)^{\natural}_{(p)}\right\|_{2}^{2}}&=p^{d}\sum_{S\in \calM_{\calU,\leq d}}\sum_{\substack{(M,\bfa'')\in \frakX^{\calU,d}\\ M\supseteq S}}\left|\varphi(M,\bfa'')\right|^{2}\\
&=p^{d}\cdot 2^{d}\sum_{(M,\bfa'')\in \frakX^{\calU,d}}\left|\varphi(M,\bfa'')\right|^{2}=2^{d}\left\|\varphi^{\natural}_{(p)}\right\|_{2}^{2},
\end{align*}
where in the last transition we use \eqref{eq:expanding_Gamma}.
\end{proof}

\subsection{The Hypercontractive Inequality}\label{subsec:hypercontractive}

We are now prepared to derive a derivative-based hypercontractive inequality for our space $\Omega^{\calU,m}$ by leveraging the corresponding inequality already established for product spaces. 
In particular, applying the general result of \cite[Theorem~4.1]{KLM23} to the product space $\Gamma^{\calU,p}$ yields the following.\footnote{Strictly speaking, \cite{KLM23} proves the inequality only for real-valued functions, whereas our application requires it for complex-valued functions. Nevertheless, the proof in \cite{KLM23} extends to the complex setting without difficulty. Alternatively, one can apply \cite[Theorem~4.1]{KLM23} separately to the real and imaginary parts of $\varphi^{\natural}_{(p)}$, which introduces an additional factor of $2^{2q}$ on the right-hand side of \eqref{eq:KLM-theorem}, easily absorbed into the other parameters.}

\begin{lemma}[{\cite[Theorem 4.1]{KLM23}}]\label{lem:KLM-theorem}
Let $p\in (0,1)$. Suppose $q$ is a positive integer and $\rho\in (0,\frac{1}{3\sqrt{2q}})$. For any $\varphi:\frakX^{\calU,d}\rightarrow\bC$, we have\footnote{%
Compared with the inequality in \cite[Theorem~4.1]{KLM23}, the right-hand side of \eqref{eq:KLM-theorem} contains an additional factor of $p^{|S|}$.  
This arises because the expectation in \eqref{eq:KLM-theorem} is taken over $\bfx:S\to\mathbb{Z}_n^k$ rather than over $\bfx:S\to\mathbb{Z}_n^k\cup\{\nil\}$; whenever $\bfx$ contains a $\nil$, the derivative with respect to $\bfx$ is zero.%
}
\begin{equation}\label{eq:KLM-theorem}
\left\|\varphi^{\natural}_{(p)}\right\|_{2q}^{2q}\leq \rho^{-2dq}\sum_{S\in\calM_{\calU,\leq d}}\beta^{2q|S|}(2q)^{-q|S|}p^{|S|}\Exu{\bfx:S\rightarrow\ZNk}{\left\|\left(\widehat{D_{S,\bfx}}[\varphi]\right)^{\natural}_{(p)}\right\|_{2}^{2q}},
\end{equation}
where
$\beta:=\rho\sqrt{2q}\left(1+\frac{4(q-1)}{\ln(\rho^{-1}(2q)^{-1/2})}\right)$.
\end{lemma}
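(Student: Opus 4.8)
\textbf{Proof proposal for Lemma~\ref{lem:KLM-theorem}.}

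The statement we are asked to prove is simply the specialization of the abstract global hypercontractivity theorem of~\cite{KLM23} to the particular product space $\Gamma^{\calU,p}$ together with the particular family of Fourier-inverse functions $\varphi^{\natural}_{(p)}$. So the plan is not to reprove global hypercontractivity from scratch, but rather to set up the translation carefully. First I would recall the precise form of~\cite[Theorem~4.1]{KLM23}: it gives, for a function on a product probability space and its formal derivatives, an inequality bounding $\|g\|_{2q}^{2q}$ by a sum over subsets $S$ of (a power of) a ``noise parameter'' $\rho$ times $(\beta^{2q}/(2q)^{q})^{|S|}$ times the expected $2q$-th power of the $L^2$-norm of the derivative $D_{S,\bfx}[g]$. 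The key observation is that $\varphi^{\natural}_{(p)}$ is exactly a level-$\leq d$ function on the product space $\Gamma^{\calU,p}$ (its Fourier support lies in $\bigcup_{\ell\le d}\frakX^{\calU,\ell}$, indeed only on level exactly $d$), and that the formal derivative operator $\widehat{D_{S,\bfx}}$ from Definition~\ref{def:formal_derivative} is precisely the product-space derivative, acting on Fourier coefficients as in the classical setting. Concretely, one checks that for the product space $\Gamma^{\calU,p}$ with marginal putting mass $1-p$ on $\nil$ and $N^{-k}p$ on each element of $\ZNk$, the $L^2$-normalized characters are $p^{-|M|/2}\prod_{e\in M}\chi_{\bfa(e)}$, and that the standard Laplacian/derivative acting on the coordinates of $S$ and evaluated at $\bfx:S\to\ZNk$ produces exactly $\bigl(\widehat{D_{S,\bfx}}[\varphi]\bigr)^{\natural}_{(p)}$ up to the bookkeeping described in the first footnote.

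Second, I would account for the two discrepancies between the raw statement of~\cite[Theorem~4.1]{KLM23} and~\eqref{eq:KLM-theorem}. The first is the factor $p^{|S|}$ on the right-hand side: in the product-space derivative one averages over $\bfx:S\to \ZNk\cup\{\nil\}$, and whenever some coordinate of $\bfx$ equals $\nil$ the corresponding derivative vanishes identically; restricting the average to $\bfx:S\to \ZNk$ (as in Definition~\ref{def:formal_derivative}) therefore multiplies each term by exactly the probability $p^{|S|}$ that a random $\bfx$ avoids $\nil$ on all of $S$. This is the content of the second footnote, and I would just state it cleanly. The second discrepancy is the real-versus-complex issue in the first footnote: since~\cite{KLM23} states the inequality for real-valued functions, I would either note that their proof (which is purely a sequence of $L^p$-interpolation and derivative-counting steps) goes through verbatim for $\bC$-valued functions, or else apply it separately to $\mathrm{Re}(\varphi^{\natural}_{(p)})$ and $\mathrm{Im}(\varphi^{\natural}_{(p)})$ and combine using $\|g\|_{2q}^{2q}\le 2^{2q-1}(\|\mathrm{Re}\,g\|_{2q}^{2q}+\|\mathrm{Im}\,g\|_{2q}^{2q})$ and the fact that $(\mathrm{Re}\,\widehat{D_{S,\bfx}}\varphi)^{\natural}_{(p)}$ and $(\mathrm{Im}\,\widehat{D_{S,\bfx}}\varphi)^{\natural}_{(p)}$ have $L^2$-norms bounded by that of $(\widehat{D_{S,\bfx}}\varphi)^{\natural}_{(p)}$; the resulting extra constant $2^{2q}$ is harmless and can be folded into $\rho$ and $\beta$ exactly as claimed.

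Third, with these identifications in hand, I would simply quote~\cite[Theorem~4.1]{KLM23} applied to $g=\varphi^{\natural}_{(p)}$ on $\Gamma^{\calU,p}$, noting that $g$ has degree at most $d$ so the sum over $S$ runs over $\calM_{\calU,\le d}$ (equivalently $S\in\calM_{\calU,|S|}$ with $|S|\le d$), and that the hypothesis $\rho\in(0,\tfrac{1}{3\sqrt{2q}})$ is exactly the admissible range there. The parameter $\beta=\rho\sqrt{2q}\bigl(1+\tfrac{4(q-1)}{\ln(\rho^{-1}(2q)^{-1/2})}\bigr)$ is the one appearing in their statement. The only genuine care needed is the constant-chasing in the two footnoted discrepancies; I expect the main (and really the only) obstacle to be stating the correspondence between $\widehat{D_{S,\bfx}}$ and the product-space derivative precisely enough that the $p^{|S|}$ and the real/complex factors are transparently correct, rather than any substantive mathematics. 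No hypercontractivity argument needs to be redone.
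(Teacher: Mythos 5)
Your proposal is correct and matches the paper's treatment exactly: the paper does not reprove the result but simply quotes \cite[Theorem~4.1]{KLM23} for the product space $\Gamma^{\calU,p}$, handling the two discrepancies (the extra factor $p^{|S|}$ from restricting the expectation to $\bfx:S\to\ZNk$, and the real-versus-complex issue) in footnotes in precisely the way you describe. No further argument is needed.
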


We combine \Cref{lem:KLM-theorem} with the comparison results established in \Cref{subsec:compartison_product} to obtain the desired hypercontractive inequality for our space $\Omega^{\calU,m}$.

\begin{lemma}[Derivative-based hypercontractivity]\label{lem:derivative-based-inequality}
Fix integers $d,m$ such that $|\calU|\geq 2km$ and $m\geq 2(d+1)$. Fix $r>0$ and integer $q\geq 1$. For $f:\Omega^{\calU,m}\rightarrow\bC$ such that $f$ lies in the subspace $\lspan\left\{\psi_{M,\bfa}:(M,\bfa)\in \frakX^{\calU,d}\right\}$ of $L^{2}(\Omega^{\calU,m})$, we have
\[
\left\|f\right\|_{2q}^{2q}\leq 2^{d}\rho^{-2dq}\left\|f\right\|_{2}^{2}\cdot\max_{\substack{S\in\calM_{\calU,\leq d}\\ \bfx:S\rightarrow\ZNk}}\left(r^{-|S|}\left\|D_{S,\bfx}f\right\|_{2}\right)^{2q-2},
\]
where
\begin{equation}\label{eq:rho-formula}
\rho:=\frac{1}{4\sqrt{2}}\min\left\{ q^{-1/2},q^{-1}r^{-\frac{q-1}{q}}\right\}.
\end{equation}
\end{lemma}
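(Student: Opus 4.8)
The plan is to transfer the product-space hypercontractive inequality of \Cref{lem:KLM-theorem} to the space $\Omega^{\calU,m}$ by means of the four comparison results of \Cref{subsec:compartison_product}. First I would set $p:=\Psi(|\calU|,m,d)^{1/d}$, which is the value under which the degree-$d$ Fourier level of $\Gamma^{\calU,p}$ ``mimics'' the level-$d$ characters of $\Omega^{\calU,m}$. Since $f$ lies in $\lspan\{\psi_{M,\bfa}:(M,\bfa)\in\frakX^{\calU,d}\}$, the character expansion \Cref{def:characters} together with \Cref{def:Fourier_inverse_Omega} lets me write $f=\varphi^{\vee}_{(m)}$ for a suitable $\varphi:\frakX^{\calU,d}\to\bC$ (absorbing the $\Psi(|\calU|,m,d)^{-1/2}$ normalization factor). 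The goal is then to bound $\|\varphi^{\vee}_{(m)}\|_{2q}^{2q}$.

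The chain of inequalities I would run is: by \Cref{lem:comparison_q_norm}, $\|\varphi^{\vee}_{(m)}\|_{2q}^{2q}\le\|\varphi^{\natural}_{(p)}\|_{2q}^{2q}$; now apply \Cref{lem:KLM-theorem} (for the $\rho$ in \eqref{eq:rho-formula}, which one must check satisfies $\rho<\frac{1}{3\sqrt{2q}}$ and makes the constant $\beta$ manageable --- the $q^{-1/2}$ term in \eqref{eq:rho-formula} handles the $\rho<\frac{1}{3\sqrt{2q}}$ constraint with room to spare) to pass to a sum over $S\in\calM_{\calU,\le d}$ of weighted derivative $2q$-norms $\Exu{\bfx}{\|(\widehat{D_{S,\bfx}}[\varphi])^{\natural}_{(p)}\|_{2}^{2q}}$. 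On each derivative term I would split the exponent $2q=2+(2q-2)$: bound $(2q-2)$ of the copies using \Cref{lem:comparison_2_norm}, which gives $\|(\widehat{D_{S,\bfx}}[\varphi])^{\natural}_{(p)}\|_{2}\le 2^{|S|/2}\|D_{S,\bfx}[\varphi^{\vee}_{(m)}]\|_{2}=2^{|S|/2}\|D_{S,\bfx}f\|_2$, so these are controlled by $\max_{S,\bfx}(r^{-|S|}\|D_{S,\bfx}f\|_2)$ times powers of $r$ and $2$; the remaining two copies, i.e.\ the factor $\Exu{\bfx}{\|(\widehat{D_{S,\bfx}}[\varphi])^{\natural}_{(p)}\|_{2}^{2}}$ multiplied by $p^{|S|}$ and summed over $S$, is handled by the exact identity \Cref{lem:derivative_identity}, which collapses it to $2^{d}\|\varphi^{\natural}_{(p)}\|_2^2$; and finally \Cref{prop:L2_comparison} gives $\|\varphi^{\natural}_{(p)}\|_2=\|\varphi^{\vee}_{(m)}\|_2=\|f\|_2$. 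Assembling everything yields a bound of the shape $2^{d}\rho^{-2dq}\|f\|_2^2\cdot(\max_{S,\bfx}r^{-|S|}\|D_{S,\bfx}f\|_2)^{2q-2}$, provided the bookkeeping of the various $2^{|S|}$, $\beta^{2q|S|}$, $(2q)^{-q|S|}$, $r^{|S|}$ factors conspires to be absorbed.

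The main obstacle I anticipate is exactly this constant-bookkeeping: after applying \Cref{lem:KLM-theorem} and then \Cref{lem:comparison_2_norm} on $(2q-2)$ copies, each summand over $S$ carries a factor of the form $\beta^{2q|S|}(2q)^{-q|S|}\cdot 2^{(q-1)|S|}\cdot r^{(2q-2)|S|}$ (the $r^{(2q-2)|S|}$ coming from pulling out $\max_S(r^{-|S|}\|D_{S,\bfx}f\|_2)^{2q-2}$), and I need this to be at most $p^{|S|}$ so that \Cref{lem:derivative_identity} can be invoked cleanly --- i.e.\ I need $\beta^{2q}(2q)^{-q}2^{q-1}r^{2q-2}\le 1$ uniformly in $|S|$. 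The choice \eqref{eq:rho-formula} of $\rho$, with its $\min$ of $q^{-1/2}$ and $q^{-1}r^{-(q-1)/q}$, is precisely engineered so that $\beta=\rho\sqrt{2q}(1+4(q-1)/\ln(\rho^{-1}(2q)^{-1/2}))$ is small enough (note $\rho\sqrt{2q}\le\frac14$, so the log in the denominator is at least $\ln 4>1$, bounding the parenthetical by $O(q)$, hence $\beta=O(\rho q\sqrt q)$) to kill the $(2q)^{q/2}$, $2^{q/2}$ and $r^{q-1}$ growth. Verifying this inequality carefully --- tracking whether the $q^{-1}r^{-(q-1)/q}$ branch of the $\min$ really does defeat the $r^{2q-2}$ factor, and confirming the harmless $2^{d}$ emerging from \Cref{lem:derivative_identity} rather than something worse --- is the delicate computational heart of the argument; the rest is a mechanical composition of lemmas already proved in \Cref{subsec:compartison_product}. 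This mirrors \cite[Section 4]{FMW25}, so I expect the verification to go through with the stated parameters.
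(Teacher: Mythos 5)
Your proposal is correct and takes essentially the same route as the paper's own proof: fix $p=\Psi(|\calU|,m,d)^{1/d}$, write $f=\varphi^{\vee}_{(m)}$, then compose \Cref{lem:comparison_q_norm}, \Cref{lem:KLM-theorem}, \Cref{lem:comparison_2_norm} on $2q-2$ copies, \Cref{lem:derivative_identity}, and \Cref{prop:L2_comparison}, with \eqref{eq:rho-formula} chosen so that the per-$S$ factors are at most one. Your bookkeeping target $\beta^{2q}(2q)^{-q}2^{q-1}r^{2q-2}\le 1$ is equivalent (up to a harmless factor of $2$) to the inequality $\beta^{2q}q^{-q}r^{2q-2}\le 1$ the paper verifies via $\rho^{-1}(2q)^{-1/2}\ge 4$, so the constants do close as you anticipated.
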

\begin{proof}
Since $f\in \lspan\left\{\psi_{M,\bfa}:(M,\bfa)\in \frakX^{\calU,d}\right\}$, we may define $\varphi:\frakX^{\calU,d}\rightarrow\bC$ by $$\varphi(M,\bfa):=\Psi(|\calU|,m,|M|)^{-1/2}\cdot \left\langle f,\psi_{M,\bfa}\right\rangle$$ for $(M,\bfa)\in \frakX^{\calU,d}$, and then by \Cref{def:characters,def:Fourier_inverse_Omega} we have $f=\varphi^{\vee}_{(m)}$. 

Let $\beta:=\rho\sqrt{2q}\left(1+\frac{4(q-1)}{\ln(\rho^{-1}(2q)^{-1/2})}\right)$, as in \Cref{lem:KLM-theorem}. Using $\rho^{-1}(2q)^{-1/2}\geq 4$ we get $\beta\leq \rho\sqrt{2q}\cdot 4q$. Now the other upper bound $\rho\leq \frac{1}{4\sqrt{2}}q^{-1}r^{-\frac{q-1}{q}}$ yields $\beta^{2q}q^{-q}r^{2q-2}\leq 1$. 

According to \Cref{lem:KLM-theorem}, we have
\[
\left\|\varphi^{\natural}_{(p)}\right\|_{2q}^{2q}\leq \rho^{-2dq}\sum_{S\in\calM_{\calU,\leq d}}\beta^{2q|S|}(2q)^{-q|S|} p^{|S|}\Exu{\bfx:S\rightarrow\ZNk}{\left\|\left(\widehat{D_{S,\bfx}}[\varphi]\right)^{\natural}_{(p)}\right\|_{2}^{2q}}.
\]
We can apply \Cref{lem:comparison_q_norm,lem:comparison_2_norm} to the above inequality and get
\[
\left\|f\right\|_{2q}^{2q}\leq \rho^{-2dq}\sum_{S\in\calM_{\calU,\leq d}}\beta^{2q|S|}q^{-q|S|} p^{|S|}\Exu{\bfx:S\rightarrow\ZNk}{\left\|\left(\widehat{D_{S,\bfx}}[\varphi]\right)^{\natural}_{(p)}\right\|_{2}^{2}\cdot\left\|D_{S,\bfx}[f]\right\|_{2}^{2q-2}}.
\]
Using the estimations obtained in the preceding paragraph, we simplify the above into
\begin{align*}
\left\|f\right\|_{2q}^{2q}&\leq \rho^{-2dq}\sum_{S\in\calM_{\calU,\leq d}} p^{|S|}r^{-(2q-2)|S|}\Exu{\bfx:S\rightarrow\ZNk}{\left\|\left(\widehat{D_{S,\bfx}}[\varphi]\right)^{\natural}_{(p)}\right\|_{2}^{2}\cdot\left\|D_{S,\bfx}[f]\right\|_{2}^{2q-2}}\\
&\leq \rho^{-2dq}\left(\sum_{S\in\calM_{\calU,\leq d}} p^{|S|}\Exu{\bfx:S\rightarrow\ZNk}{\left\|\left(\widehat{D_{S,\bfx}}[\varphi]\right)^{\natural}_{(p)}\right\|_{2}^{2}}\right)\max_{\substack{S\in\calM_{\calU,\leq d}\\ \bfx:S\rightarrow\ZNk}}\left(r^{-|S|}\left\|D_{S,\bfx}f\right\|_{2}\right)^{2q-2}\\
&=2^{d}\rho^{-2dq}\left\|\varphi^{\natural}_{(p)}\right\|_{2}^{2}\cdot\max_{\substack{S\in\calM_{\calU,\leq d}\\ \bfx:S\rightarrow\ZNk}}\left(r^{-|S|}\left\|D_{S,\bfx}f\right\|_{2}\right)^{2q-2},
\end{align*}
where in the last transition we use \Cref{lem:derivative_identity}. The proof is concluded by noting that by \Cref{prop:L2_comparison} we have that $\left\|\varphi^{\natural}_{(p)}\right\|_{2}^{2}=\|f\|_{2}^{2}$. 
\end{proof}

\subsection{Proof of the Level-$d$ Inequality}\label{subsec:level-d}
Now we complete the proof of the level-$d$ inequality, restated below, by an induction argument. 
\projleveld*

\begin{proof}

The conclusion in the case $d=0$ simply comes from $\Exu{\bfy}{f(\bfy)}^{2}\leq \lambda_{1}^{2}$, which holds by the $L^1$-globalness assumption. We proceed by an induction on $d$. 
Towards this end, fix $d\geq 1$ and assume that the statement holds for all $d'<d$.

Fix $S\neq \emptyset$ and $\bfx:S\rightarrow\ZNk$, so that by \Cref{cor:globalness-of-derivative} we know that $D_{S,\bfx}[f]$ is both $(r,r^{|S|}\lambda_{1},d-|S|)$-$L^{1}$-global and $(r,r^{|S|}\lambda_{2},d-|S|)$-$L^{2}$-global. 
Our first goal 
will be to show that $P_{\frakX}^{=d}[f]$ has discrete derivatives with small norms, and towards this end we use the induction hypothesis. 
Since $|\calU_{\setminus S}|\geq 2k(m-|S|)$ and $m-|S|\geq 2(d-|S|+1)$, we can apply the induction hypothesis on $D_{S,\bfx}[f]:\Omega^{\calU,m}_{\setminus S}\rightarrow \bC$. Combining with \Cref{lem:derivative-projection-commute}, we get
\begin{align}
\left\|D_{S,\bfx}P_{\frakX}^{=d}[f]\right\|_{2}^{2}&= \left\|P_{\frakX}^{=d-|S|}D_{S,\bfx}[f]\right\|_{2}^{2}\\
&\leq r^{2|S|}\lambda_{1}^{2}\left(\frac{10^{5}r^{2}\log(\lambda_{2}/\lambda_{1})}{d-|S|}\right)^{d-|S|}\nonumber\\
&= 10^{5(d-|S|)}\lambda_{1}^{2}r^{2d}\log^{d-|S|}(\lambda_{2}/\lambda_{1})d^{-(d-|S|)}\left(1+\frac{|S|}{d-|S|}\right)^{d-|S|}\nonumber\\
&\leq 10^{5(d-|S|)}\lambda_{1}^{2}r^{2d}\log^{d-|S|}(\lambda_{2}/\lambda_{1})d^{-(d-|S|)}\cdot 10^{5|S|}\nonumber\\
&= \lambda_{1}^{2}\left(\frac{10^{5}r^{2}\log(\lambda_{2}/\lambda_{1})}{d}\right)^{d}\left(\frac{\sqrt{d}}{\log^{1/2}(\lambda_{2}/\lambda_{1})}\right)^{2|S|}\notag\\
&=(r')^{2|S|}(\lambda')^{2},\label{eq:globalness-of-level-d}
 \end{align}
 where we let 
\[\lambda'=\lambda_{1}\left(\frac{10^{5}r^{2}\log(\lambda_{2}/\lambda_{1})}{d}\right)^{d/2}\quad\text{ and }\quad
r'=\frac{\sqrt{d}}{\log^{1/2}(\lambda_{2}/\lambda_{1})}.
\]
We intend to apply~\Cref{lem:derivative-based-inequality}, and for that we pick
\[q=\left\lfloor\frac{4\log (\lambda_{2}/\lambda_{1})}{d}\right\rfloor\quad\text{ and }\quad\rho=  \frac{1}{4\sqrt{2}}\min\left\{ q^{-1/2},q^{-1}(r')^{-\frac{q-1}{q}}\right\}.\]
This choice of parameters ensure that $\rho^{-2}\leq 10^{3}q$, and thus
\begin{align}
2^{d}\rho^{-2dq}\lambda_{1}^{2q}\left(\lambda_{2}/\lambda_{1}\right)^{2} &\leq \lambda_{1}^{2q}\left(2\rho^{-2}(\lambda_{2}/\lambda_{1})^{2/(dq)}\right)^{dq}\leq \lambda_{1}^{2q}(10^{4}q)^{dq}\nonumber\\
&\leq \lambda_{1}^{2q}\left(\frac{10^{5}r^{2}\log(\lambda_{2}/\lambda_{1})} {d}\right)^{dq}=(\lambda')^{2q}.\label{eq:parameter-mess-to-lambda}
\end{align}
Since $P_{\frakX}^{=d}$ is an orthogonal projection, we have
\begin{align*}
\left\|P_{\frakX}^{=d}f\right\|_{2}^{4q}
=\left\langle f,P_{\frakX}^{=d}f\right\rangle^{2q}
&\leq \left\|P_{\frakX}^{=d}f\right\|_{2q}^{2q}\cdot \|f\|_{2q/(2q-1)}^{2q}\\
&\leq \left\|P_{\frakX}^{=d}f\right\|_{2q}^{2q}\cdot \|f\|_{1}^{2q-2}\cdot \|f\|_{2}^{2}\\
&\leq 2^{d}\rho^{-2dq}\lambda_{1}^{2q-2}\lambda_{2}^{2}\left\|P_{\frakX}^{=d}f\right\|_{2}^{2}\cdot \max_{\substack{S\in\calM_{\calU,\leq d}\\ \bfx:S\rightarrow\ZNk}}\left((r')^{-|S|}\left\|D_{S,\bfx}P_{\frakX}^{=d}f\right\|_{2}\right)^{2q-2}\\
&\leq (\lambda')^{2q}\cdot\left\|P_{\frakX}^{=d}f\right\|_{2}^{2}\max\left(\left\|P_{\frakX}^{=d}f\right\|_{2}^{2q-2},(\lambda')^{2q-2}\right),
\end{align*}
where in the second and third transitions we used H\"{o}lder's inequality, the fourth transition is by~\Cref{lem:derivative-based-inequality}, and the last transition is by 
\eqref{eq:parameter-mess-to-lambda} and \eqref{eq:globalness-of-level-d}. It follows that $\left\|P_{\frakX}^{=d}f\right\|_{2}\leq \lambda'$, as desired.
\end{proof}
\end{document}